\title{A Structural Investigation of the Approximability of Polynomial-Time Problems}
\author{Karl Bringmann}{Saarland University and Max Planck Institute for Informatics, Saarland Informatics Campus, Saarbrücken, Germany}{}{}{}
\author{Alejandro Cassis}{Saarland University and Max Planck Institute for Informatics, Saarland Informatics Campus, Saarbrücken, Germany}{}{}{}
\author{Nick Fischer}{Saarland University and Max Planck Institute for Informatics, Saarland Informatics Campus, Saarbrücken, Germany}{}{}{}
\author{Marvin Künnemann}{TU Kaiserslautern, Germany}{}{}{}
\authorrunning{K. Bringmann, A. Cassis, N. Fischer, M. Künnemann}
\keywords{Classification Theorems, Hardness of Approximation in P, Fine-grained Complexity Theory}
\begin{document}
\maketitle

\begin{abstract}
An extensive research effort targets optimal (in)approximability results for various \NP{}-hard optimization problems. Notably, the works of (Creignou'95) as well as (Khanna, Sudan, Trevisan, Williamson'00) establish a tight characterization of a large subclass of \MaxSNP{}, namely Boolean \MaxCSP{}s and further variants, in terms of their polynomial-time approximability. Can we obtain similarly encompassing characterizations for classes of \emph{polynomial-time optimization problems}?
	 
To this end, we initiate the systematic study of a recently introduced polynomial-time analogue of \MaxSNP{}, which includes a large number of well-studied problems  (including Nearest and Furthest Neighbor in the Hamming metric, Maximum Inner Product, optimization variants of $k$-XOR and Maximum $k$-Cover). Specifically, for each $k$, $\MaxSP_k$ denotes the class of $O(m^k)$-time problems of the form $\max_{x_1,\dots, x_k} \#\{y : \phi(x_1,\dots,x_k,y)\}$ where $\phi$ is a quantifier-free first-order property and $m$ denotes the size of the relational structure. Assuming central hypotheses about clique detection in hypergraphs and exact \MAXThreeSAT{}, we show that for any $\MaxSP_k$ problem definable by a quantifier-free $m$-edge \emph{graph} formula~$\phi$, the \emph{best possible} approximation guarantee in faster-than-exhaustive-search time $O(m^{k-\delta})$ falls into one of four categories:
\begin{itemize}
\item optimizable to exactness in time $O(m^{k-\delta})$,
\item an (inefficient) approximation scheme, i.e., a $(1+\varepsilon)$-approximation in time $O(m^{k-f(\varepsilon)})$,
\item a (fixed) constant-factor approximation in time $O(m^{k-\delta})$, or
\item an $m^\varepsilon$-approximation in time $O(m^{k-f(\varepsilon)})$.
\end{itemize}
We obtain an almost complete characterization of these regimes, for $\MaxSP_k$ as well as for an analogously defined minimization class $\MinSP_k$. As our main technical contribution, we show how to rule out the existence of \emph{approximation schemes} for a large class of problems admitting constant-factor approximations, under a hypothesis for \emph{exact} Sparse \MAXThreeSAT{}  algorithms posed by (Alman, Vassilevska Williams'20). As general trends for the problems we consider, we observe: (1)~Exact optimizability has a simple algebraic characterization, (2)~only few maximization problems do not admit a constant-factor approximation; these do not even have a subpolynomial-factor approximation, and (3)~constant-factor approximation of minimization problems is \emph{equivalent} to deciding whether the optimum is equal to~$0$.
\end{abstract}

\section{Introduction}
For many optimization problems, the best known exact algorithms essentially explore the complete search space, up to low-order improvements. While this holds true in particular for NP-hard optimization problems such as maximum satisfiability, also common polynomial-time optimization problems like Nearest Neighbor search are no exception. When facing such a problem, the perhaps most common approach is to relax the optimization goal and ask for \emph{approximations} rather than optimal solutions. Can we obtain an approximate value significantly faster than exhaustive search, and if so, what is the best approximation guarantee we can achieve?

Even considering polynomial-time problems only, the study of such questions has led to significant algorithmic breakthroughs, such as locality-sensitive hashing (LSH)~\cite{IndykM98,GionisIM99, AndoniIR18}, subquadratic-time approximation algorithms for Edit Distance~\cite{LandauMS98,BatuEKMRRS03,Bar-YossefJKK04,BatuES06,AndoniO09,AndoniKO10,ChakrabortyDGKS18,KouckyS20,BrakensiekR20,GoldenbergRS20, AndoniN20}, scaling algorithms for graph problems ~\cite{Gabow85, Zwick02, DuanP14}, and fast approximation algorithms via the polynomial method~\cite{AlmanCW16,AlmanCW20} (which lead to the currently fastest known exponential-time approximation schemes for MaxSAT).

These algorithmic breakthroughs have recently been complemented by exciting tools for proving hardness of approximation in P: Most notably, the distributed PCP framework~\cite{AbboudRW17} has lead to strong conditional lower bounds, including fundamental limits for Nearest Neighbor search~\cite{Rubinstein18}, as well as tight approximability results for the Maximum Inner Product problem~\cite{Chen18}. Other technical advances include evidence against deterministic approximation schemes for Longest Common Subsequence~\cite{AbboudB17,AbboudR18,ChenGLRR19}, strong (at times even tight) problem-specific hardness results such as~\cite{RodittyVW13,Bringmann14,BackursRSVWW18,BringmannKW19, KarthikM19}, the first fine-grained equivalences of approximation in P results~\cite{ChenW19,ChenGLRR19}, and related works on parameterized inapproximability~\cite{ChalermsookCKLM17,KarthikLM19,Manurangsi20}, see~\cite{FeldmannKLM20} for a survey.

These strong advances from both sides shift the algorithmic frontier and the frontier of conditional hardness towards each other. Consequently, it becomes increasingly important to generalize isolated results -- both algorithms and reductions -- towards making these frontiers explicit: A more comprehensive description of current techniques might enable us to understand general trends underlying these works and to highlight the most pressing limitations of current methods. In fact, we view this as one of the fundamental tasks and uses of fine-grained algorithm design \& complexity.

\subparagraph*{Optimization Classes in P}
To study (hardness of) approximation in P in a general way, we study a class $\MaxSP_k$ recently introduced in~\cite{BringmannCFK21} (see Section~\ref{sec:SNPmotivation} for a comparison to the classic class $\MaxSNP$, which motivated the definition of $\MaxSP_k$). This class consists of polynomial-time optimization problems of the form: 
\begin{equation*}
	\max_{x_{1},\dots,x_{k}}\#\{y : \phi(x_{1},\dots,x_{k},y)\},
\end{equation*}
where $\phi$ is a quantifier-free first-order property.\footnote{Note that \cite{BringmannCFK21} more generally defines classes $\MaxSP_{k,\ell}$ for $\ell \ge 1$. We focus on $\MaxSP_k = \MaxSP_{k,1}$, as it was determined as computationally harder than $\MaxSP_{k,\ell}$ with $\ell \ge 2$, and contains many natural problems (see below).} One obtains an analogous minimization class $\MinSP_k$ by replacing maximization by minimization. A large number of natural and well-studied problems can be expressed this way: In particular, we may think of each $x_i$ ranging over a set $X_i$ of~$n$ vectors in $\{0,1\}^d$, and the task is to maximize (or minimize) the number of coordinates~$y$ satisfying an arbitrary Boolean function over the coordinate values $x_1[y],\dots,x_k[y]$ (here, $\phi$ is defined using a binary relation $R\subseteq(X_{1}\cup\cdots\cup X_{k})\times Y$ that expresses whether the $y$-th coordinate of $x_i$ is $0$ or $1$; see Section~\ref{sec:preliminaries} for details). In particular, this class of problems includes:
\begin{itemize}
	\item (Offline) Furthest/Nearest Neighbor search in the Hamming metric\\
	($\max_{x_1,x_2} / \min_{x_1,x_2}d_{H}(x_{1},x_{2})$),
	\item Maximum/Minimum Inner Product; the latter is the optimization formulation of the Most-Orthogonal Vectors problem~\cite{AbboudBVW15}\\
	($\max_{x_1,x_2} / \min_{x_1,x_2} \innerprod{x_1}{x_2}$), 
	\item A natural similarity search problem that we call Maximum $k$-Agreement\\($\max_{x_1, \dots, x_k} \#\{y : x_1[y]=\cdots=x_k[y]\}$),
    \item Maximum $k$-Cover~\cite{Feige98, Cohen-AddadGKLL19, Manurangsi20} and its variation Maximum Exact-$k$-Cover~\cite{Knuth00}\\($\max_{x_1,\dots,x_k} \#\{y : \text{$x_i[y]=1$ for some $i$} \}$, $\max_{x_1,\dots,x_k} \#\{y : \text{$x_i[y]=1$ for exactly one $i$} \}$)\footnote{These problems are typically studied in the setting where $k$ is part of the input, while we consider them for a fixed constant $k$.},
	\item The canonical optimization variants of the \kXOR{k} problem~\cite{JafargholiV16,DietzfelbingerSW18}\\
	($\max_{x_1,\dots,x_k}/\min_{x_1,\dots,x_k} \#\{y : x_1[y]\oplus \cdots \oplus x_k[y]=0\}$).
	\medbreak\noindent
	By the standard split-and-list technique~\cite{Williams05}, it is easy to see that any $c$-approximation for Maximum \kXOR{k} or Minimum \kXOR{k} in time $O(n^{k-\delta}\poly(d))$ gives a $c$-approximation for \MaxLIN{} (maximize the number of satisfied constraints of a linear system over $\field_2$, see~\cite{Hastad01,Williams05,AlmanCW20}) or the Minimum Distance Problem\footnote{In fact, even for the Nearest Codeword Problem over $\field_2$.} (finding the minimum weight of a non-zero code word of a linear code over~$\field_2$, see~\cite{Stephens-DavidowitzV19}), respectively, in time $O(2^{n(1-\delta')})$.
\end{itemize}

By a simple baseline algorithm, all of these problems can be solved in time $O(m^k)$, where $m$ denotes the input size (for the above setting of $k$ sets of $n$ vectors in $\{0,1\}^d$ we have $m=O(nd)$). A large body of work addresses problems with $k=2$, typically inventing or adapting strong techniques to each specific problem as needed:
\begin{itemize}
	\item Abboud, Rubinstein, and Williams introduced the distributed PCP in P framework and ruled out almost-polynomial approximations for the Maximum Inner Product problem assuming the Strong Exponential Time Hypothesis (\SETH{}). Subsequently, Chen~\cite{Chen18} strengthened the lower bound and gave an approximation algorithm resulting in tight bounds on the approximability in strongly subquadratic time, assuming \SETH{}. Corresponding inapproximability results for its natural generalization to $k$-Maximum Inner Product have been obtained in~\cite{KarthikLM19}.
	\item In contrast, strong approximation algorithms have a rich history for the Nearest Neighbor search problem: Using LSH, we can obtain an $(1+\varepsilon)$-approximation in time $O(n^{2-\Theta(\varepsilon)})$~\cite{Har-PeledIM12,AndoniI06, AndoniINR14, AndoniR15}. Using further techniques, the dependence on $\varepsilon$ has been improved to $O(n^{2-\Omega(\sqrt{\varepsilon})})$~\cite{Valiant15} and \raisebox{0pt}[0pt][0pt]{$O(n^{2-\tilde{\Omega}(\sqrt[3]{\varepsilon})})$}~\cite{AlmanCW16,AlmanCW20}. On the hardness side, Rubinstein~\cite{Rubinstein18} shows that the dependence on $\varepsilon$ cannot be improved indefinitely, by proving that for every $\delta$, there exists an $\varepsilon$ such that $(1+\varepsilon)$-approximate Nearest Neighbor search requires time $\Omega(n^{2-\delta})$ assuming \SETH{}. In particular, this rules out $\poly (1/\varepsilon) n^{2-\delta}$-time algorithms with $\delta > 0$.
	\item For the dual problem of Furthest Neighbor search (i.e., diameter in the Hamming metric), \cite{BorodinOR99} gives a $O(n^{2-\Theta(\varepsilon^2)})$-time algorithm, which was improved to $O(n^{2-\Theta(\varepsilon)})$ via reduction to Nearest Neighbor search in \cite{Indyk00}. Following further improvements~\cite{GoelIV01,Indyk03}, also here \cite{AlmanCW16, AlmanCW20} give an \raisebox{0pt}[0pt][0pt]{$O(n^{2-\tilde{\Omega}(\sqrt[3]{\varepsilon})})$}-time algorithm. Analogous inapproximability to Rubinstein's result are given in~\cite{ChenW19}.
	\item For Minimum Inner Product, the well-known Orthogonal Vectors hypothesis~\cite{VassilevskaW18} (which is implied by \SETH{}~\cite{Williams05}) is precisely the assumption that already distinguishing whether the optimal value is 0 or at least 1 cannot be done in strongly subquadratic time. Interestingly, Chen and Williams~\cite{ChenW19} show a converse: a refutation of the Orthogonal Vectors hypothesis would give a subquadratic-time constant-factor approximation for Minimum Inner Product.
\end{itemize}

In the above list, we focus on the difficult case of moderate dimension $d=n^{o(1)}$ (when measuring the time complexity with respect to the input size). Lower-dimensional settings such as $d= \Theta(\log n)$, $d= \Theta(\log \log n)$ or even lower are addressed in other works~\cite{Williams18, ChenW19}.

While the above collection of results gives a detailed understanding of isolated problems, we know little about general phenomena of (in)approximability in $\MaxSP_k$ and $\MinSP_k$ using faster-than-exhaustive-search algorithms: Are there problems for which constant-factor approximations are best possible? Is maximizing (or minimizing) Inner Product the only problem without a constant-factor approximation? Which problems can we optimize exactly?

There are precursors to our work that show fine-grained \emph{equivalence classes} of approximation problems in P~\cite{ChenGLRR19,ChenW19}. However, establishing membership of a problem in these classes requires a problem-specific proof, while we are interested in \emph{syntactically} defined classes, where class membership can be immediately read off from the definition of a problem. Our aim is to understand the approximability landscape of such classes fully. Finally, the previous works either focus on lower-dimensional settings~\cite{ChenW19}\footnote{Chen and Williams also give some results for the moderate-dimensional case; we discuss these in Section~\ref{sec:technical}.}, or target more powerful problems than we consider, such as Closest-LCS-Pair~\cite{ChenGLRR19}.

\subparagraph*{$\MaxSP_k$ as Polynomial-Time Analogue of \MaxSNP{}} 
Investigating $\NP$-hard optimization problems, Papadimitriou and Yannakakis~\cite{PapadimitriouY91} introduced the class \MaxSNP, which motivates the definition of $\MaxSP_k$ as a natural polynomial-time analogue (see \cite{BringmannCFK21} and Section~\ref{sec:SNPmotivation} for details). As a general class containing prominent, constant-factor approximable optimization problems, $\MaxSNP$ was introduced to give the first evidence that \MAXThreeSAT{} does not admit a PTAS, by proving that \MAXThreeSAT{} belongs to the hardest-to-approximate problems in \MaxSNP.

Ideally, one would like to understand the approximability landscape in $\MaxSNP$ fully and give tight approximability results for each such problem. Major advances towards this goal have been achieved by Creignou~\cite{Creignou95} and Khanna, Sudan, Trevisan, and Williamson~\cite{KhannaSTW00} who gave a complete classification of a large subclass of $\MaxSNP$, namely, maximum Boolean Constraint Satisfaction Problems (\MaxCSP{}): Each Boolean \MaxCSP{} either is polynomial-time optimizable or it does not admit a PTAS unless $\P=\NP$, rendering a polynomial-time constant-factor approximation best possible. For minimization analogues, including Boolean \MinCSP{}s, the situation is more diverse with several equivalence classes needed to describe the result~\cite{KhannaSTW00}.

We initiate the study of the same type of questions in the polynomial-time regime. Our aim is to achieve a detailed understanding of $\MaxSP_k$ and $\MinSP_k$ akin to the classification theorems achieved for \MaxCSP{}s and \MinCSP{}s~\cite{Creignou95,KhannaSTW00}.

\subsection{Our Results}
We approach the classification of $\MaxSP_k$ and $\MinSP_k$ by considering the simplest, yet expressive case of a \emph{single, binary relation} involved in the first-order formula; this route was also taken in earlier classification work for first-order properties~\cite{BringmannFK19}. We may thus view the relational structure as a graph, and call such a formula a \emph{graph formula}. Note that despite its naming, this includes problems not usually viewed as graph problems, such as all examples given in the introduction, which are natural problems on sets of \emph{vectors} in $\{0,1\}^d$.

We obtain a classification of each graph formula into one of four regimes, assuming central fine-grained hardness hypotheses whose plausibility we detail in Section~\ref{sec:assumptions}. All of our hardness results are implied by the Sparse \MAXThreeSAT{} hypothesis~\cite{AlmanW20}, which states that for all $\delta > 0$ there is some $c > 0$ such that \MAXThreeSAT{} on $n$ variables and $c n$ clauses has no $O(2^{n(1-\delta)})$-time algorithm.\footnote{This hypothesis is a stronger version of the \MAXThreeSAT{} hypothesis~\cite{LincolnWW18}.} (Actually, most of our hardness results already follow from weaker assumptions. For a discussion, see Section~\ref{sec:assumptions}.)
\newcommand\RBox[2]{%
	\tikz[baseline=(label.base)]{
        \fill[#1] (0, -.35ex)
            -- ++(1.3em, 0)
            -- ++(0, 2.25ex)
            -- ++(-1.3em, 0)
            -- cycle;
		\node[anchor=base, inner sep=0] (label) at (.65em, 0) {\normalfont\color{black}\bfseries\sffamily #2};
	}}%
	\newcommand\BBox[2]{%
	\tikz[baseline=(label.base)]{
        \fill[#1] (0, -.55ex)
            -- ++(1.7em, 0)
            -- ++(0, 2.65ex)
            -- ++(-1.7em, 0)
            -- cycle;
		\node[anchor=base, inner sep=0] (label) at (.85em, 0) {\normalfont\color{black}\bfseries\sffamily #2};
    }}%
\begin{theorem}\label{thm:main}
Let $\psi$ be a $\MaxSP_k$ or $\MinSP_k$ graph formula. Assuming the Sparse \MAXThreeSAT{} hypothesis, $\psi$ belongs to precisely one of the following regimes: 
\setlength\leftmargini{1.7em}
\begin{enumerate}
	\labelsep.4em
	\smallskip
	\itemdesc[\RBox{bargreen}{R1}]{Efficiently optimizable:}\newline There is some $\delta > 0$ such that $\psi$ can be solved exactly in time $O(m^{k-\delta})$.
	\itemdesc[\RBox{baryellow}{R2}]{Admits an approximation scheme, but not an efficient one:}\newline 
	For all $\varepsilon > 0$, there is some $\delta > 0$ such that $\psi$ can be $(1+\varepsilon)$-approximated in time $O(m^{k-\delta})$. However, for all $\delta > 0$, there is some $\varepsilon > 0$ such that $\psi$ cannot be $(1+\varepsilon)$-approximated in time~$O(m^{k-\delta})$.
	\itemdesc[\RBox{barorange}{R3}]{Admits a constant-factor approximation, but no approximation scheme:}\newline
	There are $\varepsilon, \delta > 0$ such that $\psi$ can be $(1+\varepsilon)$-approximated in time $O(m^{k-\delta})$.
	However, there also exists an $\varepsilon > 0$ such that for all $\delta > 0$, $\psi$ cannot be $(1+\varepsilon)$-approximated in time $O(m^{k-\delta})$.
	\itemdesc[\RBox{barred}{R4}]{Arbitrary polynomial-factor approximation is best possible (maximization):}\newline
	For every $\varepsilon > 0$, there is some $\delta > 0$ such that $\psi$ can be $O(m^\varepsilon)$-approximated in time $O(m^{k-\delta})$. However, for every $\delta > 0$, there exists some $\varepsilon > 0$ such that $\psi$ cannot be $O(m^\varepsilon)$-approximated in time $O(m^{k-\delta})$.
	\smallbreak\noindent{\color{lipicsGray}\normalfont\bfseries\sffamily No approximation at all (minimization):}\newline
	For all $\delta > 0$, we cannot decide whether the optimum value of $\psi$ is $0$ or at least $1$ in time $O(m^{k-\delta})$.
\end{enumerate}
\end{theorem}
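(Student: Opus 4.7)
The plan is to reduce the classification of graph formulas to a classification of their underlying Boolean predicate. In the vector view of $\MaxSP_k$ (resp.\ $\MinSP_k$), a graph formula $\psi(x_1,\ldots,x_k,y)$ is determined by a Boolean function $f\colon \{0,1\}^k\to\{0,1\}$ acting on the bits $x_1[y],\ldots,x_k[y]$, together with a finite set of equality patterns among the $x_i$'s that can be absorbed by a side case analysis. The objective thus reduces to maximizing (resp.\ minimizing) $\#\{y : f(x_1[y],\ldots,x_k[y])=1\}$ over $x_1,\ldots,x_k$. The overall scheme is to (i)~identify the syntactic/algebraic feature of $f$ that places $\psi$ into each regime, (ii)~produce an algorithm matching the regime, and (iii)~prove a matching conditional lower bound.

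For the algorithms, I would place $\psi$ in R1 when $f$ admits a simple algebraic description (few essential variables, low $\mathbb{F}_2$-degree, or a separable structure across the $x_i$'s) so that the polynomial method combined with fast rectangular matrix multiplication yields an exact $O(m^{k-\delta})$-time algorithm. I would place $\psi$ in R2 when $f$ has a structure (for instance, non-degenerate symmetric predicates like XOR or equivalence) for which the optimum is $\Omega(m)$ on all non-trivial instances; then a sampling-plus-polynomial-method approach in the style of Alman--Chan--Williams gives a $(1+\varepsilon)$-approximation with exponent $k-f(\varepsilon)$. R3 would cover the remaining cases admitting a constant-factor approximation via a bucketing or greedy argument even when the optimum can be very small, and R4 maximization would correspond to the conjunctive, Max-Inner-Product-like predicates, where the $m^\varepsilon$-approximation follows from a distributed-PCP-in-P construction.

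The lower bounds supply the finer separations. The R4-vs-R3 maximization and R4-vs-everything-else minimization separations are obtained by SETH/OV-style reductions. The R2-vs-R1 separation follows from exact MAX-3-SAT hardness combined with standard gap amplification to turn exact hardness into a $(1+\varepsilon)$-inapproximability lower bound. \emph{The chief obstacle, and what I expect to be the main technical contribution, is the R3-vs-R2 separation:} ruling out any approximation scheme for formulas that do admit a constant-factor approximation. My plan is to reduce from Sparse MAX-3-SAT (rather than merely from SETH), embedding a MAX-3-SAT instance with $O(n)$ clauses into $\psi$ via a split-and-list style construction that produces a relational structure of size $m\approx 2^{n/k}$. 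The sparsity is essential: it allows us to preserve the intrinsic $\Theta(1)$ MAX-3-SAT gap as a $(1+\varepsilon)$ gap in the $\MaxSP_k$ optimum without an instance blow-up that would otherwise spoil the reduction. Finally, I would verify exhaustiveness of the four regimes by a finite case analysis on $f$ after stripping dummy variables, confirming that the four syntactic classes are decidable directly from the form of $\psi$.
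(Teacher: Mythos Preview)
Your high-level architecture is right, but the classification criteria you propose are either incorrect or too vague to drive the proof. The most concrete error is the R1 criterion: you suggest ``low $\mathbb{F}_2$-degree,'' but the governing invariant is the degree of the \emph{real} multilinear extension of $f$ (the paper's $\Hdeg$). These differ sharply --- $k$-XOR has $\mathbb{F}_2$-degree $1$ but real degree $k$, so your criterion would wrongly place it in R1 while it actually belongs to R2. The paper's entire proof of the theorem is a case analysis on two integer parameters of $f$: the degree-hardness $\Hdeg(\psi)$ (governing exact solvability via a reduction to max-weight triangle, and the no-efficient-AS lower bounds) and the and-hardness $\Hand(\psi)$ (the largest $h$ such that every size-$h$ variable set admits a restriction of $f$ with a unique satisfying assignment; it governs constant-factor approximability and the no-AS lower bounds). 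Your ``separable structure'' and ``optimum is $\Omega(m)$'' heuristics do not recover these, and the latter is false already for R2 formulas (Furthest Neighbor can have optimum $0$).

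Two further gaps on the technique side. For minimization, the constant-factor approximation separating R2/R3 from R4 is not a greedy or bucketing argument; it is a locality-sensitive-hashing reduction from approximate $\Min(\psi)$ to \emph{listing} solutions of $\Zero(\psi)$, with the approximation quality tracking the listing efficiency --- this is the central algorithmic idea on the min side and you do not mention it. For the R3-vs-R2 lower bound, your instinct (Sparse \MAXThreeSAT{}, split-and-list) is correct, but one cannot directly embed a $\Theta(1)$ gap into $\psi$ when $\Hand(\psi)<k$, because the distributed-PCP encoding requires the target predicate to have a unique satisfying assignment. The paper instead routes through an intermediate low-dimensional $(k,3)$-OV problem, decomposes it into $\binom{k}{3}$ many $3$-OV instances, applies the distributed PCP black-box to each, and recombines; sparsity of \MAXThreeSAT{} is precisely what keeps the PCP's exponential dimension blow-up sub-polynomial. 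Similarly, ``standard gap amplification'' does not deliver the R2 lower bound (no efficient approximation scheme); that too goes through distributed PCP combined with a coordinate gadget built from the real-degree structure of $f$.
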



\begin{figure}[p]
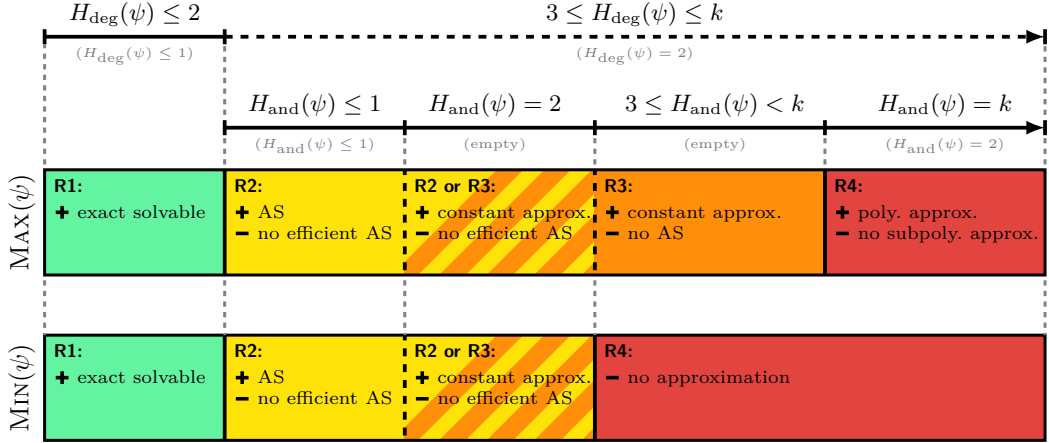

\begin{spectrumpicture}

\begin{scope}[upper]
    \drawmax
\end{scope}
\begin{scope}[lower]
    \drawmin
\end{scope}

\begin{scope}[yshift=.5 * \boxskip + \boxheight + \axisupperoffset]
    \draw[axis] (0, 0)
        -- node[above] {$\Hdeg(\psi) \leq 2$} node[below] {\note{$\Hdeg(\psi) \leq 1$}} ++(.18, 0);
    \draw[axis, dashed, ->] (.18, 0)
        -- node[above] {$3 \leq \Hdeg(\psi) \leq k$} node[below] {\note{$\Hdeg(\psi) = 2$}} ++(.82, 0);
    \drawaxismark{0}{0}
    \drawaxismark{.18}{0}
    \drawaxismark{1}{0}
\end{scope}
\begin{scope}[yshift=.5 * \boxskip + \boxheight + \axisloweroffset]
    \draw[axis, ->] (.18, 0)
    -- node[above] {$\Hand(\psi) \leq 1$} node[below] {\note{$\Hand(\psi) \leq 1$}} ++(.18, 0)
    -- node[above] {$\Hand(\psi) = 2$} node[below] {\note{empty}} ++(.18, 0)
    -- node[above] {$3 \leq \Hand(\psi) < k$} node[below] {\note{empty}} ++(.26, 0)
    -- node[above] {$\Hand(\psi) = k$} node[below] {\note{$\Hand(\psi) = 2$}} ++(.2, 0);
    \drawaxismark{.18}{0}
    \drawaxismark{.36}{0}
    \drawaxismark{.55}{0}
    \drawaxismark{.78}{0}
    \drawaxismark{1}{0}
\end{scope}

\draw[ind] (0, .5 * \boxskip + \boxheight)
    -- ++(0, \axisupperoffset - .5 * \axismarkheight);
\draw[ind] (.18, .5 * \boxskip + \boxheight)
    -- ++(0, \axisloweroffset - .5 * \axismarkheight)
    ++ (0, \axismarkheight)
    -- ++(0, \axisupperoffset - \axisloweroffset - \axismarkheight);
\draw[ind] (.36, .5 * \boxskip + \boxheight)
    -- ++(0, \axisloweroffset - .5 * \axismarkheight);
\draw[ind] (.55, .5 * \boxskip + \boxheight)
    -- ++(0, \axisloweroffset - .5 * \axismarkheight);
\draw[ind] (.78, .5 * \boxskip + \boxheight)
    -- ++(0, \axisloweroffset - .5 * \axismarkheight);
\draw[ind] (1, .5 * \boxskip + \boxheight)
    -- ++(0, \axisloweroffset - .5 * \axismarkheight)
    ++ (0, \axismarkheight)
    -- ++(0, \axisupperoffset - \axisloweroffset - \axismarkheight);

\foreach\x in {0, .18, .36, .55, 1} {
    \draw[ind] (\x, .5 * \boxskip)
        -- ++(0, -\boxskip);
};

\end{spectrumpicture}
\caption{Visualizes our classification of first-order optimization problems $\Max(\psi)$ and $\Min(\psi)$ for all $k \geq 3$, in terms of the hardness parameters $\Hand$ and $\Hdeg$ (as defined in Definitions~\ref{def:model-checking-hardness} and~\ref{def:optimization-hardness}). See~\autoref{def:approximation-scheme} for the precise definition of an (efficient) approximation scheme (AS). The pale labels indicate how to change the conditions to obtain the picture for $k = 2$.} \label{fig:spectrum}
\end{figure}

\begin{table}[p]
\caption{Some interesting examples classified according to \autoref{fig:spectrum}. For each problem $\psi$, an instance consists of $k$ sets of $n$ vectors $X_1, \dots, X_k \subseteq \{0, 1\}^d$. We write $\psi = \max_{x_1, \dots, x_k}/\min_{x_1, \dots, x_k} \phi$ and only list the inner formulas $\phi$ in the table. The computations of $\Hdeg(\psi)$ and $\Hand(\psi)$ are carried out in \autoref{sec:examples}.} \label{tab:examples}
\def\colorbullet#1{\textcolor{#1}{\rule{2em}{1.8ex}}}
\small
	\begin{minipage}{\textwidth}
\begin{tabular*}{\linewidth}{@{\extracolsep{\fill}}|llllc|}
	\hline
	\thead[l]{Problem $\psi$} & \thead[l]{$\phi$} & \thead[l]{$\Hdeg(\psi)$} & \thead[l]{$\Hand(\psi)$} & \thead[l]{Hardness regime} \\
	\hline
	\rule{0pt}{4.8ex}%
	\makecell[l]{Maximum/Minimum\\$2$-Agreement} & $\# \{ y : x_1[y] = x_2[y] \}$ & $2$ & $1$ & \makecell[r]{\BBox{baryellow}{R2}} \\[2.5ex]
	\makecell[l]{Maximum/Minimum\\$3$-Agreement} & $\# \{ y : x_1[y] = x_2[y] = x_3[y] \}$ & $2$ & $2$ & \makecell[r]{\BBox{bargreen}{R1}} \\[2.5ex]
	\makecell[l]{Maximum\\$k$-Agreement, $k \geq 4$} & $\# \{ y : x_1[y] = \cdots = x_k[y] \}$ & $2\lfloor{\frac k2}\rfloor$ & $k-1$ & \makecell[r]{\BBox{barorange}{R3}} \\[2.5ex]
	\makecell[l]{Minimum\\$k$-Agreement, $k \geq 4$} & $\# \{ y : x_1[y] = \cdots = x_k[y] \}$ & $2\lfloor{\frac k2}\rfloor$ & $k-1$ & \makecell[r]{\BBox{barred}{R4}} \\[2.5ex]
	\makecell[l]{Maximum/Minimum\\$k$-XOR} & $\# \{ y : x_1[y] \oplus \cdots \oplus x_k[y] \}$ & $k$ & $1$ & \makecell[r]{\BBox{baryellow}{R2}} \\[2.5ex]
	\makecell[l]{Maximum\\$k$-Inner Product} & $\# \{ y : x_1[y] \land \cdots \land x_k[y] \}$ & $k$ & $k$ & \makecell[r]{\BBox{barred}{R4}} \\[2.5ex]
	\makecell[l]{Minimum\\$k$-Inner Product} & $\# \{ y : x_1[y] \land \cdots \land x_k[y] \}$ & $k$ & $k$ & \makecell[r]{\BBox{barred}{R4}} \\
	\hline
\end{tabular*}
\end{minipage}
\end{table}

Note that the characteristics of the fourth (i.e., hardest) regime differ between the  maximization and the minimization case.

This theorem has immediate consequences for the approximability landscape in $\MaxSP_k$ and $\MinSP_k$, based on fine-grained assumptions: In particular, while any $\MaxSP_k$ graph formula can be approximated within a subpolynomial factor $O(m^{\varepsilon})$, we can rule out optimal approximation ratios that \emph{grow with $m$ but are strictly subpolynomial} (i.e., there are \emph{no} graph formulas whose optimal approximability within $O(m^{k-\Omega(1)})$ time is $\Theta(\log \log m)$, $\Theta(\log^2 m)$ or $2^{\Theta(\sqrt{\log m})}$). Furthermore, there are no graph formulas with an $f(1/\varepsilon)m^{k-\Omega(1)}$ approximation scheme that cannot already be optimized to exactness in time $m^{k-\Omega(1)}$.

In fact, beyond Theorem~\ref{thm:main} we even give an almost complete characterization of each regime: Specifically, we introduce integer-valued hardness parameters $0 \le \Hand(\psi) \le \Hdeg(\psi) \le k$ (defined in Section~\ref{sec:technical}). As illustrated in Figure~\ref{fig:spectrum}, we show how to place any graph formula $\psi$ into its corresponding regime depending solely on $\Hand(\psi)$ and $\Hdeg(\psi)$ -- with the \emph{single exception} of formulas $\psi$ with $\Hand(\psi) = 2$ and $\Hdeg(\psi) \ge 3$. For these formulas (e.g., Maximum Exact-3-Cover), it remains open whether they belong to Regime 2 or 3, i.e., whether or not they admit an approximation scheme (see~\autoref{open:exact-3-cover}).

In \autoref{tab:examples} we give natural problems for each regime (showing in particular that each regime is indeed non-empty). A particular highlight is that we can prove existence of \emph{constant-factor approximable formulas that do not admit an approximation scheme}, such as Maximum-4-Agreement (see Section~\ref{sec:technical}, Theorem~\ref{thm:as-maximization-classification} for a technical discussion of the lower bound). We give the details on how to calculate the hardness parameters $\Hand(\psi)$ and $\Hdeg(\psi)$ for each example in \autoref{tab:examples} in~\autoref{sec:examples}.

While the approximability of problems in the third and fourth regime seem rather unsatisfactory, we also consider the setting of additive approximation, and show that for every $\MaxSP_k$ and $\MinSP_k$ graph formula, there is an additive approximation scheme; however, assuming the Sparse MAX-3-SAT hypothesis it cannot be an efficient one unless we can optimize the problem exactly.

\begin{theorem}[Additive Approximation]\label{thm:additive-approximation}
For every $\psi$, we give an additive approximation scheme, i.e., for every $\varepsilon>0$, there is a $\delta>0$ such that we compute the optimum value up to an additive $\varepsilon |Y|$ error in time $O(m^{k-\delta})$.

If $\psi$ does not belong to Regime 1, we show that unless the Sparse \MAXThreeSAT{} hypothesis fails, for every $\delta>0$, there is some $\varepsilon > 0$ such that we cannot compute the optimum value up to an additive $\varepsilon |Y|$ error in time $O(m^{k-\delta})$.
\end{theorem}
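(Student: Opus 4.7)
The plan is to handle the upper and lower bounds separately. For the upper bound, I would apply a standard Chernoff--Hoeffding subsampling argument: given $\varepsilon > 0$, draw a uniform random sample $S \subseteq Y$ of size $t = \Theta(k \log m / \varepsilon^2)$ and define the empirical objective $\widetilde N(x_1, \dots, x_k) := \frac{|Y|}{|S|} \cdot |\{y \in S : \phi(x_1, \dots, x_k, y)\}|$. A Chernoff bound combined with a union bound over the at most $m^k$ candidate tuples $(x_1,\dots,x_k)$ guarantees $|\widetilde N - N| \le \varepsilon |Y|/2$ simultaneously for all tuples with high probability. The approximation problem thus reduces to the \emph{exact} $\MaxSP_k$ (or $\MinSP_k$) computation on the structure in which $Y$ is replaced by $S$, where $|S| = O(\log m / \varepsilon^2)$.

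For this reduced exact subproblem I would combine a type-enumeration idea with a case distinction on the original dimension $d = |Y|$. The key observation is that each vector has at most $2^t = m^{O(1/\varepsilon^2)}$ possible projections onto $S$. When $d$ is large compared to $\log m$, the subsampling alone shrinks the effective instance size enough that even exhaustive search over $k$-tuples runs in $O(m^{k-\delta})$ time; when $d$ is small, enumerating the $(2^d)^k$ type combinations on the original instance is already asymptotically faster than $m^k$. Carefully interleaving these two regimes---and, for $k = 2$, possibly invoking fast (rectangular) matrix multiplication---should yield a running time $O(m^{k-\delta(\varepsilon)})$ with $\delta(\varepsilon) > 0$.

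For the conditional lower bound, I would leverage Theorem~\ref{thm:main}: since $\psi \notin \text{R1}$, it belongs to R2, R3, or R4, each regime providing a family of instances on which a certain multiplicative-gap decision problem is hard in time $O(m^{k-\delta})$ under the Sparse \MAXThreeSAT{} hypothesis. The plan is to convert this multiplicative hardness into additive hardness. In the cases where the hard instance families satisfy $\mathrm{OPT} = \Omega(|Y|)$, $(1+\varepsilon_0)$-approximation hardness translates directly into additive $\Omega(\varepsilon_0 |Y|)$-hardness. For the remaining cases---most notably R4 minimization, where the hardness lies at the $\mathrm{OPT} = 0$ versus $\mathrm{OPT} \geq 1$ gap---I would design a replication or padding reduction that inflates both $\mathrm{OPT}$ and $|Y|$ proportionally, so that a constant-fraction additive error on the blown-up instance still resolves the original decision.

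The principal obstacle is the algorithmic part when $\varepsilon$ is very small: a naive type enumeration on the subsampled instance takes $(2^t)^k = m^{\Omega(k/\varepsilon^2)}$ time, which exceeds the target $m^{k-\delta}$. Obtaining $\delta(\varepsilon) > 0$ for every $\varepsilon > 0$ therefore requires the genuinely careful case distinction on $d$ described above, where most of the technical effort should go. On the hardness side, the delicate case is again R4 minimization, where the relevant gap is only $\mathrm{OPT} \in \{0,\geq 1\}$; ensuring that the replication reduction preserves exactly this gap while keeping $|Y|$ controlled relative to the additive tolerance requires care, but should follow standard padding techniques.
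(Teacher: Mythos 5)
Your first step (random subsampling of $Y$ to $t=\Theta(\log m/\varepsilon^2)$ coordinates with a Chernoff--union bound over all $\le m^k$ tuples) is exactly the paper's dimension reduction (\autoref{lem:additive-dimension-reduction}). The gap is in what you do afterwards. After subsampling you must solve the residual instance \emph{exactly} in dimension $d'=c\log n$ with $c=\Theta(1/\varepsilon^2)$, and neither of your two techniques covers this regime: exhaustive search over $k$-tuples costs $\Theta(n^k d')$, and $n$ can be $\Theta(m)$ (subsampling shrinks the dimension, not the number of vectors), so this is not $\Order(m^{k-\delta})$; type enumeration costs $(2^{d'})^k=n^{ck}$, which is far worse than $n^k$ once $c>1$. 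There is no "large $d$" regime where brute force wins and no "small $d$" regime where type enumeration wins -- the case $d'=c\log n$ with large constant $c$ is precisely the hard case (it contains low-dimensional \kOV{k}), and the only known way through it is the polynomial method. The paper's route is to cover each satisfying assignment of $\phi$ by the all-ones assignment, reducing the subsampled instance to \kMaxIP{k} in dimension $\Order(\log n)$, and then invoke the Alman--Williams polynomial-method algorithm (\autoref{lem:max-ip-polymethod}); minimization is handled by complementation after the dimension reduction (\autoref{lem:additive-max-min}). Without this (or an equivalent) ingredient your algorithmic claim does not go through.

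The lower bound also has a genuine gap. Black-boxing \autoref{thm:main} does not give you multiplicative gap instances with $\OPT=\Theta(|Y|)$: formulas in Regime 2 \emph{admit} multiplicative approximation schemes, so there is no fixed constant-factor hardness to convert, and for the remaining regimes the classification says nothing about the density $\OPT/|Y|$ of the hard instances. More seriously, your padding idea for the $\OPT=0$ versus $\OPT\ge 1$ gap cannot work for constant relative additive error: replicating coordinates $r$ times turns the gap into $0$ versus $r$ while $|Y|$ grows to $r\,|Y|$, so distinguishing still requires $\varepsilon<1/(2|Y|)$, which is not a constant. The paper instead re-opens the reductions: it combines the distributed-PCP gap reductions (\autoref{lem:reduction_clean}, \autoref{lem:reduction_k3ov_to_maxphi}) with the coordinate gadget of \autoref{lem:coordinate-gadget} (via \autoref{lem:kh-mip-to-hdeg-h-reduction-additive}) to produce instances whose two cases differ by $\Theta(|Y|)$ in value, for every $\psi$ with $3\le\Hdeg(\psi)$ or $\Hdeg(\psi)=k$ -- which is exactly the complement of Regime 1 by \autoref{thm:exact-opt-classification}. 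Note also that the paper derives the \emph{multiplicative} no-efficient-AS hardness from the additive one, not the other way around, precisely because the additive statement is the one the reductions naturally yield.
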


Finally, we remark that our classification identifies general trends in $\MaxSP_k$ and $\MinSP_k$, including that exact optimizability is described by a simple algebraic criterion, and that constant-factor approximation for minimization problems is equivalent to testing whether the optimum is $0$. We address these trends in Section~\ref{sec:technical}.

\subparagraph*{On Plausibility of the Hypotheses}
As tight unconditional lower bounds for polynomial-time problems are barely existent, we give \emph{conditional} lower bounds,  based on established assumptions in fine-grained complexity theory, such as SETH (see~\cite{VassilevskaW18} for an excellent survey). The essentially only exception is the only recently introduced \emph{Sparse} \MAXThreeSAT{} hypothesis~\cite{AlmanW20}; we use this hypothesis only for giving evidence against approximation schemes and for ruling out certain additive approximation schemes. While it is possible that this hypothesis could ultimately be refuted, our classification describes the frontier of the current state of the art. At the very least, our conditional lower bound for approximation schemes reveals a rather surprising connection: To obtain an \emph{approximation scheme} for \emph{polynomial-time problems} such as Maximum $4$-Agreement, we need to give an \emph{exponential}-time improvement for \emph{exact} solutions for Sparse \MAXThreeSAT{}!

\subparagraph*{On the Use of Fast Matrix Multiplication}
From our hardness parameters, it is easy to read off which of our algorithms require the use of fast matrix multiplication. In fact, this connection also works in the other direction: Since fast matrix multiplication techniques are at times deemed too sophisticated to be used in practice, we might want to disallow the use of these techniques. If we do this, it is plausible to assume that also Sparse \MAXkSAT{2} has no $O(2^{n(1-\delta)})$-time algorithms, and the remaining gap in our classification disappears! Specifically, any formula for which we know that it belongs to regime 2 or 3 would be placed into regime 3.

At the very least, this connection suggests that to close the remaining gap of our classification, we need to use fast matrix multiplication.

\subsection{Further Related Work}\label{sec:introduction:sec:related-work}

\subparagraph*{Parameterized Inapproximability}
The problem settings we consider are related to a recent and strong line of research on parameterized inapproximability, including~\cite{ChalermsookCKLM17, KarthikLM19, Cohen-AddadGKLL19, Manurangsi20, Lin21} (see~\cite{FeldmannKLM20} for a recent survey). In these contexts, one seeks to determine optimal approximation guarantees within some running time of the form $f(k)n^{g(k)}$ (such as FPT running time $f(k)\poly(n)$ or running time $n^{o(k)}$) for some parameter $k$ (such as the solution size). Unfortunately, many of these results do not immediately establish hardness for specific values of $k$. An important exception is work by Karthik, Laekhanukit, and Manurangsi~\cite{KarthikLM19}, which among other results establishes inapproximability of $k$-Dominating Set and the Maximum $k$-Inner Product within running time $O(n^{k-\varepsilon})$, assuming SETH. We give a detailed comparison of our setting to notions used in Karthik et al.'s work in Section~\ref{sec:psp}.

\subparagraph*{Fine-Grained Complexity of First-Order Properties}
Studying the fine-grained complexity of polynomial-time problem classes defined by first-order properties has been initiated in~\cite{Williams14, GaoIKW18}, with different settings considered in~\cite{GaoI19, Gao19}. In particular, for model-checking first-order properties, \cite{GaoIKW18} provides a completeness result (a fine-grained analogue of the Cook-Levin Theorem) and \cite{BringmannFK19} provides a classification theorem (a fine-grained analogue of Schaefer's Theorem, see also Section~\ref{sec:mc-dichotomy}).

For \emph{optimization} in P, \cite{BringmannCFK21} provides completeness theorems for $\MaxSP_k$ (a fine-grained analogue of \MaxSNP-completeness of \MAXThreeSAT~\cite{PapadimitriouY91}). In contrast, the present paper gives a classification theorem (building a fine-grained analogue of the approximability classifications of optimization variants of CSPs~\cite{Creignou95, KhannaSTW00}).

\section{Preliminaries} \label{sec:preliminaries}
For an integer $k \geq 0$, we set $[k] = \{1, \ldots, k\}$. For a set $A$ and integer $k \geq 0$ we denote by $\binom Ak$ the set of all size-$k$ subsets of $A$. Let $\phi(z_1, \ldots, z_k)$ a Boolean function, and let $S \subseteq [k]$. Any function obtained by instantiating all variables $z_i$ ($i \not\in S$) in $\phi$ by constant values is called an \emph{$S$-restriction} of $\phi$. Finally, we write $\tilde\Order(\cdot)$ to hide poly-logarithmic factors and denote by $\omega < 2.373$ the exponent of square matrix multiplication~\cite{AlmanW21}.

\subsection{First-Order Model-Checking}
A \emph{relational structure} consists of $n$ objects and relations $R_1, \dots, R_\ell$ (of arbitrary arities) between these objects. A \emph{first-order formula} is a quantified formula of the form
\begin{equation*}
  \psi = (Q_1 x_1)\, \ldots\, (Q_{k+1} x_{k+1})\, \phi(x_1, \ldots, x_{k+1}),
\end{equation*}
where $\phi$ is a Boolean formula over the predicates $R(x_{i_1}, \dots, x_{i_\ell})$ and each $Q_i$ is either a universal or existential quantifier. Given a $(k+1)$-partite structure on objects $X_1 \union \dots \union X_{k+1}$, the \emph{model-checking problem} (or \emph{query evaluation problem}) is to check whether $\psi$ holds on the given structure, that is, for $x_i$ ranging over $X_i$ and by instantiating the predicates $R(x_{i_1}, \dots, x_{i_\ell})$ in $\phi$ according to the structure, $\psi$ is valid.

Following previous work in this line of research~\cite{GaoIKW18,BringmannFK19}, we usually assume that the input is represented sparsely -- that is, we assume that the relational structure is written down as an exhaustive enumeration of all records in all relations; let $m$ denote the total number of such entries\footnote{By ignoring objects not occurring in any relation, we may always assume that $n \leq \Order(m)$.}. The convention is reasonable as this data format is common in the context of database theory and also for the representation of graphs (where it is called the \emph{adjacency list} representation), see Section~\ref{sec:conclusion} for a further discussion.

A first-order formula $\psi$ is called a \emph{graph formula} if it is defined over a single binary predicate $E(x_i, x_j)$. Many natural problems fall into this category; see~\cite{BringmannFK19} for a detailed discussion on the subject.

\subsection{\texorpdfstring{$\MaxSP_k$ and $\MinSP_k$}{MaxSPk and MinSPk}}
In analogy to first-order properties with quantifier structure $\exists^k\forall$ (with maximization instead of $\exists$ and counting instead of $\forall$) and following the definition in~\cite{BringmannCFK21}, we now introduce the class of optimization problems. Let $\MaxSP_k$ be the class containing all formulas of the form 
\begin{equation} \label{eq:psi}
  \psi = \max_{x_1, \dots, x_k} \counting_y\, \phi(x_1, \ldots, x_k, y),
\end{equation}
where, as before, $\phi$ is a Boolean formula over some predicates $R(x_{i_1}, \dots, x_{i_\ell})$. We similarly define $\MinSP_k$ with ``$\min$'' in place of ``$\max$''. Occasionally, we write $\OptSP_k$ to refer to both of these classes simultaneously, and we write ``$\opt$'' as a placeholder for either ``$\max$'' or ``$\min$''. In analogy to the model-checking problem for first-order properties, we associate to each formula $\psi \in \OptSP_k$ an algorithmic problem:

\begin{definition}[$\Max(\psi)$ and $\Min(\psi)$] \label{def:optimization}
Let $\psi \in \MaxSP_k$ be as in~\eqref{eq:psi}. Given a $(k+1)$-partite~structure on objects $X_1 \union \dots \union X_k \union Y$, the $\Max(\psi)$ problem is to compute
\begin{equation*}
  \OPT = \max_{x_1 \in X_1, \dots, x_k \in X_k} \,\counting_{y \in Y}\, \phi(x_1, \dots, x_k, y).
\end{equation*}
We similarly define $\Min(\psi)$ for $\psi \in \MinSP_k$. Occasionally, for $\psi \in \OptSP_k$, we write $\Opt(\psi)$ to refer to both problems simultaneously.
\end{definition}

For convenience, we introduce some further notation: For objects $x_1 \in X_1, \dots, x_k \in X_k$, we denote by $\Val(x_1, \dots, x_k) = \counting_{y \in Y} \phi(x_1, \dots, x_k, y)$ the \emph{value} of $(x_1, \dots, x_k)$.

The problem $\Opt(\psi)$ can be solved in time $\Order(m^k)$ for all $\OptSP_k$ formulas $\psi$, by a straightforward extension of the model-checking baseline algorithm; see~\cite{BringmannCFK21} for details. As this is clearly optimal for $k = 1$, we will often implicitly assume that $k \geq 2$ in the following.

For a clean analogy between model-checking and the optimization classes $\MaxSP_k$ and $\MinSP_k$, we will from now view model-checking as ``testing for zero''. More precisely, the model-checking problem of $\exists x_1\, \dots\, \exists x_k\, \forall y\, \neg\phi(x_1, \dots, x_k, y)$ is equivalent to testing whether $\Min(\psi)$ has optimal solution $\OPT = 0$, where $\psi = \min_{x_1, \dots, x_k} \counting_y \phi(x_1, \dots, x_k, y)$. We refer to the latter problem as $\Zero(\psi)$.

\autoref{def:optimization} introduces $\Max(\psi)$ and $\Min(\psi)$ as \emph{exact} optimization problems (i.e., $\OPT$ is required to be computed exactly). We say that an algorithm computes a \emph{(multiplicative) $c$\=/approximation} for $\Max(\psi)$ if it computes any value in the interval $[c^{-1} \mult \OPT, \OPT]$. Similarly, a (multiplicative) $c$\=/approximation for $\Min(\psi)$ computes a value in $[\OPT, c \mult \OPT]$.

\section{Technical Overview} \label{sec:technical}
This section serves the purpose of stating our results formally, to provide the main proof ideas and techniques, and to convey some intuition whenever possible. We postpone the precise hypotheses for our conditional lower bounds along with a discussion on their plausibility to \autoref{sec:assumptions}.

We first introduce our hardness parameters: the and-hardness $\Hand(\psi)$ borrowed from previous work~\cite{BringmannFK19} (reviewed below), as well as a novel algebraic parameter that we call degree-hardness $\Hdeg(\psi)$. In the subsequent sections, we give an overview over our various algorithmic and hardness results based on the values $\Hdeg(\psi), \Hand(\psi)$ (see Figure~\ref{fig:spectrum}), with the formal proof of Theorem~\ref{thm:main} given at the end of this section.

\subsection{Bringmann et al.'s Model-Checking Dichotomy}\label{sec:mc-dichotomy}
Bringmann, Fischer and Künnemann~\cite{BringmannFK19} established a fine-grained classification of all $\exists^k\forall$-quantified graph properties into computationally easy and hard model-checking problems. As our work extends that classification (and also since our results are of a similar flavor), we briefly summarize their results. The hardness parameter presented in~\autoref{def:model-checking-hardness} forms the basis for the dichotomy.

Here, and for the remainder of this section, we write $\psi_0$ to denote the Boolean function obtained from $\psi$ in the following way: Let $\psi = \opt_{x_1, \dots, x_k} \counting_y \phi(x_1, \ldots, x_k, y)$ be any graph formula, where $\phi$ is a propositional formula over the atoms $E(x_i, x_j)$~($i, j \in [k]$) and $E(x_i, y)$~($i \in [k]$). Consider the Boolean function obtained from $\phi$ by replacing every atom $E(x_i, x_j)$~($i, j \in [k]$) by $\false$. What remains is a Boolean function over the $k$ atoms $E(x_i, y)$~($i \in [k])$ and we denote this formula by $\psi_0$. For example, if $\psi = \max_{x_1, x_2} \counting_y (\neg E(x_1, x_2) \land E(x_1, y) \land E(x_2, y))$ then we define $\psi_0 : \{0, 1\}^2 \to \{0, 1\}$ by $\psi_0(a_1, a_2) = a_1 \land a_2$. 

As apparent in~\autoref{def:model-checking-hardness}, the core difficulty of a formula $\psi$ is captured by $\psi_0$ and thus not affected by the predicates $E(x_i, x_j)$. We elaborate on this phenomenon in \autoref{sec:vector}.

\begin{definition}[And-Hardness] \label{def:model-checking-hardness}
Let $\phi$ be a Boolean function on $k$ inputs. The \emph{and-hardness $\Hand(\phi)$} of $\phi$ is the largest integer $0 \leq h \leq k$ such that, for any index set \raisebox{0pt}[0pt][0pt]{$S \in \binom{[k]}{h}$}, there exists some $S$\=/restriction of $\phi$ with exactly one satisfying assignment. (Set $\Hand(\phi) = 0$ for constant-valued $\phi$.) For an $\OptSP_k$ graph formula $\psi$, we define $\Hand(\psi) = \Hand(\psi_0)$.
\end{definition}

This hardness parameter essentially specifies the computational hardness of the model-checking problems $\Zero(\phi)$; here, $\Hand(\psi) \leq 2$ is the critical threshold:

\begin{theorem}[Model-Checking, \cite{BringmannFK19}] \label{thm:model-checking-classification}
Let $\psi$ be a $\MinSP_k$ graph formula.
\begin{itemize}
\item If $\Hand(\psi) \leq 2$ and $\Hand(\psi) < k$, then $\Zero(\psi)$ can be solved in time $\Order(m^{k-\delta})$ for some $\delta > 0$.
\item If $3 \leq \Hand(\psi)$ or $\Hand(\psi) = k$, then $\Zero(\psi)$ cannot be solved in time $\Order(m^{k-\delta})$ for any $\delta > 0$ unless the \MAXThreeSAT{} hypothesis fails.
\end{itemize}
\end{theorem}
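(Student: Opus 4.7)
The theorem has two parts; I would handle them separately.

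\emph{Algorithmic direction} (when $\Hand(\psi) \leq 2$ and $\Hand(\psi) < k$). My first step would be to eliminate the $E(x_i, x_j)$ atoms by branching over the $2^{\binom{k}{2}}$ possible truth assignments on the pairs $(x_i, x_j)$; in each branch, the remaining inner formula $\psi'_0$ depends only on the $E(x_i, y)$ atoms, subject to sparse side-constraints on the pairs. Each such $\psi'_0$ inherits and-hardness at most $2$. For $\Hand(\psi'_0) \leq 1$, I would argue that one variable $x_i$ can be preprocessed into a lookup structure, reducing $\Zero$ to a $(k-1)$-dimensional instance solvable in $O(m^{k-1})$. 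For $\Hand(\psi'_0) = 2$, the key step is a structural lemma expressing $\psi'_0$ as a combination of at-most-$2$-literal clauses in the $E(x_i, y)$ atoms; this enables enumerating $k - 2$ of the variables and applying rectangular matrix multiplication over the bipartite neighborhoods of the remaining two. The hypothesis $\Hand(\psi) < k$ excludes the degenerate case $k = 2$, $\Hand = 2$, which is $2$-OV and thus genuinely requires $\Omega(m^2)$ time.

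\emph{Hardness direction} (when $3 \leq \Hand(\psi)$ or $\Hand(\psi) = k$). Here I would reduce from MAX-$3$-SAT via split-and-list. If $\Hand(\psi) = k$, then $\psi_0$ has a unique satisfying assignment, so (up to coordinate complementation) $\psi_0 = \bigwedge_{i=1}^{k}(a_i = b_i)$ and $\Zero(\psi)$ directly encodes $k$-Orthogonal Vectors, which is $m^{k - o(1)}$-hard under SETH and hence under the MAX-$3$-SAT hypothesis. If $3 \leq \Hand(\psi) < k$, I would fix a witnessing $3$-subset $S \subseteq [k]$ and an assignment $\beta$ to $[k] \setminus S$ such that the $\beta$-restriction of $\psi_0$ is uniquely satisfied on $S$. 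Given a MAX-$3$-SAT instance on $n$ variables, I would split them into $k$ groups of $n/k$, enumerate the $2^{n/k}$ partial assignments per group to populate $X_1, \ldots, X_k$, and craft the $E(x_i, y)$ and $E(x_i, x_j)$ relations so that (i)~the only $(x_i)_{i \notin S}$ compatible with $\OPT = 0$ are those which force the restriction $\beta$, and (ii)~on the remaining three slots the formula simulates a $3$-ary clause check. An $O(m^{k - \delta})$ algorithm for $\Zero(\psi)$ would then give a $2^{(1 - \delta/k)n}$ algorithm for MAX-$3$-SAT.

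\emph{Main obstacle.} The hardest technical step will be the algorithmic analysis of Boolean functions of and-hardness exactly $2$: I would need a Schaefer-style normal-form lemma showing that every such $\psi'_0$ decomposes into $\leq 2$-ary pieces in a form that truly enables an $m^{k-\delta}$ algorithm and that composes with the sparse pair-constraints left over from the branching step. On the hardness side, the subtle piece is designing selector gadgets that simultaneously (i)~enforce the restriction $\beta$ on $[k] \setminus S$, (ii)~interact correctly with the adjacency predicates $E(x_i, x_j)$, and (iii)~preserve input sparsity, so that the MAX-$3$-SAT lower bound transfers without any extra polynomial loss in $m$.
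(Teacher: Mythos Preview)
This theorem is not proved in the present paper; it is quoted from \cite{BringmannFK19} and used as a black box. So there is no ``paper's own proof'' to compare against here. That said, your proposal has a real gap in the hardness direction that would need to be fixed before it could stand on its own.

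Your plan for $3 \leq \Hand(\psi) < k$ reads: ``fix a witnessing $3$-subset $S \subseteq [k]$ and an assignment $\beta$ to $[k]\setminus S$ such that the $\beta$-restriction of $\psi_0$ is uniquely satisfied on $S$.'' This inverts the quantifier in the definition of $\Hand$. The condition $\Hand(\psi)\geq 3$ says that \emph{for every} size-$3$ set $S$ there is such a restriction, not that a single special $S$ exists. And the reduction genuinely needs the universal statement: after split-and-list on a \MAXThreeSAT{} instance into $k$ groups, each clause lives on some triple of groups, and different clauses hit different triples. A gadget built for one fixed $S$ cannot encode clauses whose literals fall on a different triple. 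The actual reduction in \cite{BringmannFK19} (and the analogous one carried out in this paper for the optimization version, \autoref{lem:exact-hardness-hyperclique}) starts from $h$-uniform \HyperClique{}, iterates over \emph{all} size-$h$ index sets $I$, and for each $I$ invokes the $I$-restriction with a unique satisfying assignment to build that block of coordinates. Your ``selector gadgets that enforce $\beta$ on $[k]\setminus S$'' idea cannot work as stated, because the assignment $\beta$ depends on $S$ and there is no single global $\beta$ to enforce.

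On the algorithmic side your outline is in the right spirit (eliminate $E(x_i,x_j)$ atoms, brute-force down to a bounded-arity core, use matrix multiplication when $\Hand=2$), and this is indeed how \cite{BringmannFK19} proceeds; but the ``Schaefer-style normal-form lemma'' you anticipate is not what is used there. The algorithm exploits directly that $\Hand\leq 2$ guarantees two indices on which no restriction has a unique satisfier, brute-forces the other $k-2$, and reduces the residual two-variable problem to triangle detection.
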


\subsection{Exact Optimization}
We are now ready to detail our results. Our main contribution is a dichotomy for the exact solvability and approximability of $\Max(\psi)$ and $\Min(\psi)$ for all $\MaxSP_k$ and $\MinSP_k$ graph formulas~$\psi$. For the exact case, the decisive criterion for the hardness of some formula $\psi$ can be read off the polynomial \emph{extension} of the function $\psi_0$. Specifically, for any Boolean function $\phi$ there exists a (unique) multilinear polynomial with real coefficients that computes~$\phi$ on binary inputs. By abuse of notation, we refer to that polynomial by writing $\phi$ as well. The \emph{degree $\deg(\phi)$} of $\phi$ is the degree of its polynomial extension.

As an example, consider the Exact-$3$-Cover property: $\phi(z_1, z_2, z_3)$ is true if and only if exactly one of its inputs $z_1$, $z_2$ or $z_3$ is true. Then its unique multilinear polynomial extension is
\begin{equation*}
  \phi(z_1, z_2, z_3) = 3z_1 z_2 z_3 -2(z_1 z_2 + z_2 z_3 + z_3 z_1) + (z_1 + z_2 + z_3),
\end{equation*}
and therefore $\deg(\phi) = 3$.

\begin{restatable}[Degree Hardness]{definition}{defoptimizationhardness} \label{def:optimization-hardness}
Let $\phi$ be a Boolean function on $k$ inputs. The \emph{degree hardness $\Hdeg(\phi)$} of $\phi$ is the largest integer $0 \leq h \leq k$ such that, for any index set \raisebox{0pt}[0pt][0pt]{$S \in \binom{[k]}{h}$}, there exists some $S$\=/restriction of $\phi$ of degree~$h$. For an $\OptSP_k$ graph formula $\psi$, we define $\Hdeg(\psi) = \Hdeg(\psi_0)$.
\end{restatable}

It always holds that $\Hand(\psi) \leq \Hdeg(\psi)$ (see~\autoref{prop:hardness-relation}), but in general these parameters behave very differently (see~\autoref{sec:examples} for a list of examples). With $\Hopt$ in place of $\Hand$, we are able to recover the same classification as~\autoref{thm:model-checking-classification} for both exact maximization and minimization:

\begin{theorem}[Exact Optimization] \label{thm:exact-opt-classification}
Let $\psi$ be an $\OptSP_k$ graph formula.
\begin{itemize}
\item If $\Hdeg(\psi) \leq 2$ and $\Hdeg(\psi) < k$, then $\Opt(\psi)$ can be solved in time $\Order(m^{k-\delta})$ for some $\delta > 0$.
\item If $3 \leq \Hdeg(\psi)$ or $\Hdeg(\psi) = k$, then $\Opt(\psi)$ cannot be solved in time $\Order(m^{k-\delta})$ for any $\delta > 0$ unless the \MAXThreeSAT{} hypothesis fails.
\end{itemize}
\end{theorem}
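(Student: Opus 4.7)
The plan is to prove the two halves of the classification separately: an $O(m^{k-\delta})$ algorithm when $\Hdeg(\psi) \leq 2$ and $\Hdeg(\psi) < k$, and a matching $m^{k-o(1)}$ conditional lower bound under the \MAXThreeSAT{} hypothesis otherwise. In either direction I would first absorb the $\binom{k}{2}$ atoms $E(x_i, x_j)$ by case analysis: they admit only $O(1)$ joint truth assignments, and for each assignment I restrict each $X_i$ to the tuples consistent with it; the problem then reduces to optimizing $\psi_0$-governed inputs.

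The key structural observation driving the algorithm is that $\Hdeg(\psi_0) \leq h$ is equivalent to the existence of an $(h+1)$-set $S \subseteq [k]$ such that the multilinear polynomial extension of $\psi_0$ contains no monomial simultaneously involving all variables $\{z_i : i \in S\}$. (The proof is a direct check: the coefficient of $\prod_{i \in S} z_i$ in an $S$-restriction is a multilinear polynomial in the fixed Boolean values whose vanishing on $\{0,1\}^{[k]\setminus S}$ is equivalent to vanishing of all coefficients $c_T$ of $\psi_0$ with $T \supseteq S$.) I can therefore partition monomials of $\psi_0$ by which variable of $S$ they avoid, writing $\psi_0 = \sum_{i \in S} P_i$ with $P_i$ independent of $z_i$, and obtain
\[
\Val(x_1, \dots, x_k) = \sum_{i \in S} g_i, \qquad g_i \;=\; \sum_{y \in Y} P_i(x_1[y], \dots, x_k[y]),
\]
where each $g_i$ is independent of $x_i$. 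If $\Hdeg(\psi) = 0$, $\Val$ is constant and optimization is trivial. If $\Hdeg(\psi) = 1$ (so $|S| = 2$), then $x_i$ appears only in $g_j$ and $x_j$ only in $g_i$, allowing me to optimize $x_i$ and $x_j$ independently once the remaining $k - 2$ variables are fixed, saving a factor of $n$ from the baseline. If $\Hdeg(\psi) = 2$ and $k \geq 3$ ($|S| = 3$), the task reduces to a ``$3$-matrix max-of-sums'' of the form $\opt_{x_i, x_j, x_\ell} g_i(x_j, x_\ell, \cdot) + g_j(x_i, x_\ell, \cdot) + g_\ell(x_i, x_j, \cdot)$, which I would attack by fast (rectangular) matrix multiplication, exploiting that each $g_i$ takes only integer values in $[0, |Y|]$; this yields $O(m^{k-\delta})$ for some $\delta = \delta(\omega) > 0$.

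For the hardness direction, I would reduce from problems known to be $m^{k-o(1)}$-hard under the \MAXThreeSAT{} hypothesis. If $\Hdeg(\psi) = k$, the only $[k]$-restriction is $\psi_0$ itself, so its polynomial necessarily contains the full monomial $z_1 \cdots z_k$; a gadget construction in which the lower-degree monomials are suppressed by choosing indicator-like inputs then reduces Maximum $k$-Inner Product (and its minimization dual, Minimum $k$-AND / $k$-OV) to $\Opt(\psi)$, which is $m^{k-o(1)}$-hard under SETH and thus under the \MAXThreeSAT{} hypothesis. If $3 \leq \Hdeg(\psi) < k$, I pick a $3$-set $S$ on which \emph{every} $S$-restriction has degree $3$; this provides a robust three-way interaction that lets me embed a hard $3$-variate counting problem (such as Max-$k$-XOR restricted to three coordinates) by treating the remaining $k - 3$ variables as trivial constants. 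The main obstacle I anticipate is the $\Hdeg = 2$ algorithm: converting the three-way decomposition into an honest $O(m^{k-\delta})$ procedure requires a careful rectangular-matrix-multiplication scheme together with handling the coupling through the shared variables $x_\ell$ with $\ell \notin S$. On the hardness side, a recurring but manageable technicality is controlling the contribution of lower-degree ``noise'' monomials in the chosen restriction via padding, so that the planted high-degree interaction provably dominates the optimum value.
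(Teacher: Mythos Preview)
Your algorithmic plan matches the paper's: decompose $\psi_0 = \sum_{i\in S} P_i$ with $P_i$ independent of $z_i$, brute-force the variables outside $S$, and for $|S|=3$ reduce to a max-weight triangle problem. One subtlety you underspecify: bounding each $g_i$ by $|Y|$ does not by itself give $O(m^{3-\delta})$ in the sparse model, since $|Y|\cdot n^{\omega}$ can exceed $m^3$. The paper instead bounds the \emph{total} edge weight by $O(m^2)$ (each nonzero vector entry contributes to at most $n$ edges), then thresholds into heavy and light edges and handles each side separately; this is what makes the matrix-multiplication step go through.

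The hardness direction has a real gap in the case $3 \le \Hdeg(\psi) < k$. Picking a single $3$-set $S$ and ``treating the remaining $k-3$ variables as trivial constants'' can only transfer $m^{3-o(1)}$ hardness, not $m^{k-o(1)}$: if $X_4,\dots,X_k$ are singletons, the resulting instance still has size $m$, so an $O(m^{k-\delta})$ algorithm for $\Opt(\psi)$ is slower than the $O(m^3)$ baseline on the embedded $3$-variable problem and yields no contradiction. (Also, $\Hdeg\ge 3$ only guarantees that \emph{some} $S$-restriction has degree $3$, not all of them; and Max-$3$-XOR is not a usable source here.) The paper's fix is to reduce from $h$-uniform $k'$-\textsc{HyperClique} for a large $k'$: the $k'$ vertex parts are grouped into $k$ blocks $X_1,\dots,X_k$, and for \emph{every} size-$h$ index set $I\subseteq[k]$ one encodes all non-hyperedges spanning the parts indexed by $I$, using the guaranteed degree-$h$ $I$-restriction of $\psi_0$. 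The crucial tool making this encoding exact is a \emph{coordinate gadget}: a finite list of unary (bit-flip/fix) transformations $\phi_1,\dots,\phi_\ell$ of the chosen restriction satisfying $\sum_j \phi_j = \beta_1 - \beta_2\, z^{I}$ for positive integers $\beta_1,\beta_2$, constructed by iteratively cancelling intermediate-degree monomials. This same gadget is what drives the $\Hdeg(\psi)=k$ reduction from $k$-OV; your ``suppress lower-degree monomials by indicator-like inputs / padding'' does not obviously produce such an identity for an arbitrary $\psi_0$.
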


We will formally prove~\autoref{thm:exact-opt-classification} in~\autoref{sec:exact}. The algorithmic part is along the same lines as~\cite{BringmannFK19}: We first brute-force over all but $k = 3$ quantifiers, and solve the remaining problem by a reduction to maximum-weight triangle detection with small edge weights. Intuitively, since the degree of the resulting $3$-variable problem is at most $2$, we can express the objective as a sum of three parts depending on two variables each (this approach is also used in a similar context in~\cite[Chapter 6.5]{Williams07}). This allows us to label the edges of a triangle instance with corresponding parts, which can be shown to yield average edge weight $\Order(1)$. 

The conditional lower bound is more interesting. Our reduction is inspired by a standard argument proving quadratic-time hardness of the Maximum Inner Product problem (\MaxIP{}). That lower bound is based on the \OV{} hypothesis, so it consists of a reduction from an \OV{} instance $X_1, X_2 \subseteq \{0, 1\}^d$ to an instance of \MaxIP{}. The idea is to use a \emph{gadget} that maps every entry in $x_i \in X_i$ to a constant number of new entries;\footnote{Every entry $a$ in $x_1$ is replaced by three new coordinates $(a, 1-a, 1-a)$ and every entry $b$ in $x_2$ is replaced by $(1-b, b, 1-b)$. The contribution to the inner product of the new vectors is equal to $a(1-b) + (1-a)b + (1-a)(1-b) = 1 - ab$.} let $x_i' \in \{0, 1\}^{\Order(d)}$ denote the vector after applying the gadget coordinate-wise. The crucial property is that $\innerprod{x_1'}{x_2'} = d - \innerprod{x_1}{x_2}$, and thus a pair of orthogonal vectors $x_1, x_2$ corresponds to a pair of vectors $x_1', x_2'$ of maximum inner product.

To mimic the reduction for all problems which are hard in the sense of~\autoref{thm:exact-opt-classification}, we settle for the weaker but sufficient property that the value of $(x_1', x_2')$ equals $\beta_1 d - \beta_2 \innerprod{x_1}{x_2}$, for some positive integers~$\beta_1, \beta_2$. It follows from our algebraic characterization of hard functions that a gadget with such guarantees always exists (\autoref{lem:coordinate-gadget}). Ultimately, our hardness proof makes use of that gadget in a similar way as for \MaxIP{} (Lemmas~\ref{lem:exact-hardness-ov} and~\ref{lem:exact-hardness-hyperclique}).

The hardness part of~\autoref{thm:exact-opt-classification} can in fact be stated in a more fine-grained way: If some problem $\Opt(\psi)$ has degree hardness $h = \Hdeg(\psi) \geq 3$, then the hardness proof can be conditioned on the weaker \hUniformHyperClique{h} assumption. That connection can be complemented by a partial converse, thus revealing a certain \emph{equivalence} between exact optimization and hyperclique detection. An analogous equivalence for model-checking graph formulas could not be proved and was left as an open problem in~\cite{BringmannFK19}. We postpone the proof of \autoref{thm:exact-equivalence} to \autoref{sec:exact-equivalence}.

\begin{restatable}[Equivalence of $\Opt(\psi)$ and \HyperClique{}]{theorem}{thmexactequivalence} \label{thm:exact-equivalence}
Let $\psi$ be an $\OptSP_k$ graph formula of degree hardness $h = \Hdeg(\psi) \geq 2$.
\begin{itemize}
\item If $\Opt(\psi)$ can be solved in time $\Order(m^{k-\delta})$ for some $\delta > 0$, then, for some (large) $k' = k'(k, h, \delta)$, \hUniformkHyperClique{h}{k'} can be solved in time $\Order(n^{k'-\delta'})$ for some $\delta' > 0$.
\item If \hUniformkHyperClique{h}{(h+1)} can be solved in time $\Order(n^{h+1-\delta})$ for some $\delta > 0$, then $\Opt(\psi)$ can be solved in time $\Order(m^{k-\delta'})$ for some $\delta' > 0$.
\end{itemize}
\end{restatable}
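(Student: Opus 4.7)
Both directions hinge on the algebraic content of $\Hdeg(\psi) = h$. For the forward direction, my plan is to reuse, but scale up, the hardness reduction behind \autoref{thm:exact-opt-classification}. Its key ingredient is the coordinate gadget of \autoref{lem:coordinate-gadget}, which already reduces $(h+1)$-clique in $h$-uniform hypergraphs to $\Opt(\psi)$. To handle $\hUniformkHyperClique{h}{k'}$ for arbitrary $k' \geq h+1$, I would compose the gadget with a partition trick: given an $n$-vertex hypergraph $H$, split its vertex set into $k$ balanced parts $V_1, \ldots, V_k$ (assuming $k \mid k'$), let $X_i$ enumerate all $\binom{|V_i|}{k'/k}$ candidate subsets of $k'/k$ vertices of $V_i$, and apply the coordinate gadget across every $(h+1)$-subfamily of parts so that the optimum of the resulting $\Opt(\psi)$ instance is attained exactly when the chosen subsets jointly span a $k'$-hyperclique of $H$. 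The reduction produces an $\Opt(\psi)$ instance of size $m = \Order(n^{k'/k})$; a $\Order(m^{k-\delta})$-time algorithm for $\Opt(\psi)$ thus solves $\hUniformkHyperClique{h}{k'}$ in time $\Order(n^{k'-\delta k'/k})$, and choosing $k' = k'(k, h, \delta)$ as a sufficiently large multiple of $k$ yields $\delta' = \delta k'/k > 0$.

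For the backward direction, I would first use the definition of $\Hdeg(\psi) = h$ to pick an index set $S^* \subseteq [k]$ of size $h+1$ such that \emph{every} $S^*$-restriction of $\psi_0$ has polynomial degree at most $h$; such $S^*$ must exist, since otherwise $\Hdeg(\psi) \geq h+1$. Brute-forcing over the $k-(h+1)$ variables outside $S^*$ in time $\Order(m^{k-(h+1)})$ reduces $\Opt(\psi)$ to that many residual $(h+1)$-variable optimization problems whose inner polynomial $\phi'$ has degree $\leq h$. Expanding in the monomial basis,
\begin{equation*}
  \phi'(z_1, \ldots, z_{h+1}) \;=\; \sum_{T \subseteq [h+1],\; |T| \leq h}\, c_T \prod_{j \in T} z_j,
\end{equation*}
and summing over $y \in Y$, the objective becomes a linear combination $\sum_T c_T\, w_T$ of $|T|$-wise inner products $w_T = \sum_{y \in Y}\prod_{j \in T} x_{i_j}[y]$, each an integer in $\{0, \ldots, |Y|\}$ that can be precomputed in $\Order(m^h)$ time. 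Maximising $\sum_T c_T w_T$ over the choice of $(x_{i_1}, \ldots, x_{i_{h+1}})$ is exactly a max-weight $(h+1)$-hyperclique problem on an $(h+1)$-partite $h$-uniform hypergraph with $\Order(m)$ vertices and polynomially-bounded integer weights.

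The main obstacle is the final step: turning the hypothesised $\Order(n^{h+1-\delta})$-time \emph{unweighted} detection algorithm for $\hUniformkHyperClique{h}{(h+1)}$ into an algorithm for the weighted optimization above. My plan is to invoke a threshold-and-scale reduction in the spirit of the weighted $k$-clique machinery of Vassilevska Williams, which reduces max-weight $(h+1)$-hyperclique with polynomially-bounded integer weights to $(\log m)^{\Order(1)}$ calls to unweighted $\hUniformkHyperClique{h}{(h+1)}$ on hypergraphs of size $\Order(m)$, at only sub-polynomial overhead. Combined with the outer brute-force and the edge-weight precomputation, the total running time becomes $\Order(m^{k-(h+1)}\cdot m^{h+1-\delta}\cdot (\log m)^{\Order(1)}) \leq \Order(m^{k-\delta'})$ for some $\delta' > 0$. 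The most delicate points will be verifying that the threshold reduction really preserves the speedup without blowing up the vertex set beyond $\Order(m)$, and handling the corner case $h = k$, where no index set $S^*$ of size $h+1$ exists and the reduction has to be replaced by a direct split-and-list embedding of $\Opt(\psi)$ into a $(k+1)$-partite $k$-uniform hyperclique instance on $\Order(m^{k/(k+1)})$ vertices.
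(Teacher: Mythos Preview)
Your forward direction is essentially the paper's own argument (Lemma~\ref{lem:exact-hardness-hyperclique}); two small slips do not affect the outcome but you should fix them. First, the coordinate gadget is applied across every \emph{$h$-subfamily} of parts (hyperedges have arity $h$), not $(h+1)$-subfamilies. Second, the instance size is not $m = \Order(n^{k'/k})$: the coordinate set $Y$ has size $\Theta(n^h)$ (one block per potential non-edge), so $m = \Order(n^{k'/k + h})$. Consequently the saving is not $\delta' = \delta k'/k$; you need $k' > hk(k-\delta)/\delta$ to make $(k'/k + h)(k-\delta) < k'$, which is still fine once $k'$ is chosen large enough.

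The backward direction has a genuine gap. Your plan is to reduce to max-weight $(h+1)$-hyperclique with ``polynomially-bounded integer weights'' and then invoke a ``threshold-and-scale'' reduction to unweighted detection with only $(\log m)^{\Order(1)}$ overhead. No such reduction is known, and for $h = 2$ it is almost certainly false: max-weight triangle with weights in $\{0,\dots,n\}$ is subcubically equivalent to APSP (Vassilevska Williams--Williams), so a polylog-overhead reduction to triangle \emph{detection} would yield subcubic APSP. The paper's argument (Lemma~\ref{lem:exact-to-hyperclique} together with Lemma~\ref{lem:maximum-hyperclique}) avoids this by first proving a structural bound you are missing: the \emph{total} absolute edge weight is $\Order(m^h)$, i.e., the average weight is $\Order(1)$. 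This follows because each nonzero entry $x_i[y]$ contributes to at most $n^{h-1}$ arity-$h$ hyperedge weights, so the total is $\Order(m \cdot n^{h-1}) = \Order(m^h)$. With that bound in hand, one can split off the few edges of weight $> n^\gamma$ (at most $\Order(m^{h-\gamma})$ of them, each handled by brute force over the remaining vertex) and, on the light part, enumerate all edge-weight profiles in $\{-n^\gamma,\dots,n^\gamma\}^{\binom{h+1}{h}}$ and call the unweighted detector once per profile. That is the actual mechanism converting detection into max-weight; your polylog black box should be replaced by it.

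Finally, your proposed handling of the corner case $h = k$ via a split-and-list embedding into a $(k+1)$-partite $k$-uniform instance on $\Order(m^{k/(k+1)})$ vertices is not a real plan: there are only $k$ variables, so there is nothing to split into $k+1$ groups, and no obvious way to make the objective decompose over $k$-subsets of $k+1$ parts. The paper's Lemma~\ref{lem:exact-to-hyperclique} in fact assumes $h < k$; if you want to cover $h = k$ you will need a separate argument, not the one you sketched.
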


\subsection{Approximation}
Since \autoref{thm:exact-opt-classification} gives a complete classification for the exact solvability of $\Opt(\psi)$, the next natural question is to study the approximability of properties which are hard to compute exactly. Unlike the exact case, we use different techniques and tools to give our classification for maximization and minimization problems.

\subsubsection{Maximization}
We obtain a simple classification of all constant-factor approximable maximization problems, conditioned on the Strong Exponential Time Hypothesis (\SETH{}).

\begin{theorem}[Constant Approximation -- Maximization] \label{thm:constant-approx-max-classification}
Let $\psi$ be a $\MaxSP_k$ graph formula.
\begin{itemize}
\item If $\Hand(\psi) < k$ (or equivalently, if $\psi_0$ does not have exactly one satisfying assignment), then there exists a constant-factor approximation for $\Max(\psi)$ in time $\Order(m^{k-\delta})$ for some $\delta > 0$.
\item Otherwise, if $\Hand(\psi) = k$ (or equivalently, $\psi_0$ has exactly one satisfying assignment), 
then there exists no constant-factor approximation for $\Max(\psi)$ in time $\Order(m^{k-\delta})$ for any $\delta > 0$, unless \SETH{} fails.
\end{itemize}
\end{theorem}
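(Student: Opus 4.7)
The plan is to prove both directions separately.

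\textbf{Algorithmic direction.} Suppose $\Hand(\psi) < k$, i.e., $\psi_0$ has either zero or at least two satisfying assignments. If $\psi_0 \equiv \false$, any tuple contributing to $\OPT$ must exploit at least one $E(x_i, x_j)$ atom, so my plan is to enumerate the $O(m)$ edges of the $E$-relation between object sets and recurse on a residual $\OptSP_{k-2}$ instance, yielding an exact algorithm in time $O(m^{k-1})$. The main case is $\psi_0$ having $s \geq 2$ satisfying assignments. Writing $\psi_0$ as the disjunction of its $s$ single-pattern indicators, we obtain the structural bound $\max_a M_a \leq \OPT \leq s \cdot \max_a M_a$, where $M_a = \max_{x_1, \dots, x_k} |\{y : x_i[y] = a_i \text{ for all } i\}|$ and the maxima range over satisfying assignments $a$ of $\psi_0$. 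Hence it suffices to find a single tuple whose $\Val$ is $\Omega(\max_a M_a)$. My plan is to exploit the multiplicity of satisfying assignments: via a structured random/greedy construction, and possibly after preprocessing away positions $y$ whose bit distributions $p_{i,y}$ are incompatible with every satisfying assignment of $\psi_0$, we obtain a single tuple matching several satisfying patterns simultaneously and thereby achieving $\Val = \Omega(\OPT)$.

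\textbf{Hardness direction.} Suppose $\Hand(\psi) = k$, so $\psi_0$ has a unique satisfying assignment $(b_1, \dots, b_k) \in \{0, 1\}^k$. By bit-flipping the input vectors at coordinates where $b_i = 0$, we may assume $\psi_0(a_1, \dots, a_k) = \bigwedge_i a_i$. I would reduce from Maximum $k$-Inner Product: given $X_1, \dots, X_k \subseteq \{0, 1\}^d$, construct a $\Max(\psi)$ instance on object sets $X_1, \dots, X_k$ and $Y = [d]$ with $E(x_i, y) := x_i[y]$ and $E(x_i, x_j) := \false$ for all $i \neq j$. Since all $E(x_i, x_j)$ atoms are false, $\phi$ reduces to $\bigwedge_i x_i[y]$, so $\Val(x_1, \dots, x_k) = \langle x_1, \dots, x_k \rangle$. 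The input size is $m = O(nd)$, so a constant-factor approximation of $\Max(\psi)$ in time $O(m^{k-\delta})$ would yield a constant-factor approximation for Maximum $k$-Inner Product in essentially the same time, refuting \SETH{} by Chen's result (for $k = 2$) and by the Karthik--Laekhanukit--Manurangsi result (for $k \geq 3$).

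\textbf{Main obstacle.} The hardness direction should be routine. The principal difficulty lies in the algorithmic direction for $\psi_0$ with $\geq 2$ satisfying assignments: even Maximum $k$-Agreement, whose only satisfying assignments are $(0, \dots, 0)$ and $(1, \dots, 1)$, resists a good constant-factor approximation via naive uniform sampling when the bit distributions $p_{i,y}$ are adversarially imbalanced. I expect the correct approach combines a preprocessing step removing positions where no satisfying pattern is achievable, with a structured sampling or greedy procedure on the remaining positions that leverages the slack from having multiple satisfying assignments, in the spirit of classical constant-factor approximations for Boolean CSPs.
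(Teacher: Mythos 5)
Your hardness direction is sound and matches the paper's view of this regime: when $\psi_0$ has a unique satisfying assignment, $\Max(\psi)$ is equivalent up to coordinate-wise complementation to Maximum $k$-Inner Product, and the known \SETH{}-based inapproximability results transfer (the paper re-derives this via the distributed PCP machinery in order to get the stronger $m^{\varepsilon}$-factor lower bound of \autoref{thm:tight-polynomial-maximization}, but for a constant factor your black-box citation suffices, provided you invoke the versions that hold at dimension $d = n^{o(1)}$ so that $m = O(nd) = n^{1+o(1)}$ and an $\Order(m^{k-\delta})$ algorithm really does translate to $\Order(n^{k-\delta'})$).

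The algorithmic direction, however, has a genuine gap, and it is exactly the one you flag yourself: you reduce to finding a single tuple of value $\Omega(\max_a M_a)$, but the "structured random/greedy construction" that is supposed to produce it is the entire content of the proof, and uniform sampling provably fails even for Maximum $k$-Agreement. The missing idea is not a sampling or CSP-rounding argument but a deterministic \emph{fix-one-coordinate} scheme. Call $\phi$ \emph{opposite} if its satisfying assignments are closed under component-wise negation (Maximum $k$-Agreement is the prototype). For opposite $\phi$ one has the pointwise inequality $\phi(a_1,\dots,a_k) \leq \phi(b_1,\dots,b_k) + \sum_{i} \phi(a_1,\dots,a_{i-1},b_i,a_{i+1},\dots,a_k)$ for all bits $a_i, b_i$ (\autoref{prop:opposite}): if some $b_i = a_i$ the $i$-th summand fires, and if all $b_i \neq a_i$ then oppositeness makes $\phi(b) = \phi(a)$. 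Summing over coordinates $y$ with $(a_1,\dots,a_k)$ the optimal tuple and $(b_1,\dots,b_k)$ an \emph{arbitrary} fixed tuple $(\hat x_1,\dots,\hat x_k)$, every coordinate counted by $\OPT$ is charged to one of $k+1$ candidates: $\Val(\hat x_1,\dots,\hat x_k)$ or, for some $i$, $\max \Val(x_1,\dots,x_{i-1},\hat x_i,x_{i+1},\dots,x_k)$. Each of the latter is a $\MaxSP_{k-1}$ instance solvable exactly by the baseline algorithm in $\Order(m^{k-1})$, and by pigeonhole the best candidate attains $\OPT/(k+1)$ (\autoref{lem:constant-approx-max-opposite}). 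The general case then reduces to this one in two steps: if $\psi_0$ has exactly two satisfying assignments, brute-force the variables on which they agree so that the residual function is opposite (\autoref{lem:constant-approx-max-exactly-two}); if it has $\ell \geq 2$, cover them by the $\ell$ pairs $\{\alpha^1,\alpha^i\}$, run the two-assignment algorithm for each pair, and observe that the most frequent pattern in the optimal solution gives one pair-problem an optimum of at least $\OPT/\ell$ (\autoref{lem:constant-approx-max-geq-two}). Without the opposite-function lemma, your bound $\OPT \leq s \cdot \max_a M_a$ leaves you with the single-pattern problems $M_a$, each of which is itself a Maximum $k$-Inner-Product-type problem and hence as hard as the case you are trying to avoid; the whole point of the paper's argument is to never isolate a single pattern but always keep a complementary pair in play.
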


The algorithmic part of~\autoref{thm:constant-approx-max-classification} is detailed in~\autoref{sec:approximation-algs-maximization:sec:constant-factor}, and the lower bound in~\autoref{sec:hardness-approx:sec:mip-hardness}. Here we give a high-level explanation of our proof by focusing on two representative problems: On the one hand, strong conditional hardness results have been shown for the \MaxIP{} problem~\cite{AbboudRW17,Chen18}. When $\psi_0$ has a single satisfying assignment, $\Max(\psi)$ is equivalent (up to complementation) to \MaxIP{}, so we adapt the hardness results for our setting (\autoref{sec:hardness-approx:sec:mip-hardness}).

On the other hand, consider the \FurthestNeighbor{} problem: Given two sets of bit-vectors $X_1, X_2 \subseteq \{0, 1\}^d$, compute the maximum Hamming distance between vectors $x_1 \in X_1$ and $x_2 \in X_2$. There exists a simple linear-time $3$-approximation for this problem: Fix some $x_1 \in X_1$ and compute its furthest neighbor $x_2 \in X_2$. Then compute the furthest neighbor $x_1' \in X_1$ of $x_2$ and return the distance between $x_1'$ and $x_2$ as the answer. By applying the triangle inequality twice, it is easy to see that this indeed yields a $3$-approximation. That argument generalizes for approximating $\Max(\phi)$ whenever $\psi_0$ satisfies the following property: If $\alpha$ is a satisfying assignment of $\psi_0$, then the component-wise negation of $\alpha$ is also satisfying (\autoref{lem:constant-approx-max-opposite}). Finally, if $\psi_0$ has at least two satisfying assignments, then $\Max(\psi)$ can be reduced to this special case via a reduction which worsens the approximation ratio by at most a constant factor (Lemmas~\ref{lem:constant-approx-max-exactly-two} and~\ref{lem:constant-approx-max-geq-two}). The essential insight for that last step is that we can always ``cover'' all satisfying assignments by only two satisfying assignments, as there always exists one satisfying assignment which contributes a constant fraction (depending on $k$) to the optimal value.

We give a finer-grained view of the classification in~\autoref{thm:constant-approx-max-classification} in two ways. First, we want to isolate properties which admit \emph{arbitrarily good} constant-factor approximations. We make that notion precise in the following definition:

\begin{definition}[Approximation Scheme]\label{def:approximation-scheme}
Let $\psi$ be an $\OptSP_k$ formula. We say that $\Opt(\psi)$ admits an \emph{approximation scheme} if for any $\varepsilon > 0$ there exists some $\delta > 0$ and an algorithm computing a $(1 + \varepsilon)$-approximation of $\Opt(\psi)$ in time $\Order(m^{k-\delta})$.\footnote{We stress that in our work, $\varepsilon$ is always a constant and cannot depend on $m$.}
\end{definition}

In the following theorem, we identify some formulas which admit such an approximation scheme, and some formulas for which this is unlikely:

\begin{theorem}[Approximation Scheme -- Maximization]\label{thm:as-maximization-classification}
Let $\psi$ be a $\MaxSP_k$ graph formula.
\begin{itemize}
\item If $\Hand(\psi) \leq 1$, then there exists a randomized approximation scheme for $\Max(\psi)$.
\item If $3 \leq \Hand(\psi)$ or $\Hand(\psi) = k$, then there exists no approximation scheme for $\Max(\psi)$ unless the Sparse \MAXThreeSAT{} hypothesis fails.
\end{itemize}
\end{theorem}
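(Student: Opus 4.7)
The plan is to prove the two parts separately, with the algorithmic upper bound being the more involved.

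For the algorithmic upper bound ($\Hand(\psi) \leq 1$), I would combine Theorem~\ref{thm:additive-approximation} (additive $\varepsilon|Y|$-approximation) with a \emph{randomized coordinate boosting} construction. The key structural consequence of $\Hand(\psi) \leq 1$ is that every $2$-variable restriction of $\psi_0$ has a number of satisfying assignments in $\{0, 2, 3, 4\}$---never exactly one---which permits the construction of ``slack'' coordinates whose contribution to $\Val$ is uniformly positive for every tuple $(x_1, \dots, x_k)$. By appending $\Theta(|Y|)$ such slack coordinates, one obtains an equivalent instance where $\OPT = \Omega(|Y'|)$, and the additive scheme applied to the boosted instance then yields a $(1+\varepsilon)$-multiplicative approximation for $\Max(\psi)$. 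For the small-$\OPT$ regime where boosting alone is insufficient, I plan to rely on techniques inspired by the approximation schemes for Furthest Neighbor and Maximum $k$-XOR based on the polynomial method~\cite{AlmanCW20}: a probabilistic low-degree polynomial approximation of $\psi_0$ composed with fast rectangular matrix multiplication yields a $(1+\varepsilon)$-approximation in time $O(m^{k-f(\varepsilon)})$. The $E(x_i, x_j)$ edge atoms are handled via the vector reformulation of Section~\ref{sec:preliminaries}, reducing to a $\psi_0$-based core problem.

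For the hardness lower bound ($\Hand(\psi) \geq 3$ or $\Hand(\psi) = k$), I would reduce from Sparse \MAXThreeSAT in two cases. When $\Hand(\psi) = k$, the formula $\psi_0$ has a unique satisfying assignment, making $\Max(\psi)$ structurally equivalent (up to complementation) to Maximum Inner Product, so the distributed PCP inapproximability results of~\cite{AbboudRW17, Chen18} transfer directly; the Sparse \MAXThreeSAT hypothesis suffices since it implies SETH. When $3 \leq \Hand(\psi) < k$, the definition of $\Hand(\psi) \geq 3$ guarantees a $3$-variable restriction of $\psi_0$ with a unique satisfying assignment, which I would use as a gadget encoding each $3$-SAT clause via the standard split-and-list technique: partitioning the $n$ MAX-3-SAT variables into $k$ groups of $n/k$ and enumerating the $2^{n/k}$ partial assignments per group yields an instance of $\Max(\psi)$ with $m = 2^{O(n/k)}$, where the $\OPT$ values of the satisfiable and $(1-\Omega(1))$-satisfiable cases differ multiplicatively by a constant. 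An approximation scheme for $\Max(\psi)$ in time $O(m^{k-\delta})$ would therefore give a $2^{n(1-\delta')}$-time algorithm for Sparse \MAXThreeSAT, refuting the hypothesis.

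The main obstacle I anticipate is the algorithmic upper bound: the coordinate-boosting construction preserves the approximation ratio only when $\OPT$ is not vanishingly small, so handling the small-$\OPT$ regime requires invoking polynomial-method techniques that must be carefully adapted to arbitrary $\Hand(\psi) \leq 1$ formulas, some of which (like Max-$k$-XOR) have polynomial representations of high degree~$k$. Identifying a uniform low-degree approximating structure shared by all such~$\psi_0$ is the crux. A secondary challenge on the hardness side is the gap-amplification step: ensuring that the split-and-list reduction yields a constant multiplicative gap (rather than a vanishing additive one) while keeping the instance size within $O(m^{k-\delta})$ requires tight coordination between the gadget size, the linear clause density of Sparse \MAXThreeSAT, and the granularity of the partition.
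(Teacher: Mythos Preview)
Your proposal has a genuine gap on the hardness side, and the algorithmic side takes a detour that the paper avoids.

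\textbf{Lower bound.} For $3 \leq \Hand(\psi) < k$ you write that split-and-list from Sparse \MAXThreeSAT{} yields a $\Max(\psi)$ instance ``where the $\OPT$ values of the satisfiable and $(1-\Omega(1))$-satisfiable cases differ multiplicatively by a constant.'' This is the crux, and it does not follow. The Sparse \MAXThreeSAT{} hypothesis concerns \emph{exact} optimization: you must distinguish $m$ satisfied clauses from $m-1$, not $m$ from $(1-\Omega(1))m$. Split-and-list faithfully transfers this vanishing gap to $\Max(\psi)$; it does not amplify it. You flag ``gap amplification'' as a secondary challenge, but it is the entire difficulty, and no amount of coordination between gadget size and clause density will manufacture a constant multiplicative gap from an exact hypothesis. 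The paper's solution is to insert the distributed PCP machinery \emph{after} split-and-list: it first reduces Sparse \MAXThreeSAT{} to a low-dimensional decision problem they call \kOV{(k,3)} (\autoref{lem:hardness_k3ov}), then decomposes each \kOV{(k,3)} instance into $\binom{k}{3}$ many \kOV{3} instances, applies the PCP reduction (\autoref{lem:reduction_clean}) to each of those to obtain \kMaxIP{3} instances with a constant gap, and finally recombines and embeds into $\VMax(\phi)$ via the $3$-variable unique-assignment restriction (\autoref{lem:reduction_k3ov_to_maxphi}). The PCP step is what creates the gap; without it the reduction only proves hardness of exact optimization, which is already covered by \autoref{thm:exact-opt-classification}. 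The reason Sparse (rather than ordinary) \MAXThreeSAT{} is needed is precisely that the PCP reduction blows up the dimension exponentially, so one must start from logarithmic dimension, which in turn requires a linear number of clauses.

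\textbf{Upper bound.} Your coordinate-boosting step does not help: adding coordinates on which $\psi_0$ is uniformly satisfied (even if such coordinates exist, which is not clear for general $\psi_0$) shifts every value by the same constant, so an additive $\varepsilon|Y'|$ error on the boosted instance is still an additive $\Theta(\varepsilon|Y|)$ error on the original, which is useless when $\OPT = o(|Y|)$. You correctly identify that the real work is the small-$\OPT$ regime via the polynomial method, but your plan for it is vague. The paper's route is cleaner and sidesteps the boosting entirely: brute-force over $k-2$ variables (choosing the two indices guaranteed by $\Hand(\psi)\leq 1$ so that every resulting $2$-variable restriction has $\neq 1$ satisfying assignment), then reduce the remaining $2$-variable hybrid problem to \FurthestNeighbor{} by a short case analysis (\autoref{lem:hybrid-to-furthest-neighbor} and the proof of \autoref{lem:max-approximation-scheme}), and invoke the known Alman--Chan--Williams approximation scheme for \FurthestNeighbor{} as a black box. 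No direct polynomial-method argument tailored to general $\psi_0$ is needed.
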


Unfortunately, we were not able to close the gap in~\autoref{thm:as-maximization-classification}, and it remains open whether problems with $\Hand(\phi) = 2$ admit approximation schemes. In~\autoref{sec:conclusion}, we give a specific example falling into this category.

For the first item, we give approximation schemes for any $\Max(\psi)$ with $\Hand(\psi) = 1$ via a reduction to \FurthestNeighbor{}, which, as mentioned in the introduction, admits an approximation scheme (\autoref{lem:max-approximation-scheme}). 

The lower bound is more interesting: By~\autoref{thm:constant-approx-max-classification} we know that if $\psi$ has and-hardness $\Hand(\psi) < k$, then it admits a constant-factor approximation. Nevertheless, the lower bound in \autoref{thm:constant-approx-max-classification} only addresses formulas of and-hardness $\Hand(\psi) = k$. In particular, \autoref{thm:as-maximization-classification} identifies a class of problems for which some (fixed) constant-factor approximation is \emph{best-possible in terms of approximation.}

At a technical level, we make use of interesting machinery to obtain the lower bound (\autoref{sec:hardness-approx:sec:k-3-hardness}). The starting point is the \emph{distributed PCP} framework, which was introduced in~\cite{AbboudRW17} and further strengthened in~\cite{Rubinstein18,Chen18} to give hardness of approximation for Maximum $k$-Inner Product (\kMaxIP{k}). In this case, we cannot use this tool directly, since it crucially relies on the fact that the \emph{target} problem $\Max(\psi)$ has full hardness $\Hand(\psi) = k$.\footnote{More precisely, the reduction exploits the fact that $\psi_0$ has a unique satisfying assignment to encode the communication protocol used in the reduction (\autoref{lem:pcp-reduction}).} Instead, we first show \MAXThreeSAT{}-hardness of the following intermediate problem:
\begin{problem}[\kOV{(k,3)}]
Given sets of $n$ vectors $X_1, \ldots, X_k \subseteq \{0, 1\}^d$, where each coordinate \makebox{$y \in [d]$} is associated to three \emph{active} indices $a, b, c \in [k]$, detect if there are vectors $x_1 \in X_1, \ldots, x_k \in X_k$ such that for all $y \in [d]$, it holds that $x_a[y] \mult x_b[y] \mult x_c[y] = 0$ where $a, b, c$ are the active indices at $y$.
\end{problem}

We then provide a gap introducing reduction from $\kOV{(k,3)}$ to $\Max(\psi)$, in the same spirit as the PCP reduction gives such a reduction from $\kOV{k}$ to $\kMaxIP{k}$. This involves several technical steps as outlined in~\autoref{fig:reductions}. At a high level, we decompose a $\kOV{(k,3)}$ instance into a combination of multiple $\kOV{3}$ instances and use the PCP reduction as a black box on each of these. After combining the outputs of the reduction, we obtain instances of $\Max(\psi)$ with the desired gap. The issue with this approach is that the PCP reduction blows up the dimension of the input vectors exponentially,\footnote{An instance of $\kOV{k}$ on dimension $d = c \log n$ is reduced to multiple instances of $\kMaxIP{k}$ on dimension $\exp(c)\log n$.} which makes the reduction inapplicable to our case if we start from a moderate-dimensional \kOV{(k,3)} instance. To show that $\kOV{(k,3)}$ does not even have a $\Order(m^{k-\delta})$-time algorithm when the dimension is $d = O(\log n)$, we use the stronger \emph{Sparse} \MAXThreeSAT{} hypothesis.\footnote{Morally, just as \SETH{} implies the hardness of low-dimensional \OV{} (see~\autoref{sec:assumptions}), the \emph{Sparse} \MAXThreeSAT{} Hypothesis implies the hardness of \emph{low-dimensional \kOV{(k,3)}}.} See~\autoref{sec:assumptions} for further discussion of this hypothesis.


\begin{figure}[t]
\begin{center}

\def\probcolor{black}
\def\probfont{\normalfont}
\def\probstrength{.04cm}
\def\probalign{center}
\def\probwidth{3.5cm}
\def\probheight{1.5cm}
\def\probskip{(.5 * \linewidth - .5 * 3.6cm - .1cm)}

\def\auxcolor{lipicsLineGray}
\def\auxfont{\footnotesize}
\def\auxstrength{.03cm}
\def\auxalign{center}
\def\auxwidth{1.8cm}
\def\auxheight{.8cm}
\def\auxshiftx{.15cm}
\def\auxshifty{0.18cm}

\def\redcolor{black}
\def\redauxcolor{\auxcolor}
\def\redstrength{.05cm}
\def\redauxstrength{.04cm}
\def\redshorten{.2cm}
\def\redarrow{latex}
\def\reddown{2cm}
\def\redradius{.5cm}

\def\auxoffsetx{1.2cm}
\def\auxoffsety{3.2cm}

\def\redproboffset{.9cm}
\def\redparallel{.75cm}

\def\labelfont{\small}
\def\labelcolor{black}
\def\labelauxfont{\scriptsize}
\def\labelauxcolor{\auxcolor}
\def\labelalign{center}
\def\labeldistance{.13cm}
\def\labeleps{.15cm}

\def\a{((\auxoffsetx - .5 * \probwidth + \redproboffset + .5 * \redparallel) / (\auxoffsety - .5 * \auxheight - \auxshifty - .5 * \probheight))}
\def\b{1.4}

\begin{tikzpicture}[
    prob/.style={
        draw,
        color=\probcolor,
        font=\probfont,
        line width=\probstrength,
        align=\probalign,
        minimum width=\probwidth,
        minimum height=\probheight
    },
    reduction/.style={
        color=\redcolor,
        line width=\redstrength,
        >=\redarrow,
        shorten >=\redshorten,
        rounded corners
    },
    reduction aux/.style={
        reduction,
        color=\redauxcolor,
        line width=\redauxstrength
    },
    aux/.style={
        color=\auxcolor,
        line width=\auxstrength,
        minimum width=\auxwidth,
        minimum height=\auxheight,
    },
    aux caption/.style={
        color=\auxcolor,
        font=\auxfont,
        align=\auxalign,
    },
    label/.style={
        color=\labelcolor,
        font=\labelfont,
        align=\labelalign
    },
    label aux/.style={
        label,
        color=\labelauxcolor,
        font=\labelauxfont
    }
]

\def\drawred[#1]#2#3#4#5#6#7#8#9{
    \begin{scope}[#1]
        \draw[reduction, ->] ({.5 * \probwidth - \redproboffset + .5 * \redparallel}, {-.5 * \probheight})
            -- ({\auxoffsetx + (1 - \a * \b) * \redparallel}, {-\auxoffsety + .5 * \auxheight + \auxshifty + \b * \redparallel})
            -- ({\probskip - \auxoffsetx - (1 - \a * \b) * \redparallel}, {-\auxoffsety + .5 * \auxheight + \auxshifty + \b * \redparallel})
            -- ({\probskip - .5 * \probwidth + \redproboffset - .5 * \redparallel}, {-.5 * \probheight});
        \draw[reduction aux, dashed, ->] ({.5 * \probwidth - \redproboffset - .5 * \redparallel}, {-.5 * \probheight})
            -- ({\auxoffsetx}, {-\auxoffsety + .5 * \auxheight + \auxshifty});
        \draw[reduction aux, dashed, ->] ({\probskip - \auxoffsetx}, {-\auxoffsety + .5 * \auxheight + \auxshifty})
            -- ({\probskip - .5 * \probwidth + \redproboffset + .5 * \redparallel}, {-.5 * \probheight});
        \foreach\i in {-1,0,1}{
            \draw[reduction aux, ->] ({\auxoffsetx + .5 * \auxwidth + \i * \auxshiftx}, {-\auxoffsety + \i * \auxshifty})
                -- ({\probskip - \auxoffsetx - .5 * \auxwidth + \i * \auxshiftx}, {-\auxoffsety + \i * \auxshifty});
            \draw[aux, fill=white] ({\auxoffsetx + \i * \auxshiftx - .5 * \auxwidth}, {-\auxoffsety + \i * \auxshifty - .5 * \auxheight})
                rectangle ++(\auxwidth, \auxheight);
            \draw[aux, fill=white] ({\probskip - \auxoffsetx + \i * \auxshiftx - .5 * \auxwidth}, {-\auxoffsety + \i * \auxshifty - .5 * \auxheight})
                rectangle ++(\auxwidth, \auxheight);
        };
        \node[aux caption] at ({\auxoffsetx + \auxshiftx}, {-\auxoffsety + \auxshifty}) {#6};
        \node[aux caption] at ({\probskip - \auxoffsetx + \auxshiftx}, {-\auxoffsety + \auxshifty}) {#7};

        \node[label, above=\labeldistance] at ({.5 * \probskip}, {-\auxoffsety + .5 * \auxheight + \auxshifty + \b * \redparallel}) {#2};

        \node[label aux, left=\labeldistance] at ({\auxoffsetx - \a * \redparallel}, {-.5 * \probheight - .5 * (\auxoffsety - .5 * \auxheight - \auxshifty - .5 * \probheight)}) {#3};
        \node[label aux, right=\labeldistance] at ({\probskip - \auxoffsetx + \a * \redparallel}, {-.5 * \probheight - .5 * (\auxoffsety - .5 * \auxheight - \auxshifty - .5 * \probheight)}) {#5};
        \node[label aux, above=\labeldistance - \labeleps] at ({.5 * \probskip + \auxshiftx}, -\auxoffsety + .5 * \auxheight) {#4};
        \node[label aux, below=\labeldistance - \labeleps] at ({\auxoffsetx - \auxshiftx}, {-\auxoffsety - .5 * \auxheight - \auxshifty}) {#8};
        \node[label aux, below=\labeldistance - \labeleps] at ({\probskip - \auxoffsetx - \auxshiftx}, {-\auxoffsety - .5 * \auxheight - \auxshifty}) {#8};
    \end{scope}
}

\drawred[]
    {\autoref{lem:hardness_k3ov}}
    {split\\\& list}
    {\autoref{lem:red_exactip_to_ov}\\\cite{ChenW19}}
    {re-\\combine}
    {\kExactIP{3}}
    {\kOV{3}}
    {$\binom k3$ instances}
    {$\binom k3$ instances}
\drawred[xshift=\probskip]
    {\autoref{lem:reduction_k3ov_to_maxphi}}
    {}
    {Distributed PCP\\\cite{AbboudRW17}}
    {}
    {\kOV{3}}
    {\kMaxIP{3}}
    {$\binom k3$ instances}
    {$\binom k3$ instances}

\path (0, 0) node[prob] (prob0) {\MAXThreeSAT{}\\\small on $\Order(n)$ clauses}
    ++({\probskip}, 0) node[prob] (prob1) {\kOV{(k, 3)}\\\small low-dimensional}
    ++({\probskip}, 0) node[prob] (prob2) {Approx.\ $\Max(\psi)$\\\small with $3 \leq \Hand(\psi)$};

\end{tikzpicture}
\end{center}
\caption{The chain of reductions from Sparse \MAXThreeSAT{} to $\VMax(\psi)$. The main steps are proven in Lemmas~\ref{lem:hardness_k3ov} and~\ref{lem:reduction_k3ov_to_maxphi} and both proofs involve several intermediate steps as illustrated in the gray parts. In~\autoref{sec:hardness-approx} we briefly discuss an alternative approach to this reduction.} \label{fig:reductions}
\end{figure}

The second way in which we get a closer look at~\autoref{thm:constant-approx-max-classification} is by inspecting the hardest regime, i.e., when $\Hand(\psi) = k$:

\begin{theorem}[Polynomial-Factor Approximation]\label{thm:tight-polynomial-maximization}
Let $\psi$ be a $\MaxSP_k$ graph formula of full and-hardness $\Hand(\psi) = k$.
\begin{itemize}
\item For every $\varepsilon > 0$, there exists some $\delta > 0$ such that an $m^\varepsilon$-approximation for $\Max(\psi)$ can be computed in time $\Order(m^{k-\delta})$.
\item For every $\delta > 0$, there exists an $\varepsilon > 0$ such that there exists no $m^\varepsilon$-approximation for $\Max(\psi)$ in time $\Order(m^{k-\delta})$ unless \SETH{} fails.
\end{itemize}
\end{theorem}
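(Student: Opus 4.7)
The plan is to route everything through Maximum $k$-Inner Product ($\kMaxIP{k}$). Since $\Hand(\psi) = k$ is, by \autoref{thm:constant-approx-max-classification}, equivalent to $\psi_0$ having a unique satisfying assignment $\alpha \in \{0,1\}^k$, I would reuse the coordinate-complementation reduction already used in the proof of that theorem: setting $x_i'[y] = 1$ iff $x_i[y] = \alpha_i$ turns $\Val(x_1, \dots, x_k)$ into $\sum_y \prod_i x_i'[y]$, so $\Max(\psi)$ becomes $\kMaxIP{k}$ on the transformed instance, up to a constant-factor blow-up in input size. The auxiliary predicates $E(x_i, x_j)$ can be absorbed by branching over their $2^{\binom{k}{2}} = \Order(1)$ Boolean assignments and solving each resulting $\kMaxIP{k}$-subproblem separately.

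For the algorithmic direction, the plan is to combine coordinate bucketing with known fast algorithms for low-dimensional $\kMaxIP{k}$. Partition (or, where necessary, randomly subsample) the coordinate set $Y$ into $p$ parts. By pigeonhole, the optimum tuple's contribution to some part is at least $\text{OPT}/p$, so the maximum of the restricted $\kMaxIP{k}$ values across parts is a $p$-approximation of OPT. For $p = m^\varepsilon$, each restricted subproblem is $\kMaxIP{k}$ on vectors of dimension $|Y|/m^\varepsilon$; choosing parameters so that this dimension lies in a regime where low-dimensional $\kMaxIP{k}$ algorithms (via fast matrix multiplication for $k=2$, and via polynomial-method / hyperclique-detection tools for $k \geq 3$; see, e.g.,~\cite{AlmanCW16, AlmanCW20, ChenW19}) save a polynomial factor yields overall time $\Order(m^{k-\delta(\varepsilon)})$. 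In the very high-dimensional regime where plain bucketing is insufficient, I would first apply random coordinate subsampling together with a Chernoff analysis to reduce the dimension to $m^{o(1)}$ while preserving OPT up to factor $m^\varepsilon$, and then run the same low-dimensional subroutine.

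For the hardness direction, I would invoke the distributed PCP in P framework. For $k = 2$, Rubinstein~\cite{Rubinstein18} (strengthening~\cite{AbboudRW17}) establishes that for every $\delta > 0$ there is an $\varepsilon > 0$ such that no $n^\varepsilon$-approximation for $\kMaxIP{2}$ runs in time $\Order(n^{2-\delta})$ under \SETH{}; for $k \geq 3$, the analogous statement is given by Karthik--Laekhanukit--Manurangsi~\cite{KarthikLM19}. Transferring this result to $\Max(\psi)$ via the reduction of the first paragraph, which is dimension-preserving and blows up the input size by at most a constant factor, directly yields the claimed $m^\varepsilon$-hardness. The main technical obstacle lies on the algorithmic side: the parameter balancing between the approximation ratio $p$, the reduced dimension, and the exponent saving of the low-dimensional subroutine must be tuned so that a strictly positive $\delta(\varepsilon)$ remains for every $\varepsilon > 0$, and the appropriate subroutine differs in detail between $k = 2$ and $k \geq 3$.
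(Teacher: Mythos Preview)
For the hardness direction your plan matches the paper: both invoke the distributed PCP reduction of \cite{AbboudRW17,Rubinstein18,Chen18,KarthikLM19} (the paper packages this as a single lemma producing $\VMax(\phi)$ instances with a controlled gap), and the transfer from $n^\varepsilon$-hardness to $m^\varepsilon$-hardness goes through because the produced instances satisfy $m' \leq n d'$ with $d' = n^{o(1)}$.

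For the algorithmic direction, however, there is a real gap, and the paper's route is both different and far more elementary. Your claim that the complementation $x_i'[y] = 1 \Leftrightarrow x_i[y] = \alpha_i$ ``blows up the input size by at most a constant factor'' is false: if $\alpha_i = 0$ for some $i$, a sparse vector $x_i$ becomes a dense vector $x_i'$, and an instance of sparsity $m$ can turn into a $\kMaxIP{k}$ instance of sparsity $\Theta(nd)$, which may be $m^{2-o(1)}$. An $\Order((nd)^{k-\delta})$-time algorithm on the transformed instance is then not $\Order(m^{k-\delta})$. Your subsequent machinery (coordinate bucketing, low-dimensional $\kMaxIP{k}$ via the polynomial method, random subsampling with Chernoff) does not repair this: Chernoff subsampling only gives multiplicative control when $\OPT$ is a constant fraction of $d$, and when $\OPT$ is small (say $\OPT = m^{0.01}$) subsampling to $m^{o(1)}$ coordinates will typically lose the optimum entirely. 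You provide no mechanism to detect or handle this regime.

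The paper avoids all of this with a direct sparsity argument using neither matrix multiplication nor the polynomial method. The key observation you are missing is that when $\phi$ has a unique satisfying assignment, testing whether $\OPT > 0$ takes \emph{linear} time (for each $i$ separately, check whether some $x_i \in X_i$ has an entry equal to $\alpha_i$). This already gives an $m^\varepsilon$-approximation whenever $\OPT \leq m^\varepsilon$. For the remaining case the paper argues by sparsity: if some $\alpha_i = 1$, any vector $x_i$ of Hamming weight $\leq m^\varepsilon$ has value $\leq m^\varepsilon$ (handled by the linear-time test), while there are at most $m^{1-\varepsilon}$ vectors of Hamming weight $> m^\varepsilon$, each of which can be fixed and the residual $(k-1)$-variable problem solved by the baseline in time $\Order(m^{k-1})$, for total time $\Order(m^{k-\varepsilon})$. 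If all $\alpha_i = 0$, a symmetric light/heavy argument on $d = |Y|$ works. No reduction to $\kMaxIP{k}$ is ever performed in this direction.
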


The lower bound is obtained by applying the subsequent improvements on the distributed PCP framework by~\cite{Rubinstein18,Chen18,KarthikLM19}, which improve the parameters of the reduction via algebraic geometry codes and expander graphs (\autoref{lem:max-hardness-poly}). For the upper bound, we give a simple algorithm which exploits the sparsity of the instances (\autoref{lem:poly-approx-max}).

\subsubsection{Minimization}
There is an easier criterion for the hardness of approximating minimization problems. Namely, observe that giving a multiplicative approximation to a minimization problem $\Min(\psi)$ is at least as hard as testing if the optimal value is zero (recall that we refer to this problem as $\Zero(\psi)$). More precisely, suppose that we are given an instance with $\OPT = 0$. Then any multiplicative approximation must return an optimal solution.

It turns out that this is the only source of hardness for $\Min(\psi)$ problems. We show a fine-grained equivalence of deciding $\Zero(\psi)$ and approximating $\Min(\psi)$ within a constant factor:

\begin{theorem}[Constant Approximation is Equivalent to Testing Zero] \label{thm:constant-approx-min-classification}
Let $\psi$ be a $\MinSP_k$ graph formula. Via a randomized reduction, there exists a constant-factor approximation algorithm for $\Min(\psi)$ in time $\Order(m^{k-\delta})$ for some $\delta > 0$ if and only if $\Zero(\psi)$ can be solved in time $\Order(m^{k-\delta'})$ for some $\delta' > 0$.
\end{theorem}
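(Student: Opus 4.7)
The forward direction is immediate: a $c$\=/approximation algorithm for $\Min(\psi)$ running in time $\Order(m^{k-\delta})$ already decides $\Zero(\psi)$ in the same time by outputting YES if and only if the approximation returns $0$. Indeed, if $\OPT = 0$ then any $c$\=/approximation returns $0$, and if $\OPT \geq 1$ it returns a value $\geq 1$. So the interesting direction is the reverse: given an $\Order(m^{k-\delta'})$-time algorithm for $\Zero(\psi)$, I need to build a randomized constant-factor approximation for $\Min(\psi)$ in time $\Order(m^{k-\delta})$ for some $\delta > 0$.

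My plan is a doubling search on $\OPT$ implemented through \emph{random subsampling} of the universe $Y$. Concretely, for each candidate $T \in \{2^0, 2^1, \ldots, 2^{\lceil \log |Y|\rceil}\}$, form a subsample $Y'_T \subseteq Y$ by independently retaining each element with probability $1/T$, and invoke $\Zero(\psi)$ on the restricted instance; repeat this $\Theta(\log m)$ times per $T$ for amplification. Let $T^*$ be the smallest $T$ for which a majority of trials return YES, and output $\Theta(T^*)$ as the approximation. Each oracle call runs in $\Order(m^{k-\delta'})$ time and the total number of calls is $\tilde\Order(1)$, so the full running time is $\tilde\Order(m^{k-\delta'}) = \Order(m^{k-\delta})$ for any $\delta < \delta'$.

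For correctness, one direction is straightforward: when $T \geq \OPT = s$, the minimizer $(x_1^\star, \dots, x_k^\star)$ has subsampled value $0$ with probability $(1-1/T)^{s} \geq 1/e$, so Chernoff-style amplification gives YES with high probability, yielding $T^* \leq \Order(\OPT)$. The main obstacle I anticipate is the matching \emph{lower bound} $T^* \geq \Omega(\OPT)$: for $T \ll \OPT$ I must rule out any spurious Zero witness. A single tuple of value $v \geq \OPT$ witnesses with probability at most $(1-1/T)^{v} \leq e^{-\OPT/T}$, but a naive union bound over all $m^k$ tuples yields only an $\Order(\log m)$-factor approximation. To push this down to a constant I would augment the scheme with a \emph{search-to-decision} step for $\Zero$: whenever the oracle reports YES, recover an explicit witness $(x_1, \dots, x_k)$, compute its true $\Val$-value directly (by scanning $Y$ once), and use that value as a tight upper certificate of $\OPT$. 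Interleaving this witness recovery with the doubling search—so that each candidate $T$ triggering YES produces an explicit tuple whose true value bounds $\OPT$ from above—should turn the $\Order(\log m)$-factor gap into a constant-factor bound with high probability, thereby completing the reduction.
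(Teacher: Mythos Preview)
Your forward direction is fine and matches the paper. The reverse direction has a real gap at exactly the point you flag. You correctly observe that a union bound over all $m^k$ tuples fails, and you propose to patch this by extracting a witness whenever $\Zero$ says YES and reading off its true value. But search-to-decision on $\Zero(\psi)$ returns \emph{some} tuple that is zero in the subsample, not necessarily the optimal one. At the scale $T \approx \OPT$, the number of tuples that become zero in the subsample can be as large as $n^{k-o(1)}$ (any tuple of value $v$ survives with probability roughly $e^{-v/T}$, and there may be $n^k$ tuples with $v$ only a constant factor above $\OPT$). So across your $\tilde\Order(1)$ trials you recover only polylogarithmically many witnesses, and there is no reason the near-optimal tuple is among them; the minimum of their true values need not be within a constant factor of $\OPT$. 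Your final sentence asserts that ``interleaving'' witness recovery with the doubling search closes the gap, but no mechanism is given for why the recovered witnesses should ever include one of value $\Order(\OPT)$.

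The paper's argument circumvents this by \emph{listing} rather than extracting a single witness. It first turns the $\Zero(\psi)$ decider into a listing algorithm via recursive halving of the $X_i$'s (listing $L$ zeros in time $\tilde\Order(m^{k-\delta'} L^{\delta'/k})$), then uses an LSH-style sampling of $S$ coordinates with $S$ chosen so that the expected number of false positives across all $R$ rounds is $\Order(n^{k-\gamma(1-1/c)})$ for the target approximation factor $c$. All listed candidates are then checked exactly via a careful heavy/light argument that evaluates $\Val$ for $\Order(n^{k-\eta})$ tuples in time $\Order(m^{k-\eta/3})$. The point is that the listing budget is chosen large enough to exhaust all false positives, so the true optimal tuple is guaranteed to appear among the listed candidates in the round where it hashes to zero. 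Your single-witness scheme cannot give this guarantee; to make your approach work you would essentially have to reinvent the listing step.
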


To reduce approximating $\Min(\psi)$ to $\Zero(\psi)$, we make use of \emph{locality-sensitive hashing} (LSH); see \autoref{sec:approx-min} for details. This technique was for instance used to solve the Approximate Nearest Neighbors problem in Hamming spaces~\cite{Har-PeledIM12}, and was recently adapted by Chen and Williams to show that Minimum Inner Product can be reduced to \OV{}~\cite{ChenW19}. The latter result constitutes a singular known case for the general trend in $\MinSP_k$ revealed by~\autoref{thm:constant-approx-min-classification}. In comparison, our reduction is simpler and more general, but gives weaker guarantees on the constant factor.

We specifically use LSH for a reduction from approximating $\Min(\psi)$ to the intermediate problem of \emph{listing} solutions to $\Zero(\psi)$ (\autoref{lem:lsh}) -- the better the listing algorithm performs, the better the approximation guarantee. For instance, an algorithm listing~$L$ solutions in time $\Order(m^{k-\delta} \mult L^{\delta/k})$ for some $\delta > 0$ results in a constant-factor approximation (\autoref{cor:lsh-constant-approx}), while a listing algorithm in time $\tilde\Order(m^{k-\delta} + L)$ leads to an approximation scheme (\autoref{cor:lsh-as}). To finish the proof of~\autoref{thm:constant-approx-min-classification}, we show that any $\Zero(\psi)$ problem with an $\Order(m^{k-\delta})$-time decider, for some $\delta > 0$, also admits a listing algorithm in time $\Order(m^{k-\delta} \mult L^{\delta/k})$ (\autoref{lem:zero-to-listing}).

Since the hardness of $\Zero(\psi)$ is completely classified~\cite{BringmannFK19}, we obtain the following dichotomy as a consequence of \autoref{thm:constant-approx-min-classification}:

\begin{corollary}[Constant Approximation -- Minimization] \label{cor:constant-approx-min-characterization}
Let $\psi$ be a $\MinSP_k$ graph formula.
\begin{itemize}
\item If $\Hand(\psi) \leq 2$ and $\Hand(\psi) < k$, then there exists a randomized constant-factor approximation algorithm for $\Min(\psi)$ in time $\Order(m^{k-\delta})$ for some $\delta > 0$.
\item If $3 \leq \Hand(\psi)$ or $\Hand(\psi) = k$, then computing any approximation for $\Min(\psi)$ in time $\Order(m^{k-\delta})$ for any $\delta > 0$ is not possible unless the \MAXThreeSAT{} hypothesis fails.
\end{itemize}
\end{corollary}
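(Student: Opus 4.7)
The plan is to derive this corollary by directly composing the equivalence of Theorem~\ref{thm:constant-approx-min-classification} with the model-checking dichotomy of Theorem~\ref{thm:model-checking-classification}. Both halves of the corollary follow with very little additional work, so the main effort is already spent in those two results; what remains is a careful case split.

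First, for the algorithmic direction, assume $\Hand(\psi) \leq 2$ and $\Hand(\psi) < k$. By the algorithmic part of Theorem~\ref{thm:model-checking-classification}, $\Zero(\psi)$ can be solved in time $O(m^{k-\delta'})$ for some $\delta' > 0$. Invoking the ``if'' direction of Theorem~\ref{thm:constant-approx-min-classification}, this subpolynomial-time decider for $\Zero(\psi)$ can be promoted to a randomized constant-factor approximation for $\Min(\psi)$ in time $O(m^{k-\delta})$ for some (possibly smaller) $\delta > 0$. This immediately yields the first item.

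Second, for the hardness direction, assume $3 \leq \Hand(\psi)$ or $\Hand(\psi) = k$. I will argue something slightly stronger than the statement suggests by going via $\Zero(\psi)$ directly rather than through Theorem~\ref{thm:constant-approx-min-classification}: any multiplicative $c$-approximation for $\Min(\psi)$ automatically solves $\Zero(\psi)$, because on instances with $\OPT = 0$ the approximation is forced to return $0$ (since $c \cdot 0 = 0$), whereas on instances with $\OPT \geq 1$ it returns a value in $[1, c]$, so the two cases are distinguishable. Hence an $O(m^{k-\delta})$-time $c$-approximation for any constant $c$ would yield an $O(m^{k-\delta})$-time algorithm for $\Zero(\psi)$. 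By the hardness part of Theorem~\ref{thm:model-checking-classification}, this contradicts the \MAXThreeSAT{} hypothesis, giving the second item.

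The main (and only) subtlety I anticipate is bookkeeping: one should confirm that the reduction used in Theorem~\ref{thm:constant-approx-min-classification} transforms a decider for $\Zero(\psi)$ running in time $O(m^{k-\delta'})$ into an approximation algorithm whose running time remains of the form $O(m^{k-\delta})$ with $\delta > 0$ (i.e., the polynomial savings do not evaporate), and that ``randomized'' is preserved in the reduction. Both are guaranteed by the LSH-based reduction described before Theorem~\ref{thm:constant-approx-min-classification}. Beyond that, no new ideas are required; the corollary is literally the concatenation of the two dichotomies.
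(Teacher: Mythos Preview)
Your proposal is correct and follows essentially the same approach as the paper, which presents the corollary as an immediate consequence of combining Theorem~\ref{thm:constant-approx-min-classification} with the model-checking dichotomy (Theorem~\ref{thm:model-checking-classification}). Your direct argument for the hardness item (any multiplicative approximation trivially decides $\Zero(\psi)$) is exactly the easy direction of Theorem~\ref{thm:constant-approx-min-classification} spelled out, and it correctly handles the slightly stronger ``any approximation'' phrasing in the corollary rather than just constant factors.
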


Similar to the maximization case, let us next consider approximation schemes for minimization problems. We can reuse the general framework outlined in the previous paragraphs to obtain an approximation scheme by giving an output-linear listing algorithm in time $\tilde\Order(m^{k-\delta} + L)$, for all formulas with and-hardness at most $\Hand(\psi) \leq 1$ (\autoref{lem:h1-to-listing}).

\begin{theorem}[Approximation Scheme -- Minimization]\label{thm:as-minimization-classification}
Let $\psi$ be a $\MinSP_k$ graph formula. If $\Hand(\psi) \leq 1$, then there exists a randomized approximation scheme for $\Min(\psi)$.
\end{theorem}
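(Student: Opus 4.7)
The plan is to combine the LSH-based listing-to-approximation reduction from Corollary~\ref{cor:lsh-as} with an output-linear listing algorithm tailored to the $\Hand(\psi)\leq 1$ structure. By that corollary, to obtain a randomized approximation scheme for $\Min(\psi)$ it suffices to exhibit, for some $\delta>0$, a randomized algorithm that, given a listing budget $L$, enumerates up to $L$ zero-solutions of $\Zero(\psi)$ in time $\tilde{\Order}(m^{k-\delta}+L)$. This is the content of the auxiliary Lemma~\ref{lem:h1-to-listing}, so the remaining task is to design such an output-linear listing algorithm.

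For that listing algorithm, I would exploit the algebraic consequence of $\Hand(\psi_0)\leq 1$ (\autoref{def:model-checking-hardness}): there is a pair of indices $S\subseteq[k]$, say $S=\{k-1,k\}$ after relabeling, such that every fixing of the remaining $k-2$ bits of $\psi_0$ produces a $2$-variable restriction with $0$, $2$, $3$, or $4$ satisfying assignments -- crucially, never exactly $1$. Algorithmically, one brute-forces over all prefixes $(x_1,\dots,x_{k-2})\in X_1\times\cdots\times X_{k-2}$ in time $\Order(m^{k-2})$; for each prefix, the problem reduces to listing pairs $(x_{k-1},x_k)\in X_{k-1}\times X_k$ whose coordinate-wise constraint, imposed by the prefix together with the $E(x_i,x_j)$ predicates, is satisfied at every $y\in Y$. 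The per-coordinate \emph{allowed} set of pairs $(E(x_{k-1},y),E(x_k,y))$ thus has size $0$, $1$, $2$, or $4$, the missing size $3$ being exactly the structural gift of $\Hand\leq 1$.

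The main technical step is a case analysis showing that this restricted palette of per-coordinate constraints admits output-linear listing. Size-$4$ coordinates are vacuous and can be ignored; a single size-$0$ coordinate invalidates the prefix immediately; size-$2$ constraints have one of only six possible forms, each reducible to an equality/inequality of a single coordinate of $x_{k-1}$ (resp.\ $x_k$) with a constant, or to a coupled equality between a coordinate of $x_{k-1}$ and one of $x_k$, which reduces pair listing to a hash-based join on $y$-signatures; and size-$1$ constraints pin specific bit values on both vectors and integrate into the same signature machinery. The main obstacle I foresee is controlling overhead so that each prefix contributes only $\tilde{\Order}(m)+\Order(L')$ work, where $L'$ is the number of valid pairs it emits. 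This requires precomputing inverted indices over the $y$-signatures in $\tilde{\Order}(m)$ time and partitioning $X_{k-1}\times X_k$ into buckets whose products can be enumerated without rejected verifications; the subtlety is to make this uniform over all $\psi_0$ with $\Hand(\psi_0)\leq 1$ and to extract a uniform polynomial savings $\delta>0$, which I would attempt via a dichotomy between prefixes with many compatible pairs (amortized against $L$) and prefixes with very few (certified quickly via a random sample of coordinates).
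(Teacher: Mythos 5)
Your overall architecture is exactly the paper's: reduce the approximation scheme to an output-linear listing algorithm via \autoref{cor:lsh-as}, then obtain that listing algorithm for $\Hand \leq 1$ by brute-forcing $k-2$ variables and exploiting that the surviving two-variable restrictions never have exactly one satisfying (equivalently, exactly three falsifying) assignment; your case analysis on the per-coordinate allowed sets of sizes $0,1,2,4$ matches the case distinction in the paper's \autoref{lem:h1-to-listing}. However, there is a genuine gap in how you treat the graph predicates. The structural guarantee you invoke is a property of $\psi_0$, i.e.\ of the formula with \emph{all} atoms $E(x_i,x_j)$ set to false. After fixing a prefix $(x_1,\dots,x_{k-2})$, the atoms $E(x_{k-1},x_k)$ and $E(x_i,x_{k-1})$, $E(x_i,x_k)$ are \emph{not} resolved, and they are not per-coordinate constraints: they couple the two free vectors globally. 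Worse, on the slice where, say, $E(x_{k-1},x_k)$ is true, the induced two-variable function can have exactly one falsifying assignment (e.g.\ $\phi = \neg E(x_1,x_2) \lor \neg E(x_1,y) \lor \neg E(x_2,y)$), so your ``missing size $3$'' claim fails there and the listing problem on that slice is as hard as listing orthogonal pairs. The paper avoids this by first invoking \autoref{thm:equiv-vector-problems} to pass to the vector problem $\VZero(\psi_0)$, where no $E(x_i,x_j)$ atoms exist; the tuples with some $E(x_i,x_j)$ true are few ($\Order(m^{k-1})$) and are handled exhaustively there. You need this (or an equivalent) preprocessing step; folding the graph predicates into the coordinate-wise analysis does not work.

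A second, smaller issue is your budget control. The sampling-based dichotomy you sketch for distinguishing prefixes with many versus few compatible pairs is both vaguer and unnecessary: because the solution set for each fixed prefix is extremely rigid (an intersection of unary filters, a perfect matching induced by $\leftrightarrow$/$\centernot\leftrightarrow$ coordinates, and possibly a full product set), one can exactly \emph{count} the solutions per prefix in time $\tilde\Order(m)$ and then split the global budget $L$ across prefixes before listing, which is how \autoref{lem:h1-to-listing} achieves $\tilde\Order(m^{k-1}+L)$ deterministically. Random coordinate sampling certifies approximate values, not exact zero-solution counts, so it is the wrong tool here.
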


\subsection{Efficient (Multiplicative) Approximation Schemes}
Theorems~\ref{thm:as-maximization-classification} and~\ref{thm:as-minimization-classification} show that if $\Hand(\psi) \leq 1$, then we can give approximation schemes for $\Opt(\psi)$. We complement this result in~\autoref{sec:hardness-approx:sec:no-eas} by ruling out the existence of \emph{efficient} approximation schemes for most regimes. In the same way as~\autoref{def:additive-approximation-scheme}, we say that $\Opt(\psi)$ admits an efficient (multiplicative) approximation scheme if there is some fixed constant $\delta > 0$ such that for any $\varepsilon > 0$, a multiplicative $(1 + \varepsilon)$-approximation for $\Opt(\psi)$ can be computed in time $O(m^{k-\delta})$.

\begin{theorem}\label{thm:no-efficient-as}
Let $\psi$ be an $\OptSP_k$ graph formula. If $3 \leq \Hdeg(\psi)$ or $\Hdeg(\psi) = k$, then there exists no efficient approximation scheme for $\Opt(\psi)$ assuming the Sparse \MAXThreeSAT{} Hypothesis.
\end{theorem}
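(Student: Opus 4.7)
My plan is to prove Theorem~\ref{thm:no-efficient-as} as an essentially immediate corollary of Theorem~\ref{thm:additive-approximation}, exploiting that for any $\OptSP_k$ formula the optimum is a count over witnesses and hence $\OPT \leq |Y|$. This means that a multiplicative $(1+\varepsilon)$-approximation automatically yields an additive $\varepsilon|Y|$-approximation, so ruling out efficient additive schemes immediately rules out efficient multiplicative schemes.

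Concretely, I would first verify the conversion. In the maximization case a $(1+\varepsilon)$-approximation returns $v \in [\OPT/(1+\varepsilon), \OPT]$, so the additive error satisfies $\OPT - v \leq \varepsilon\,\OPT/(1+\varepsilon) \leq \varepsilon |Y|$. In the minimization case the returned $v \in [\OPT, (1+\varepsilon)\OPT]$ gives $v - \OPT \leq \varepsilon\,\OPT \leq \varepsilon |Y|$, and the edge case $\OPT = 0$ is handled trivially because the multiplicative approximation is then forced to return exactly $0 \in [0, \varepsilon |Y|]$. It follows that a hypothetical efficient multiplicative approximation scheme with fixed $\delta > 0$ yields an efficient additive approximation scheme with the same $\delta$ — that is, for every constant $\varepsilon > 0$ an additive $\varepsilon |Y|$-approximation can be computed in time $O(m^{k-\delta})$.

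To close the argument, I invoke the hypothesis on $\psi$: by Theorem~\ref{thm:exact-opt-classification}, the condition $3 \leq \Hdeg(\psi)$ or $\Hdeg(\psi) = k$ places $\psi$ outside of Regime~1 under the \MAXThreeSAT{} hypothesis (which is implied by Sparse \MAXThreeSAT). Hence the negative part of Theorem~\ref{thm:additive-approximation} applies and rules out any efficient additive approximation scheme for $\Opt(\psi)$ under the Sparse \MAXThreeSAT{} Hypothesis. Combined with the conversion above, this contradicts the existence of an efficient multiplicative approximation scheme and proves the theorem. The substantive technical work — in particular the gap-amplification from additive-$1$ gaps at the level of the coordinate gadget to a constant fraction of $|Y|$, which is what allows a constant $\varepsilon > 0$ to witness hardness under the Sparse \MAXThreeSAT{} Hypothesis — is absorbed into Theorem~\ref{thm:additive-approximation}, so the real obstacle lies there rather than in the present reduction.
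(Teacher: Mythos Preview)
Your proposal is correct and mirrors the paper's own approach: the paper explicitly notes that ``a multiplicative approximation scheme implies an additive one'' and then proves the additive hardness (Lemmas~\ref{lem:no-efficient-as-seth} and~\ref{lem:no-efficient-as-sparse-max3sat}), which is exactly the content you invoke via Theorem~\ref{thm:additive-approximation}. The only cosmetic difference is that the paper states the additive lower bound directly in terms of the $\Hdeg$ condition (Theorem~\ref{thm:additive-approximation-classification}) rather than via ``not in Regime~1,'' so your detour through Theorem~\ref{thm:exact-opt-classification} is unnecessary but harmless.
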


\subsection{Proving the Main Theorem}
We can now put things together to prove \autoref{thm:main}.
\begin{proof}[Proof of~\autoref{thm:main}]
  Let $\psi$ be a $\MaxSP_k$ formula with $k \geq 3$. We show how to classify $\psi$ into one of the four stated regimes with a case distinction based on the hardness parameters $\Hand(\psi)$ and $\Hdeg(\psi)$. This case distinction can also be read off~\autoref{fig:spectrum}.
  \begin{itemize}
    \item If $\Hdeg(\psi) \leq 2$: By \autoref{thm:exact-opt-classification}, $\psi$ is efficiently optimizable, so it lies in R1.
    \item Otherwise, it holds that $3 \leq \Hdeg(\psi) \leq k$. In this case, we make a further distinction:
    \begin{itemize}
      \item $\Hand(\psi) \leq 1$: By \autoref{thm:as-maximization-classification}, $\psi$ admits an approximation scheme but by \autoref{thm:no-efficient-as} not an efficient one, so it lies in R2.
      \item $\Hand(\psi) = 2$: By \autoref{thm:constant-approx-max-classification}, $\psi$ admits an efficient constant-factor approximation but by \autoref{thm:no-efficient-as}, it does not admit an efficient approximation scheme. Thus, depending on whether $\psi$ admits an approximation scheme or not, it lies in R2 or R3. (As mentioned below \autoref{thm:main}, this is the single case where we cannot place the formula in its precise regime, see also \autoref{open:exact-3-cover}.)
      \item $3 \leq \Hand(\psi) < k$: By \autoref{thm:constant-approx-max-classification}, $\psi$ admits an efficient constant factor approximation but by \autoref{thm:as-maximization-classification} it has no approximation scheme, so it lies in R3.
      \item $\Hand(\psi) = k$: By \autoref{thm:tight-polynomial-maximization}, $\psi$ admits an efficient polynomial-factor approximation, and this is best possible, so it lies in R4.
    \end{itemize}
  \end{itemize}
  The case distinctions for the case $k = 2$ and for the minimization variant are proven analogously.
\end{proof}

\section{Discussion and Open Problems} \label{sec:conclusion}
Our investigation reveals all possible approximability types (in better-than-exhaustive-search time) for general classes of polynomial-time optimization problems, namely graph formulas in $\MaxSP_k$ and $\MinSP_k$. Our results, which give an almost complete characterization, open up the following questions:
\begin{itemize}
	\item Can we extend our classification beyond graph formulas, i.e., when we allow more binary relations, or even higher-arity relations? Such settings include, e.g., generalizations of Max $k$-XOR from $\mathbb{F}_2$ to $\mathbb{F}_q$, or variants of the densest subgraph problem on hypergraphs.
	\item In our setting, each variable $x_i$ ranges over a separate set $X_i$, also known as a \emph{multichromatic} setting. We leave it open to transfer our results to the \emph{monochromatic} setting (see e.g.~\cite{KarthikM19}).
	\item While we consider running times expressed in the input size (as usual in database contexts), it would also be natural to consider parameterization in the number $n$ of objects in the relational structure, see~\cite{Williams14}.
\end{itemize}
Besides these extensions, we ask whether one can close the remaining gap in our classification: Do formulas $\psi$ with $\Hand(\psi)=2$ and $\Hdeg(\psi)\ge 3$ admit an approximation scheme? As a specific challenge, we give the following open problem:
\begin{openproblem}\label{open:exact-3-cover}
	Is there an approximation scheme for Maximum Exact-$3$-Cover (or its minimization variant)? Specifically, can we prove or rule out that for every $\varepsilon > 0$, there is some $\delta > 0$ such that we can $(1+\varepsilon)$-approximate Maximum Exact-3-Cover in time $O(m^{3-\delta})$?
\end{openproblem}

It appears likely that showing existence of an approximation scheme for Maximum Exact-3-Cover would lead to a full characterization of $\MaxSP_k$. 

Finally, while we focused on the qualitative question whether or not exhaustive search can be beaten, a follow-up question is to determine precise \emph{approximability-time tradeoffs}. 
In this vein, consider the well-studied Maximum $k$-Cover problem: A simple linear-time greedy approach is known to establish a $(1-1/e)^{-1}$-approximation~\cite{Hochbaum96}. Subsequent lower bounds show that this is conditionally best possible in polynomial-time~\cite{Feige98} and even $f(k)m^{o(k)}$-time (under \GapETH{})~\cite{Cohen-AddadGKLL19,Manurangsi20}. On the other hand, for every fixed~$k$, we show (1) existence of an approximation scheme, but (2) rule out an efficient one assuming SETH, i.e., for every $\delta > 0$, there is some $\varepsilon > 0$ such that an $(1+\varepsilon)$-approximation requires time $\Omega(m^{k-\delta})$.

\begin{openproblem}\label{open:maximum-3-coverage}
	Let $k \ge 2$. Can we determine, for every $1 \le \gamma \le (1-1/e)^{-1}$, the optimal exponent $\alpha$ of the fastest $\gamma$-approximation for Maximum $k$-Cover running in time $O(m^{\alpha \pm o(1)})$, assuming plausible fine-grained hardness assumptions?  
\end{openproblem}
Note that the extreme cases for $\gamma = 1$ and $\gamma = (1-1/e)^{-1}$ are already settled and that \cite{Manurangsi20} shows that for all immediate cases, $\alpha$ must have a linear dependence on $k$, assuming \GapETH{}.

\bibliographystyle{plain}
\bibliography{refs}

\begin{thebibliography}{10}

\bibitem{AaronsonW09}
Scott Aaronson and Avi Wigderson.
\newblock Algebrization: {A} new barrier in complexity theory.
\newblock {\em {TOCT}}, 1(1):2:1--2:54, 2009.

\bibitem{AbboudB17}
Amir Abboud and Arturs Backurs.
\newblock Towards hardness of approximation for polynomial time problems.
\newblock In {\em Proceedings of the 8th Conference on Innovations in
  Theoretical Computer Science}, volume~67 of {\em ITCS '17}, pages
  11:1--11:26. Schloss Dagstuhl - Leibniz-Zentrum f{\"{u}}r Informatik, 2017.

\bibitem{AbboudBVW15}
Amir Abboud, Arturs Backurs, and Virginia~Vassilevska Williams.
\newblock Tight hardness results for {LCS} and other sequence similarity
  measures.
\newblock In {\em Proceedings of the 56th {IEEE} Annual Symposium on
  Foundations of Computer Science}, FOCS '15, pages 59--78. {IEEE} Computer
  Society, 2015.

\bibitem{AbboudBW18}
Amir Abboud, Arturs Backurs, and Virginia~Vassilevska Williams.
\newblock If the current clique algorithms are optimal, so is {Valiant}'s
  parser.
\newblock {\em {SIAM} J. Comput.}, 47(6):2527--2555, 2018.

\bibitem{AbboudBDN18}
Amir Abboud, Karl Bringmann, Holger Dell, and Jesper Nederlof.
\newblock More consequences of falsifying {SETH} and the orthogonal vectors
  conjecture.
\newblock In {\em Proceedings of the 50th Annual {ACM} Symposium on Theory of
  Computing}, STOC '18, pages 253--266. {ACM}, 2018.

\bibitem{AbboudR18}
Amir Abboud and Aviad Rubinstein.
\newblock Fast and deterministic constant factor approximation algorithms for
  {LCS} imply new circuit lower bounds.
\newblock In {\em Proceedings of the 9th Conference on Innovations in
  Theoretical Computer Science}, volume~94 of {\em ITCS '18}, pages
  35:1--35:14. Schloss Dagstuhl - Leibniz-Zentrum f{\"{u}}r Informatik, 2018.

\bibitem{AbboudRW17}
Amir Abboud, Aviad Rubinstein, and Ryan Williams.
\newblock Distributed {PCP} theorems for hardness of approximation in {P}.
\newblock In {\em Proceedings of the 58th {IEEE} Annual Symposium on
  Foundations of Computer Science}, FOCS '17, pages 25--36. {IEEE} Computer
  Society, 2017.

\bibitem{AbboudWY15}
Amir Abboud, Ryan Williams, and Huacheng Yu.
\newblock More applications of the polynomial method to algorithm design.
\newblock In {\em Proceedings of the 26th Annual {ACM-SIAM} Symposium on
  Discrete Algorithms}, SODA '15, pages 218--230. {SIAM}, 2015.

\bibitem{AlmanCW16}
Josh Alman, Timothy Chan, and Ryan Williams.
\newblock Polynomial representations of threshold functions and algorithmic
  applications.
\newblock In {\em Proceedings of the 57th {IEEE} Annual Symposium on
  Foundations of Computer Science}, FOCS '16, pages 467--476. {IEEE} Computer
  Society, 2016.

\bibitem{AlmanCW20}
Josh Alman, Timothy Chan, and Ryan Williams.
\newblock Faster deterministic and {Las} {Vegas} algorithms for offline
  approximate nearest neighbors in high dimensions.
\newblock In {\em Proceedings of the 31st Annual {ACM-SIAM} Symposium on
  Discrete Algorithms}, SODA '20, pages 637--649. {SIAM}, 2020.

\bibitem{AlmanW15}
Josh Alman and Ryan Williams.
\newblock Probabilistic polynomials and hamming nearest neighbors.
\newblock In {\em Proceedings of the 56th {IEEE} Annual Symposium on
  Foundations of Computer Science}, FOCS '15, pages 136--150. {IEEE} Computer
  Society, 2015.

\bibitem{AlmanW20}
Josh Alman and Virginia~Vassilevska Williams.
\newblock {OV} graphs are (probably) hard instances.
\newblock In {\em Proceedings of the 11th Conference on Innovations in
  Theoretical Computer Science}, volume 151 of {\em ITCS '20}, pages
  83:1--83:18. Schloss Dagstuhl - Leibniz-Zentrum f{\"{u}}r Informatik, 2020.

\bibitem{AlmanW21}
Josh Alman and Virginia~Vassilevska Williams.
\newblock A refined laser method and faster matrix multiplication.
\newblock In {\em Proceedings of the 32nd Annual {ACM-SIAM} Symposium on
  Discrete Algorithms}, SODA '21, pages 522--539. {SIAM}, 2021.

\bibitem{AndoniI06}
Alexandr Andoni and Piotr Indyk.
\newblock Near-optimal hashing algorithms for approximate nearest neighbor in
  high dimensions.
\newblock In {\em Proceedings of the 47th {IEEE} Annual Symposium on
  Foundations of Computer Science}, FOCS '06, pages 459--468. {IEEE} Computer
  Society, 2006.

\bibitem{AndoniINR14}
Alexandr Andoni, Piotr Indyk, Huy~L. Nguyen, and Ilya~P. Razenshteyn.
\newblock Beyond locality-sensitive hashing.
\newblock In {\em Proceedings of the 25th Annual {ACM-SIAM} Symposium on
  Discrete Algorithms}, SODA '14, pages 1018--1028. {SIAM}, 2014.

\bibitem{AndoniIR18}
Alexandr Andoni, Piotr Indyk, and Ilya~P. Razenshteyn.
\newblock Approximate nearest neighbor search in high dimensions.
\newblock In {\em Proceedings of the International Congress of Mathematicians},
  ICM '18, pages 3287--3318, 2018.

\bibitem{AndoniKO10}
Alexandr Andoni, Robert Krauthgamer, and Krzysztof Onak.
\newblock Polylogarithmic approximation for edit distance and the asymmetric
  query complexity.
\newblock In {\em Proceedings of the 51st {IEEE} Annual Symposium on
  Foundations of Computer Science}, FOCS '10, pages 377--386. {IEEE} Computer
  Society, 2010.

\bibitem{AndoniN20}
Alexandr Andoni and Negev~Shekel Nosatzki.
\newblock Edit distance in near-linear time: it's a constant factor.
\newblock In {\em Proceedings of the 61st {IEEE} Annual Symposium on
  Foundations of Computer Science}, FOCS '20, pages 990--1001. {IEEE}, 2020.

\bibitem{AndoniO09}
Alexandr Andoni and Krzysztof Onak.
\newblock Approximating edit distance in near-linear time.
\newblock In {\em Proceedings of the 41st Annual {ACM} Symposium on Theory of
  Computing}, STOC '09, pages 199--204. {ACM}, 2009.

\bibitem{AndoniR15}
Alexandr Andoni and Ilya~P. Razenshteyn.
\newblock Optimal data-dependent hashing for approximate near neighbors.
\newblock In {\em Proceedings of the 47th Annual {ACM} Symposium on Theory of
  Computing}, STOC '15, pages 793--801. {ACM}, 2015.

\bibitem{BabaiGKL03}
L{\'{a}}szl{\'{o}} Babai, Anna G{\'{a}}l, Peter~G. Kimmel, and Satyanarayana~V.
  Lokam.
\newblock Communication complexity of simultaneous messages.
\newblock {\em {SIAM} J. Comput.}, 33(1):137--166, 2003.

\bibitem{BackursRSVWW18}
Arturs Backurs, Liam Roditty, Gilad Segal, Virginia~Vassilevska Williams, and
  Nicole Wein.
\newblock Towards tight approximation bounds for graph diameter and
  eccentricities.
\newblock In {\em Proceedings of the 50th Annual {ACM} Symposium on Theory of
  Computing}, STOC '18, pages 267--280. {ACM}, 2018.

\bibitem{Bar-YossefJKK04}
Ziv Bar{-}Yossef, T.~S. Jayram, Robert Krauthgamer, and Ravi Kumar.
\newblock Approximating edit distance efficiently.
\newblock In {\em Proceedings of the 45th {IEEE} Annual Symposium on
  Foundations of Computer Science}, FOCS '04, pages 550--559. {IEEE} Computer
  Society, 2004.

\bibitem{BatuEKMRRS03}
Tugkan Batu, Funda Erg{\"{u}}n, Joe Kilian, Avner Magen, Sofya Raskhodnikova,
  Ronitt Rubinfeld, and Rahul Sami.
\newblock A sublinear algorithm for weakly approximating edit distance.
\newblock In {\em Proceedings of the 35th Annual {ACM} Symposium on Theory of
  Computing}, STOC '03, pages 316--324. {ACM}, 2003.

\bibitem{BatuES06}
Tugkan Batu, Funda Erg{\"{u}}n, and S{\"{u}}leyman~Cenk Sahinalp.
\newblock Oblivious string embeddings and edit distance approximations.
\newblock In {\em Proceedings of the 17th Annual {ACM-SIAM} Symposium on
  Discrete Algorithms}, SODA '06, pages 792--801. {SIAM}, 2006.

\bibitem{BjorklundPWZ14}
Andreas Bj{\"o}rklund, Rasmus Pagh, Virginia~Vassilevska Williams, and Uri
  Zwick.
\newblock Listing triangles.
\newblock In {\em Proceedings of the 41st International Colloquium on Automata,
  Languages, and Programming}, ICALP '14, pages 223--234. Springer Berlin
  Heidelberg, 2014.

\bibitem{BorodinOR99}
Allan Borodin, Rafail Ostrovsky, and Yuval Rabani.
\newblock Subquadratic approximation algorithms for clustering problems in high
  dimensional spaces.
\newblock In {\em Proceedings of the 31st Annual {ACM} Symposium on Theory of
  Computing}, STOC '99, pages 435--444. {ACM}, 1999.

\bibitem{BrakensiekR20}
Joshua Brakensiek and Aviad Rubinstein.
\newblock Constant-factor approximation of near-linear edit distance in
  near-linear time.
\newblock In {\em Proceedings of the 52nd Annual {ACM} Symposium on Theory of
  Computing}, STOC '20, pages 685--698. {ACM}, 2020.

\bibitem{Bringmann14}
Karl Bringmann.
\newblock Why walking the dog takes time: {Fr\'echet} distance has no strongly
  subquadratic algorithms unless {SETH} fails.
\newblock In {\em Proceedings of the 55th {IEEE} Annual Symposium on
  Foundations of Computer Science}, FOCS '15, pages 661--670. {IEEE} Computer
  Society, 2014.

\bibitem{BringmannCFK21}
Karl Bringmann, Alejandro Cassis, Nick Fischer, and Marvin K{\"{u}}nnemann.
\newblock Fine-grained completeness for optimization in {P}.
\newblock In {\em {APPROX-RANDOM}}, volume 207 of {\em LIPIcs}, pages
  9:1--9:22. Schloss Dagstuhl - Leibniz-Zentrum f{\"{u}}r Informatik, 2021.

\bibitem{BringmannFK19}
Karl Bringmann, Nick Fischer, and Marvin K{\"{u}}nnemann.
\newblock A fine-grained analogue of {Schaefer}'s theorem in {P}: {Dichotomy}
  of $\exists^k\forall$-quantified first-order graph properties.
\newblock In {\em Proceedings of the 34th Computational Complexity Conference},
  volume 137 of {\em CCC '19}, pages 31:1--31:27. Schloss Dagstuhl --
  Leibniz-Zentrum f{\"{u}}r Informatik, 2019.

\bibitem{BringmannKW19}
Karl Bringmann, Marvin K{\"{u}}nnemann, and Karol Wegrzycki.
\newblock Approximating {APSP} without scaling: {Equivalence} of approximate
  min-plus and exact min-max.
\newblock In {\em Proceedings of the 51st Annual {ACM} Symposium on Theory of
  Computing}, STOC '19, pages 943--954. {ACM}, 2019.

\bibitem{ChakrabortyDGKS18}
Diptarka Chakraborty, Debarati Das, Elazar Goldenberg, Michal Kouck{\'{y}}, and
  Michael~E. Saks.
\newblock Approximating edit distance within constant factor in truly
  sub-quadratic time.
\newblock In {\em Proceedings of the 59th {IEEE} Annual Symposium on
  Foundations of Computer Science}, FOCS '18, pages 979--990. {IEEE} Computer
  Society, 2018.

\bibitem{ChalermsookCKLM17}
Parinya Chalermsook, Marek Cygan, Guy Kortsarz, Bundit Laekhanukit, Pasin
  Manurangsi, Danupon Nanongkai, and Luca Trevisan.
\newblock From gap-{ETH} to {FPT}-inapproximability: {Clique}, dominating set,
  and more.
\newblock In {\em Proceedings of the 58th {IEEE} Annual Symposium on
  Foundations of Computer Science}, FOCS '17, pages 743--754. {IEEE} Computer
  Society, 2017.

\bibitem{ChanW16}
Timothy~M. Chan and Ryan Williams.
\newblock Deterministic {APSP}, orthogonal vectors, and more: Quickly
  derandomizing {Razborov}-{Smolensky}.
\newblock In {\em Proceedings of the 27th Annual {ACM-SIAM} Symposium on
  Discrete Algorithms}, SODA '16, pages 1246--1255. {SIAM}, 2016.

\bibitem{Chen18}
Lijie Chen.
\newblock On the hardness of approximate and exact (bichromatic) maximum inner
  product.
\newblock In {\em Proceedings of the 33th Computational Complexity Conference},
  volume 102 of {\em CCC '18}, pages 14:1--14:45. Schloss Dagstuhl -
  Leibniz-Zentrum f{\"{u}}r Informatik, 2018.

\bibitem{ChenGLRR19}
Lijie Chen, Shafi Goldwasser, Kaifeng Lyu, Guy~N. Rothblum, and Aviad
  Rubinstein.
\newblock Fine-grained complexity meets {IP} = {PSPACE}.
\newblock In {\em Proceedings of the 30th Annual {ACM-SIAM} Symposium on
  Discrete Algorithms}, SODA '19, pages 1--20. {SIAM}, 2019.

\bibitem{ChenW19}
Lijie Chen and Ryan Williams.
\newblock An equivalence class for orthogonal vectors.
\newblock In {\em Proceedings of the 30th Annual {ACM-SIAM} Symposium on
  Discrete Algorithms}, SODA '19, pages 21--40. {SIAM}, 2019.

\bibitem{ChenS15}
Ruiwen Chen and Rahul Santhanam.
\newblock Improved algorithms for sparse {MAX-SAT} and {MAX-k-CSP}.
\newblock In {\em Proceedings of the 18th International Conference on Theory
  and Applications of Satisfiability Testing}, volume 9340 of {\em SAT '15},
  pages 33--45. Springer Berlin Heidelberg, 2015.

\bibitem{Cohen-AddadGKLL19}
Vincent Cohen{-}Addad, Anupam Gupta, Amit Kumar, Euiwoong Lee, and Jason Li.
\newblock Tight {FPT} approximations for $k$-median and $k$-means.
\newblock In {\em Proceedings of the 46th International Colloquium on Automata,
  Languages, and Programming}, volume 132 of {\em ICALP '19}, pages
  42:1--42:14. Schloss Dagstuhl - Leibniz-Zentrum f{\"{u}}r Informatik, 2019.

\bibitem{Creignou95}
Nadia Creignou.
\newblock A dichotomy theorem for maximum generalized satisfiability problems.
\newblock {\em J. Comput. Syst. Sci.}, 51(3):511--522, 1995.

\bibitem{DietzfelbingerSW18}
Martin Dietzfelbinger, Philipp Schlag, and Stefan Walzer.
\newblock A subquadratic algorithm for {3XOR}.
\newblock In {\em Proceedings of the 43rd International Symposium on
  Mathematical Foundations of Computer Science}, volume 117 of {\em {MFCS}
  '18}, pages 59:1--59:15. Schloss Dagstuhl - Leibniz-Zentrum f{\"{u}}r
  Informatik, 2018.

\bibitem{DuanP14}
Ran Duan and Seth Pettie.
\newblock Linear-time approximation for maximum weight matching.
\newblock {\em J. {ACM}}, 61(1):1:1--1:23, 2014.

\bibitem{Feige98}
Uriel Feige.
\newblock A threshold of $\ln n$ for approximating set cover.
\newblock {\em J. {ACM}}, 45(4):634--652, 1998.

\bibitem{FeldmannKLM20}
Andreas~Emil Feldmann, C.~S. Karthik, Euiwoong Lee, and Pasin Manurangsi.
\newblock A survey on approximation in parameterized complexity: Hardness and
  algorithms.
\newblock {\em Electronic Colloquium on Computational Complexity {(ECCC)}},
  27:86, 2020.

\bibitem{Gabow85}
Harold~N. Gabow.
\newblock Scaling algorithms for network problems.
\newblock {\em J. Comput. Syst. Sci.}, 31(2):148--168, 1985.

\bibitem{Gao19}
Jiawei Gao.
\newblock On the fine-grained complexity of least weight subsequence in
  multitrees and bounded treewidth {DAGs}.
\newblock In {\em Proceedings of the 14th International Symposium on
  Parameterized and Exact Computation}, volume 148 of {\em {IPEC} '19}, pages
  16:1--16:17. Schloss Dagstuhl - Leibniz-Zentrum f{\"{u}}r Informatik, 2019.

\bibitem{GaoI19}
Jiawei Gao and Russell Impagliazzo.
\newblock The fine-grained complexity of strengthenings of first-order logic.
\newblock {\em Electronic Colloquium on Computational Complexity {(ECCC)}},
  26:9, 2019.

\bibitem{GaoIKW18}
Jiawei Gao, Russell Impagliazzo, Antonina Kolokolova, and Ryan Williams.
\newblock Completeness for first-order properties on sparse structures with
  algorithmic applications.
\newblock {\em {ACM} Trans. Algorithms}, 15(2):23:1--23:35, 2019.

\bibitem{GionisIM99}
Aristides Gionis, Piotr Indyk, and Rajeev Motwani.
\newblock Similarity search in high dimensions via hashing.
\newblock In {\em Proceedings of the 25th International Conference on Very
  Large Data Bases}, VLDB '99, pages 518--529. Morgan Kaufmann, 1999.

\bibitem{GoelIV01}
Ashish Goel, Piotr Indyk, and Kasturi~R. Varadarajan.
\newblock Reductions among high dimensional proximity problems.
\newblock In {\em Proceedings of the 12th Annual {ACM-SIAM} Symposium on
  Discrete Algorithms}, SODA '01, pages 769--778. {SIAM}, 2001.

\bibitem{GoldenbergRS20}
Elazar Goldenberg, Aviad Rubinstein, and Barna Saha.
\newblock Does preprocessing help in fast sequence comparisons?
\newblock In {\em Proceedings of the 52nd Annual {ACM} Symposium on Theory of
  Computing}, STOC '20, pages 657--670. {ACM}, 2020.

\bibitem{Har-PeledIM12}
Sariel Har{-}Peled, Piotr Indyk, and Rajeev Motwani.
\newblock Approximate nearest neighbor: {T}owards removing the curse of
  dimensionality.
\newblock {\em Theory of Computing}, 8(1):321--350, 2012.

\bibitem{Hastad01}
Johan H{\aa}stad.
\newblock Some optimal inapproximability results.
\newblock {\em J. {ACM}}, 48(4):798--859, 2001.

\bibitem{Hochbaum96}
Dorit~S. Hochbaum.
\newblock {\em Approximating Covering and Packing Problems: {S}et Cover, Vertex
  Cover, Independent Set, and Related Problems}, pages 94--143.
\newblock PWS Publishing Co., 1996.

\bibitem{Indyk00}
Piotr Indyk.
\newblock Dimensionality reduction techniques for proximity problems.
\newblock In {\em Proceedings of the 11th Annual {ACM-SIAM} Symposium on
  Discrete Algorithms}, SODA '00, pages 371--378. {SIAM}, 2000.

\bibitem{Indyk03}
Piotr Indyk.
\newblock Better algorithms for high-dimensional proximity problems via
  asymmetric embeddings.
\newblock In {\em Proceedings of the 14th Annual {ACM-SIAM} Symposium on
  Discrete Algorithms}, SODA '03, pages 539--545. {SIAM}, 2003.

\bibitem{IndykM98}
Piotr Indyk and Rajeev Motwani.
\newblock Approximate nearest neighbors: {Towards} removing the curse of
  dimensionality.
\newblock In {\em Proceedings of the 30th Annual {ACM} Symposium on Theory of
  Computing}, STOC ’98, pages 604--613. {ACM}, 1998.

\bibitem{JafargholiV16}
Zahra Jafargholi and Emanuele Viola.
\newblock {3SUM}, {3XOR}, triangles.
\newblock {\em Algorithmica}, 74(1):326--343, 2016.

\bibitem{KarthikLM19}
C.~S. Karthik, Bundit Laekhanukit, and Pasin Manurangsi.
\newblock On the parameterized complexity of approximating dominating set.
\newblock {\em J. {ACM}}, 66(5):33:1--33:38, 2019.

\bibitem{KarthikM19}
{Karthik {C. S.}} and Pasin Manurangsi.
\newblock On closest pair in euclidean metric: {Monochromatic} is as hard as
  bichromatic.
\newblock In {\em Proceedings of the 10th Conference on Innovations in
  Theoretical Computer Science}, volume 124 of {\em ITCS '19}, pages
  17:1--17:16. Schloss Dagstuhl - Leibniz-Zentrum f{\"{u}}r Informatik, 2019.

\bibitem{KhannaSTW00}
Sanjeev Khanna, Madhu Sudan, Luca Trevisan, and David~P. Williamson.
\newblock The approximability of constraint satisfaction problems.
\newblock {\em {SIAM} J. Comput.}, 30(6):1863--1920, 2000.

\bibitem{Knuth00}
Donald~E. Knuth.
\newblock Dancing links.
\newblock {\em CoRR}, abs/cs/0011047, 2000.

\bibitem{KouckyS20}
Michal Kouck{\'{y}} and Michael~E. Saks.
\newblock Constant factor approximations to edit distance on far input pairs in
  nearly linear time.
\newblock In {\em Proceedings of the 52nd Annual {ACM} Symposium on Theory of
  Computing}, STOC '20, pages 699--712. {ACM}, 2020.

\bibitem{KunnemannM20}
Marvin K{\"{u}}nnemann and D{\'{a}}niel Marx.
\newblock Finding small satisfying assignments faster than brute force: {A}
  fine-grained perspective into boolean constraint satisfaction.
\newblock In {\em Proceedings of the 35th Computational Complexity Conference},
  CCC '20, 2020.
\newblock To appear.

\bibitem{LandauMS98}
Gad~M. Landau, Eugene~W. Myers, and Jeanette~P. Schmidt.
\newblock Incremental string comparison.
\newblock {\em {SIAM} J. Comput.}, 27(2):557--582, 1998.

\bibitem{Lin21}
Bingkai Lin.
\newblock Constant approximating k-clique is w[1]-hard.
\newblock In {\em Proceedings of the 53rd Annual {ACM} Symposium on Theory of
  Computing}, STOC '21, pages 1749--1756. {ACM}, 2021.

\bibitem{LincolnWW18}
Andrea Lincoln, Virginia~Vassilevska Williams, and Ryan Williams.
\newblock Tight hardness for shortest cycles and paths in sparse graphs.
\newblock In {\em Proceedings of the 29th Annual {ACM-SIAM} Symposium on
  Discrete Algorithms}, SODA '18, pages 1236--1252. {SIAM}, 2018.

\bibitem{Manurangsi20}
Pasin Manurangsi.
\newblock Tight running time lower bounds for strong inapproximability of
  maximum $k$-coverage, unique set cover and related problems (via $t$-wise
  agreement testing theorem).
\newblock In {\em Proceedings of the 31st Annual {ACM-SIAM} Symposium on
  Discrete Algorithms}, SODA '20, pages 62--81. {SIAM}, 2020.

\bibitem{NesetrilP85}
Jaroslav Ne{\v{s}}et{\v{r}}il and Svatopluk Poljak.
\newblock On the complexity of the subgraph problem.
\newblock {\em Commentationes Mathematicae Universitatis Carolinae},
  26(2):415--419, 1985.

\bibitem{PapadimitriouY91}
Christos~H. Papadimitriou and Mihalis Yannakakis.
\newblock Optimization, approximation, and complexity classes.
\newblock {\em J. Comput. Syst. Sci.}, 43(3):425--440, 1991.

\bibitem{Patrascu10}
Mihai P{\v a}tra{\c s}cu.
\newblock Towards polynomial lower bounds for dynamic problems.
\newblock In {\em Proceedings of the 42nd Annual {ACM} Symposium on Theory of
  Computing}, STOC ’10, pages 603--610. {ACM}, 2010.

\bibitem{RodittyVW13}
Liam Roditty and Virginia~Vassilevska Williams.
\newblock Fast approximation algorithms for the diameter and radius of sparse
  graphs.
\newblock In {\em Proceedings of the 45th Annual {ACM} Symposium on Theory of
  Computing}, STOC '13, pages 515--524. {ACM}, 2013.

\bibitem{Rubinstein18}
Aviad Rubinstein.
\newblock Hardness of approximate nearest neighbor search.
\newblock In {\em Proceedings of the 50th Annual {ACM} Symposium on Theory of
  Computing}, STOC '18, pages 1260--1268. {ACM}, 2018.

\bibitem{Stephens-DavidowitzV19}
Noah Stephens{-}Davidowitz and Vinod Vaikuntanathan.
\newblock {SETH}-hardness of coding problems.
\newblock In {\em Proceedings of the 60th {IEEE} Annual Symposium on
  Foundations of Computer Science}, FOCS '19, pages 287--301. {IEEE} Computer
  Society, 2019.

\bibitem{Valiant15}
Gregory Valiant.
\newblock Finding correlations in subquadratic time, with applications to
  learning parities and the closest pair problem.
\newblock {\em J. {ACM}}, 62(2), 2015.

\bibitem{VassilevskaW18}
Virginia Vassilevska~Williams.
\newblock On some fine-grained questions in algorithms and complexity.
\newblock In {\em Proceedings of the International Congress of Mathematicians},
  ICM '18, pages 3447--3487, 2018.

\bibitem{Williams05}
Ryan Williams.
\newblock A new algorithm for optimal 2-constraint satisfaction and its
  implications.
\newblock {\em Theor. Comput. Sci.}, 348(2-3):357--365, 2005.

\bibitem{Williams07}
Ryan Williams.
\newblock {\em Algorithms and resource requirements for fundamental problems}.
\newblock ProQuest LLC, Ann Arbor, MI, 2007.
\newblock Thesis (Ph.D.)--Carnegie Mellon University.

\bibitem{Williams14}
Ryan Williams.
\newblock Faster decision of first-order graph properties.
\newblock In {\em Proceedings of the Joint Meeting of the 23rd {EACSL} Annual
  Conference on Computer Science Logic (CSL) and the 29th Annual {ACM/IEEE}
  Symposium on Logic in Computer Science (LICS)}, CSL-LICS ’14. {ACM}, 2014.

\bibitem{Williams18}
Ryan Williams.
\newblock On the difference between closest, furthest, and orthogonal pairs:
  {Nearly}-linear vs barely-subquadratic complexity.
\newblock In {\em Proceedings of the 29th Annual {ACM-SIAM} Symposium on
  Discrete Algorithms}, SODA '18, pages 1207--1215. {SIAM}, 2018.

\bibitem{Yao79}
Andrew~Chi{-}Chih Yao.
\newblock Some complexity questions related to distributive computing.
\newblock In {\em Proceedings of the 9th Annual {ACM} Symposium on Theory of
  Computing}, STOC '79, pages 209--213. {ACM}, 1979.

\bibitem{Zwick02}
Uri Zwick.
\newblock All pairs shortest paths using bridging sets and rectangular matrix
  multiplication.
\newblock {\em J. {ACM}}, 49(3):289--317, 2002.

\end{thebibliography}

\newpage
\appendix
\section*{Appendix}
In the following appendix sections we mention some interesting examples of our work, give background on the hardness assumptions and provide detailed proofs of our theorems. We start with a quick outline.
\newcommand\LmBox[2][lipicsYellow]{%
	\tikz[baseline=(label.base)]{
        \fill[#1] (0, -.35ex)
            -- ++(1.1em, 0)
            -- ++(0, 2.25ex)
            -- ++(-1.1em, 0)
            -- cycle;
		\node[anchor=base, inner sep=0] (label) at (.55em, 0) {\color{black}\bfseries\sffamily #2};
    }}%
\begin{itemize}
\labelsep.4em
\smallskip
\item[\LmBox{\ref{sec:examples}}] \textbf{\textsf{Examples:}} Summarizes the problems listed in~\autoref{tab:examples} and demonstrates how to apply our classification theorems to pinpoint their position in the four-regime landscape.
\item[\LmBox{\ref{sec:assumptions}}] \textbf{\textsf{Hardness Assumptions:}} Formally introduces our fine-grained hardness assumptions, and includes arguments for their plausibility.
\item[\LmBox{\ref{sec:related-work}}] \textbf{\textsf{Comparison to Related Work:}} Extends the discussion on~\autoref{sec:introduction:sec:related-work} about related work.
\item[\LmBox{\ref{sec:vector}}] \textbf{\textsf{Vector Optimization Problems:}} Introduces a subfamily of problems which we call \emph{vector (optimization) problems.} As we will prove, these problems capture the core hardness of the more general classes $\MaxSP_k$ and $\MinSP_k$ and are used as a critical ingredient in the following technical sections.
\item[\LmBox{\ref{sec:exact}}] \textbf{\textsf{Exact Optimization:}} Provides the algorithms for problems that can be optimized exactly and the conditional lower bounds for problems which cannot (for both maximization and minimization).
\item[\LmBox{\ref{sec:approx-max}}] \textbf{\textsf{Approximation Algorithms for Maximization:}} Provides the algorithms to approximate \emph{maximization} problems (that is, approximation schemes, constant-factor and polynomial-factor approximations).
\item[\LmBox{\ref{sec:approx-min}}] \textbf{\textsf{Approximation Algorithms for Minimization:}} Provides the algorithms to approximate \emph{minimization} problems (that is, approximation schemes and constant-factor approximations).
\item[\LmBox{\ref{sec:additive}}] \textbf{\textsf{Additive Approximation Algorithms:}} Provides the additive approximation algorithms (for both maximization and minimization problems).
\item[\LmBox{\ref{sec:hardness-approx}}] \textbf{\textsf{Hardness of Approximation:}} Proves our results on hardness of approximation (for both maximization and minimization problems).  
\item[\LmBox{\ref{sec:pcp}}] \textbf{\textsf{Distributed PCP and Extensions:}} Shows how to transfer the Distributed PCP theorem established by~\cite{AbboudRW17,Rubinstein18,Chen18,KarthikLM19} to our setting.
\end{itemize}

\section{Examples} \label{sec:examples}
In the introduction, we gave as examples for our classification results the following three families of natural $\MaxSP_k$ and $\MinSP_k$ graph problems. 
For each problem, an instance consists of $k$ sets of $n$ vectors $X_1, \dots, X_k \subseteq \{0, 1\}^d$.
\begin{itemize}
\item Maximum and Minimum \kXOR{k}: $\max_{x_1, \dots, x_k} / \min_{x_1, \dots, x_k} \# \{ y : x_1[y] \oplus \dots \oplus x_k[y] = 0 \}$,
\item Maximum and Minimum $k$-Agreement: $\max_{x_1, \dots, x_k} / \min_{x_1, \dots, x_k} \# \{ y : x_1[y] = \cdots = x_k[y] \}$,
\item Maximum and Minimum $k$-Inner Product: $\max_{x_1, \dots, x_k} / \min_{x_1, \dots, x_k} \innerprod{x_1}{\dots, x_k}$,\\where we write $\innerprod{x_1}{\dots, x_k} = \#\{ y : x_1[y] \land \dots \land x_k[y]\}$.
\end{itemize}
Variants of some of our conclusions for each problem were already proved in the literature (these provided the starting point for the paper); we will point these out below. The main point of this section is to demonstrate how easily we can determine the hardness regime of interesting graph formulas by a concise computation of our hardness parameters. In particular, we demonstrate in detail how to compute the hardness parameters $\Hdeg$ and $\Hand$ for all problems in the list, and how to apply our results. For each example, we condition on the weakest possible hypothesis.

We start with Maximum and Minimum \kXOR{k}. For the case of $k = 2$, these problems correspond to Furthest and Closest Pair in Hamming distance, respectively. For these problems, Rubinstein~\cite{Rubinstein18} ruled out efficient approximation schemes in the moderate-dimensional case for Minimum \kXOR{2} under \SETH{}, and Chen and Williams~\cite{ChenW19} gave an analogous result for Maximum \kXOR{2}.

\begin{example}[Maximum and Minimum \kXOR{k}]
Both Maximum \kXOR{k} and Minimum \kXOR{k} admit randomized approximation schemes, but no efficient ones unless \SETH{} fails.
\end{example}
\begin{proof}
Let $\psi = \opt_{x_1, \dots, x_k} \counting_y \phi(x_1[y], \dots, x_k[y])$, where $\phi(z_1, \dots, z_k) = (z_1 \oplus \dots \oplus z_k = 0)$; then~$\psi$ is the $\OptSP_k$ formula corresponding to Maximum \kXOR{k}, respectively Minimum \kXOR{k}. We first compute $\Hdeg(\phi)$ and $\Hand(\phi)$. The polynomial extension of $\phi$ is
\begin{equation*}
    \phi(z_1, \dots, z_k) = 1 - \sum_{\emptyset \neq S \subseteq [k]} (-2)^{|S| - 1} z^S,
\end{equation*}
and thus $\deg(\phi) = k$. By \autoref{prop:optimization-hardness-fourier} it follows immediately that $\Hdeg(\phi) = k$. Moreover, it is easy to see that $\Hand(\phi) = 1$ as any restriction of $\phi$ to $h$ inputs is still a parity function and therefore has $2^{h-1}$ satisfying assignments.

We can now apply our classification to derive the hardness of Maximum \kXOR{k} and Minimum \kXOR{k}: From Theorems~\ref{thm:as-maximization-classification} and~\ref{thm:as-minimization-classification} we obtain randomized approximation schemes for both problems, and~\autoref{thm:no-efficient-as} rules out efficient approximation schemes under \SETH{}.
\end{proof}

For Maximum and Minimum $k$-Agreement, the relation between the hardness parameters $\Hdeg$ and $\Hand$ is more interesting. Note that Maximum $k$-Agreement can be viewed as a very restricted setting of the MaxCover problem for which Karthik, Laekhanukit and Manurangsi~\cite{KarthikLM19} obtain strong inapproximability results. It turns out that our setting exhibits a more diverse landscape of approximability, which depends on $k$ in a non-monotone way. We refer to Section~\ref{sec:psp} for a detailed comparison between Maximum $k$-Agreement and the MaxCover problem studied in~\cite{KarthikLM19}.

\begin{example}[Maximum $k$-Agreement] \hspace{0cm}
\begin{itemize}
\itemdesc{If \boldmath$k = 2$:} Maximum $2$-Agreement has a randomized approximation scheme, but not an efficient one, unless the Sparse \MAXThreeSAT{} hypothesis fails.
\itemdesc{If \boldmath$k = 3$:} Maximum $3$-Agreement can be solved exactly.
\itemdesc{If \boldmath$k \geq 4$:} Maximum $k$-Agreement admits a $(k+1)$-approximation in time $\Order(m^{k-\delta})$ for some $\delta > 0$, but cannot be approximated better than $(1 - \binom k3{}^{-1})^{-1}$ in time $\Order(m^{k-\delta})$, for any $\delta > 0$, unless the Sparse \MAXThreeSAT{} hypothesis fails.
\end{itemize}
\end{example}
\begin{proof}
Let $\psi = \max_{x_1, \dots, x_k} \counting_y \phi(x_1[y], \dots, x_k[y])$, where $\phi$ is the Boolean function indicating whether all its $k$ inputs are equal; then $\psi$ is the $\MaxSP_k$ formula corresponding to Maximum $k$\=/Agreement. We first determine $\Hdeg(\phi)$ and $\Hand(\phi)$. The polynomial extension of $\phi$ is
\begin{equation*}
    \phi(z_1, \dots, z_k) = \prod_{i \in [k]} z_i + \prod_{i \in [k]} (1 - z_i) = z^{[k]} + \sum_{S \subseteq [k]} (-1)^{|S|} z^S.
\end{equation*}
In particular, observe that the leading monomial $z^{[k]}$ cancels if $k$ is odd. By~\autoref{prop:optimization-hardness-fourier}, it is easy to check that $\Hdeg(\phi) = k$ if $k$ is even, and $\Hdeg(\phi) = k - 1$ if $k$ is odd.

Next, we claim that $\Hand(\phi) = k - 1$. On the one hand, it is easy to see that $\Hand(\phi) < k$ as otherwise $\phi$ had only one satisfying assignment. On the other hand, we can the set any variable $x_i$ to, say $0$; the remaining Boolean function on $k - 1$ inputs has exactly one satisfying assignment. We can now read off the hardness of Maximum $k$-Agreement from our classification.
\begin{itemize}
\itemdesc{If \boldmath$k = 2$:} We have $\Hdeg(\phi) = 2$ and $\Hand(\phi) = 1$. By \autoref{thm:as-maximization-classification} there exists a randomized approximation scheme for Maximum $2$-Agreement, and by~\autoref{thm:no-efficient-as} there exists no efficient one unless \SETH{} fails.
\itemdesc{If \boldmath$k = 3$:} We have $\Hdeg(\phi) = 2 < k$ and therefore, Maximum $3$-Agreement can be optimized exactly in time $\Order(m^{k-\delta})$ by~\autoref{thm:exact-opt-classification}.
\itemdesc{If \boldmath$k \geq 4$:} We have $3 \leq \Hdeg(\phi) \leq k$ and $\Hand(\phi) = k - 1 < k$. It follows that we can compute a constant-factor approximation in time $\Order(m^{k-\delta})$ by~\autoref{thm:constant-approx-max-classification}. The precise approximation guarantee can be obtained from \autoref{lem:constant-approx-max-opposite}. In contrast, it is not possible to compute an approximation better than $(1 - \binom k3{}^{-1})^{-1}$ unless the Sparse \MAXThreeSAT{} hypothesis fails, as proven in~\autoref{thm:as-maximization-classification}. \qedhere
\end{itemize}
\end{proof}

With the same proof we can also classify the computational complexity of the Minimum $k$\=/Agreement problem. The only difference is that for $k \geq 4$, we cannot expect any approximation at all under the \hUniformHyperClique{(k-1)} assumption, by~\autoref{thm:constant-approx-min-classification} and \cite[Theorem 6]{BringmannFK19}.

\begin{example}[Minimum $k$-Agreement] \hspace{0cm}
\begin{itemize}
\itemdesc{If \boldmath$k = 2$:} Minimum $2$-Agreement has a randomized approximation scheme, but not an efficient one, unless the Sparse \MAXThreeSAT{} hypothesis fails.
\itemdesc{If \boldmath$k = 3$:} Minimum $3$-Agreement can be solved exactly.
\itemdesc{If \boldmath$k \geq 4$:} Minimum $k$-Agreement cannot be approximated at all (i.e. we cannot distinguish if $\OPT$ is 0 or at least 1) in time $\Order(m^{k-\delta})$ for any $\delta > 0$, unless the \hUniformHyperClique{(k-1)} hypothesis fails.
\end{itemize}
\end{example}

Finally, we consider the hardest problems Maximum and Minimum $k$-Inner Product. From our classification, we can immediately read off that Maximum $k$-Inner Product can be approximated within an arbitrary polynomial factor and nothing better, while Minimum $k$-Inner Product cannot be approximated at all. The underlying inapproximability of Maximum $k$-Inner Product is based on the works of Abboud, Rubinsten and Williams~\cite{AbboudRW17} and Chen~\cite{Chen18} for $k=2$ and Karthik, Laekhanukit and Manurangsi~\cite{KarthikLM19} for general $k$.

\begin{example}[Maximum $k$-Inner Product]
For any $\varepsilon > 0$, Maximum $k$-Inner Product has an $m^\varepsilon$\=/approximation in time $\Order(m^{k-\delta})$ for some $\delta > 0$. However, for every $\delta > 0$, there exists some $\varepsilon > 0$ such that computing an $m^\varepsilon$-approximation requires time $\Order(m^{k-\delta})$ unless \SETH{} fails.
\end{example}
\begin{proof}
Maximum and Minimum $k$-Inner Product are the prototypical problems $\psi$ of full hardness $\Hand(\psi) = \Hdeg(\psi) = k$. We only need that $\Hand(\psi) = k$ which is clear since $\phi(z_1, \dots, z_k) = z_1 \dots z_k$ has only one satisfying assignment. (By \autoref{prop:hardness-relation}, it follows that also $\Hdeg(\psi) = k$). The claim is immediate by~\autoref{thm:tight-polynomial-maximization}.
\end{proof}

\begin{example}[Minimum $k$-Inner Product]
Minimum $k$-Inner Product has no approximation at all in time $\Order(m^{k-\delta})$ for any $\delta > 0$, unless the \kOV{k} hypothesis fails.
\end{example}
\begin{proof}
As for Maximum $k$-Inner Product, it holds that $\Hand(\psi) = \Hdeg(\psi) = k$. The claim follows immediately by~\autoref{cor:constant-approx-min-characterization}.
\end{proof}

\section{Hardness Assumptions} \label{sec:assumptions}
As usual in fine-grained complexity, our lower bounds are conditioned on certain hardness assumptions. In this section we introduce the relevant hypotheses and briefly argue for their plausibility. In our work we mostly depend on two popular hypotheses on polynomial-time problems, both of which are backed by hardness assumptions on fundamental, well-studied exponential-time problems. Whenever possible, we design our reductions to start from the (weaker) polynomial-time assumptions. To state all of our results, we rely on a relatively new assumption on sparse maximum satisfiability.

\subsection{Orthogonal Vectors and SETH}
One of the most prominent hypotheses in the area is the ($k$-)Orthogonal Vectors hypothesis\footnote{In the case $k = 2$, the prefix is usually omitted and we simply speak of the Orthogonal Vectors hypothesis.}. In the $k$-Orthogonal Vectors problem (\kOV{k}), we are given $k$ sets of bit-vectors $X_1, \ldots, X_k \subseteq \{0, 1\}^d$, and the task is to check whether there exist $x_1 \in X_1, \ldots, x_k \in X_k$ with (generalized) inner product equal to zero: \raisebox{0pt}[0pt][0pt]{$\innerprod{x_1}{\ldots, x_k} := \sum_{y \in [d]} x_1[y] \cdots x_k[y] = 0$}. The naive algorithm solves \kOV{k} in time $\Order(n^k d)$ and the corresponding hypothesis states that this algorithm is essentially optimal:

\begin{hypothesis}[\kOV{k}]
For every~$\delta > 0$, \kOV{k} on $n$ vectors with dimension $d$ cannot be solved in time $\Order(n^{k-\delta} \poly(d))$.
\end{hypothesis}

The fastest known algorithm solves \kOV{k} in time $n^{k-\Omega(1/\log(d/\log n))}$~\cite{AbboudWY15,ChanW16}, which is a subpolynomial improvement over the baseline algorithm and thus not enough to refute the \kOV{k} hypothesis. It is implied by many other fine-grained hypotheses, among others the weighted \kClique{k} hypothesis~\cite{AbboudBDN18}, the assumption that not all $(k + 1)$-quantifier first-order properties can be solved in time $\Order(m^{k-\delta})$~\cite{GaoIKW18}, and -- most notably -- the Strong Exponential Time Hypothesis (\SETH{})~\cite{Williams05}. We remark that the variant we stated here is usually referred to as the \emph{moderate-dimensional} \kOV{k} hypothesis, while \SETH{} in fact implies an even stronger version postulating hardness just above logarithmic dimension:

\begin{hypothesis}[Low-Dimensional \kOV{k}]
For every $\delta > 0$, there exists some $c > 0$ such that \kOV{k} on $n$ vectors with dimension $d = c \log n$ cannot be solved in time $\Order(n^{k-\delta})$.
\end{hypothesis}

\subsection{\HyperClique{} and Maximum Satisfiability}
Another central hypothesis that gained a lot of popularity in recent years is the \HyperClique{} hypothesis~\cite{AbboudBDN18, LincolnWW18}. Let $h \geq 2$. Given an $h$-uniform $k$-partite hypergraph $H = (V_1, \dots, V_k, E)$, the \hUniformkHyperClique{h}{k} problem is to detect a $k$-hyperclique (that is, a set of $k$ vertices $v_1 \in V_1, \ldots, v_k \in V_k$ such that any $h$ of them form a hyperedge). The brute-force algorithm solves the problem in time $\Order(n^k)$, and for $h = 2$ there exists a polynomial improvement exploiting fast matrix multiplication~\cite{NesetrilP85}. However, for $h \geq 3$, it is postulated that a similar improvement is not possible:

\begin{hypothesis}[\hUniformHyperClique{h}]
For every $\delta > 0$ and $k$, the \hUniformkHyperClique{h}{k} problem cannot be solved in time $\Order(n^{k-\delta})$.
\end{hypothesis}

We refer to~\cite[Section 7]{LincolnWW18} for a detailed discussion on the plausibility of the \hUniformHyperClique{h} hypothesis, which especially addresses the question of why there is no hope to achieve improvements for $h \geq 3$ in a similar spirit as the $h = 2$ algorithm. It is known that a refutation of this hypothesis is required for further progress on model-checking of first-order properties~\cite{BringmannFK19}, and that it is \emph{equivalent} to the existence of faster-than-brute-force algorithms for a class of Boolean \CSP{}s (including \ThreeSAT{}) parameterized by solution size~\cite{KunnemannM20}.  
A further argument for this hypothesis is its connection to a well-studied exponential-time problem: The \MAXThreeSAT{} problem is the optimization variant of the \ThreeSAT{} problem, that is, given a $3$-CNF formula find an assignment maximizing the total number of satisfied clauses. The associated hypothesis postulates that we cannot significantly improve upon brute-force search:

\begin{hypothesis}[\MAXThreeSAT{}]
For every $\delta > 0$, the \MAXThreeSAT{} problem on $n$ variables cannot be solved in time $\Order(2^{n(1-\delta)})$.
\end{hypothesis}

The fastest currently known algorithm solves \MAXThreeSAT{} in time $2^{n(1 - 1/\tilde\Order(\log^2 n))}$~\cite{AlmanCW16}. The \MAXThreeSAT{} hypothesis implies both the \hUniformHyperClique{h} assumption~\cite{Williams07} (for all $h \geq 3$) as well as the \kOV{k} assumption (for all $k \geq 2$)~\cite{AbboudBDN18}. Actually, it is plausible to extend and strengthen this hypothesis further to a sparse problem variant as defined in~\cite{AlmanW20}:

\begin{hypothesis}[Sparse~\MAXThreeSAT{}]
For every $\delta > 0$, there exists $c > 0$ such that the \MAXThreeSAT{} problem on $n$ variables and $c n$ clauses cannot be solved in time $\Order(2^{n(1-\delta)})$.
\end{hypothesis}

Indeed, previous work addressed the problem of finding more efficient algorithms for Sparse \MAXThreeSAT{}~\cite{ChenS15,AlmanCW16}, and the best-known algorithm runs in time $2^{n(1-1/\tilde\Order(\log^2 c))}$ \cite{AlmanCW16}, which is compatible with the hardness assumption. Interestingly, the Sparse \MAXThreeSAT{} hypothesis unifies all previous hypotheses, including the \MAXThreeSAT{} hypothesis (obviously), but also \SETH{}.\footnote{Indeed,~\cite[Theorem 1.5]{AbboudBDN18} proves that refuting \SETH{} implies that there exists some $\delta > 0$, such that for all constants $c$ and $d$, the satisfiability of depth-$d$ threshold circuits with $cn$ wires can be determined in time $\Order(2^{n(1-\delta)})$. As a consequence, we can also solve the Sparse \MAXThreeSAT{} problem in time $\Order(2^{n(1-\delta)} \poly(n))$, by brute-forcing over all targets $0 \leq t \leq cn$ and by encoding the resulting decision problem as a depth-$2$ threshold circuit.} As such, it provides the currently \emph{easiest} barrier for all algorithmic problems that we classify as hard.
\section{Comparison to Related Work} \label{sec:related-work}
In this section, we give a more detailed comparison of similarities and differences between some related works in inapproximability and fine-grained complexity theory.

In particular, the close conceptual connection between $\MaxSP_k$ and $\MaxSNP$ is described in Section~\ref{sec:SNPmotivation}. A comparison to the parameterized inapproximability result of Karthik, Laekhanukit, and Manurangsi~\cite{KarthikLM19} is given in Section~\ref{sec:psp}.  
\subsection{Correspondence of \texorpdfstring{$\MaxSP_k$}{MaxSP} and \texorpdfstring{$\MaxSNP$}{MaxSNP}}
\label{sec:SNPmotivation}

Let us review the definition of $\MaxSNP$ and compare it to $\MaxSP_k$ (see~\cite{BringmannCFK21} for more details). The class $\SNP$ is motivated by Fagin's theorem, and is defined as  problems of the form $\exists S\, \forall\bar{y}\, \phi(\bar{y},G,S)$, where $G$ is a given relational structure, $\exists S$ ranges over a relational structure $S$ and $\forall\bar{y}\, \phi(\bar{y},G,S)$ is a $\forall^*$-quantified first-order property. Its natural optimization variant is $\MaxSNP$, defined as the set of problems expressible as $\max_{S}\#\{\bar{y} : \phi(\bar{y},G,S)\}$. 

As an analogue for $O(m^k)$-time solvable problems, $\MaxSP_k$ is a variation of  $\MaxSNP$ where the maximization over a relational structure $S$ is replaced by maxmization over $k$ variables $x_1,\dots,x_k$, and the counting is performed over a variable $y$.\footnote{An alternative would be to allow counting over $\ell$ variables $y_1, \dots, y_\ell$. This more general definition is pursued in~\cite{BringmannCFK21}, but proven to be reducible to the case of $\ell=1$, as Maximum $k$-Inner Product (a problem for $\ell=1$) is shown to be complete for this more general class.} Let us remark that \cite{BringmannCFK21} choose the name ``$\mathrm{SP}_k$'' for ``strict polynomial time $O(n^k)$'' in analogy to the name ``strict NP'' of $\SNP$.

\subsection{Comparison to Karthik, Laekhanukit and Manurangsi}
\label{sec:psp}

Let us detail some interesting aspects of the work of Karthik, Laekhanukit and Manurangsi~\cite{KarthikLM19}. To obtain inapproximability results for Dominating Set, Maximum $k$-Inner Product and other problems, the authors use a framework of reductions from so-called Product Space Problems (PSPs) via MaxCover as an intermediate problem. It is insightful to highlight how these notions relate to our problem settings:

\subparagraph*{Connection to Product Space Problems (PSPs)}
A Product Space Problem (PSP) is a very abstract problem formulation: Given a function $f: X_1 \times \cdots \times X_k \to \{0,1\}$, the corresponding product space problem $\mathrm{PSP}(f)$ is to determine, given an input $A_1 \subseteq X_1, \dots, A_k \subseteq X_k$, whether there exists $(a_1,\dots, a_k) \in A_1 \times \dots \times A_k$ such that $f(a_1, \dots, a_k)=1$. Karthik et al.\ use this notion to give a general and elegant reduction framework: Let $\mathcal{F}$ be a family of functions such that (1) each $f\in \mathcal{F}$ has an ``efficient'' communication protocol (for the precise communication setting, we refer the reader to~\cite{KarthikLM19}) and (2) computing $\mathrm{PSP}(f)$ for $f\in \mathcal{F}$ is hard to solve under some fine-grained assumption $\mathcal{A}$. Then the framework immediately derives approximation hardness for MaxCover based on the assumption $\mathcal{A}$. In particular, this framework is used in~\cite{KarthikLM19} to obtain -- via reduction from MaxCover -- approximation hardness for Dominating Set and Maximum $k$-Inner Product under a diverse set of assumptions (such as SETH, $\mathrm{W}[1]\ne \mathrm{FPT}$ and the $k$-SUM hypothesis).

Note that while PSPs defines a general family of problems, this notion would not be a suitable class to study in the context of our work: In a sense, it definition is too abstract/general for our purposes (one could not even enumerate all problems in this class, so characterizing all such problems appears hopeless). Our chosen classes, $\MaxSP_k$ and $\MinSP_k$, provide problem classes defined by a simple \emph{syntactic} definition -- they are general enough to contain a diverse set of problems with very different approximability, while on the other hand, it is always clear which problems are members of this class (notably, for every $k$, we could enumerate the problems in $\MaxSP_k$ and $\MinSP_k$). This opens up the possibility of a systematic classification, as performed in our work.

\subparagraph*{Is MaxCover in $\MaxSP_k$?}
Karthik et al.~prove their inapproximability results~\cite[Theorem 5.2]{KarthikLM19} already for a restricted setting of MaxCover that is closely related to our setting of $\MaxSP_k$. Specifically, one can view this variant (also called \emph{MaxCover with projection property}) as the following problem over vectors: Given sets $X_1, \dots, X_k \subseteq \Sigma^d$, maximize the number of coordinates~$y$ such that $x_1[y]=x_2[y]=\cdots=x_k[y]$ over all $(x_1,\dots,x_k)\in X_1\times \cdots \times X_k$. For $\Sigma = \{0,1\}$, this is precisely a $\MaxSP_k$ \emph{graph} formula studied in this paper as Maximum $k$-Agreement. For any constant~$|\Sigma|$, we can express the corresponding Maximum $k$-Agreement problem still as a $\MaxSP_k$ formula, using additional binary relations to represent the larger alphabet (this is no longer a \emph{graph} formula, see also the discussion in Section~\ref{sec:conclusion}). Importantly, Karthik et al.'s main inapproximability result for MaxCover~\cite[Theorem 5.2]{KarthikLM19} assuming hardness of $\mathrm{PSP}(\mathcal{F})$ produces instances for which $|\Sigma|$ depends both on $2^k$ and the efficiency of the communication protocol for $\mathcal{F}$. Thus, this result cannot be directly used for the $\MaxSP_k$ graph formulas studied in the present paper, but provides an important result for larger alphabet sizes. After a subsequent reduction to Maximum $k$-Inner Product~\cite[Theorem B.3]{KarthikLM19} (see Section~\ref{sec:pcp}), their result provides an important part of the basis for our hardness of approximation results given in Section~\ref{sec:hardness-approx}.

\section{Vector Optimization Problems} \label{sec:vector}
As mentioned before, the predicates $E(x_i, x_j)$ have only a minor impact on the complexity of both the model-checking problem of~$\psi$~\cite[Theorem~7]{BringmannFK19} and the optimization problems $\Opt(\psi)$. In particular, as we will show, it suffices to classify the subtype of problems featuring no predicates $E(x_i, x_j)$. For convenience in the following parts of this paper, we will now make this notion precise.

\begin{definition}[Vector Optimization Problems]
Let $\phi$ be a Boolean function on $k$ inputs, and let $\psi = \opt_{x_1, \dots, x_k} \counting_y \phi(E(x_1, y), \ldots, E(x_k, y))$. We refer to $\Max(\psi)$ and $\Min(\psi)$ as \emph{vector (optimization) problems} $\VMax(\phi)$ and $\VMin(\phi)$. Occasionally, we write $\VOpt(\psi)$ to refer to both problems simultaneously.
\end{definition}

In the context of vector problems, we often view the objects $x_i$ as vectors and view $Y$ as the set of coordinates. In addition, we define the \emph{dimension} $d := |Y|$ and, following standard notation, write $x_i[y] := E(x_i, y)$. Note that a vector problem instance is sparsely represented by one list of all nonzero entries in all vectors.

Recall that $\psi_0$ (as defined in the beginning of~\autoref{sec:technical}) is a Boolean function obtained from~$\psi$ by substituting each atom $E(x_i, x_j)$ by $\false$. Similar to~\cite[Theorem~7]{BringmannFK19}, we obtain the following equivalence of general optimization problems $\Opt(\psi)$ and vector optimization problems $\VOpt(\psi_0)$. Note that this equivalence holds in the strongest approximation-preserving way, and a similar statement holds for additive approximations. The proof of~\autoref{thm:equiv-vector-problems} is postponed to the end of this section.

\begin{restatable}[Equivalence of Vector Optimization Problems]{theorem}{thmequivvectorproblems} \label{thm:equiv-vector-problems}
Let $\psi$ be an $\OptSP_k$ graph formula, and let $c = c(m) \geq 1$.
\begin{itemize}
\item If a $c$-approximation for $\Opt(\psi)$ can be computed in time $T(m)$, then a $c$-approximation for $\VOpt(\psi_0)$ can be computed in time $O(T(m))$.
\item If a $c$-approximation for $\VOpt(\psi_0)$ can be computed in time $O(m^{k-\varepsilon})$ for some $\varepsilon > 0$, then a $c$-approximation for $\Opt(\psi)$ can be computed in time $O(m^{k - \varepsilon'})$ for some $\varepsilon' > 0$.
\end{itemize}
\end{restatable}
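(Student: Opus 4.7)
The statement has two directions. The first direction is easy: given a $\VOpt(\psi_0)$ instance on $X_1, \dots, X_k, Y$ with relation $E \subseteq \bigcup_i X_i \times Y$, I would extend it to an $\Opt(\psi)$ instance by declaring $E(x_i, x_j) = \false$ for every $(x_i, x_j) \in X_i \times X_j$ with $i, j \in [k]$. Then every $E(x_i, x_j)$ atom in $\phi$ is $\false$, so $\phi$ specializes to $\psi_0$ on every tuple and $\Opt(\psi)$ coincides with $\VOpt(\psi_0)$ on this instance; the input size only grows by a constant factor, so the $c$-approximation transfers in time $O(T(m))$.

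For the reverse direction the plan is to extend the edge-type partition argument used in the model-checking equivalence~\cite[Theorem~7]{BringmannFK19} to the approximate-optimization setting. Classify each candidate tuple $(x_1, \dots, x_k)$ by its \emph{edge-type} $\tau \in \{0,1\}^{\binom{[k]}{2}}$ recording the values $E(x_i, x_j)$. There are only $2^{O(k^2)} = O(1)$ types, and conditioned on $\tau$ the inner formula specializes to a fixed Boolean function $\phi_\tau$ depending only on the atoms $E(x_i, y)$. Writing $\OPT^\tau$ for the optimum restricted to tuples of type $\tau$, we have $\OPT = \opt_\tau \OPT^\tau$, so it suffices to $c$-approximate each $\OPT^\tau$ in time $O(m^{k-\varepsilon'})$ and combine the results.

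For every $\tau \neq \mathbf{0}$ the tuple must realize at least one true edge $E(x_i, x_j)$; by sparsity of the input representation there are at most $O(m)$ such pairs. I would enumerate one required edge to fix two of the $k$ quantifiers and solve the remaining $(k-2)$-variable problem \emph{exactly} by the baseline algorithm in time $O(m^{k-2})$, giving $O(m^{k-1})$ per non-zero type and $O(m^{k-1})$ overall. For $\tau = \mathbf{0}$, the formula $\phi_\mathbf{0}$ is literally $\psi_0$ and this is where the assumed $O(m^{k-\varepsilon})$ algorithm for $\VOpt(\psi_0)$ is invoked. The approximation guarantees compose cleanly: the $\tau \neq \mathbf{0}$ cases are exact and the $\tau = \mathbf{0}$ case is $c$-approximated, yielding a $c$-approximation of $\OPT = \opt_\tau \OPT^\tau$.

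The main obstacle is that $\VOpt(\psi_0)$ optimizes over \emph{all} tuples, whereas $\OPT^\mathbf{0}$ is constrained to \emph{pairwise non-adjacent} tuples: naively running $\VOpt(\psi_0)$ can return a tuple with some $E(x_i, x_j) = \true$, whose $\psi_0$-value may differ from its true $\phi_\tau$-value. To bridge this I plan to combine two ingredients. First, a sparsity-based preprocessing: since $\sum_{x_i} |\{ x_j : E(x_i, x_j)\}| \le m$, only few vectors in each $X_i$ have large $E$-degree, and I would peel these off and handle the tuples containing them by brute-force enumeration in $O(m^{k-1})$ time. Second, a coordinate-side gadget adding $O(m)$ dummy $y$-coordinates encoding the residual $E(x_i, x_j)$ information in such a way that the $\psi_0$-count on non-adjacent tuples is preserved while adjacent tuples are rendered dominated by a case already accounted for under some $\tau \neq \mathbf{0}$. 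Making this gadget preserve both the approximation ratio and the $O(m)$ input size is the technical heart of the proof; together with the previous two steps it yields a $c$-approximation of $\Opt(\psi)$ in time $O(m^{k-\varepsilon'})$ for $\varepsilon' = \varepsilon'(\varepsilon, k) > 0$.
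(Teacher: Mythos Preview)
Your first direction and your treatment of the nonzero edge-types are essentially the paper's approach (the paper obtains $O(m^{k-1/2})$ for $I\neq\emptyset$ via a lemma from~\cite{BringmannFK19} computing $\langle x_1,\dots,x_k\rangle$ over tuples respecting $I$; your ``fix an edge, baseline the rest'' is a close variant, though you should check the $k=2$ boundary, where the remaining ``$(k-2)$-variable baseline'' degenerates and the claimed $O(m^{k-1})$ bound does not follow).

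The genuine gap is in the $\tau=\mathbf 0$ case. You correctly identify the obstacle---running $\VOpt(\psi_0)$ on the full instance can be fooled by an adjacent tuple whose $\psi_0$-value exceeds $\OPT^{\mathbf 0}$---but your proposed ``coordinate-side gadget'' cannot fix it. Any added $y$-coordinate contributes a $\{0,1\}$-valued term $\psi_0(x_1[y],\dots,x_k[y])$ that depends only on the $x_i[y]$'s and \emph{not} on whether $E(x_i,x_j)$ holds; you cannot make the extra contribution negative for adjacent tuples (maximization) or positive only for adjacent tuples (minimization) while leaving non-adjacent tuples untouched. Concretely, take $\phi=\neg E(x_1,x_2)\wedge E(x_1,y)\wedge E(x_2,y)$: every adjacent pair has $\phi_\tau$-value $0$, yet its $\psi_0$-value is the inner product and can be arbitrarily large. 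No dummy $y$-coordinates can push this down to~$0$.

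What the paper actually does is quite different. It reduces to the gap version (given $t$, decide $\OPT\ge t$ versus $\OPT<t/c$), removes vertices of $E$-degree $\ge m^\gamma$, and then partitions each $X_i$ into $O(m^{1-\gamma})$ groups of total weight $\Theta(m^\gamma)$. It runs the assumed $\VOpt(\psi_0)$ approximation on all $O(m^{(1-\gamma)k})$ group combinations and counts how many are ``successful'' (approximate value $\ge t/c$). The key observation is that at most $mn^{k-2}$ tuples can have some $E(x_i,x_j)=\true$, hence at most that many combinations can be false positives; if more combinations succeed one must be a true positive, and otherwise each successful combination can be solved exactly by the baseline in time $O(m^{\gamma k})$. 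Balancing $\gamma$ yields $O(m^{k-\varepsilon'})$. This counting argument is the missing idea in your plan; your high-degree peeling is only the first preprocessing step of it.
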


In the upcoming algorithms, to solve $\VOpt(\phi)$ it is often necessary to ``brute-force'' over a number of variables $x_i$ to reduce $k$ to some smaller $k'$. However, in general this cannot be considered a self-reduction, as the resulting $k'$-problem is not necessarily the same vector optimization problem $\VOpt(\phi)$. As an example, consider the \kXOR{3} problem, $\VOpt(z_1 \oplus z_2 \oplus z_3)$. After brute-forcing over all variables $x_1$, we are left with a combination of $\VOpt(0 \oplus z_2 \oplus z_3)$ and $\VOpt(1 \oplus z_2 \oplus z_3)$, where in each coordinate one of these two constraints applies. As we regularly have to deal with these combined problems, we now define them formally.

\begin{definition}[Hybrid Optimization Problems]
Let $\Phi$ be a set of Boolean functions on $k$ inputs. In a \emph{hybrid optimization problem $\VMax(\Phi)$}, we are given an $(k+1)$-partite graph $X_1 \union \dots \union X_k \union Y$, where each coordinate $y \in Y$ is associated to a function $\phi_y \in \Phi$, and the task is to compute
\begin{equation*}
  \OPT = \max_{x_1 \in X_1, \dots, x_k \in X_k} \counting_{y \in Y} \phi_y(x_1[y], \dots, x_k[y]).
\end{equation*}
We similarly define $\VMin(\Phi)$.
\end{definition}

Our hardness parameters $\Hand$ and $\Hdeg$ interplay nicely with hybrid problems: Let $\phi$ be a $k$\=/variable Boolean function of hardness, say $\Hand(\phi) \leq h$, and, for some $h \leq k' \leq k$, let $\Phi$ be the set of all $k'$-variable Boolean functions $\phi'$ of hardness $\Hand(\phi') \leq h$. Then $\VOpt(\phi)$ reduces to $n^{k-k'}$ instances of $\VOpt(\Phi)$. Indeed, by the definition of $\Hand$, we can always find a set of $k - k'$ indices which we can brute-force over, such that any remaining formula is of hardness at most $h$. The same statement holds for $\Hdeg$ as well.

We finally give the missing proof of~\autoref{thm:equiv-vector-problems}. The idea is a rather straightforward generalization of~\cite[Theorem 7]{BringmannFK19}. We essentially exploit that there are not too many tuples $(x_1, \ldots, x_k)$ for which at least one of the edges $(x_i, x_j)$ is present. As an important ingredient, we recycle a lemma already proven in~\cite{BringmannFK19}.

\begin{lemma}[{\cite[Lemma 22]{BringmannFK19}}] \label{lem:respect-inner-prod}
Given a nonempty set of index pairs $\emptyset \neq I \subseteq \binom{[k]}2$, we say that $(x_1, \ldots, x_k)$ \emph{respects} $I$ if, for any $i \neq j$, we have $E(x_i, x_j)$ if and only if $\{i,j\} \in I$. We can compute $\innerprod{x_1}{\ldots, x_k}$ for all vectors $(x_1, \ldots, x_k)$ respecting $I$ in time $O(m^{k - 1/2})$.
\end{lemma}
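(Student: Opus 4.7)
The plan is to exploit a pair $\{a,b\} \in I$, which exists because $I \neq \emptyset$. The constraint $E(x_a, x_b)$ forces $(x_a, x_b)$ to range over the at most $m$ edges of the bipartite graph $H$ between $X_a$ and $X_b$, so I will brute-force over the remaining $k-2$ ``spectator'' variables and reduce the problem for each fixing to a triangle-enumeration task on the tripartite graph with vertex classes $X_a, X_b, Y$.

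Concretely, I iterate over all assignments $(x_i)_{i \in [k] \setminus \{a,b\}}$, giving at most $n^{k-2} \le O(m^{k-2})$ choices. For each such fixing: (i) in $O(k^2)$ time verify the pairwise respecting conditions among the spectators themselves, skipping the fixing otherwise; (ii) in $O(m)$ time compute $Y' := \bigcap_{i \neq a,b} N(x_i) \subseteq Y$ by intersecting sorted adjacency lists, and restrict $X_a$ (resp.\ $X_b$) to those vectors satisfying $E(x_a, x_i) \iff \{a,i\} \in I$ for every spectator $i$ (resp.\ analogously for $x_b$), which can be done in $O(m)$ by scanning the edges incident to the fixed spectators; (iii) on the tripartite graph on $X_a, X_b, Y'$ with at most $m$ edges (the restricted $H$-edges together with the $X_a$-$Y'$ and $X_b$-$Y'$ incidences inherited from the input), enumerate all triangles $(x_a, x_b, y)$ via the classical high--low-degree triangle-listing algorithm, which runs in $O(m^{3/2})$ time on an $m$-edge graph. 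Each enumerated triangle contributes $+1$ to $\innerprod{x_1}{\ldots, x_k}$ for the respecting tuple assembled from the current spectator fixing and $(x_a, x_b)$.

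Correctness follows because for any tuple respecting $I$ we have $\innerprod{x_1}{\ldots, x_k} = |N(x_a) \cap N(x_b) \cap Y'|$, which equals the number of enumerated triangles $(x_a, x_b, y)$ with the relevant $(x_a, x_b)$; the restrictions in step (ii) guarantee that only $(x_a, x_b)$ compatible with the respecting condition survive. The total running time is $O(m^{k-2}) \cdot O(m^{3/2}) = O(m^{k-1/2})$, as required. The only substantive ingredient is the $O(m^{3/2})$ triangle-listing bound, which is standard (e.g., orient each edge from the lower- to the higher-degree endpoint, then at every vertex enumerate all pairs of out-neighbors and test adjacency in $O(1)$; the out-degree of any vertex is $O(\sqrt{m})$, so the total work is $\sum_v \mathrm{outdeg}(v)^2 = O(m^{3/2})$). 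For $k = 2$, where the outer brute-force is vacuous, the lemma reduces precisely to this classical triangle-listing statement, which is the main obstacle only insofar as we must be careful that step (ii) does not blow up the effective edge count beyond $m$ in subsequent fixings.
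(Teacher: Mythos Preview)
The paper does not supply its own proof of this lemma; it is quoted verbatim from \cite[Lemma~22]{BringmannFK19}. Your argument is correct and is essentially the intended one: exploit a pair $\{a,b\}\in I$, brute-force the $k-2$ spectator variables, and for each fixing reduce to listing triangles in the tripartite graph on (restricted) $X_a, X_b, Y'$, which has $O(m)$ edges and hence admits $O(m^{3/2})$-time triangle enumeration; the total is $O(n^{k-2}\cdot m^{3/2})=O(m^{k-1/2})$.

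Two minor remarks on exposition. First, your claim that ``the out-degree of any vertex is $O(\sqrt{m})$'' under the low-to-high-degree orientation is correct but deserves one line of justification: if $v$ has $d'$ out-neighbors, each has degree at least $\deg(v)\ge d'$, so these $d'$ vertices alone contribute $\ge d'^2$ to the degree sum $2m$, whence $d'\le\sqrt{2m}$. Second, for tuples respecting $I$ whose inner product equals $0$ no triangle is ever enumerated, so to actually \emph{output} those values you must separately enumerate all respecting $(x_a,x_b)$ per spectator fixing (scan the $X_a$--$X_b$ edges and filter by the restricted sets, $O(m)$ per fixing); this costs only $O(m^{k-1})$ in total and is dominated. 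Neither point is a gap, just something to state explicitly.
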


\begin{proof}[Proof of~\autoref{thm:equiv-vector-problems}]
The first part is clear: The vector optimization problem $\VOpt(\psi_0)$ is the special case of $\Opt(\psi)$, where we leave out all edges $E(x_i, x_j)$.

So let us focus on the second part and start with $\Max(\psi)$. Let
\begin{equation*}
  \psi = \max_{x_1, \dots, x_k} \counting_y \phi(x_1, \ldots, x_k, y).
\end{equation*}
For any set $I \subseteq \binom{[k]}{2}$ of index pairs, let $\phi_I$ denote the formula obtained from $\phi$ after substituting every atom $E(x_i, x_j)$ by $\true$ if $\{i, j\} \in I$, and by $\false$ otherwise. In particular, $\phi_\emptyset = \psi_0$. Furthermore, we define
\begin{equation*}
  \psi_I = \max_{x_1, \dots, x_k} \counting_y \left[ \left( \bigwedge_{\{i,j\} \subseteq [k]} E(x_i, x_j) \iff \{i,j\} \in I\right) \land \phi_I(x_1[y],\dots,x_k[y]) \right].
\end{equation*}
It suffices to separately solve the given $\Opt(\psi)$ instance for all $I$, treated as an instance of $\Opt(\psi_I)$, respectively. Then $\OPT = \max_I \OPT_I$, where $\OPT_I$ is the optimal value of the $\Opt(\psi_I)$ instance. We distinguish two cases:
\begin{description}
\smallskip
\item[Case 1: \boldmath$I \neq \emptyset$.] We claim that an exact solution for $\Opt(\psi_I)$ can be computed in time $\Order(m^{k-1/2})$. Clearly, we only need to focus on tuples $(x_1, \ldots, x_k)$ that respect~$I$, as any other tuple is of value~$0$. There are at most $\Order(m^{k-1})$ such tuples. Indeed, since we assumed that~$I$ is non-empty, there exists a pair of indices $\{i, j\} \in I$ such that there exists an edge $E(x_i, x_j)$ whenever $(x_1, \ldots, x_k)$ respects~$I$. However, there are at most $m$ pairs $(x_i, x_j)$ satisfying that condition, and therefore at most $n^{k-2} m \leq \Order(m^{k-1})$ relevant tuples $(x_1, \ldots, x_k)$.

We can represent the value of $(x_1, \ldots, x_k)$ in terms of the polynomial representation of $\phi_I$; in the language of~\autoref{sec:exact}:
\begin{equation*}
  \Val(x_1, \ldots, x_k) = \sum_{\substack{i_1 < \ldots < i_\ell\\0 \leq \ell \leq k}} \fourier\phi_I(\{i_1, \ldots, i_\ell\}) \mult \innerprod{x_{i_1}}{\ldots, x_{i_\ell}}.
\end{equation*}
Assuming that all inner products $\innerprod{x_{i_1}}{\ldots, x_{i_\ell}}$ are precomputed, we can determine the optimum value in time $\Order(m^{k-1})$ by iterating over all tuples respecting $I$ and applying the previous equation in each step. For $\ell < k$, we can easily compute all inner products $\innerprod{x_{i_1}}{\ldots, x_{i_\ell}}$ in time $\Order(m^\ell) \leq \Order(m^{k-1})$. For $\ell = k$, we apply~\autoref{lem:respect-inner-prod} to compute all inner products $\innerprod{x_1}{\ldots, x_k}$ in time $\Order(m^{k-1/2})$.
\medskip
\item[Case 2: \boldmath$I = \emptyset$.] Note that $\Max(\psi_\emptyset)$ is not directly equivalent to $\VMax(\psi_0)$, since an arbitrary solution $(x_1, \ldots, x_k)$ for the latter does not necessarily meet the condition that none of the edges $(x_i, x_j)$ be present. The following reduction enforces this constraint.

Computing a $c$-approximation of $\Max(\psi)$ can be reduced to the following gap problem: Given a target $t$, distinguish whether $\OPT \geq t$, or whether $\OPT < t / c$. Binary-searching over $t$ then yields a $c$-approximation of the original problem, incurring only a logarithmic overhead. Despite its name, the definition of the gap problem also makes sense for the case $c = 1$.

Let $\gamma > 0$ be a parameter to be fixed later. We call a vertex $x_i$ \emph{heavy} if it is of degree $\geq m^\gamma$, and \emph{light} otherwise. The first step is to eliminate all heavy vertices; there can exist at most $\Order(m / m^\gamma) = \Order(m^{1-\gamma})$ many such vertices~$x_i$. Fixing $x_i$, we can solve the remaining problem in time $\Order(m^{k-1})$ using the baseline algorithm. If a solution of value $\geq t / c$ is detected, we accept. It thus takes time $\Order(m^{k-\gamma})$ to safely remove all heavy vertices.

Next, partition each set $X_i$ into several groups $X_{i,1}, \ldots, X_{i,g}$ such that the total degree of all vertices in a group is $\Theta(m^\gamma)$, except for possibly the last nonempty groups. This is implemented by greedily inserting vertices into $X_{i.j}$ until its total degree exceeds $m^\gamma$. As each vector inserted in that way is light, we can overshoot by at most $m^\gamma$. It follows that $g \leq \Order(m / m^\gamma) = \Order(m^{1-\gamma})$.

We assume that $\VMax(\phi_\emptyset)$ can be $c$-approximated in time $\Order(m^{k-\delta})$ for some $\delta > 0$. Then we continue as follows:
\begin{enumerate}
\item For all combinations $(j_1, \ldots, j_k) \in [g]^k$, compute a $c$-approximation of the vector optimization problem $\VMax(\psi_0)$ with input $X_{1, j_1}, \ldots, X_{k, j_k}$. If the returned value is $\geq t / c$, then we call $(j_1, \ldots, j_k)$ a \emph{successful} combination.
\item If there are more than $m n^{k-2}$ successful combinations, we accept.
\item Otherwise, for any successful combination $(j_1, \ldots, j_k)$, solve $\Max(\psi)$ on $X_{1, j_1}, \ldots, X_{k, j_k}$ exactly using the baseline algorithm, and accept if and only if one of these calls returns a value $\geq t / c$.
\end{enumerate}
We begin with the correctness of the algorithm. First of all, whenever a solution $(x_1, \ldots, x_k)$ in $\Max(\psi_\emptyset)$ is of value at least $t$, then the respective value in $\VMax(\phi_\emptyset)$ is also at least $t$. It is therefore safe to only treat those subinstances in step 3 which we considered successful in step 1. It remains to argue why step 2 is correct. How many tuples $(x_1, \ldots, x_k)$ can be false positives, that is, be of value $\geq t / c$ in $\VMax(\phi_\emptyset)$, but of value $< t / c$ in $\Max(\psi_\emptyset)$? At most $m n^{k-2}$, since at least one edge $(x_i, x_j)$ must exist for any false positive. Thus, if we witness $> m n^{k-2}$ solutions of value $\geq t / c$ in $\VMax(\phi_\emptyset)$, among these there exists at least one true positive.

Finally, let us bound the running time of the above algorithm. Recall that removing heavy vertices accounts for $\Order(m^{k-\gamma})$ time. In step 1, the $\VMax(\phi_\emptyset)$ algorithm is applied $g^k = \Order(m^{k-\gamma k})$ times on instances of size $\Order(m^\gamma)$, which takes time $\Order(m^{k-\gamma k + \gamma (k-\delta))}) = \Order(m^{k-\gamma \delta})$. Step 3 becomes relevant only if there are at most $m n^{k-2} = \Order(m^{k-1})$ successful combinations. For any such combination, the baseline algorithm takes time $\Order((m^\gamma)^k) = \Order(m^{\gamma k})$. The total running time is $\Order(m^{k-\gamma} + m^{k-\gamma \delta} + m^{k-1 + \gamma k})$. By picking $\gamma$ in the range $0 < \gamma < \frac1k$, the claim follows.
\end{description}

This finishes the proof for the maximization variant. For minimizations, the proof is essentially analogous. We define
\begin{equation*}
  \psi_I = \min_{x_1, \dots, x_k} \counting_y \left[ \left( \bigvee_{\{i,j\} \subseteq [k]} E(x_i, x_j) \centernot\iff \{i,j\} \in I\right) \lor \phi_I(x_1[y],\dots,x_k[y]) \right]
\end{equation*}
instead, because with our former choice, any tuple not respecting $I$ is inconveniently of value~$0$. From then on, the necessary changes are straightforward and left to thorough readers.
\end{proof}

By carefully inspecting the proof of~\autoref{thm:equiv-vector-problems}, one can also see that the same approach works to show that $\Opt(\psi)$ and $\VOpt(\psi_0)$ are equivalent under additive approximations.
\section{Exact Optimization} \label{sec:exact}
This section is devoted to proving~\autoref{thm:exact-opt-classification}. In fact, we prove a stronger result, formulated in terms of vector problems:

\begin{theorem} \label{thm:exact-opt-vp}
Let $\phi$ be a Boolean function on $k \geq 2$ inputs.
\begin{itemize}
\item If $\Hdeg(\phi) \leq 1 < k$, then $\VOpt(\phi)$ can be solved in time $\Order(m^{k-1})$.
\item If $\Hdeg(\phi) \leq 2 < k$, then $\VOpt(\phi)$ can be solved in time $\Order(m^{k-\delta})$, for some $\delta > 0$, using fast matrix multiplication.
\item If $3 \leq \Hdeg(\phi) \leq k$, then $\VOpt(\phi)$ cannot be solved in time $\Order(m^{k-\delta})$, for any $\delta > 0$, unless the \hUniformHyperClique{h} hypothesis fails.
\item If $\Hdeg(\phi) = k$, then $\VOpt(\phi)$ cannot be solved in time $\Order(m^{k-\delta})$, for any $\delta > 0$, unless the \kOV{k} hypothesis fails.
\end{itemize}
\end{theorem}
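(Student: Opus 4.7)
The proof splits along the algorithmic/hardness divide, and in both directions the key structural tool is the multilinear polynomial extension $\phi(z_1,\ldots,z_k) = \sum_{T \subseteq [k]} c_T \prod_{i \in T} z_i$. Since $\Val(x_1,\ldots,x_k) = \sum_T c_T \innerprod{x_i : i \in T}$, low degree immediately yields a pairwise decomposition of the objective, while $\Hdeg(\phi) \geq h$ is equivalent to the statement that every size-$h$ variable set is contained in the support of some monomial -- which is precisely the algebraic condition needed for the gadget reductions in the hardness direction.

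For the algorithmic direction, my plan is a two-stage procedure. First I would prove a self-reducibility claim: if $\Hdeg(\phi) \leq h$ and $k > h$, then brute-forcing over any single variable $x_i$ leaves a hybrid problem (in the sense of \autoref{sec:vector}) whose constituent formulas still have degree hardness at most $h$. Iterating this $k-3$ times reduces $\VOpt(\phi)$ to $n^{k-3}$ instances of a $3$-variable hybrid problem of degree at most~$2$. For the base case, absence of a trilinear term means $\Val(x_1,x_2,x_3)$ decomposes as $f_{12}(x_1,x_2) + f_{13}(x_1,x_3) + f_{23}(x_2,x_3)$ plus at most linear terms; each $f_{ij}$ can be precomputed from the inner products on its two arguments alone and interpreted as edge weights in a tripartite graph, reducing the problem to maximum-weight triangle detection. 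Sparsity bounds the sum of edge weights by $O(m)$, giving constant average weight, so known bounded-weight triangle algorithms based on fast matrix multiplication yield an $O(m^{3-\delta})$ bound at the base case and thus $O(m^{k-\delta})$ overall. The case $\Hdeg(\phi) \leq 1$ is easier still: the bilinear term also vanishes, $\Val$ splits into a sum of single-vector contributions, and each $x_i$ can be optimized independently, easily giving the claimed $O(m^{k-1})$.

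For the hardness direction, the central tool is a \emph{coordinate gadget} lemma: for any $\phi$ with $\Hdeg(\phi) = h \geq 2$, there exist positive integers $\beta_1,\beta_2$ and a constant-size sequence of bit patterns such that, when inserted coordinatewise into $h$ vectors (and extended arbitrarily to the remaining $k-h$), the sum of $\phi$-values over these new coordinates equals $\beta_1 - \beta_2 \cdot z_1 \cdots z_h$ as a polynomial identity in the $h$ input bits. Existence follows from the fact that, under $\Hdeg(\phi) \geq h$, the coefficient of the degree-$h$ monomial is nonzero across all relevant restrictions, so a finite-dimensional linear system over the monomial basis is solvable. Given this gadget, the reduction mimics the classical \MaxIP argument: starting from a \kOV{k} instance (when $h = k$) or a \hUniformkHyperClique{h}{k} instance (when $3 \leq h < k$), we apply the gadget to each coordinate (respectively, each $h$-edge slot) to expand it by a constant factor, so that $\Val(x_1',\ldots,x_k')$ collapses to $\beta_1 D - \beta_2 \cdot (\text{number of failed OV/hyperclique witnesses})$. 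A yes-instance of the source problem corresponds exactly to the maximum (or minimum) value, so an $O(m^{k-\delta})$ algorithm for $\VOpt(\phi)$ would refute the associated hypothesis.

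I expect the main obstacle to be the coordinate gadget lemma. The classical \MaxIP gadget $(a,1-a,1-a)$, $(1-b,b,1-b)$ produces the desired identity $1-ab$ by a lucky three-coordinate trick; for arbitrary $\phi$ with arbitrary $h$, one must exhibit a constant-size configuration that simultaneously kills every proper sub-monomial while preserving a positive multiple of $z_1 \cdots z_h$. The algebraic characterization provided by $\Hdeg(\phi) \geq h$ is exactly what guarantees the constraint system is consistent, but producing an explicit clean construction -- rather than a pure existence argument -- is where most of the technical work lies. Routing through $\hUniformHyperClique{h}$ rather than $\kOV$ is then just a matter of picking the correct combinatorial source once the gadget is in hand.
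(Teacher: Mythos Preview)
Your proposal is correct and follows essentially the same route as the paper: brute-force down to a degree-$\leq 2$ hybrid on three variables and solve via bounded-average-weight triangle detection for the algorithmic side, and a coordinate-gadget reduction from \kOV{k} (respectively \hUniformHyperClique{h}) for the hardness side. Two small points to watch when you write it up: the gadget is built for a full-degree \emph{restriction} of $\phi$ (one per size-$h$ index set, with the other $k-h$ inputs fixed to the restriction values, not ``extended arbitrarily''), and the \HyperClique{} reduction must start from $k'$-hyperclique for a large $k' = k'(k,h,\delta)$ and group $k'/k$ parts together---reducing directly from $h$-uniform $k$-hyperclique would make the dimension too large relative to $n$ for the time bound to go through.
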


By the equivalence of vector problems $\VOpt(\phi)$ and graph problems $\Opt(\psi)$ (\autoref{thm:equiv-vector-problems}) and since the \MAXThreeSAT{} assumption implies the \hUniformHyperClique{h} assumption, \autoref{thm:exact-opt-classification} follows as a corollary of \autoref{thm:exact-opt-vp}.

For the sake of convenience, let us quickly recall the definition of $\Hdeg$:

\defoptimizationhardness*

We begin with some propositions about the hardness parameters. Let $\phi(z_1, \ldots, z_k)$ be a Boolean function (as before, we also write $\phi$ to denote its multilinear polynomial extension). For $S \subseteq [k]$, we denote by $\fourier\phi(S)$ the coefficient of the monomial \raisebox{0pt}[0pt][0pt]{$z^S = \prod_{i \in S} z_i$}. These values $\fourier\phi(S)$ are sometimes referred to as the \emph{Fourier coefficients} of $\phi$. There is a simple formula for calculating $\fourier\phi(S)$ known as the inclusion-exclusion principle:

\begin{proposition}[Inclusion-Exclusion] \label{prop:inclusion-exclusion}
$\fourier\phi(S) = \sum_{T \subseteq S} (-1)^{|S| - |T|} \phi(T)$.
\end{proposition}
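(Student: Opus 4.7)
The plan is to prove this via standard Möbius inversion on the Boolean lattice. The key observation is that a monomial $z^S$ evaluated at an indicator vector $\mathbf{1}_T \in \{0,1\}^k$ (the vector with $1$'s exactly at positions in $T$) equals $1$ if $S \subseteq T$ and $0$ otherwise. This reduces the problem to inverting a relation on the subset lattice, which is classical.

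First I would fix the setup by expanding the multilinear polynomial extension as $\phi(z_1,\dots,z_k) = \sum_{S \subseteq [k]} \fourier\phi(S)\, z^S$. Evaluating at the Boolean point $\mathbf{1}_T$ and using the observation above yields the ``forward'' identity
\begin{equation*}
  \phi(T) \;=\; \sum_{S \subseteq T} \fourier\phi(S),
\end{equation*}
where we abuse notation to write $\phi(T) = \phi(\mathbf{1}_T)$ as in the proposition statement. Crucially, since the multilinear extension of $\phi$ is \emph{unique}, the coefficients $\fourier\phi(S)$ are well-defined.

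Next I would invert this relation directly. Plugging the forward identity into the right-hand side of the claim and swapping the order of summation:
\begin{equation*}
  \sum_{T \subseteq S} (-1)^{|S|-|T|}\, \phi(T)
  \;=\; \sum_{R \subseteq S} \fourier\phi(R) \sum_{T \,:\, R \subseteq T \subseteq S} (-1)^{|S|-|T|}.
\end{equation*}
The inner sum is a standard alternating binomial sum: parameterizing by $j = |T| - |R|$, it equals $\sum_{j=0}^{|S|-|R|} \binom{|S|-|R|}{j} (-1)^{|S|-|R|-j} = (1-1)^{|S|-|R|}$, which vanishes unless $R = S$, in which case it equals $1$. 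Hence only the term $R = S$ survives, giving exactly $\fourier\phi(S)$ as required.

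I do not anticipate any real obstacle: the argument is entirely a routine application of Möbius inversion on the Boolean lattice. The only point that warrants a sentence of care is invoking uniqueness of the multilinear extension so that $\fourier\phi(S)$ is well-defined before we invert; everything else is a one-line computation.
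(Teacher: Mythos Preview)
Your proof is correct and complete. Note, however, that the paper does not actually supply a proof of this proposition: it is stated as the well-known inclusion-exclusion (M\"obius inversion) formula and used without further justification. Your argument via the forward identity $\phi(T)=\sum_{S\subseteq T}\fourier\phi(S)$ and the alternating binomial sum is exactly the standard derivation, so there is nothing to compare against beyond observing that you have filled in what the paper leaves implicit.
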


The following proposition relates our notion of degree hardness with certain coefficients $\fourier\phi(S)$:

\begin{proposition} \label{prop:optimization-hardness-fourier}
Let $\phi$ be a Boolean function $k$ inputs. Then $\Hdeg(\phi)$ is the maximum number $0 \leq h \leq k$ such that, for any set \raisebox{0pt}[0pt][0pt]{$S \in \binom{[k]}{h}$}, there exists some superset $T \supseteq S$ with \raisebox{0pt}[0pt][0pt]{$\fourier\phi(T) \neq 0$}.
\end{proposition}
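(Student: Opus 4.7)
The natural approach is to derive an explicit formula for the top-degree coefficient of an $S$-restriction of $\phi$ in terms of the Fourier coefficients $\fourier\phi(T)$, and then observe that the existence of a ``good'' restriction is equivalent to a certain multilinear polynomial being not identically zero on $\{0,1\}$-inputs.

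First, fix $S \subseteq [k]$ with $|S| = h$ and a constant vector $c \in \{0,1\}^{[k] \setminus S}$; write $\phi|_c$ for the resulting $S$-restriction, regarded as a multilinear polynomial in the variables $\{z_i : i \in S\}$. Since the restriction has at most $h$ variables, its degree is at most $h$, and the degree equals $h$ if and only if the coefficient of the monomial $z^S$ in $\phi|_c$ is nonzero. Starting from $\phi = \sum_{T \subseteq [k]} \fourier\phi(T)\, z^T$, substituting $z_i = c_i$ for $i \notin S$ and collecting the coefficient of $z^S$ gives
\begin{equation*}
  \widehat{\phi|_c}(S) \;=\; \sum_{T \supseteq S} \fourier\phi(T) \prod_{i \in T \setminus S} c_i \;=:\; P_S(c).
\end{equation*}
Thus the $S$-restriction $\phi|_c$ has degree exactly $h$ if and only if $P_S(c) \neq 0$.

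Next, I would view $P_S$ itself as a multilinear polynomial in the variables $\{c_i : i \in [k] \setminus S\}$. As $T$ ranges over supersets of $S$, the sets $T \setminus S$ range over all subsets of $[k] \setminus S$ and therefore give rise to pairwise distinct monomials $\prod_{i \in T \setminus S} c_i$, each with coefficient $\fourier\phi(T)$. Hence $P_S$ is the zero polynomial if and only if $\fourier\phi(T) = 0$ for every $T \supseteq S$. Using the standard fact that the evaluation map from multilinear polynomials in $n$ variables to functions $\{0,1\}^n \to \mathbb R$ is a bijection, a multilinear polynomial vanishes on all of $\{0,1\}^{[k] \setminus S}$ if and only if it is identically zero. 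Combining these two observations yields: there exists a $c \in \{0,1\}^{[k] \setminus S}$ with some $S$-restriction of degree $h$ if and only if there exists some $T \supseteq S$ with $\fourier\phi(T) \neq 0$.

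Finally, to conclude the proposition, I would unfold the definition of $\Hdeg(\phi)$: it is the largest $h$ such that \emph{every} $S \in \binom{[k]}{h}$ admits an $S$-restriction of degree $h$. By the equivalence just established, this is precisely the largest $h$ such that for every $S \in \binom{[k]}{h}$ there exists $T \supseteq S$ with $\fourier\phi(T) \neq 0$, which is the claimed characterization. There is no real obstacle here; the only mild subtlety is the identification of $P_S$ as a multilinear polynomial with a distinct monomial for each $T \supseteq S$, which makes the Fourier coefficients directly readable off from $P_S$.
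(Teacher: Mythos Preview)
Your proof is correct and follows essentially the same strategy as the paper: both compute the top-degree coefficient of an $S$-restriction as $\widehat{\phi|_c}(S) = \sum_{T \supseteq S} \fourier\phi(T)\, c^{T \setminus S}$ and then argue that this can be made nonzero for some $c \in \{0,1\}^{[k]\setminus S}$ if and only if some $\fourier\phi(T)$ with $T \supseteq S$ is nonzero.

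The only minor difference is in how the nontrivial direction is handled. You invoke the standard fact that a multilinear polynomial vanishing on all of $\{0,1\}^{[k]\setminus S}$ must be identically zero, which dispatches both directions at once. The paper instead argues by contradiction: assuming every $S$-restriction has degree below $h$, it picks an inclusion-wise minimal $T \supseteq S$ with $\fourier\phi(T) \neq 0$ and exhibits the explicit assignment ($c_i = 1$ for $i \in T \setminus S$, $c_i = 0$ otherwise) whose restriction has top coefficient exactly $\fourier\phi(T)$. Your packaging is slightly cleaner; the paper's is more constructive (it names the witnessing restriction). Substantively they are the same argument.
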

\begin{proof}
Let $\Hopt'(\phi)$ denote the parameter defined in the statement, that is, $\Hopt'(\phi)$ is the largest number $0 \leq h \leq k$ such that, for any set \raisebox{0pt}[0pt][0pt]{$S \in \binom{[k]}{h}$}, there exists some $T \supseteq S$ with \raisebox{0pt}[0pt][0pt]{$\fourier\phi(T) \neq 0$}. We prove that $\Hdeg(\phi) = \Hopt'(\phi)$.

On the one hand, suppose that $\Hdeg(\phi) \geq h$. By definition it holds that for any set of variables \raisebox{0pt}[0pt][0pt]{$S \in \binom{[k]}{h}$}, there exists some $S$-restriction $\phi'$ of $\phi$ of degree $h$. In particular, \raisebox{0pt}[0pt][0pt]{$\fourier\phi'(S) \neq 0$} and therefore, for some $T \supseteq S$, \raisebox{0pt}[0pt][0pt]{$\fourier\phi(T) \neq 0$}. It follows that also $\Hopt'(\phi) \geq h$.

On the other hand, suppose that $\Hdeg(\phi) < h$. Then there exists some index set \raisebox{0pt}[0pt][0pt]{$S \in \binom{[k]}{h}$} such that any $S$-restriction $\phi'$ of $\phi$ is of degree less than $h$, or in other words, \raisebox{0pt}[0pt][0pt]{$\fourier\phi'(S) = 0$}. We claim that for any superset $T \supseteq S$, $\fourier\phi(T) = 0$. For the sake of contradiction, suppose otherwise and pick an inclusion-wise minimal set $T \supseteq S$ with $\fourier\phi(T) \neq 0$. Then we construct an $S$-restriction $\phi'$ of $\phi$ with $\fourier\phi'(S) \neq 0$ as follows: Fix every variable in $T \setminus S$ to $1$ and every variable in $[k] \setminus T$ to $0$. Then indeed $\fourier\phi'(S) = \fourier\phi(T) \neq 0$ as the only monomial in $\phi$ which is mapped to~$S$ under the restriction is~$T$. This is a contradiction and therefore $\Hopt'(\phi) < h$.
\end{proof}

\begin{proposition} \label{prop:optimization-hardness-negation}
For any Boolean function~$\phi$ it holds that $\Hdeg(\phi) = \Hdeg(\neg \phi)$, where $\neg \phi$ denotes the negation of $\phi$.
\end{proposition}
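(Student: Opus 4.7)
The plan is to reduce the claim to the Fourier characterization already established in \autoref{prop:optimization-hardness-fourier}, and then exploit the fact that complementation of a Boolean function only affects the constant Fourier coefficient.

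Concretely, I would first observe that the multilinear polynomial extension of $\neg\phi$ is simply $1 - \phi$. Reading off Fourier coefficients monomial by monomial, this gives
\begin{equation*}
  \fourier{(\neg\phi)}(T) = -\fourier\phi(T) \quad\text{for every } T \neq \emptyset,
\end{equation*}
while only the empty set coefficient is shifted, $\fourier{(\neg\phi)}(\emptyset) = 1 - \fourier\phi(\emptyset)$. In particular, for every nonempty $T$, $\fourier\phi(T) \neq 0$ if and only if $\fourier{(\neg\phi)}(T) \neq 0$.

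Next, I would apply \autoref{prop:optimization-hardness-fourier}, which says that $\Hdeg(\phi) \geq h$ holds iff for every $S \in \binom{[k]}{h}$ there exists some $T \supseteq S$ with $\fourier\phi(T) \neq 0$. For any $h \geq 1$ and any $S$ of size $h$, any superset $T \supseteq S$ is automatically nonempty, so by the observation above this condition is invariant under swapping $\phi$ with $\neg\phi$. Therefore $\Hdeg(\phi) \geq h \iff \Hdeg(\neg\phi) \geq h$ for all $h \geq 1$, which already forces $\Hdeg(\phi) = \Hdeg(\neg\phi)$ whenever either parameter is positive.

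It only remains to handle the boundary $h = 0$, which is the main (minor) subtle point. Here the characterization involves the empty set $T = \emptyset$, on which the equation from $\fourier{(\neg\phi)}(T) = -\fourier\phi(T)$ fails; however, the parameter $\Hdeg$ only drops below~$1$ when $\phi$ is constant, and in that case $\neg\phi$ is constant as well, so both parameters agree (using the same convention for constant-valued functions as for $\Hand$ in \autoref{def:model-checking-hardness}). This covers all remaining cases and completes the proof.
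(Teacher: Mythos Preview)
Your proof is correct and follows essentially the same approach as the paper: observe that the polynomial extension of $\neg\phi$ is $1-\phi$, note that all nonzero monomials of positive degree are preserved (with flipped sign), and conclude via \autoref{prop:optimization-hardness-fourier}. You are actually slightly more careful than the paper in explicitly discussing the $h=0$ boundary case, which the paper's one-line proof leaves implicit.
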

\begin{proof}
Negating a formula $\phi$ corresponds to changing its polynomial extension into $1 - \phi$. Any nonzero monomial (of degree $> 0$) still has a nonzero coefficient after that transformation, and therefore the claim follows by~\autoref{prop:optimization-hardness-fourier}.
\end{proof}

As an immediate consequence of~\autoref{prop:optimization-hardness-negation}, it suffices to prove~\autoref{thm:exact-opt-vp} only for $\VMax(\phi)$ (or interchangeably, $\VMin(\phi)$). Indeed, say that we have proven~\autoref{thm:exact-opt-vp} for the maximization variant, and let $\phi$ be such that $\VMax(\phi)$ admits an algorithm in time $\Order(m^{k-\delta})$. Since $\Hdeg(\phi) = \Hdeg(\neg \phi)$, there also exists an algorithm for $\VMax(\neg \phi)$ in time $\Order(m^{k-\delta})$, and solving $\VMax(\neg \phi)$ exactly is equivalent to solving $\VMin(\phi)$ under the translation $\VMin(\phi) = d - \VMax(\neg \phi)$. For that reason, we will implicitly deal with the maximization variants in this section's proofs.

Finally, it is easy to relate $\Hand(\phi)$ with $\Hdeg(\phi)$:

\begin{proposition} \label{prop:hardness-relation}
For any Boolean function $\phi$, it holds that $\Hand(\phi) \leq \Hdeg(\phi)$.
\end{proposition}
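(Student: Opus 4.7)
The plan is to reduce the claim to the following single observation: \emph{any Boolean function $\phi'$ on $h$ variables that has exactly one satisfying assignment has degree exactly $h$}. Once this bridging lemma is in hand, the inequality follows by unwinding definitions: if $h = \Hand(\phi)$, then for every $S \in \binom{[k]}{h}$ some $S$-restriction $\phi_S$ has exactly one satisfying assignment, and by the lemma $\deg(\phi_S) = h$, so every size-$h$ index set $S$ also admits an $S$-restriction of full degree $h$. Thus $\Hdeg(\phi) \geq h = \Hand(\phi)$.

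To establish the bridging lemma, I would compute the top-degree Fourier coefficient of $\phi'$ directly via Proposition~\ref{prop:inclusion-exclusion}. Let $\alpha \in \{0,1\}^h$ denote the unique satisfying assignment of $\phi'$, and write $A = \{i \in [h] : \alpha_i = 1\} \subseteq [h]$ for its support. Then, identifying subsets of $[h]$ with their characteristic vectors,
\[
\fourier{\phi'}([h]) \;=\; \sum_{T \subseteq [h]} (-1)^{h - |T|}\, \phi'(T) \;=\; (-1)^{h - |A|},
\]
since the only nonzero term in the sum comes from $T = A$. In particular $\fourier{\phi'}([h]) \neq 0$, so the monomial $z_1 z_2 \cdots z_h$ appears in the multilinear polynomial extension of $\phi'$, forcing $\deg(\phi') = h$.

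With this lemma proved, the main claim is essentially immediate. I would set $h = \Hand(\phi)$ (where the case $h = 0$ is trivial since $\Hdeg \geq 0$ always), fix an arbitrary $S \in \binom{[k]}{h}$, and let $\phi_S$ be the guaranteed $S$-restriction of $\phi$ with exactly one satisfying assignment. Since $\phi_S$ is a Boolean function on the $h$ variables indexed by $S$, the lemma gives $\deg(\phi_S) = h$. As this produces, for every $S \in \binom{[k]}{h}$, an $S$-restriction of $\phi$ of degree $h$, the definition of $\Hdeg$ yields $\Hdeg(\phi) \geq h$, completing the proof.

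I do not foresee a genuine obstacle: the argument is a one-line inclusion-exclusion computation plus definition-chasing. The only subtlety worth mentioning is the purely notational one of tracking that an $S$-restriction is a function on the variables indexed by $S$ (not on all $k$ variables), so that ``degree $h$'' in the definition of $\Hdeg$ matches ``degree $|S|$'' in the lemma.
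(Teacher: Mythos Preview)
Your proposal is correct and follows essentially the same approach as the paper: both reduce to the observation that a Boolean function on $h$ variables with a unique satisfying assignment has degree $h$, then conclude by definition-chasing. The only minor difference is in verifying that observation: the paper writes down the explicit multilinear extension $\prod_{i=1}^h \tau_i(z_i)$ with $\tau_i(z_i) \in \{z_i, 1-z_i\}$ depending on $\alpha_i$, whereas you compute the top Fourier coefficient via inclusion-exclusion; both are one-line arguments yielding the same conclusion.
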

\begin{proof}
If a Boolean function $\phi' : \{ 0, 1 \}^h \to \{0, 1\}$ has a unique satisfying assignment $\alpha \in \{0, 1\}^h$, then its polynomial extension is $\phi'(z_1, \ldots, z_h) = \prod_{i=1}^h \tau_i(z_i)$, where $\tau_i(z_i) = z_i$ if $\alpha_i = 1$ and $\tau_i(z_i) = 1 - z_i$ otherwise. In particular, $\phi'$ is of degree $h$. The claim is now immediate by the definitions of $\Hand(\phi)$ and $\Hdeg(\phi)$.
\end{proof}

\subsection{Hardness Results} \label{sec:exact:sec:hardness}
As the main tool for the hardness proof, we demonstrate how to construct \emph{coordinate gadgets} for all problems classified as hard in~\autoref{thm:exact-opt-vp}. Our gadget definition is inspired by the standard reduction from \OV{} to \MaxIP{}. 

Recall that a \emph{restriction} of a Boolean function $\phi$ is another Boolean function $\phi'$ obtained by substituting constant values for some inputs of $\phi$. In the following, we need a strictly more general concept:

\begin{definition}[Unary Transformation]
Let $\phi(z_1, \ldots, z_k)$ be a Boolean function and let $\tau_1, \ldots, \tau_k : \{0, 1\} \to \{0, 1\}$ be unary functions. We call the composition $\phi(\tau_1(z_1), \ldots, \tau_k(z_k))$ a \emph{unary transformation} of $\phi$.
\end{definition}

Whenever the unary functions $\tau_i$ are either constant or the identity, a unary transformation of~$\phi$ is simply a restriction of~$\phi$. However, unary transformations also allow to flip some of the inputs.

\begin{lemma}[Coordinate Gadget] \label{lem:coordinate-gadget}
Let $\phi(z_1, \ldots, z_k)$ be a Boolean function of degree $\deg(\phi) = k$. Then there exist unary transformations $\phi_1, \ldots, \phi_\ell$ of $\phi$ such that
\begin{equation*}
    \sum_{j=1}^\ell \phi_j(z_1, \ldots, z_k) = \beta_1 - \beta_2 \mult z^{[k]},
\end{equation*}
for some positive integers $\beta_1, \beta_2$. The collection $\{\phi_1, \ldots, \phi_\ell\}$ is called a \emph{coordinate gadget} of $\phi$.
\end{lemma}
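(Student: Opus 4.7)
The plan is to derive the gadget via an inclusion-exclusion (Möbius) identity on restrictions of $\phi$, combined with a translate-based ``complementation'' observation. First I assume without loss of generality that $c := \widehat\phi([k]) > 0$, since otherwise replacing $\phi$ by its unary transformation $\phi(1-z_1, z_2, \ldots, z_k)$ flips the sign of the top-degree coefficient, and any gadget for this replacement is a gadget for $\phi$ (unary transformations compose to unary transformations). For each $T \subseteq [k]$, let $\phi^T$ be the unary transformation of $\phi$ obtained by fixing $z_i = 0$ for $i \notin T$ and keeping $z_i$ as the identity for $i \in T$, so that $\phi^T(z) = \sum_{U \subseteq T} \widehat\phi(U)\, z^U$. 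Since $\sum_{T \supseteq U}(-1)^{k-|T|}$ vanishes unless $U = [k]$ (where it equals $1$) by a direct inclusion-exclusion, one derives the Möbius identity
\begin{equation*}
c \cdot z^{[k]} \;=\; \sum_{T \subseteq [k]} (-1)^{k-|T|}\, \phi^T(z).
\end{equation*}
Separating by sign parity, let $P(z) := \sum_{T : k - |T| \text{ even}} \phi^T(z)$ and $N(z) := \sum_{T : k - |T| \text{ odd}} \phi^T(z)$; both are sums of unary transformations of $\phi$, and $P - N = c \cdot z^{[k]}$.

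The remaining task is to realize $\beta_1 - P(z)$, for some positive integer $\beta_1$, as a non-negative integer combination of unary transformations of $\phi$, because then $(\beta_1 - P) + N$ equals exactly $\beta_1 - c \cdot z^{[k]}$. The key observation I would use is a translate-to-complement trick: for any unary transformation $\phi_j(z) = \phi(\tau(z))$ of $\phi$ with $N^{(j)} := |\phi_j^{-1}(1)|$ satisfying assignments, every XOR-translate $\phi_j(z \oplus \alpha) = \phi(\tau(z \oplus \alpha))$ is again a unary transformation of $\phi$ -- because composing $z_i \mapsto z_i \oplus \alpha_i$ with any $\tau_i \in \{\text{identity}, \text{negation}, 0, 1\}$ produces another function in this set -- and the substitution $\beta := z \oplus \alpha$ shows that $\sum_{\alpha \in \{0,1\}^k} \phi_j(z \oplus \alpha) = N^{(j)}$ as a constant function of $z$. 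Rearranging,
\begin{equation*}
N^{(j)} - \phi_j(z) \;=\; \sum_{\alpha \ne \mathbf{0}} \phi_j(z \oplus \alpha),
\end{equation*}
which is exhibited directly as a non-negative integer combination of unary transformations of $\phi$.

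Applying this complementation to each $\phi^T$ with $k - |T|$ even and summing, I obtain $\beta_1 - P(z) = \sum_{T : k - |T| \text{ even}} (N^{(T)} - \phi^T(z))$ with $\beta_1 := \sum_{T : k - |T| \text{ even}} N^{(T)}$, again as an explicit sum of unary transformations of $\phi$. Concatenating these with the transformations comprising $N(z)$ yields the required family $\phi_1, \ldots, \phi_\ell$ summing to $\beta_1 - \beta_2 \cdot z^{[k]}$, where $\beta_2 := c > 0$ by our initial reduction and $\beta_1 > 0$ because the $T = [k]$ contribution gives $N^{([k])} = |\phi^{-1}(1)| \geq 1$ (as $\deg \phi = k \geq 1$ forces $\phi$ non-constant). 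I do not anticipate a serious obstacle: the Möbius identity is routine inclusion-exclusion, the translate trick is a one-line computation, and the positivity of $\beta_1, \beta_2$ as well as the unary nature of the composed maps are immediate from the construction. One can also verify by direct calculation that the resulting gadget uses $\ell = \Theta(2^{2k})$ transformations.
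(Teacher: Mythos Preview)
Your proof is correct and takes a genuinely different route from the paper's. The paper argues by an iterative elimination: it maintains a running sum $g$ of unary transformations of $\phi$, and at each step picks a highest-degree unwanted monomial $z^S$ in $g$, constructs a single unary transformation $\phi''$ whose only monomial of degree $\geq |S|$ is $z^S$, and forms an integer combination $\widehat\phi(T)\cdot g + \widehat g(S)\cdot \phi''$ that cancels that monomial. This terminates because the number of high-degree monomials strictly decreases, but the number $\ell$ of transformations can grow multiplicatively across steps and no explicit bound on $\ell$ is given. Your argument is instead a direct closed-form construction: the M\"obius identity $\sum_T (-1)^{k-|T|}\phi^T = c\cdot z^{[k]}$ isolates the top monomial in one shot, and the translate-sum trick $\sum_{\alpha}\phi_j(z\oplus\alpha) = |\phi_j^{-1}(1)|$ cleanly converts the ``minus $P$'' part into a positive sum. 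This yields an explicit gadget of size $\ell = 2^{k-1} + 2^{k-1}(2^k-1) = 2^{2k-1}$, uniform in $\phi$, which the paper's construction does not provide. Both arguments begin with the same sign-normalization step; beyond that, yours is more structural and gives sharper quantitative control, while the paper's is more local and perhaps easier to discover.
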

\begin{proof}
We describe an inductive procedure maintaining a polynomial~$g$ which is invariantly a sum of unary transformations of $\phi$. Initially, we set $g := \phi$ and the goal is to eventually transform $g$ into $\beta_1 - \beta_2 \mult z^{[k]}$. We can assume that $\fourier g([k]) < 0$; otherwise start with $g := \phi(1 - z_1, z_2, \ldots, z_k)$. We proceed by eliminating unnecessary monomials one at a time:

If $g$ does not have any monomial of degree $1 \leq d \leq k - 1$, then we stop and output~$g$. Otherwise, pick one such monomial $\fourier g(S) \mult z^S$ of maximum degree. Moreover, let $T \subseteq [k]$ be an inclusion-wise minimal set such that $S \subseteq T$ and $\fourier\phi(T) \neq 0$ (note that $T$ exists since $\deg(\phi)=k$). Then, we construct the unary transformation $\phi'(z_1, \ldots, z_k) := \phi(\tau_1(z_1), \ldots, \tau_k(z_k))$ with
\begin{equation*}
    \tau_i(z) :=
    \begin{cases}
        z & \text{if $i \in S$,} \\
        1 & \text{if $i \in T \setminus S$,} \\
        0 & \text{otherwise.}
    \end{cases}
\end{equation*}
We have that $\fourier\phi'(S) = \fourier\phi(T) \neq 0$. By additionally flipping an arbitrary input $z_i$, $i \in S$, we construct~$\phi''$ with $\fourier\phi''(S) = -\fourier\phi(T)$. Moreover, observe that $z^S$ is the only nonzero monomial of degree $\geq |S|$ in~$\phi''$.

Next, we cancel the monomial $z^S$ as follows: Assume without loss of generality that $\fourier\phi(T) > 0$ (and thus $\fourier\phi''(S) < 0$) and also assume $\fourier g(S) > 0$; the other cases are symmetric. We set
\begin{equation*}
    g' := \fourier\phi(T) \mult g + \fourier g(S) \mult \phi''.
\end{equation*}
Then $\fourier g'(S) = \fourier\phi(T) \mult \fourier g(S) + \fourier g(S) \mult (-\fourier\phi(T)) = 0$, and any other monomial of degree $\geq |S|$ remains untouched in $g'$. Since the number of nonzero monomials of maximum degree (less than $k$) has decreased, the construction continues with $g'$ by induction. Here, we implicitly used that the coefficients $\fourier\phi(T)$ and $\fourier g(S)$ are integral, which can for example be seen from the inclusion-exclusion formula (\autoref{prop:inclusion-exclusion}).

It remains to argue that $\beta_1$ and $\beta_2$ are positive integers. Recall that we started with $\fourier g([k]) < 0$ and any function $\phi''$ that was added in the process satisfies $\fourier\phi''([k]) = 0$. Hence, $\fourier g([k]) < 0$ and thus $\beta_2 > 0$. Finally, since $g$ is a sum of (polynomial extensions of) Boolean functions, it is clear that $0 \leq g(1, \ldots, 1) = \beta_1 - \beta_2$. It follows that $\beta_1 \geq \beta_2 > 0$.
\end{proof}

With the coordinate gadget in mind, the \kOV{k} hardness result is now almost immediate:

\begin{lemma}[No Exact Optimization for $\Hdeg(\phi) = k$] \label{lem:exact-hardness-ov}
Let $\phi$ be a Boolean function on $k$ inputs of degree hardness $\Hdeg(\phi) = k$. Then $\VOpt(\phi)$ cannot be solved in time $\Order(m^{k-\delta})$ any $\delta > 0$, unless the \kOV{k} hypothesis fails. 
\end{lemma}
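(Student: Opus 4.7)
The plan is to reduce $\kOV{k}$ to $\VOpt(\phi)$ by leveraging the coordinate gadget of \autoref{lem:coordinate-gadget}. First, observe that the assumption $\Hdeg(\phi) = k$ implies, by \autoref{prop:optimization-hardness-fourier} (with $S = [k]$, whose only superset in $[k]$ is $[k]$ itself), that $\fourier\phi([k]) \neq 0$, so $\deg(\phi) = k$. Hence the coordinate gadget lemma applies and yields unary transformations $\phi_1, \dots, \phi_\ell$ of $\phi$ together with positive integers $\beta_1, \beta_2$ such that
\begin{equation*}
\sum_{j=1}^{\ell} \phi_j(z_1, \dots, z_k) \;=\; \beta_1 - \beta_2 \cdot z^{[k]}.
\end{equation*}

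Given an instance $X_1, \dots, X_k \subseteq \{0,1\}^d$ of $\kOV{k}$, I would construct a $\VMax(\phi)$-instance $X_1', \dots, X_k' \subseteq \{0,1\}^{d\ell}$ as follows. Writing $\phi_j(z_1, \dots, z_k) = \phi(\tau_1^{(j)}(z_1), \dots, \tau_k^{(j)}(z_k))$, I replace each vector $x_i \in X_i$ by the vector $x_i' \in \{0,1\}^{d\ell}$ whose coordinate indexed by $(y, j) \in [d] \times [\ell]$ is $\tau_i^{(j)}(x_i[y])$. Each coordinate $y \in [d]$ then contributes exactly $\sum_{j=1}^{\ell} \phi_j(x_1[y], \dots, x_k[y]) = \beta_1 - \beta_2 \cdot x_1[y] \cdots x_k[y]$ to $\Val(x_1', \dots, x_k')$, so
\begin{equation*}
\Val(x_1', \dots, x_k') \;=\; \beta_1\, d - \beta_2 \cdot \innerprod{x_1}{\dots, x_k}.
\end{equation*}
Consequently, the $\kOV{k}$ instance is a yes-instance iff the optimal value of the constructed $\VMax(\phi)$ instance equals $\beta_1 d$ (and equivalently, iff the optimal value of the corresponding $\VMin(\neg\phi)$ instance -- which is $d\ell - \Val$ -- hits its maximum possible slack). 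Since $\beta_1, \beta_2, \ell$ depend only on $\phi$ and not on the input, a single exact call to $\VMax(\phi)$ suffices to decide $\kOV{k}$. Thus an $O(m^{k-\delta})$-time algorithm for $\VMax(\phi)$ on an instance of size $m = O(nd\ell) = O(nd)$ yields an $O(n^{k-\delta} \poly(d))$-time algorithm for $\kOV{k}$, refuting the hypothesis.

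For the minimization case, I would invoke \autoref{prop:optimization-hardness-negation}, which gives $\Hdeg(\neg\phi) = \Hdeg(\phi) = k$, and use the identity $\VMin(\phi) = d - \VMax(\neg\phi)$ (where $d$ is the dimension of the instance): any faster-than-brute-force algorithm for $\VMin(\phi)$ transfers to one for $\VMax(\neg\phi)$, to which the above reduction already applies. There is no real obstacle here; the only point that needs a sanity check is that the gadget construction respects sparse representation so that the blow-up in $m$ is only by the constant factor $\ell$, which is immediate because each of the $\ell$ gadget coordinates contributes a bounded number of nonzero entries per original vector.
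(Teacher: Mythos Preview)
Your proposal is correct and follows essentially the same approach as the paper: use the coordinate gadget (enabled by $\deg(\phi)=k$) to map each coordinate of a $\kOV{k}$ instance into $\ell$ coordinates of a $\VMax(\phi)$ instance so that $\Val(x_1',\dots,x_k')=\beta_1 d-\beta_2\innerprod{x_1}{\dots,x_k}$, then read off orthogonality from whether the optimum hits $\beta_1 d$. The paper handles the minimization case globally (via \autoref{prop:optimization-hardness-negation} before the lemma) rather than inside the proof, but your inclusion of it is fine; one small caveat is that the sparsity bound is really $m'=O(nd)$ rather than $m'\leq \ell\cdot m_0$ (a unary transformation can turn zeros into ones), but this is exactly what you need and what the paper uses.
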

\begin{proof}
We assume an $\Order(m^{k-\delta})$-time algorithm for $\VMax(\phi)$ and our goal is to find an improved algorithm for a given \kOV{k} instance $X_1, \ldots, X_k \subseteq \{0, 1\}^d$. Observe that $\Hdeg(\phi) = k$ is equivalent to $\deg(\phi) = k$, and therefore there exists a coordinate gadget $\{\phi_1, \ldots, \phi_\ell\}$ for $\phi$ by the previous lemma. Let $\tau_{j, 1}, \ldots, \tau_{j, k}$ denote the unary functions used to transform $\phi$ into $\phi_j$. Then we rewrite every vector $x_i \in X_i$ as a vector $x_i'$ of dimension $\ell d$ by mapping each entry $z$ to $(\tau_{1, i}(z), \ldots, \tau_{\ell, i}(z))$.

Consider the $\VMax(\phi)$ instance $X_1', \ldots, X_k'$. By~\autoref{lem:coordinate-gadget}, the optimal solution is
\begin{equation*}
    \OPT = \max_{x_1', \ldots, x_k'} \sum_y \phi(x_1'[y], \ldots, x_k'[y]) = \max_{x_1, \ldots, x_k} \beta_1 d - \beta_2 \innerprod{x_1}{\ldots, x_k}.
\end{equation*}
In particular, we have that $\OPT = \beta_1 d$ if we started with a YES instance of \kOV{k}, and a value strictly smaller otherwise.

Finally, we analyze the the running time. Constructing the sets $X_1', \ldots, X_k'$ takes time $\Order(nd)$, and the sparsity of the constructed $\VMax(\phi)$ instance is also $m = \Order(nd)$. The total running time is $\Order(nd + m^{k-\delta}) = \Order(n^{k-\delta} \poly(d))$, which contradicts the \kOV{k} hypothesis.
\end{proof}

\begin{lemma}[No Exact Optimization for $3 \leq \Hdeg(\phi)$] \label{lem:exact-hardness-hyperclique}
Let $\phi$ be a Boolean function on $k$ inputs of degree hardness $\Hdeg(\phi) = h$, $3 \leq h \leq k$. Then $\VOpt(\phi)$ cannot be solved in time $\Order(m^{k-\delta})$ for any $\delta > 0$, unless the \hUniformHyperClique{h} hypothesis fails.
\end{lemma}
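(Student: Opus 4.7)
The plan is to reduce $h$-uniform $k'$-Hyperclique, for a sufficiently large $k' \ge k$, to $\VMax(\phi)$, generalising Lemma~\ref{lem:exact-hardness-ov} so that only a degree-$h$ ``core'' of $\phi$ is used instead of a full degree-$k$ monomial. By Proposition~\ref{prop:optimization-hardness-negation} it suffices to handle $\VMax$.

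First I would set up the gadgets. For every $S \in \binom{[k]}{h}$ the hypothesis $\Hdeg(\phi) = h$ supplies an $S$-restriction $\phi'_S$ of degree $h$, realised by constants $c_{S,i}$ for $i \notin S$. Lemma~\ref{lem:coordinate-gadget} applied to $\phi'_S$ then yields unary transformations $\tau_{S,j,i}$ (for $i \in S$, $j \in [\ell_S]$) and positive integers $\beta_{1,S}, \beta_{2,S}$ such that the sum over $j$ of the corresponding unary transformations of $\phi$ (with variables outside $S$ pinned to $c_{S,i}$) equals $\beta_{1,S} - \beta_{2,S}\,z^S$. I view this as an ``$h$-AND gadget'' over any chosen size-$h$ subset $S$ of slot variables.

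For the reduction proper, fix $k'$ divisible by $k$ (to be chosen at the end). Given an $h$-uniform $k'$-Hyperclique instance on vertex sets $V_1, \dots, V_{k'}$ of size $N$, partition $[k']$ into $k$ blocks of $k'/k$ vertex sets each and let $X_i$ consist of all $(k'/k)$-tuples from the $i$-th block whose every within-block $h$-subset is a hyperedge (a cheap preprocessing step). For every remaining \emph{cross-block} non-hyperedge $e$, let $\pi(e) \subseteq [k]$ be the blocks $e$ touches (so $2 \le |\pi(e)| = h' \le h$), extend $\pi(e)$ arbitrarily to a size-$h$ subset $S_e \subseteq [k]$, and append an $\ell_{S_e}$-coordinate block. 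In the $j$-th coordinate of this block I set $x_i[y] := \tau_{S_e,j,i}(b_{e,i}(x_i))$ for $i \in \pi(e)$, where $b_{e,i}(x_i) \in \{0,1\}$ indicates whether $x_i$ realises the designated vertices of $e$ inside block~$i$; for $i \in S_e \setminus \pi(e)$ I hardcode $x_i[y] := \tau_{S_e,j,i}(1)$; and for $i \notin S_e$ I hardcode $x_i[y] := c_{S_e,i}$. By the gadget identity the block's contribution to $\sum_y \phi(x_1[y], \dots, x_k[y])$ is $\beta_{1,S_e} - \beta_{2,S_e} \prod_{i \in \pi(e)} b_{e,i}(x_i)$; summing over all cross-block non-hyperedges shows that $\OPT = \sum_e \beta_{1,S_e}$ if and only if the chosen tuple corresponds to a genuine $k'$-hyperclique, yielding a correct gap reduction.

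The instance satisfies $n = N^{k'/k}$ and $d = O(N^h)$ (with constants depending on $k', h, \phi$), hence $m = O(N^{k'/k + h})$. An $O(m^{k-\delta})$ algorithm for $\VMax(\phi)$ would thus decide $h$-uniform $k'$-Hyperclique in time $O(N^{(k'/k + h)(k-\delta)}) = O(N^{k' - \delta'})$ for some $\delta' > 0$ once $k' > hk(k-\delta)/\delta$, refuting the $h$-uniform Hyperclique hypothesis. The main technical obstacle is that a cross-block non-hyperedge may touch only $h' < h$ blocks, while Lemma~\ref{lem:coordinate-gadget} produces a trigger monomial of degree exactly $h$ over $h$ \emph{distinct} slot variables. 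The dummy-slot padding resolves this by forcing the superfluous $h - h'$ inputs to be identically~$1$ across their whole set $X_i$, which hinges on the $\tau$'s being unary (so each $\tau_{S_e,j,i}(1)$ is a single constant that can be baked into every vector); composing this padding with the within-block preprocessing and handling the boundary $h' = 2$ is where most of the proof's bookkeeping concentrates.
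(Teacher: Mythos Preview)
Your proposal is correct and follows the same reduction strategy as the paper: reduce $h$-uniform $k'$-Hyperclique to $\VMax(\phi)$ by partitioning the $k'$ vertex classes into $k$ blocks, and for each non-hyperedge install a coordinate gadget (Lemma~\ref{lem:coordinate-gadget}) on a suitable size-$h$ slot set so that the block contributes $\beta_1 - \beta_2$ exactly when the chosen tuple realises that non-edge, with the final parameter choice $k' \gtrsim hk^2/\delta$ matching the paper's. Your explicit within-block preprocessing and dummy-slot padding for cross-block non-edges touching fewer than $h$ blocks make precise a point the paper's presentation leaves somewhat implicit, but the argument and the resulting bounds are the same.
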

\begin{proof}
For contradiction, let us assume that $\VMax(\phi)$ can be solved in time $\Order(m^{k-\delta})$ for some $\delta > 0$. Let $k' = k'(k, h, \delta)$ be a multiple of $k$ to be specified later. Our reduction starts from an instance~$H$ of \hUniformkHyperClique{h}{k'}. That is, $H = (V, E)$ is a $k'$-partite $h$-uniform hypergraph and we write $V = V_{1,1} \union \cdots \union V_{1,k'/k} \union \cdots \union V_{k,1} \union \cdots \union V_{k,k'/k}$ for the vertex partition into~$k'$ parts of size $n$ each. We regroup the vertices into $k$~groups $V_i := V_{i,1} \union \cdots \union V_{i,k'/k}$ and set $X_i := V_{i,1} \times \cdots \times V_{i,k'/k}$.

We begin to describe the construction of the $\VMax(\phi)$ instance: Let $I = \{ i_1 \leq \ldots \leq i_h \}$ be a set of indices and let $\bar E(I)$ denote the set of non-edges $\bar e = (v_{i_1}, \ldots, v_{i_h}) \in V_{i_1} \times \cdots \times V_{i_h}$ in $H$. In the following, we fix such a non-edge $\bar e \in \bar E(I)$. We say that some tuple $x_i \in X_i$ \emph{avoids} $\bar e$ if not all vertices in $\bar e \intersect V_i$ are part of $x_i$.

Since $\Hdeg(\phi) = h$, we know that for all index sets $I$, there exists some restriction to a subformula $\phi_I(z_{i_1}, \ldots, z_{i_h})$ of degree $\deg(\phi_I) = h$. For that subformula, we can apply~\autoref{lem:coordinate-gadget} to obtain a coordinate gadget \raisebox{0pt}[0pt][0pt]{$\{\phi_{I,1}, \ldots, \phi_{I, \ell(I)}\}$} so that
\begin{equation*}
  \sum_{j=1}^{\ell(I)} \phi_{I,j}(z_{i_1}, \ldots, z_{i_h}) = \beta_1(I) - \beta_2(I) \mult z^I.  
\end{equation*}
Recall that each $\phi_{I, j}$ is a unary transformation of $\phi'$, which in turn is a restriction of $\phi$. It follows that $\phi_{I, j}$ is also a unary transformation of $\phi$. Let $\tau_{I,j,1}, \ldots, \tau_{I,j,k}$ denote the unary functions used to transform $\phi$ into $\phi_{I,j}$. Finally, let $Y = \bigcup_I \bar E(I) \times [\ell(I)]$. We describe how to assign the entries $x_i[y]$, for $x_i \in X_i$ and $y = (\bar e, j) \in \bar E(I) \times [\ell(I)]$:
\begin{equation*}
  x_i[y] =
  \begin{cases}
    \tau_{I,j,i}(0) = \tau_{I,j,i}(1) & \text{if $i \not\in I$ (in which case $\tau_{I,j,i}$ is constant),} \\
    \tau_{I,j,i}(0) & \text{if $i \in I$ and $x_i$ avoids $\bar e$,} \\
    \tau_{I,j,i}(1) & \text{if $i \in I$ and $x_i$ does not avoid $\bar e$.}
  \end{cases}
\end{equation*}
We claim that there exists a $k'$-hyperclique in $H$ if and only if the $\VMax(\phi)$ instance over the vectors $X_1 \union \cdots \union X_k$ and dimensions $Y$ has optimal value $\beta^* := \sum_I \beta_1(I) \mult |\bar E(I)|$.

First, suppose that there exists a $k'$-hyperclique $(v_{1,1}, \ldots, v_{1,k'/k}, \ldots, v_{k,1}, \ldots, v_{k,k'/k})$ and let $(x_1, \ldots, x_k)$ be their corresponding vectors, i.e. $x_i = (v_{i,1}, \ldots, v_{i,k'/k})$. Then, for any index set $I$ and any non-edge $\bar e \in \bar E(I)$, there has to be at least one vector $x_i$, $i \in I$, which avoids $\bar e$. Hence, for any $j \in [\ell(I)]$, $x_i[(\bar e, j)] = \tau_{I,j,i}(0)$. As a consequence, the contribution of $\sum_j \phi_I(x_1[(\bar e, j)], \ldots, x_k[(\bar e, j)])$ is exactly $\beta_1(I)$. Since this argument holds for any index set $I$ and any non-edge $\bar e \in \bar E(I)$, the total value of this instance is $\beta^*$.

For the other direction, suppose that $(v_{1,1}, \ldots, v_{1,k'/k}, \ldots, v_{k,1}, \ldots, v_{k,k'/k})$ does not form a $k'$-hyperclique in $H$, and let $(x_1, \ldots, x_k)$ be their corresponding vectors. Among these $k'$ vertices, there exists at least one tuple $\bar e = (v_{i_1}, \ldots, v_{i_h})$ which forms a non-edge; let $I = \{i_1, \ldots, i_h\}$. Hence, for every $i \in I$ and every $j \in [\ell(I)]$, $x_i[(\bar e, j)] = \tau_{I, j, i}(1)$. It follows that the contribution of all coordinates $(\bar e, j)$ is:
\begin{align*}
  \sum_{j=1}^{\ell(I)} \phi\big(x_1[(\bar e, j)], \ldots, x_k[(\bar e, j)]\big)
  &= \sum_{j=1}^{\ell(I)} \phi_{I, j}\big(x_{i_1}[(\bar e, j)], \ldots, x_{i_h}[(\bar e, j)]\big) \\
  &= \beta_1(I) - \beta_2(I) \mult 1 < \beta_1(I),
\end{align*}
and therefore, the total value of $(x_1, \ldots, x_k)$ is less than $\beta^*$. This completes the correctness proof.

To analyze the running time, note that $|X_i| \leq \Order(n^{k'/k})$, $|Y| \leq \Order(n^h)$ and $m \leq \sum_i |X_i| |Y| \leq \Order(n^{k'/k+h})$. It also takes time $\Order(n^{k'/k+h})$ to construct the $\VMax(\phi)$ instance. Assuming that we can solve $\VMax(\phi)$ in time $\Order(m^{k-\delta})$, we can test whether $H$ contains a $k'$-hyperclique in time $\Order(n^{(k'/k + h)(k-\delta)})$. By choosing $k' \geq k^2 h \delta^{-1}$, this becomes $\Order(n^{k'-h\delta})$, which contradicts the \hUniformHyperClique{h} hypothesis.
\end{proof}

\subsection{Algorithmic Results} \label{sec:exact:sec:alg}
For the remaining part of~\autoref{thm:exact-opt-vp}, we present two algorithms to solve low-hardness optimization problems: First, if $\Hdeg(\phi) \leq 1 < k$, then $\VOpt(\phi)$ admits a linear-time improvement. Second, if $\Hdeg(\phi) \leq 2 < k$, then we still get a polynomial improvement using fast matrix multiplication. All in all, this approach is similar to~\cite{BringmannFK19}.

In both algorithms, we brute-force over some variables $x_i$ and solve the remaining $k = 2$ or $k = 3$ problem in a more clever way. However, recall that after instantiating some variables the remaining optimization problem is a hybrid problem.

\begin{lemma}[Exact Optimization for $\Hdeg(\phi) \leq 1$] \label{lem:exact-opt-hardness-1}
Let $\phi$ be a Boolean function on $k$ inputs of degree hardness $\Hdeg(\phi) \leq 1 < k$. Then $\VOpt(\phi)$ can be solved in time $\Order(m^{k-1})$.
\end{lemma}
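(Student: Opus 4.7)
The plan is to exploit the degree-hardness structure to decompose the objective so that the last two variables can be optimized independently, leaving only a brute-force over $k-2$ variables combined with a linear-time inner optimization. First I would invoke \autoref{prop:optimization-hardness-fourier}: the assumption $\Hdeg(\phi) \leq 1$ together with $k \geq 2$ yields a pair $\{i, j\} \in \binom{[k]}{2}$ such that $\fourier\phi(T) = 0$ for every superset $T \supseteq \{i, j\}$. Equivalently, the multilinear polynomial extension of $\phi$ contains no monomial involving both $z_i$ and $z_j$. After reindexing so that $\{i,j\} = \{k-1, k\}$, I would write
\begin{equation*}
  \phi(z_1, \dots, z_k) = P(z_1, \dots, z_{k-2}) + z_{k-1}\, Q(z_1, \dots, z_{k-2}) + z_k\, R(z_1, \dots, z_{k-2})
\end{equation*}
for multilinear polynomials $P, Q, R$ of constant description size (independent of~$m$).

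The algorithm then iterates over every tuple $(x_1, \dots, x_{k-2}) \in X_1 \times \cdots \times X_{k-2}$. For each such fixing, I compute the ``score'' vectors $\mathbf{p}, \mathbf{q}, \mathbf{r} \in \mathbb{Z}^d$ defined by $\mathbf{p}[y] := P(x_1[y], \dots, x_{k-2}[y])$, $\mathbf{q}[y] := Q(x_1[y], \dots, x_{k-2}[y])$, $\mathbf{r}[y] := R(x_1[y], \dots, x_{k-2}[y])$, so that
\begin{equation*}
  \Val(x_1, \dots, x_k) = \textstyle\sum_{y} \mathbf{p}[y] + \langle x_{k-1}, \mathbf{q} \rangle + \langle x_k, \mathbf{r} \rangle.
\end{equation*}
Since this expression is linear in $x_{k-1}$ and in $x_k$ \emph{separately}, the best $x_{k-1} \in X_{k-1}$ and the best $x_k \in X_k$ can be found independently by a single scan of each family. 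The minimization variant is identical with ``$\max$'' replaced by ``$\min$'', or one can invoke \autoref{prop:optimization-hardness-negation} and run the maximization algorithm on $\neg\phi$.

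For the running time, each of the $n^{k-2}$ iterations costs $O(m)$: at every coordinate where all of $x_1, \dots, x_{k-2}$ vanish, the scores take the same constant values $P(\mathbf{0}), Q(\mathbf{0}), R(\mathbf{0})$, so we precompute that bulk contribution and only evaluate $P, Q, R$ on the $O(m)$ ``active'' coordinates; computing the sparse inner products $\langle x_{k-1}, \mathbf{q}\rangle$ across $X_{k-1}$ (and $\langle x_k, \mathbf{r}\rangle$ across $X_k$) also takes $O(m)$ total. Multiplying yields $n^{k-2} \cdot O(m) = O(m^{k-1})$, using the standard convention $n = O(m)$. I do not anticipate a serious obstacle; the only delicate point is the sparse-representation bookkeeping for coordinates at which all fixed vectors are zero, which is routine.
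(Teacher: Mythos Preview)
Your proposal is correct and follows essentially the same approach as the paper: use \autoref{prop:optimization-hardness-fourier} to find two indices $\{i,j\}$ with no joint monomial, brute-force the remaining $k-2$ variables, and observe that the residual two-variable objective decomposes additively into a term depending only on $x_{k-1}$ and a term depending only on $x_k$, each optimizable by a linear scan. Your write-up gives slightly more implementation detail on the sparse bookkeeping (the bulk contribution from all-zero coordinates) than the paper's, but the argument is the same.
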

\begin{proof}
By~\autoref{prop:optimization-hardness-fourier}, $\Hdeg(\phi) \leq 1$ implies that there exist distinct variables $a, b \in [k]$ such that \raisebox{0pt}[0pt][0pt]{$\fourier\phi(S) = 0$} for all $S \supseteq \{a, b\}$. We explicitly enumerate all choices of the other $k-2$ variables in $[k] \setminus \{a, b\}$. It remains to solve a two-variable hybrid vector problem $\max_{x_1, x_2} \counting_y \phi_y(x_1[y], x_2[y])$ in linear time. However, by our choice of $a, b$ we know that $\deg(\phi_y) \leq 1$ for all $y$ and thus the objective becomes
\begin{equation*}
    \max_{x_1 \in X_1} \sum_{y \in Y} \fourier\phi_y(\{1\}) \mult x_1[y] + \max_{x_2 \in X_2} \sum_{y \in Y} \fourier\phi_y(\{2\}) \mult x_2[y] + \sum_{y \in Y} \fourier\phi_y(\emptyset).
\end{equation*}
These summands can be computed independently by a single pass through the edge set and so the total running time is $\Order(n^{k-2} m) = \Order(m^{k-1})$. Note that the same approach works for minimization problems.
\end{proof}

We present the second algorithm as a reduction from $\VOpt(\phi)$ to the \hUniformkHyperClique{h}{(h+1)} problem. For $h = 2$, the improvement then follows as a corollary.

We need the following subroutine for computing maximum-weight hypercliques. That is, given an $h$-uniform $k$-partite hypergraph with integral edge weights $w$, the goal is to return a $k$-hyperclique $(v_1, \ldots, v_k)$ maximizing the total edge weight $\sum_{i_1 < \ldots < i_h} w(v_{i_1}, \ldots, v_{i_h})$.

\begin{lemma}[Maximum-Weight \HyperClique{}] \label{lem:maximum-hyperclique}
Assume that the \hUniformkHyperClique{h}{k} problem can be solved in time $\Order(n^{k-\delta})$ for some $\delta > 0$. Then, given an $h$-uniform $k$-partite hypergraph with integral edge weights $w$ satisfying \raisebox{0pt}[0pt][0pt]{$\sum_{e \in E(H)} |w(e)| \leq \Order(n^h)$} (i.e., the average edge weight is constant), we can compute a maximum-weight $k$-hyperclique in time $\Order(n^{k-\delta'})$ for some $\delta' > 0$.
\end{lemma}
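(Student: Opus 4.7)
The plan is to reduce the weighted problem to polynomially many instances of unweighted $h$-uniform $k$-hyperclique detection by combining a heavy/light edge split with enumeration over weight profiles. Signs are handled symmetrically (for instance, running the reduction separately for the positive and negative parts of $w$, or absorbing signs into the profile enumeration).

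Fix a threshold $L := n^{\alpha}$ for some small $\alpha > 0$ to be chosen later. Call an edge $e$ \emph{heavy} if $|w(e)| > L$ and \emph{light} otherwise. By the sparsity hypothesis $\sum_{e} |w(e)| \leq O(n^h)$, there are at most $O(n^h/L) = O(n^{h-\alpha})$ heavy edges. The optimum $k$-hyperclique either uses no heavy edge or at least one; I would handle these cases separately and return the better answer.

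\emph{Light-only case.} If the optimum uses only light edges, each of its $\binom{k}{h}$ edge positions carries a weight in $\{-L, \ldots, L\}$. I enumerate all weight profiles $(w_1, \ldots, w_{\binom{k}{h}}) \in \{-L, \ldots, L\}^{\binom{k}{h}}$; for each, I restrict the hypergraph so that the $i$-th edge position is forced to have weight exactly $w_i$, then invoke the unweighted detection algorithm in time $O(n^{k-\delta})$. Reporting the maximum $\sum_i w_i$ over profiles on which detection succeeds yields the best light-only hyperclique. The cost is $O(L^{\binom{k}{h}} \cdot n^{k-\delta}) = O(n^{\alpha\binom{k}{h} + k - \delta})$, which is $O(n^{k - \delta'})$ for some $\delta' > 0$ provided $\alpha < \delta / \binom{k}{h}$.

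\emph{Heavy-edge case.} Here I enumerate which heavy edge $e_0 = \{v_1, \ldots, v_h\}$ is used (at most $O(n^{h-\alpha})$ options) and fix its vertices. What remains is to find the best extension to a $k$-hyperclique, i.e., $k-h$ further vertices maximizing the total edge weight. This sub-problem is handled inductively on $k$: the constraints from the fixed vertices become pre-filters on candidate sets (lower-arity constraints obtained by contracting the fixed vertices), and the residual problem is again a max-weight hyperclique whose total absolute weight is bounded by that of the original instance. Picking $\alpha$ so that the heavy-edge enumeration combined with the inductive sub-problem cost stays below $n^k$ (alongside the light-only bound) yields the claimed $O(n^{k-\delta'})$ runtime.

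\textbf{Main obstacle.} The delicate part is the heavy-edge recursion: after fixing $h$ vertices of $e_0$, the reduced instance is not syntactically a plain max-weight $(k-h)$-hyperclique instance but has mixed-arity constraints from hyperedges that mingle fixed and free vertices. The main work is in formalizing this as a standard instance (or a small collection of them) on which the inductive hypothesis applies, and in verifying that the $O(n^h)$ total-weight bound is preserved through contraction so that the recursion closes. A secondary tension is that $\alpha$ enters the two cases with opposite signs in the exponent, so balancing it to yield a single $\delta' > 0$ requires careful bookkeeping; once this is done, the overall runtime is $O(n^{k-\delta'})$ as claimed.
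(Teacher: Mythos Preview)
Your light-only case is exactly the paper's argument: enumerate the $(2L+1)^{\binom{k}{h}}$ weight profiles and run the unweighted detector for each.

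The heavy-edge case, however, you have overcomplicated. No recursion or induction is needed. After fixing a heavy edge $e_0 = \{v_1,\ldots,v_h\}$, simply brute-force over the remaining $k-h$ vertices: there are at most $n^{k-h}$ choices, and for each tuple you compute the total weight of the resulting $k$-hyperclique in $O\bigl(\binom{k}{h}\bigr) = O(1)$ time by summing the relevant edge weights. So the heavy case costs $O(n^{h-\alpha}) \cdot O(n^{k-h}) = O(n^{k-\alpha})$ in total. There is no ``mixed-arity'' sub-problem to massage back into standard form, and no need to verify that the total-weight bound is preserved through contraction; you are just exhaustively extending each heavy edge.

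With this simplification the balancing is immediate: the total running time is $O(n^{k-\alpha} + n^{k + \alpha\binom{k}{h} - \delta})$, and any $0 < \alpha < \delta/\binom{k}{h}$ gives $O(n^{k-\delta'})$ for some $\delta' > 0$. This is precisely the paper's proof.
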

\begin{proof}
Let $V_1 \union \ldots \union V_k$ be the vertex partition of the given hypergraph~$H$. We call an edge $e \in E(H)$ \emph{heavy} if $|w(e)| > n^\gamma$ (for some parameter $\gamma$ to be set later), and \emph{light} otherwise. Since \raisebox{0pt}[0pt][0pt]{$\sum_{e \in E(H)} |w(e)| \leq \Order(n^h)$}, there are at most $\Order(n^h / n^\gamma) = \Order(n^{h-\gamma})$ heavy edges. For each heavy edge $e = \{v_{i_1}, \ldots, v_{i_h}\}$ (with $v_{i_j} \in V_{i_j}$), we iterate over all $(k-h)$-tuples of remaining vertices $v_i \in V_i$ for $i \not\in \{i_1, \ldots, i_h\}$ and keep track of the maximum weight of any hyperclique. In this way, we compute the maximum weight of all hypercliques involving a heavy edge in time $\Order(n^{h-\gamma} n^{k-h}) = \Order(n^{k-\gamma})$.

After this step, we remove all heavy edges and therefore the maximum edge weight in $H$ is bounded by $n^\gamma$. Now, it is feasible to enumerate all combinations of edge weights $w_I \in \{-n^\gamma, \ldots, n^\gamma\}$ for all $I = \{i_1 < \ldots < i_h\}$. For any such combination $(w_I)_I$, we remove all edges $\{v_{i_1}, \ldots, v_{i_h}\}$ from $H$ unless $w(v_{i_1}, \ldots, v_{i_h}) = w_{\{i_1, \ldots, i_h\}}$. Therefore, any hyperclique in the remaining graph is of weight $\sum_I w_I$. We use the $\Order(n^{k-\delta})$ algorithm to test whether there exists such a clique, and by repeating the process for all weight combinations $(w_I)_I$ we can keep track of the maximum weight of any hyperclique. All in all, there are $n^{\gamma \binom kh}$ many weight combinations, so this step takes time $\Order(n^{k + \gamma \binom kh - \delta})$. 

The total running time is $\Order(n^{k-\gamma} + n^{k + \gamma \binom kh - \delta})$, which for sufficiently small $\gamma = \gamma(k, h, \delta)$ is $\Order(n^{k-\delta'})$ for some $\delta' > 0$.
\end{proof}

\begin{lemma}[Exact Optimization via \HyperClique{}] \label{lem:exact-to-hyperclique}
Let $\phi$ be a Boolean function on $k$ inputs of degree hardness $h = \Hdeg(\phi) < k$. If the \hUniformkHyperClique{h}{(h+1)} problem can be solved in time $\Order(n^{h+1-\delta'})$ for some $\delta' > 0$, then $\VOpt(\phi)$ can be solved in time $\Order(m^{k-\delta})$ for some $\delta > 0$.
\end{lemma}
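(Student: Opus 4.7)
The plan is to brute-force over all but $h+1$ variables and then reduce the remaining $(h+1)$-variable hybrid problem to maximum-weight hyperclique detection. By~\autoref{prop:optimization-hardness-fourier}, since $\Hdeg(\phi) = h < k$, there exists an index set $S \subseteq [k]$ of size $h+1$ such that $\fourier\phi(T) = 0$ for every $T \supsetneq S$; equivalently, for any fixing of the variables $(x_i)_{i \notin S}$, the resulting function on the remaining $h+1$ variables has polynomial degree at most $h$. Enumerating all $n^{k-h-1}$ choices of $(x_i)_{i \notin S}$ produces that many hybrid subproblems over $h+1$ vector sets, each with the property that every coordinate's function $\phi_y$ has polynomial degree at most $h$.

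For each such subproblem, expand the objective via the multilinear extension as
\begin{equation*}
  \sum_{y} \phi_y(x_1[y], \dots, x_{h+1}[y]) \;=\; \sum_{T \subseteq [h+1],\, |T| \le h} I_T\bigl((x_i)_{i \in T}\bigr),
\end{equation*}
where $I_T((x_i)_{i \in T}) := \sum_y \fourier{\phi_y}(T) \prod_{i \in T} x_i[y]$ is integer-valued (since Fourier coefficients of Boolean functions are integers). Now build a complete $(h+1)$-partite $h$-uniform hypergraph with parts $V_i = X_i$ and assign to each hyperedge $e$ indexed by a size-$h$ set $T \subseteq [h+1]$ and tuple $(x_i)_{i \in T}$ the integer weight
\begin{equation*}
  w(e) \;:=\; N \cdot I_T\bigl((x_i)_{i \in T}\bigr) \;+\; \sum_{T' \subsetneq T} \frac{N}{h+1-|T'|} \cdot I_{T'}\bigl((x_i)_{i \in T'}\bigr),
\end{equation*}
with $N = \mathrm{lcm}(1, 2, \dots, h+1)$. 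Since any $T' \subseteq [h+1]$ of size $t < h$ is contained in exactly $h+1-t$ size-$h$ subsets of $[h+1]$, summing edge weights over any clique $(x_1, \dots, x_{h+1})$ yields exactly $N$ times the objective value; hence the optimum is recoverable from a maximum-weight hyperclique. All $I_T$ values can be precomputed in time $O(n^{h-1}m)$ by iterating over the $1$-entries of the input vectors and updating the affected $I_T$ tables.

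The main obstacle is that applying~\autoref{lem:maximum-hyperclique} as a black box requires $\sum_e |w(e)| \le O(n^h)$, whereas a routine calculation bounds this sum only by $O(n^{h-1} m)$ (using $|I_T((x_i))| \le C \sum_y \prod_{i \in T} x_i[y]$ and $\sum_{(x_i)_{i \in T}} \sum_y \prod_i x_i[y] = \sum_y \prod_i d_{i,y} \le n^{h-1} m$), which is too large when $m \gg n$. Rather than invoking the lemma verbatim, I would rerun its heavy/light decomposition directly with this weaker weight bound: call a hyperedge heavy if $|w(e)| > n^\gamma$ and light otherwise. There are at most $O(n^{h-1-\gamma} m)$ heavy hyperedges; each is handled by enumerating the single remaining vertex in $O(n)$ time and computing the full objective via $O(1)$ lookups in the precomputed $I_T$ tables, for a total of $O(n^{h-\gamma} m)$. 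On the light hyperedges, iterate over all $(2n^\gamma+1)^{h+1}$ weight profiles $(w_I)_{|I|=h}$, in each profile retain only light hyperedges matching the prescribed weight, and invoke the assumed \hUniformkHyperClique{h}{(h+1)} detection algorithm in $O(n^{h+1-\delta'})$ time. Choosing $\gamma$ sufficiently small in terms of $\delta'$ (concretely $\gamma < \delta'/(h+1)$) and using $n \le O(m)$, each hybrid subproblem runs in $O(m^{h+1-\delta''})$ time for some $\delta'' > 0$. Multiplying by the $n^{k-h-1} \le O(m^{k-h-1})$ outer brute-force iterations yields the claimed bound $O(m^{k-\delta})$ for some $\delta > 0$.
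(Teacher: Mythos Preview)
Your proof is correct and follows the same approach as the paper: brute-force down to $h+1$ variables, expand the objective via the multilinear extension, encode it as edge weights of an $(h+1)$-partite $h$-uniform hypergraph, and solve via the heavy/light decomposition plus hyperclique detection.

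Two remarks. First, the condition should read $T\supseteq S$ rather than $T\supsetneq S$: you also need $\fourier\phi(S)=0$ to conclude that every restriction has degree at most $h$, and this is exactly what \autoref{prop:optimization-hardness-fourier} yields. Second, your ``main obstacle'' dissolves if you apply \autoref{lem:maximum-hyperclique} with the vertex-count parameter taken to be $m$ rather than $n$ (each part has $|X_i|\le m$ vertices): the hypothesis of the lemma then becomes $\sum_e|w(e)|\le O(m^h)$, which your own bound $O(n^{h-1}m)\le O(m^h)$ already satisfies, and the lemma directly gives time $O(m^{h+1-\delta'})$ per subproblem. This is precisely what the paper does, so your hand-rerun of the heavy/light argument, while correct, is unnecessary.
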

\begin{proof}
The high-level idea is similar to~\autoref{lem:exact-opt-hardness-1}. Using~\autoref{prop:optimization-hardness-fourier}, there exist distinct indices $i_1, \ldots, i_{h+1} \in [k]$ so that $\fourier\phi(S) = 0$ for all $S \supseteq \{i_1, \ldots, i_{h+1}\}$. After brute-forcing over all $k - (h + 1)$ variables in $[k] \setminus \{i_1, \ldots, i_{h+1}\}$, it remains to solve a hybrid $(h+1)$-variable problem $\max_{x_1, \dots, x_{h+1}} \counting_y \phi_y(x_1[y], \ldots, x_{h+1}[y])$ in time~$\Order(m^{h+1-\delta})$. By the choice of $i_1, \ldots, i_{h+1}$, we are guaranteed that $\deg(\phi_y) \leq h$ for all coordinates $y$. Hence, we can express the objective as
\begin{equation*}
    \max_{x_1, \ldots, x_{h+1}} \sum_{\substack{i_1 < \ldots < i_\ell\\\ell \leq h}} \sum_{y \in Y} \fourier\phi_y(\{i_1, \ldots, i_\ell\}) \mult x_{i_1}[y] \ldots x_{i_\ell}[y].
\end{equation*}

The key observation is that each term depends on at most $h$ of the vectors $x_1, \ldots, x_{h+1}$. Thus, we can construct a complete $(h+1)$-partite $h$-uniform hypergraph $H$ on the vertices $X_1 \union \cdots \union X_{h+1}$ and encode the objective on the edge weights. As a first step, the encoding uses hyperedges of arities $1, \ldots, h$; we will later fix the construction to include hyperedges of arity $h$ only. Start with zero weights on all edges. For all $i_1 < \ldots < i_\ell$ with $0 < \ell \leq h$, increment all edge weights $w(x_{i_1}, \ldots, x_{i_\ell})$ by
\begin{equation*}
    \sum_{y \in Y} \fourier\phi_y(\{i_1, \ldots, i_\ell\}) \mult x_{i_1}[y] \ldots x_{i_\ell}[y].
\end{equation*}
We can compute these values in time $\Order(m^h)$ by enumerating all tuples of nonzero entries in $X_{i_1}, \ldots, X_{i_\ell}$, respectively. The empty sequence ($\ell = 0$) contributes only a constant offset in the objective and can thus be ignored. Observe that the weight of any hyperclique $(x_1, \ldots, x_{h+1})$ is exactly $\sum_{\substack{i_1 < \ldots < i_\ell,\ell \leq h}} \sum_{y \in Y} \fourier\phi_y(\{i_1, \ldots, i_\ell\}) \mult x_{i_1}[y] \ldots x_{i_\ell}[y]$. Hence, we have reduced the problem to finding a hyperclique of maximum weight in $H$. Note that each nonzero entry in a vector $x_i$ contributes to at most $n^{\ell-1}$ hyperedges of arity~$\ell$ in~$H$. Therefore, the total weight of all hyperedges of arity $\ell$ is bounded by $\Order(n^{\ell-1} m) = \Order(m^\ell)$.

We continue to fix the issue that $H$ contains hyperedges of arity less than~$h$: For any edge $(x_{i_1}, \ldots, x_{i_\ell})$, we fix some distinct indices $i_{\ell+1}, \ldots, i_h \not\in \{i_1, \ldots, i_\ell\}$ and iterate over all $(h - \ell)$-tuples of vertices $(x_{\ell+1}, \ldots, x_h) \in X_{\ell+1} \times \cdots \times X_h$. We add $w(x_1, \ldots, x_\ell)$ to all weights $w(x_1, \ldots, x_h)$ and delete the hyperedge $(x_1, \ldots, x_\ell)$ afterwards. In that way, the weight of every $(h+1)$-hyperclique remains fixed, as the weight of any small-arity hyperedge is transferred to some arity-$h$ hyperedge. Moreover, after this step $H$ is $h$-uniform. The total edge weight in $H$ is bounded by $\sum_{\ell \leq h} n^{h-\ell} \Order(m^{\ell}) = \Order(m^h)$ and the number of vertices is bounded by $\Order(m)$.

We apply~\autoref{lem:maximum-hyperclique} to find a maximum-weight hyperclique in $H$ in time $\Order(m^{h+1-\delta})$ for some $\delta > 0$. Recall that this process is repeated $n^{k-(h+1)}$ times, so the overall running time is dominated by $\Order(n^{k-(h+1)} m^{h+1-\delta}) = \Order(m^{k-\delta})$ for some $\delta > 0$, as desired.
\end{proof}

Recall that the \hUniformkHyperClique{2}{3} problem -- also known as triangle detection in standard graphs -- can be solved in time $\Order(n^{\omega})$. This yields the following corollary:

\begin{corollary}[Exact Optimization for $\Hdeg(\phi) \leq 2$] \label{cor:exact-opt-hardness-2}
Let $\phi$ be a Boolean function on $k$ inputs of degree hardness $\Hdeg(\phi) \leq 2 < k$. Then $\VOpt(\phi)$ can be solved in time $\Order(m^{k-\delta})$, for some $\delta > 0$.
\end{corollary}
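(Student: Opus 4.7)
The plan is to derive the corollary directly from Lemma~\ref{lem:exact-to-hyperclique}, specialized to the case $h = 2$. First I would handle the degenerate subcase $\Hdeg(\phi) \leq 1$ (which is also covered by the hypothesis $\Hdeg(\phi) \leq 2$) by a direct appeal to Lemma~\ref{lem:exact-opt-hardness-1}, which already yields an $\Order(m^{k-1})$ algorithm for this regime. The remaining and more interesting case is $\Hdeg(\phi) = 2$, where the actual fast-matrix-multiplication argument comes into play.

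For $\Hdeg(\phi) = 2 < k$, I would invoke Lemma~\ref{lem:exact-to-hyperclique} with $h = 2$. The premise of that lemma requires an algorithm for \hUniformkHyperClique{2}{3} running in time $\Order(n^{3-\delta'})$ for some $\delta' > 0$. Since the $2$-uniform version of the $3$-hyperclique problem is simply triangle detection in a tripartite graph, it can be solved in time $\Order(n^\omega)$ via standard reductions to Boolean matrix multiplication, where $\omega < 2.373$ is the exponent of square matrix multiplication cited in Section~\ref{sec:preliminaries}. Setting $\delta' := 3 - \omega > 0$, the hypothesis of Lemma~\ref{lem:exact-to-hyperclique} is satisfied.

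The conclusion of Lemma~\ref{lem:exact-to-hyperclique} then directly gives an $\Order(m^{k-\delta})$-time algorithm for $\VOpt(\phi)$ for some $\delta > 0$, completing the proof. There is no real obstacle here: the whole work has already been done in Lemma~\ref{lem:exact-to-hyperclique} (which reduces the problem to maximum-weight $3$-hyperclique after brute-forcing $k - 3$ variables) and in Lemma~\ref{lem:maximum-hyperclique} (which reduces max-weight hyperclique to the unweighted detection problem when edge weights are constant on average). The only ingredient not folded into those lemmas is the unconditional use of fast matrix multiplication to break the cubic barrier for triangle detection, which is the sole source of the fast-matrix-multiplication dependency advertised in the theorem statement.
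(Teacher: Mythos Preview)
Your proposal is correct and matches the paper's approach: the paper derives the corollary directly from Lemma~\ref{lem:exact-to-hyperclique} with $h=2$, noting that \hUniformkHyperClique{2}{3} is triangle detection, solvable in time $\Order(n^\omega)$. Your explicit handling of the subcase $\Hdeg(\phi)\leq 1$ via Lemma~\ref{lem:exact-opt-hardness-1} is a minor (and harmless) addition that the paper leaves implicit.
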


\subsection{The Equivalence to \HyperClique{}} \label{sec:exact-equivalence}
In the course of the previous subsections, we proved both upper and lower bounds by connecting the optimization problems $\VOpt(\phi)$ of degree hardness $\Hdeg(\phi) = h$ to clique detection in $h$-uniform hypergraphs. \autoref{thm:exact-equivalence} summarizes both directions:

\thmexactequivalence*

\begin{proof}
By the equivalence of general problems $\Opt(\psi)$ and vector optimization problems $\VOpt(\phi)$ (\autoref{thm:equiv-vector-problems}), we only need to show the statement for vector problems $\VOpt(\phi)$. The second item is proven in~\autoref{lem:exact-to-hyperclique}. The first item is proven in~\autoref{lem:exact-hardness-hyperclique}, but only for~$h \geq 3$. Nevertheless, exactly the same proof works also for $h = 2$.
\end{proof}

\autoref{thm:exact-equivalence} strengthens the evidence for the \hUniformHyperClique{h} hypothesis (at least for the stronger version postulating no $\Order(n^{h+1-\delta})$ algorithm for the \hUniformkHyperClique{h}{h+1} problem), but we consider it an interesting result also for two additional reasons.

First, in the specific case $\Hdeg(\psi) = 2$, \autoref{thm:exact-equivalence} yields an explanation on why we can, but also \emph{need} to use fast matrix multiplication to efficiently solve $\Opt(\psi)$. Indeed, any algorithm solving $\Opt(\psi)$ in time $\Order(m^{k-\delta})$, translates directly to an algorithm for finding cliques in (ordinary) graphs in time $\Order(n^{k-\delta'})$ -- an improvement, which is labeled highly unlikely without the use of fast matrix multiplication~\cite{AbboudBW18}.

Second, it is interesting to compare \autoref{thm:exact-equivalence} to a similar, yet different equivalence result~\cite{BringmannFK19}: The \hUniformHyperClique{h} hypothesis holds if and only if for all $k \geq h$, and all $\MinSP_k$ graph formulas $\psi$ of and-hardness $\Hand(\psi) = h$, $\Zero(\psi)$ requires time $\Order(m^{k-\delta})$ for all $\delta > 0$. That equivalence is about \emph{classes} (in that it relates the class of \hUniformHyperClique{h} problems with the class of all problems $\Zero(\psi)$ of and-hardness~$h$), whereas \autoref{thm:exact-equivalence} is more \emph{problem}-centered (in the sense that \emph{every specific} graph formula $\psi$ is related with certain \hUniformHyperClique{h} problems).
\section{Approximation Algorithms for Maximization}\label{sec:approx-max}
In this section, we give our algorithmic results for approximating the maximization variant of our problems. As illustrated in \autoref{fig:spectrum-maximization}, there are three regimes of interest. 
We first present elementary constant-factor approximation algorithms (\autoref{sec:approximation-algs-maximization:sec:constant-factor}) and polynomial-factor approximation algorithms (\autoref{sec:approximation-algs-maximization:sec:polynomial-factor}). In~\autoref{sec:ptas-maximization} we then obtain approximation schemes by a reduction to the polynomial method.


\begin{figure}[t]
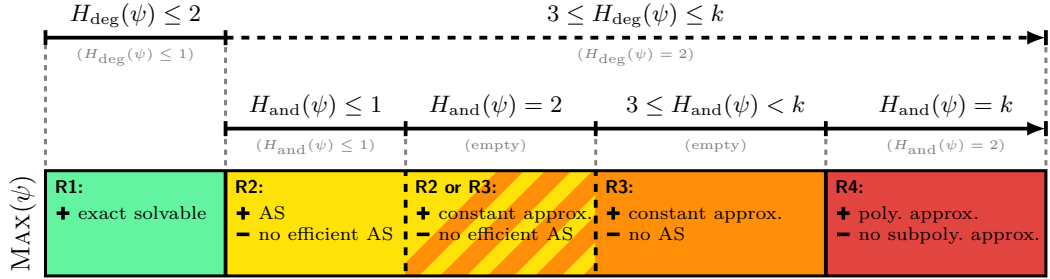

\begin{spectrumpicture}

\begin{scope}[upper]
    \drawmax
\end{scope}

\begin{scope}[yshift=.5 * \boxskip + \boxheight + \axisupperoffset]
    \draw[axis] (0, 0)
        -- node[above] {$\Hdeg(\psi) \leq 2$} node[below] {\note{$\Hdeg(\psi) \leq 1$}} ++(.18, 0);
    \draw[axis, dashed, ->] (.18, 0)
        -- node[above] {$3 \leq \Hdeg(\psi) \leq k$} node[below] {\note{$\Hdeg(\psi) = 2$}} ++(.82, 0);
    \drawaxismark{0}{0}
    \drawaxismark{.18}{0}
    \drawaxismark{1}{0}
\end{scope}
\begin{scope}[yshift=.5 * \boxskip + \boxheight + \axisloweroffset]
    \draw[axis, ->] (.18, 0)
    -- node[above] {$\Hand(\psi) \leq 1$} node[below] {\note{$\Hand(\psi) \leq 1$}} ++(.18, 0)
    -- node[above] {$\Hand(\psi) = 2$} node[below] {\note{empty}} ++(.18, 0)
    -- node[above] {$3 \leq \Hand(\psi) < k$} node[below] {\note{empty}} ++(.26, 0)
    -- node[above] {$\Hand(\psi) = k$} node[below] {\note{$\Hand(\psi) = 2$}} ++(.2, 0);
    \drawaxismark{.18}{0}
    \drawaxismark{.36}{0}
    \drawaxismark{.55}{0}
    \drawaxismark{.78}{0}
    \drawaxismark{1}{0}
\end{scope}

\draw[ind] (0, .5 * \boxskip + \boxheight)
    -- ++(0, \axisupperoffset - .5 * \axismarkheight);
\draw[ind] (.18, .5 * \boxskip + \boxheight)
    -- ++(0, \axisloweroffset - .5 * \axismarkheight)
    ++ (0, \axismarkheight)
    -- ++(0, \axisupperoffset - \axisloweroffset - \axismarkheight);
\draw[ind] (.36, .5 * \boxskip + \boxheight)
    -- ++(0, \axisloweroffset - .5 * \axismarkheight);
\draw[ind] (.55, .5 * \boxskip + \boxheight)
    -- ++(0, \axisloweroffset - .5 * \axismarkheight);
\draw[ind] (.78, .5 * \boxskip + \boxheight)
    -- ++(0, \axisloweroffset - .5 * \axismarkheight);
\draw[ind] (1, .5 * \boxskip + \boxheight)
    -- ++(0, \axisloweroffset - .5 * \axismarkheight)
    ++ (0, \axismarkheight)
    -- ++(0, \axisupperoffset - \axisloweroffset - \axismarkheight);

\end{spectrumpicture}
\caption{A reminder of our classification of $\Max(\psi)$ for $k \geq 3$. The pale labels indicate how to change the conditions to obtain the picture for $k = 2$.} \label{fig:spectrum-maximization}
\end{figure}

\subsection{Constant-Factor Approximation} \label{sec:approximation-algs-maximization:sec:constant-factor}
In this section, we provide a proof for the algorithmic part of~\autoref{thm:constant-approx-max-classification}. Phrased in terms of vectors problems (which is equivalent by~\autoref{thm:equiv-vector-problems}), we prove that:

\begin{lemma}[Constant-Factor Approximation for $\Hand(\phi) < k$] \label{lem:constant-approx-max-geq-two}
Let $\phi$ be a Boolean function on~$k$ inputs. If $\Hand(\phi) < k$ (or equivalently, $\phi$ does not have exactly one satisfying assignment), then there exists a constant-factor approximation for $\VMax(\phi)$ in time $\Order(m^{k-1})$.
\end{lemma}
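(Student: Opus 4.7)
The plan is to reduce to \autoref{lem:constant-approx-max-exactly-two}, which we may assume handles the special case of $\phi$ having exactly two satisfying assignments. If $\phi$ is unsatisfiable, then $\OPT = 0$ and returning $0$ is exact; otherwise, since $\Hand(\phi) < k$ precludes unique satisfiability, the satisfying set $S \subseteq \{0,1\}^k$ of $\phi$ has size $2 \leq |S| \leq 2^k$, which is a constant since $k$ is fixed.

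The algorithm enumerates all pairs $\{\alpha, \alpha'\} \in \binom{S}{2}$. For each such pair, let $\phi_{\alpha,\alpha'}$ denote the Boolean function supported exactly on $\{\alpha, \alpha'\}$, so that $\phi_{\alpha,\alpha'} \leq \phi$ pointwise. Invoke \autoref{lem:constant-approx-max-exactly-two} to compute a value $V_{\alpha,\alpha'}$ that is within a constant factor $C = C(k)$ of $\VMax(\phi_{\alpha,\alpha'})$, in time $O(m^{k-1})$. Output $\max_{\{\alpha,\alpha'\}} V_{\alpha,\alpha'}$. Since $\binom{|S|}{2} = O(1)$, the total running time is $O(m^{k-1})$.

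For correctness, fix an optimum $(x_1^*, \dots, x_k^*)$ of $\VMax(\phi)$ with value $\OPT$, and write $c_\alpha := \#\{y : (x_1^*[y], \dots, x_k^*[y]) = \alpha\}$, so that $\OPT = \sum_{\alpha \in S} c_\alpha$. By pigeonhole, some $\alpha^* \in S$ satisfies $c_{\alpha^*} \geq \OPT / |S| \geq \OPT / 2^k$. Since $|S| \geq 2$, there exists $\alpha' \in S \setminus \{\alpha^*\}$, and the pair $\{\alpha^*, \alpha'\}$ is enumerated by the algorithm. For this pair, $\VMax(\phi_{\alpha^*,\alpha'}) \geq c_{\alpha^*}(x_1^*, \dots, x_k^*) \geq \OPT/2^k$, while $\phi_{\alpha^*,\alpha'} \leq \phi$ forces $\VMax(\phi_{\alpha^*,\alpha'}) \leq \OPT$. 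Consequently $V_{\alpha^*,\alpha'} \in [\OPT/(2^k C), \OPT]$, and so does the maximum returned by the algorithm, yielding a constant-factor approximation with ratio $2^k C$.

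The substantive step is the preceding \autoref{lem:constant-approx-max-exactly-two}, where one must handle an arbitrary pair of (possibly non-complementary) satisfying assignments by brute-forcing the coordinates $X_i$ for indices on which $\alpha$ and $\alpha'$ agree and then invoking the opposite-pair \autoref{lem:constant-approx-max-opposite} on the remaining coordinates, with running time balanced against the target budget $O(m^{k-1})$. A subtle point worth explicitly checking in the pigeonhole argument above is that we never require the complement $\bar\alpha^*$ to be satisfying for $\phi$ — the partner $\alpha'$ is chosen freely from $S \setminus \{\alpha^*\}$, which keeps the bound $\phi_{\alpha^*,\alpha'} \leq \phi$ intact and thus prevents the returned value from exceeding $\OPT$.
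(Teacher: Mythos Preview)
Your proposal is correct and follows essentially the same approach as the paper: reduce to \autoref{lem:constant-approx-max-exactly-two} by covering the satisfying assignments with two-element subsets and applying pigeonhole to the most frequent assignment. The only cosmetic difference is that the paper fixes one satisfying assignment $\alpha^1$ and pairs it with each $\alpha^i$ (so $\ell-1$ calls), whereas you enumerate all $\binom{|S|}{2}$ pairs; both are $O(1)$ invocations and yield a constant-factor guarantee depending only on $k$.
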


We start with a definition. Let $\phi : \{0, 1\}^k \to \{0, 1\}$ be a Boolean function with the following property: Whenever $\alpha$ is a satisfying assignment of $\phi$, then the component-wise negation of $\alpha$ is also satisfying. We call $\phi$ an \emph{opposite} function. We proceed in three steps:
\begin{enumerate}
\item Exploiting the formula's structure, we design a constant approximation algorithm for all problems $\VMax(\phi)$ where $\phi$ is an opposite function (\autoref{lem:constant-approx-max-opposite}).
\item As a consequence, we turn this algorithm into a constant approximation for all problems $\VMax(\phi)$ where $\phi$ has \emph{exactly} two (arbitrary) satisfying assignments (\autoref{lem:constant-approx-max-exactly-two}).
\item Finally, let $\phi$ have \emph{at least} two satisfying assignments. We can ``cover'' all satisfying assignments by pairs and apply the previous algorithm (\autoref{lem:constant-approx-max-geq-two}).
\end{enumerate}

\begin{lemma} \label{lem:constant-approx-max-opposite}
Let $\phi$ be a $k$-variable opposite function. Then there exists a $(k + 1)$-approximation for $\VMax(\phi)$ in time $\Order(m^{k-1})$.
\end{lemma}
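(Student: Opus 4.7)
The plan is to use iterative local search together with a generalized-triangle-inequality analysis, mimicking the classical $3$-approximation for Furthest Neighbor that motivates the lemma. The algorithm starts with an arbitrary tuple $(x_1^{(0)}, \ldots, x_k^{(0)}) \in X_1 \times \cdots \times X_k$ and, for $i = 1, \ldots, k$, updates $x_i \gets \argmax_{x \in X_i} V(x_1, \ldots, x_{i-1}, x, x_{i+1}, \ldots, x_k)$ while holding the remaining coordinates at their current values; it returns $V$ of the final tuple $\vec v$. Each update scans $O(n)$ candidates and evaluates $V$ in $O(d)$ per candidate, so the total running time is $O(km) = O(m)$, comfortably within the $O(m^{k-1})$ budget.

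For the approximation ratio, let $\vec u$ denote an optimum tuple. I would introduce $k$ hybrid tuples $\vec h^{(1)}, \ldots, \vec h^{(k)}$ along the local-search trajectory, where $\vec h^{(i)}$ is the intermediate tuple just before iteration $i$ but with its $i$-th coordinate replaced by $u_i$. Local optimality at step $i$ together with the monotonicity of $V$ across updates gives $V(\vec h^{(i)}) \leq V(\vec v)$ for every $i$. The heart of the argument is a coordinate-wise covering lemma: for every $y \in [d]$ with $\phi(\vec u[y]) = 1$, at least one of $\phi(\vec v[y])$ or $\phi(\vec h^{(i)}[y])$ for $i \in [k]$ equals $1$. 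Summing this over $y$ yields $V(\vec u) \leq V(\vec v) + \sum_{i=1}^k V(\vec h^{(i)}) \leq (k+1)\, V(\vec v)$.

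The opposite property $\phi(\alpha) = \phi(\bar\alpha)$ is essential in establishing the covering lemma: it directly handles the extreme cases $\vec u[y] = \vec v[y]$ and $\vec u[y] = \overline{\vec v[y]}$, where $\phi(\vec v[y]) = \phi(\vec u[y]) = 1$ follows immediately. I expect the main obstacle to lie in the intermediate Hamming differences between $\vec u[y]$ and $\vec v[y]$: e.g., for the all-equal $\phi$ on $k = 3$ with $\vec u[y] = (1,1,1)$ and $\vec v[y] = (0,0,1)$, no single-bit replacement of $\vec v[y]$ produces a satisfying assignment, so the hybrids cannot simply perform isolated single-coordinate swaps. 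Resolving this will require choosing the hybrids to accumulate multiple optimum-coordinates along the iteration order, so that the sign-flip invariance granted by the opposite property eventually brings the hybrid back into the satisfying equivalence class of $\vec u[y]$, and then verifying that the resulting $k+1$ tuples cover every satisfying coordinate of $\vec u$. This coordinate-wise verification is where I expect the proof to concentrate its technical effort.
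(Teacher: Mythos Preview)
Your covering lemma is the gap, and the counterexample you gave already shows it is false: with $\phi$ the all-equal function on $k=3$, $\vec u[y]=(1,1,1)$ and $\vec v[y]=(0,0,1)$, none of $\vec v[y]$ or the single-coordinate replacements $(1,0,1),(0,1,1),(0,0,1)$ satisfy $\phi$. Your proposed fix---letting the hybrids accumulate several optimum coordinates, e.g.\ $\vec h^{(i)}=(u_1,\dots,u_i,v_{i+1},\dots,v_k)$---does restore coverage, but it destroys the other half of the argument: such an $\vec h^{(i)}$ differs from the intermediate tuple in more than one coordinate, so local optimality no longer gives $V(\vec h^{(i)})\le V(\vec v)$. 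I do not see a choice of hybrids along the local-search trajectory that simultaneously satisfies both requirements.

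The paper sidesteps this by reversing the roles of $\vec u$ and $\vec v$. It fixes an arbitrary tuple $\hat x$ once and for all, and takes as hybrids the \emph{optimum} with a single coordinate replaced by $\hat x_i$:
\[
\phi(u_1,\dots,u_k)\ \le\ \phi(\hat x_1,\dots,\hat x_k)\ +\ \sum_{i=1}^{k}\phi(u_1,\dots,u_{i-1},\hat x_i,u_{i+1},\dots,u_k).
\]
The covering is now immediate: if some $\hat x_i[y]=u_i[y]$ then the $i$-th hybrid at $y$ equals $\vec u[y]$ and is satisfying; otherwise $\hat x[y]$ is the bitwise complement of $\vec u[y]$ and the opposite property gives $\phi(\hat x[y])=1$. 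In your example $\hat x_3[y]=u_3[y]=1$, so the third hybrid is $(1,1,1)$. The price is that bounding the $i$-th hybrid requires maximizing over \emph{all} $(k-1)$-tuples with $\hat x_i$ fixed, i.e.\ running the $\MaxSP_{k-1}$ baseline in time $O(m^{k-1})$, rather than a single local-search sweep; this is exactly why the lemma only claims $O(m^{k-1})$.
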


The proof of \autoref{lem:constant-approx-max-opposite} relies on the following simple proposition:

\begin{proposition} \label{prop:opposite}
Let $\phi$ be a $k$-variable opposite function and let $a_1, \ldots, a_k, b_1, \ldots, b_k \in \{0, 1\}$. Then:
\begin{equation*}
    \phi(a_1, \ldots, a_k) \leq \phi(b_1, \ldots, b_k) + \sum_{1 \leq i \leq k} \phi(a_1, \ldots, a_{i-1}, b_i, a_{i+1}, \ldots, a_k).
\end{equation*}
\end{proposition}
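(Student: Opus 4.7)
The inequality is essentially trivial, and the plan is a short case analysis exploiting the defining property of opposite functions.

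First, observe that the right-hand side is a nonnegative integer, whereas the left-hand side lies in $\{0,1\}$. So the inequality is only nontrivial when $\phi(a_1,\dots,a_k) = 1$, and in that case it suffices to exhibit a single nonzero term on the right. I split into two cases based on the Hamming distance between $a = (a_1,\dots,a_k)$ and $b = (b_1,\dots,b_k)$.

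In the first case, there exists some index $i$ with $a_i = b_i$. Then the hybrid tuple $(a_1,\dots,a_{i-1},b_i,a_{i+1},\dots,a_k)$ coincides with $(a_1,\dots,a_k)$ entry-wise, so its $\phi$-value equals $\phi(a_1,\dots,a_k) = 1$, making the corresponding summand on the right-hand side contribute~$1$. In the second case, $a_i \neq b_i$ for every $i \in [k]$, which means $b$ is exactly the component-wise negation of $a$. Because $\phi$ is an opposite function and $a$ is satisfying, the negated tuple $b$ is also satisfying, so $\phi(b_1,\dots,b_k) = 1$ and the first term on the right-hand side contributes~$1$.

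Together these two cases cover all possibilities, yielding the required bound. There is no real obstacle: the only conceptual step is recognizing that the two cases (``$a$ and $b$ agree somewhere'' vs.\ ``$a$ and $b$ disagree everywhere'') correspond precisely to the two types of nonzero terms on the right, with the opposite-function property handling the latter.
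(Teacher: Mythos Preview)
Your proof is correct and follows exactly the same approach as the paper: reduce to the case $\phi(a_1,\dots,a_k)=1$, then split on whether $a$ and $b$ agree in some coordinate (in which case a summand equals $\phi(a)=1$) or disagree everywhere (in which case the opposite property gives $\phi(b)=1$).
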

\begin{proof}
If the $a_i$'s do not satisfy $\phi$, then the claimed inequality trivially holds. Thus, assume that $\phi(a_1, \ldots, a_k) = 1$. We distinguish two cases: If there exists some index $i$ with $b_i = a_i$, we have that $\phi(a_1, \ldots, a_{i-1}, b_i, a_{i+1}, \ldots, a_k) = 1$ as well. Hence, one of the terms in the sum on the right-hand side equals one. Otherwise, we have $b_i \neq a_i$ for all $i$. Recall that $\phi$ is opposite and $\phi(a_1, \ldots, a_k) = 1$, thus $\phi(b_1, \ldots, b_k) = 1$ and we also have a contribution of at least one on the right-hand side.
\end{proof}

\begin{proof}[Proof of~\autoref{lem:constant-approx-max-opposite}]
The previous proposition immediately suggests the following algorithm for $\VMax(\phi)$. Let $X_1, \ldots, X_k$ be the given sets of vectors.
\begin{enumerate}
\item Pick $k$ arbitrary vectors $\hat x_1 \in X_1, \ldots, \hat x_k \in X_k$.
\item Compute
\begin{equation*}
    v_0 := \counting_{y \in Y} \phi(\hat x_1[y], \ldots, \hat x_k[y]).
\end{equation*}
\item Compute
\begin{equation*}
    v_i := \max_{x_1, \ldots, x_{i-1}, x_{i+1}, \ldots, x_k} \counting_{y \in Y} \phi(x_1[y], \ldots, x_{i-1}[y], \hat x_i[y], x_{i+1}[y], \ldots, x_k[y])
\end{equation*} for all $i \in [k]$, using the $\MaxSP_{k-1}$ baseline algorithm.
\item Return $v := \max_i v_i$.
\end{enumerate}
We first analyze the approximation ratio. Obviously, the algorithm never returns an overestimation: $v \leq \OPT$. Let $(x_1, \ldots, x_k)$ be vectors corresponding to an optimal solution and fix an arbitrary coordinate $y$ for which $\phi(x_1[y], \ldots, x_k[y])$ holds. From~\autoref{prop:opposite} it follows that~$y$ contributes~$1$ to at least one of the values $v_0, v_1, \ldots, v_k$. Hence, by the pigeonhole principle, there exists an index~$i$ such that $v_i \geq \OPT/(k + 1)$. In particular, we have $v \geq \OPT/(k + 1)$ and thus the algorithm indeed computes a $(k + 1)$-approximation.

Next, focus on the running time. Steps~1,~2 and~4 are executed in linear time. The third step invokes the $\MaxSP{k-1}$ baseline algorithm a constant number of times which takes time $\Order(m^{k-1})$ in total.
\end{proof}

\begin{lemma} \label{lem:constant-approx-max-exactly-two}
Let $\phi$ be a Boolean function on $k$ inputs with exactly two satisfying assignments. Then there exists a $(k + 1)$-approximation for $\VMax(\phi)$ in time $\Order(m^{k-1})$.
\end{lemma}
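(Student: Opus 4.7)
The plan is to reduce to the opposite case handled by \autoref{lem:constant-approx-max-opposite} by brute-forcing over the variables whose indices correspond to positions where $\alpha$ and $\beta$ agree. Let $\alpha, \beta \in \{0,1\}^k$ be the two satisfying assignments of $\phi$, and define the \emph{agreeing set} $T = \{i \in [k] : \alpha_i = \beta_i\}$ and the \emph{disagreeing set} $D = [k] \setminus T$; note that $\alpha \neq \beta$ forces $|D| \geq 1$. The key structural observation is that the function $\phi'$ on variables $(z_i)_{i \in D}$, obtained from $\phi$ by substituting $z_i = \alpha_i$ for every $i \in T$, has exactly the satisfying assignments $(\alpha_i)_{i \in D}$ and $(\beta_i)_{i \in D}$; these two assignments disagree on every coordinate in $D$ by construction, so $\phi'$ is an opposite function on $|D|$ variables.

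When $|D| \geq 2$, the algorithm iterates over every tuple $(\hat x_i)_{i \in T} \in \prod_{i \in T} X_i$, restricts the coordinate set to $Y' = \{y \in Y : \hat x_i[y] = \alpha_i \text{ for all } i \in T\}$, and invokes \autoref{lem:constant-approx-max-opposite} on the resulting $|D|$-variable opposite instance of $\VMax(\phi')$ with vectors $(X_i)_{i \in D}$ and coordinates $Y'$; the final output is the maximum value over all iterations. For correctness, I would argue first that the value returned in any iteration is at most $\OPT$, since it corresponds to a feasible tuple of $\VMax(\phi)$; and second that the iteration using $(\hat x_i)_{i \in T} = (x_i^\ast)_{i \in T}$ for some optimum $x^\ast$ of $\VMax(\phi)$ already recovers $\OPT$ inside its sub-instance. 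Indeed, every coordinate $y$ with $\phi(x^\ast[y]) = 1$ must have $x_i^\ast[y] = \alpha_i$ for all $i \in T$ (so $y \in Y'$), and on $Y'$ the restricted function $\phi'$ evaluates identically to $\phi$. Combined with the $(|D|+1)$-approximation guaranteed by \autoref{lem:constant-approx-max-opposite}, this yields an overall approximation ratio of at most $|D|+1 \leq k+1$.

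For the running time in the case $|D| \geq 2$, the outer loop contributes at most $\prod_{i \in T} |X_i| \leq n^{|T|}$ iterations, and each invocation of \autoref{lem:constant-approx-max-opposite} on a sub-instance of sparsity at most $m$ runs in time $O(m^{|D|-1})$, for a total of $O(n^{|T|} \cdot m^{|D|-1}) \leq O(m^{k-1})$ using $n \leq m$. The main (mild) obstacle is the degenerate case $|D| = 1$, where this accounting would instead give $O(n^{k-1} \cdot m) = O(m^k)$. Here I would exploit that $\phi$ does not depend on the single variable in $D$, so the problem collapses to a $(k-1)$-variable exact optimization that the baseline algorithm solves in time $O(m^{k-1})$, which is trivially also a $(k+1)$-approximation.
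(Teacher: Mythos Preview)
Your proof is correct and follows essentially the same approach as the paper: brute-force over the agreeing positions and apply \autoref{lem:constant-approx-max-opposite} on the remaining opposite sub-instance, with the degenerate case $|D|=1$ handled separately via the baseline algorithm. Your exposition is in fact slightly more explicit than the paper's in that you spell out the restricted coordinate set $Y'$, whereas the paper simply says ``what remains is a vector maximization problem $\VMax(\phi')$''; both amount to the same argument.
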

\begin{proof}
Let $\alpha, \beta \in \{0, 1\}^k$ be the two satisfying assignments of $\phi$. Without loss of generality, we may assume that $\alpha_i \neq \beta_i$ for all $1 \leq i \leq r$ and $\alpha_i = \beta_i$ for all $r < i \leq k$, by reordering the indices $i$ if necessary. We distinguish the following cases for $r$:
\begin{itemize}
\item $r < 1$: This case is contradictory, since $\alpha$ and $\beta$ are distinct.
\item $r = 1$: Here $\alpha_1 \neq \beta_1$ is the only position in which $\alpha$ and $\beta$ differ, so $\phi$ does not depend on the first input. Therefore, we can use the $\MaxSP_{k-1}$ baseline algorithm to solve the problem in time $\Order(m^{k-1})$.
\item $r > 1$: We brute-force over all vectors $x_i$, $r < i \leq k$. What remains is a vector maximization problem $\VMax(\phi')$ on $r$ vectors, where $\phi'$ is an opposite function. Therefore we can apply~\autoref{lem:constant-approx-max-opposite} to solve any instance in time $\Order(m^{r-1})$. We repeat this step for all $n^{k-r}$ choices of $x_{r+1}, \ldots, x_k$ and obtain a total running time of $\Order(n^{k-r} m^{r-1}) = \Order(m^{k-1})$. \qedhere
\end{itemize}
\end{proof}

\begin{proof}[Proof of~\autoref{lem:constant-approx-max-geq-two}]
If $\phi$ is the constant $\false$ function, then the problem is trivial. So let $\alpha^1, \ldots, \alpha^\ell$ denote the satisfying assignments of $\phi$, where $\ell \geq 2$.
\begin{enumerate}
\item For all indices $1 \leq i \leq k$, let $\phi^i : \{0, 1\}^k \to \{0, 1\}$ be the indicator function of $\alpha^1$ and $\alpha^i$. That is, $\phi^i(z) = 1$ if and only if $z = \alpha^1$ or $z = \alpha^i$.
\item Compute a $(k + 1)$-approximation $v^i$ for $\VMax(\phi^i)$ using~\autoref{lem:constant-approx-max-exactly-two}.
\item Return $\max_i v^i$.
\end{enumerate}
The number of iterations in step 2 is bounded by $\ell = \Order(1)$, so the running time is bounded by $\Order(m^{k-1})$.

To analyze the approximation performance, let $(x_1, \ldots, x_k)$ be an optimal solution of value $\OPT$, and let $\alpha_i$ be the satisfying assignment which occurs most frequently in $(x_1, \ldots, x_k)$. Then, by the pigeonhole principle, it follows that the optimal value of $\VMax(\phi^i)$ is at least $\OPT/\ell$. Since $v^i$ is a $(k + 1)$-approximation to that maximization problem, it follows that $v^i \geq 1/(k + 1) \mult \OPT/\ell = \Omega(\OPT)$, so our algorithm outputs a constant-factor approximation.
\end{proof}

\subsection{Polynomial-Factor Approximation}\label{sec:approximation-algs-maximization:sec:polynomial-factor}
Next, we show that even the hardest maximization problems admit polynomial-factor approximations thereby proving the upper bound in~\autoref{thm:tight-polynomial-maximization}. Recall that hereby, we mean polynomial factors in $m$. Our algorithm heavily exploits the sparse input representation.

\begin{lemma}[Polynomial-Factor Approximation] \label{lem:poly-approx-max}
Let $\phi$ be an arbitrary Boolean function on~$k$ inputs. Then, for any $\varepsilon > 0$, there exists some $\delta > 0$ such that an $m^\varepsilon$-approximation of $\VMax(\phi)$ can be computed in time $\Order(m^{k-\delta})$.
\end{lemma}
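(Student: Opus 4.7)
The plan is to prove the statement by induction on $k$, after strengthening it to cover \emph{hybrid} problems $\VMax(\Phi)$, because after fixing one vector the residual problem is such a hybrid with a potentially different formula at each coordinate. The base case $k=1$ is immediate: $\max_{x_1 \in X_1} \sum_y \phi_y(x_1[y])$ can be evaluated exactly in $\Order(m)$ time by picking the best $x_1$ independently.

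For the inductive step, fix $\varepsilon > 0$, pick a small $\varepsilon' = \varepsilon / c$ for a suitable constant $c$, and declare a vector $x \in X_i$ to be \emph{heavy} if $|x|_1 > t := m^{\varepsilon'}$ and \emph{light} otherwise. The algorithm outputs the maximum of three candidates. First, $V_H$: for every $i \in [k]$ and every heavy $x_i \in X_i$, fix $x_i$ at position~$i$ and recursively $m^{\varepsilon'}$-approximate the resulting $(k-1)$-variable hybrid problem (whose per-coordinate formula is $\phi_y(\ldots, x_i[y], \ldots)$ with $x_i$ substituted). Second, $V_S := V(\hat x_1, \ldots, \hat x_k)$, where $\hat x_i$ is any fixed light vector in $X_i$ (if some $X_i$ is entirely heavy then $|X_i| \leq m/t$ and the first step already enumerates it). Third, the trivial lower bound $1$. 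Because the number of heavy vectors across all positions is at most $km/t$ and each recursive call costs $\Order(m^{k-1-\delta_{k-1}})$, the total time is $\Order(m^{k-\varepsilon'-\delta_{k-1}})$, and setting $\delta_k := \varepsilon' + \delta_{k-1} > 0$ delivers the required speedup.

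For the approximation analysis, partition the coordinates into $Y_0 := \{y : \phi_y(0, \ldots, 0) = 0\}$ and $Y_1 := \{y : \phi_y(0, \ldots, 0) = 1\}$, let $d_B := |Y_1|$, and consider an optimal tuple $x^* = (x_1^*, \ldots, x_k^*)$. If $x^*$ contains a heavy vector, the first step enumerates it at the correct position and the recursive call certifies $V_H \geq \OPT / m^{\varepsilon'}$. Otherwise $x^*$ is all-light, so its $Y_0$-contribution (which requires some $x_i^*[y] = 1$) is bounded by $\sum_i |x_i^*|_1 \leq kt$, giving $\OPT \leq d_B + kt$. On the other hand, the sparsest tuple $\hat x$ from the second step is all-light, so every $y \in Y_1$ with $\hat x[y] = (0, \ldots, 0)$ contributes $1$ and at most $kt$ coordinates see any nonzero entry of $\hat x$; hence $V_S \geq d_B - kt$. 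Either $d_B \geq 3kt$ and $V_S \geq \OPT / 2$, or $\OPT \leq 4kt = \Order(m^{\varepsilon'})$ and the trivial lower bound $1$ is already an $\Order(m^{\varepsilon'})$-approximation. Choosing the constants appropriately combines these regimes into an $m^\varepsilon$-approximation.

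The main obstacle I anticipate is that the recursion forces the problem into the hybrid setting even when the original $\phi$ is a single function, so the induction must be carried out for hybrid $\VMax(\Phi)$ throughout. A secondary subtlety is the $Y_0/Y_1$ split in the analysis: the sparsest-tuple trick in the second step is crucial for coordinates with $\phi_y(0, \ldots, 0) = 1$, where the naive bound $V \leq \sum_i |x_i|_1$ is far too weak, and it is this dichotomy together with the heavy-vector enumeration that lets the algorithm approximate arbitrary $\phi$ within a polynomial factor.
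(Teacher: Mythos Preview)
Your inductive, hybrid approach is sound and genuinely different from the paper's proof. The paper does not recurse: it first invokes the constant-factor approximation of Lemma~\ref{lem:constant-approx-max-geq-two} whenever $\phi$ has at least two satisfying assignments, and for the remaining case (a unique satisfying assignment $\alpha$) it splits on whether $\alpha$ has a $1$-coordinate. If $\alpha_i = 1$, any tuple whose $x_i$ has Hamming weight at most $m^\varepsilon$ has value at most $m^\varepsilon$, so a linear-time $\OPT > 0$ test (Lemma~\ref{lem:max-at-least-one}) handles the light part and only heavy $x_i$ need brute-forcing; if $\alpha = 0^k$, the paper picks a lightest vector from each $X_i$ and argues directly that this tuple is a $(k+1)$-approximation whenever $d > m^\varepsilon$. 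Your $Y_0/Y_1$ dichotomy subsumes both cases uniformly and removes the dependence on Lemma~\ref{lem:constant-approx-max-geq-two}, at the cost of carrying the hybrid setting through the induction. Both arguments ultimately rest on the same two ingredients: the heavy/light split, and the observation that an all-light tuple already realises almost all of the coordinates with $\phi_y(0,\dots,0)=1$.

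There is one small gap. Your third candidate, the ``trivial lower bound $1$'', is not valid when $\OPT = 0$: in the hybrid setting it can happen that no attainable bit pattern at any coordinate satisfies $\phi_y$, and then returning $1$ overshoots $\OPT$. The fix is easy and parallels Lemma~\ref{lem:max-at-least-one}: in time $\Order(m)$ compute, for every $y$, the set $S_{i,y} \subseteq \{0,1\}$ of values attained at coordinate $y$ by vectors in $X_i$ (from the degree of $y$ inside $X_i$), and check whether some $a \in \prod_i S_{i,y}$ satisfies $\phi_y$. If so, extract a witnessing tuple of value at least $1$ and use it in place of the bare constant~$1$; otherwise output~$0$. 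With this patch your argument goes through as written.
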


As a first step towards proving~\autoref{lem:poly-approx-max}, we prove the following simple lemma:

\begin{lemma} \label{lem:max-at-least-one}
Let $\phi$ be a Boolean function on $k$ inputs with exactly one satisfying assignment. Then, given a $\VMax(\phi)$ instance, we can decide distinguish whether $\OPT = 0$ or $\OPT > 0$ in time $\Order(m)$.
\end{lemma}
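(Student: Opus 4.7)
The plan is to observe that, because $\phi$ has a unique satisfying assignment, the problem decouples over coordinates, which reduces the decision to a trivial per-coordinate check. Let $\alpha = (\alpha_1, \ldots, \alpha_k) \in \{0,1\}^k$ be the unique satisfying assignment of $\phi$. Then for any tuple $(x_1, \ldots, x_k)$ we have
\begin{equation*}
    \Val(x_1, \ldots, x_k) = \#\{y \in Y : x_1[y] = \alpha_1 \land \cdots \land x_k[y] = \alpha_k\}.
\end{equation*}
Hence $\OPT > 0$ if and only if there exist $x_1 \in X_1, \ldots, x_k \in X_k$ and $y \in Y$ with $x_i[y] = \alpha_i$ for all $i \in [k]$. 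The crucial point is that, for a fixed single coordinate $y$, the conditions on $x_1, \ldots, x_k$ are independent across the $X_i$. Thus, $\OPT > 0$ is equivalent to: there exists $y \in Y$ such that for every $i \in [k]$, some $x_i \in X_i$ satisfies $x_i[y] = \alpha_i$.

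Next I would show that this reformulation can be tested in time $\Order(m)$. For each $i \in [k]$, define
\begin{equation*}
    Y_i := \{y \in Y : \exists x_i \in X_i \text{ with } x_i[y] = \alpha_i\}.
\end{equation*}
If $\alpha_i = 1$, then $Y_i$ is precisely the set of coordinates $y$ incident to at least one edge from $X_i$, which can be computed in a single pass over the edge list in time $\Order(m)$. If $\alpha_i = 0$, then $y \in Y_i$ iff the $X_i$-degree of $y$ is strictly less than $|X_i|$; this can likewise be computed in $\Order(m)$ time by accumulating degrees. Finally, decide whether $\bigcap_{i \in [k]} Y_i \neq \emptyset$ by a linear scan over $Y$ (whose size is $\Order(m)$ after discarding coordinates not appearing in any relation). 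Since $k$ is a constant, the total running time is $\Order(m)$.

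I do not expect any significant obstacle: the only subtlety is the treatment of coordinates $y$ with no incident nonzero entries in some $X_i$, which is handled uniformly by the degree-based characterization of $Y_i$ above. The argument crucially relies on uniqueness of the satisfying assignment—otherwise different coordinates could require different assignments and one could not decouple the choice of $x_i$ from the choice of $y$.
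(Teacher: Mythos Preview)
Your proof is correct and follows the same high-level idea as the paper: the unique satisfying assignment $\alpha$ lets the existence question decouple across the $x_i$'s once a coordinate $y$ is fixed. Your version is in fact more careful than the paper's. The paper's proof asserts that $\OPT = 0$ iff \emph{for every} $i$, $x_i$, and $y$ one has $x_i[y] \neq \alpha_i$, and then checks each $X_i$ independently; as written, this loses the requirement that the \emph{same} coordinate $y$ witness all $k$ conditions simultaneously (e.g.\ for $k=2$, $\phi = z_1 \land z_2$, $X_1 = \{(1,0)\}$, $X_2 = \{(0,1)\}$, the paper's test would wrongly declare $\OPT > 0$). Your formulation via the sets $Y_i$ and the test $\bigcap_i Y_i \neq \emptyset$ is exactly the right correction, and the degree-based computation of $Y_i$ for both $\alpha_i = 0$ and $\alpha_i = 1$ is clean and clearly $\Order(m)$.
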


Notice that~\autoref{lem:max-at-least-one} does \emph{not} hold for minimization, which is the reason why we cannot give polynomial-factor approximations for $\VMin(\phi)$.

\begin{proof}
Let $\alpha \in \{0, 1\}^k$ be the satisfying assignment of $\phi$. Observe that $\OPT = 0$ if and only if, for all $i \in [k]$, all vectors $x_i \in X_i$ and all coordinates $y \in Y$, $x_i[y] \neq \alpha_i$. Therefore, it suffices to check individually for each set $X_i$, whether it contains some vector with an entry equal to $\alpha_i$. Clearly this takes time at most $\Order(m)$.
\end{proof}

\begin{proof}[Proof of~\autoref{lem:poly-approx-max}]
We have already shown how to compute constant approximations for all functions $\phi$ in time $\Order(m^{k-1})$, except for those with exactly one falsifying assignment (\autoref{lem:constant-approx-max-geq-two}). Thus, assume that $\phi$ has exactly one satisfying assignment $\alpha \in \{0, 1\}^k$. We distinguish two cases:
\begin{description}
\item[Case 1: \boldmath$\alpha_i = 1$ for some $i$.] Without loss of generality, we assume that $i = 1$. We split $X_1$ into a sparse part $X_1^{\text{sparse}} = \{ x_1 \in X_1 : \text{$x_1$ has Hamming weight $\leq m^\varepsilon$} \}$ and a dense part $X_1^{\text{dense}} = X_1 \setminus X_1^{\text{sparse}}$ and separately solve the instances $(X_1^{\text{sparse}}, X_2, \ldots, X_k, Y)$ and $(X_1^{\text{dense}}, X_2, \ldots, X_k, Y)$. In the sparse instance it holds that $\OPT \leq m^\varepsilon$. Thus, it suffices to check whether $\OPT > 0$ to compute an $m^\varepsilon$-approximation. By~\autoref{lem:max-at-least-one}, this can be checked in time linear time.

We are left with the dense instance, which can be solved exactly efficiently: Since every vector in $X_i^{\text{dense}}$ is of Hamming weight at least $m^\varepsilon$, there can be at most $m / m^\varepsilon = m^{1 - \varepsilon}$ vectors in $X_i^{\text{dense}}$. Therefore, we can brute-force over all $x_1 \in X_i^{\text{dense}}$ and solve the remaining $\MaxSP_{k-1}$ problem in time $\Order(m^{k-1})$ using the baseline algorithm. The total running time is $\Order(m^{1-\varepsilon} m^{k-1}) = \Order(m^{k-\varepsilon})$.

\item[Case 2: \boldmath$\alpha_i = 0$ for all $i$.] As before, let $d = |Y|$. If $d \leq m^\varepsilon$, then it again suffices to decide whether $\OPT > 0$ (which takes linear time by \autoref{lem:max-at-least-one}) to compute an $m^\varepsilon$-approximation. So we may assume that $d > m^\varepsilon$. We in fact show how to compute a constant-factor approximation. We call a vector $x_i$ \emph{light} if its Hamming weight is at most $d / (k + 1)$, and \emph{heavy} otherwise. If there exist light vectors $x_1 \in X_1, \ldots, x_k \in X_k$, then we have detected a solution of value $\geq d - k \mult d / (k + 1) = d / (k + 1) \geq \OPT / (k + 1)$.

Otherwise, some set $X_i$ contains only heavy vectors and therefore contains at most $|X_i| \leq (k + 1)m / d \leq \Order(m^{1-\varepsilon})$ vectors. After enumerating all vectors $x_i \in X_i$, it suffices to solve the remaining $\MaxSP_{k-1}$ problem using the baseline algorithm in time $\Order(m^{k-1})$. Again, the total running time is $\Order(m^{1-\varepsilon} m^{k-1}) = \Order(m^{k-\varepsilon})$. \qedhere
\end{description}
\end{proof}

\subsection{Approximation Schemes via Polynomial Method}\label{sec:ptas-maximization}
In this section, we give the approximation scheme claimed in~\autoref{thm:as-maximization-classification}. Although a similar approach also works for the minimization variants, we will omit their treatment here and instead prove the relevant claims about minimization in~\autoref{sec:approx-min}. By the equivalence of vector optimization problems, it suffices to prove the following lemma:

\begin{lemma}[Approximation Scheme for $\Hand(\phi) \leq 1$] \label{lem:max-approximation-scheme}
Let $\phi$ be a Boolean function on $k$ inputs. If $\Hand(\phi) \leq 1$, then there exists a randomized approximation scheme for $\VMax(\phi)$.
\end{lemma}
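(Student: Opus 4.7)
The plan is a two-step reduction to $(1+\varepsilon)$-approximate \FurthestNeighbor{} in Hamming space, which is known to admit an approximation scheme (as discussed in the Introduction). First, I would exploit the hypothesis $\Hand(\phi) \leq 1$: by definition, there exist two indices (without loss of generality, $1$ and $2$) such that for every assignment $a_3, \ldots, a_k \in \{0, 1\}$, the restriction $\phi(z_1, z_2, a_3, \ldots, a_k)$ has a number of satisfying assignments different from $1$. I would then enumerate all $n^{k-2}$ tuples $(x_3, \ldots, x_k) \in X_3 \times \cdots \times X_k$, reducing to the $2$-variable hybrid problem $\max_{x_1, x_2} \sum_y \phi_y(x_1[y], x_2[y])$, where $\phi_y(z_1, z_2) := \phi(z_1, z_2, x_3[y], \ldots, x_k[y])$ is a $2$-variable function with $\neq 1$ satisfying assignments.

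Second, the core subclaim I would prove is: for every $2$-variable Boolean function $\psi$ with $\neq 1$ satisfying assignments, there exist constant-size gadget functions $g_1, g_2 : \{0, 1\} \to \{0, 1\}^{O(1)}$ satisfying $d_H(g_1(z_1), g_2(z_2)) = 2 \psi(z_1, z_2)$ for all $(z_1, z_2) \in \{0, 1\}^2$. This can be established by enumerating the $12$ admissible types. Every $2$-satisfying function (dictator, XOR, or XNOR) and the constant-$1$ function admit a $1$-coordinate gadget of factor~$1$ -- for instance, $(g_1, g_2) = (z_1, 1 - z_2)$ yields $d_H = 1 - z_1 \oplus z_2$, the XNOR function -- which can be duplicated to achieve factor~$2$. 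Each of the four $3$-satisfying functions admits a $3$-coordinate gadget of factor~$2$; concretely for $\psi = z_1 \lor z_2$, the gadget $(z_1, 0), (0, z_2), (z_1, z_2)$ yields $d_H = z_1 + z_2 + (z_1 \oplus z_2) = 2(z_1 \lor z_2)$, with analogous constructions for the remaining three by appropriately flipping inputs. Applying this coordinate-wise produces gadgetized vectors $\tilde x_1, \tilde x_2 \in \{0, 1\}^{O(d)}$ with $d_H(\tilde x_1, \tilde x_2) = 2 \sum_y \phi_y(x_1[y], x_2[y])$, so any $(1+\varepsilon)$-approximate Furthest Pair on $(\tilde X_1, \tilde X_2)$ translates into a $(1+\varepsilon)$-approximation of the hybrid objective.

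Running a $(1+\varepsilon)$-approximate \FurthestNeighbor{} algorithm on each of the $n^{k-2}$ hybrid instances and returning the maximum of the returned values gives a $(1+\varepsilon)$-approximation of $\OPT$ in total time $O(m^{k-\delta(\varepsilon)})$ for some $\delta(\varepsilon) > 0$; randomness is handled by a union bound over sub-instances. The main obstacle is to ensure the coordinate gadget achieves the \emph{same} scaling factor across all admissible types of $\phi_y$, so that a mixed hybrid instance reduces to a single \FurthestNeighbor{} call. In particular, the $3$-satisfying (OR-like) functions are complements of $1$-satisfying ``inner-product-like'' functions -- whose optimization does not admit an approximation scheme -- so it is crucial that they admit a factor-$2$ Hamming gadget of bounded size, rather than requiring an (impossible) reduction to Min-IP.
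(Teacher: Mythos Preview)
Your overall plan --- brute-force $k-2$ variables using $\Hand(\phi)\le 1$, then reduce the residual two-variable hybrid to \FurthestNeighbor{} --- is exactly the paper's strategy, and your factor-$2$ Hamming gadgets are correct as stated.

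The gap is sparsity. Several of your gadgets complement an input: XNOR uses $g_2(z_2)=1-z_2$, the anti-dictators need a constant~$1$ on one side, the constant-$1$ gadget writes fixed~$1$'s, and the three OR-variants obtained by ``flipping inputs'' likewise negate an argument. In the sparse model the input size is $m$ (the number of $1$-entries), while $|X_i|$ and $d$ can each be $\Theta(m)$. Writing down the complemented gadget vectors already costs $\Theta(|X_i|\cdot d)$, and the \FurthestNeighbor{} routine you invoke runs in time $\tilde\Order(nd+n^{2-\delta})$, not $\tilde\Order(m^{2-\delta})$. Concretely, take $\phi=z_1\oplus\cdots\oplus z_k\oplus 1$ (so $\Hand(\phi)=1$) with $|X_1|=\cdots=|X_k|=d=m/k$ and every vector of Hamming weight~$1$: for each of the $(m/k)^{k-2}$ tuples $(x_3,\dots,x_k)$, all but $k-2$ of the restrictions $\phi_y$ equal the XNOR $z_1\leftrightarrow z_2$, so your gadgetization alone costs $\Theta((m/k)^2)$ per tuple and $\Theta(m^k)$ in total --- the baseline, not $\Order(m^{k-\delta})$.

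The paper circumvents this in two stages. First, in the proof of \autoref{lem:max-approximation-scheme} it reduces every admissible $\phi_y$ to the hybrid over $\Phi=\{z_1\leftrightarrow z_2,\,z_1\centernot\leftrightarrow z_2\}$ \emph{without} complementing any vector: dictators and anti-dictators become XOR respectively XNOR after zeroing the irrelevant side, and OR-type functions are decomposed into one XOR/XNOR and two dictator coordinates. Second, \autoref{lem:hybrid-to-furthest-neighbor} handles the XOR/XNOR hybrid itself: before complementing the XNOR part it argues that either $n\,d_0\le\Order(m^{1.9})$, in which case complementation is now affordable, or else two light vectors already witness $\OPT\ge(1-\varepsilon)d_0$ and one can return directly; the XOR dimension is separately hashed down so that $n\,d_1\le\Order(m^{1.9})$ as well. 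You would need an analogous sparsity-control argument before applying your gadgets.
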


The proof of \autoref{lem:max-approximation-scheme} ultimately traces back to an application of the polynomial method \cite{AlmanW15,AlmanCW16}. However, as it turns out, it is not necessary to explicitly design algorithms in the style of the polynomial method for every problem considered here; instead, we show a black-box reduction to a problem which is already known to admit subquadratic approximation algorithms based on the polynomial method: \FurthestNeighbor{}. In the \FurthestNeighbor{} problem, we are given two sets of $n$ bit-vectors $X_1, X_2 \subseteq \{ 0,1 \}^d$ and the task is to compute the maximum Hamming distance between any two vectors $x_1 \in X_1, x_2 \in X_2$. Let $\leftrightarrow$ denote the Boolean function computing ``if and only if''. Then \FurthestNeighbor{} is equivalent to $\VMax(z_1 \centernot\leftrightarrow z_2)$.

The following result about \FurthestNeighbor{} occurs several times in the literature; in particular, it was proven by Alman, Chan and Williams~\cite{AlmanCW16}:

\begin{theorem}[{\cite[Theorem 1.5]{AlmanCW16}}] \label{lem:furthest-neighbor-polynomial-method}
For any $\varepsilon > 0$, there exists some $\delta > 0$ such that we can compute an $(1 + \varepsilon)$-approximation for \FurthestNeighbor{} in randomized time $\tilde\Order(nd + n^{2-\delta})$.
\end{theorem}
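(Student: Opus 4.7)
The plan is to reduce $\VMax(\phi)$ to one $\FurthestNeighbor{}$ instance per brute-force choice of the other $k-2$ vectors, and then invoke \autoref{lem:furthest-neighbor-polynomial-method} as a black box. First, using $\Hand(\phi) \leq 1$, I would fix two indices $i \neq j \in [k]$ such that every $\{i,j\}$-restriction of $\phi$ has $\neq 1$ satisfying assignments, and brute-force over all $n^{k-2}$ choices of $x_\ell \in X_\ell$ for $\ell \in [k] \setminus \{i,j\}$. Each such fixing turns the problem into a hybrid two-variable instance $\max_{x_i, x_j} \sum_{y \in Y} \phi_y(x_i[y], x_j[y])$, in which every local function $\phi_y$ belongs to the finite family $\mathcal{F}$ of the $12$ two-variable Boolean functions with $0$, $2$, $3$, or $4$ satisfying assignments (the two constants, the four single-input literals, $\mathrm{XOR}$ and $\mathrm{XNOR}$, and the four $\mathrm{OR}$-of-literals).

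The key technical step is to build, for each $f \in \mathcal{F}$, a constant-size \emph{coordinate gadget}: it allocates at most $3$ new coordinates and assigns bit values to $x_i'$ and $x_j'$ at those coordinates (as an explicit function of $x_i[y]$ and $x_j[y]$, possibly using $1 - x_j[y]$ in place of $x_j[y]$) so that the gadget's Hamming contribution equals exactly $2 f(x_i[y], x_j[y])$. For constants, single-input literals, $\mathrm{XOR}$, and $\mathrm{XNOR}$, a $2$-coordinate gadget (optionally with a bit-flip) suffices. The only non-trivial case is $\mathrm{OR}$: here the crucial identity $2(a \vee b) = a + b + (a \oplus b)$ yields a $3$-coordinate gadget whose three coordinate-wise Hamming contributions are $a$, $b$, and $a \oplus b$; the three other $\mathrm{OR}$ polarities are obtained by flipping the appropriate input bits within this gadget. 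Concatenating these gadgets over $y \in Y$ produces augmented sets $X_i', X_j' \subseteq \{0,1\}^{d'}$ with $d' \leq 3d$ such that $d_H(x_i', x_j') = 2 \sum_y \phi_y(x_i[y], x_j[y])$ for every $x_i \in X_i$ and $x_j \in X_j$.

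Applying \autoref{lem:furthest-neighbor-polynomial-method} to $(X_i', X_j')$ yields, in randomized time $\tilde\Order(nd + n^{2 - \delta(\varepsilon)})$, a $(1+\varepsilon)$-approximation of $\max d_H(x_i', x_j')$, and halving the output gives a $(1+\varepsilon)$-approximation of the hybrid two-variable optimum. Taking the maximum over all $n^{k-2}$ brute-force choices produces a $(1+\varepsilon)$-approximation of $\OPT$ in total time $n^{k-2} \cdot \tilde\Order(nd + n^{2-\delta(\varepsilon)}) \leq \tilde\Order(m^{k - \delta'(\varepsilon)})$ for some $\delta'(\varepsilon) > 0$. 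The main technical obstacle is the $\mathrm{OR}$-type restriction: $\Hand(\phi) \leq 1$ still permits $3$-satisfying-assignment restrictions in the hybrid problem, and $a \vee b$ is not itself a Hamming-distance function (in particular, no single-coordinate XOR-of-affine gadget reproduces it), so a naive one-bit-per-coordinate reduction fails. The $3$-coordinate gadget above bypasses this by encoding $2(a \vee b)$ as a sum of three XOR contributions, which then combines cleanly with the uniform factor-$2$ scaling used throughout $\mathcal{F}$.
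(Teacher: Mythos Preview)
Your proposal does not prove the stated theorem at all. The statement labeled \texttt{lem:furthest-neighbor-polynomial-method} is an \emph{external} result, cited from \cite{AlmanCW16}; the paper does not give a proof, only a one-line description of the method used there (a dimension reduction via locality-sensitive hashing, followed by the polynomial method in low dimension). Instead, you \emph{invoke} this theorem as a black box inside your argument, which means what you have actually sketched is a proof of \autoref{lem:max-approximation-scheme} (the approximation scheme for $\VMax(\phi)$ when $\Hand(\phi)\le 1$), not of the Furthest Neighbor theorem itself.

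Even read as a proof of \autoref{lem:max-approximation-scheme}, your reduction has a genuine gap: sparsity. Your coordinate gadgets freely write complemented bits such as $1 - x_j[y]$, so constructing $X_i', X_j'$ costs $\Theta(n d)$ regardless of~$m$, and you then bound the total time by $n^{k-2}\cdot\tilde\Order(nd+n^{2-\delta})$. In the paper's sparse model nothing forces $nd = \Order(m^{2-\Omega(1)})$; for instance with $n=d=\Theta(m)$ and very sparse vectors one gets $nd=\Theta(m^2)$, so already for $k=2$ your bound is $\tilde\Order(m^2)$ and gives no improvement. The paper anticipates exactly this obstacle: it first reduces the hybrid two-variable problem to the special hybrid $\Phi=\{z_1\leftrightarrow z_2,\,z_1\centernot\leftrightarrow z_2\}$ (essentially your case analysis, including the three-term cover for OR), and then in \autoref{lem:hybrid-to-furthest-neighbor} argues via a random hashing step and a separate treatment of the $\leftrightarrow$-coordinates that one may assume $n d \le \Order(m^{1.9})$ before complementing and calling the Furthest Neighbor algorithm. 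Your direct gadget idea is clean and would work once this sparsity control is in place, but as written it is not yet a valid $\Order(m^{k-\delta})$ algorithm.
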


Their proof first uses a dimension reduction by locality-sensitive hashing in Hamming spaces. The improvement for the low-dimensional problem is then achieved by the polynomial method.

To state our reductions, recall the notion of a hybrid optimization problem as defined in~\autoref{sec:vector}. As a first step, we show how to obtain an approximation scheme for the hybrid problem of \FurthestNeighbor{} and its complement:

\begin{lemma} \label{lem:hybrid-to-furthest-neighbor}
Let $\Phi = \{ z_1 \centernot\leftrightarrow z_2, z_1 \leftrightarrow z_2 \}$. There exists a randomized approximation scheme for $\VMax(\Phi)$.
\end{lemma}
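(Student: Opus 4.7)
The plan is to give a direct approximation-preserving reduction from $\VMax(\Phi)$ to the ordinary \FurthestNeighbor{} problem and then invoke \autoref{lem:furthest-neighbor-polynomial-method} as a black box. The entire argument rests on one coordinate-wise ``flip'' trick.

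Concretely, I would partition the coordinate set as $Y = Y_{=} \sqcup Y_{\neq}$, where $Y_{=}$ collects those $y$ whose assigned formula is $z_1 \leftrightarrow z_2$ and $Y_{\neq}$ collects those whose formula is $z_1 \centernot\leftrightarrow z_2$. Define the bit-flip map $\sigma : \{0,1\}^d \to \{0,1\}^d$ by $\sigma(x)[y] = x[y] \oplus \mathbf{1}[y \in Y_{=}]$, and form the \FurthestNeighbor{} instance $(X_1, \sigma(X_2))$. The one calculation to verify is
\[
  d_H(x_1, \sigma(x_2)) = \#\{y \in Y_{\neq} : x_1[y] \neq x_2[y]\} + \#\{y \in Y_{=} : x_1[y] = x_2[y]\} = \mathrm{Val}(x_1, x_2),
\]
where the first equality splits the sum over $Y_{=}$ and $Y_{\neq}$ and uses that flipping $x_2$ on $Y_{=}$ turns agreements into disagreements on that side, while leaving $Y_{\neq}$ untouched. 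Since this identity holds pair-by-pair, $\mathrm{OPT}(\VMax(\Phi))$ equals $\max_{x_1, x_2} d_H(x_1, \sigma(x_2))$, and any $(1+\varepsilon)$-approximation to the \FurthestNeighbor{} value is also a $(1+\varepsilon)$-approximation to $\mathrm{OPT}$.

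With the reduction in place, I would simply call \autoref{lem:furthest-neighbor-polynomial-method} on $(X_1, \sigma(X_2))$ to get a $(1+\varepsilon)$-approximation in randomized time $\tilde{\Order}(nd + n^{2 - \delta(\varepsilon)})$. To bound this by $\tilde{\Order}(m^{2-\delta})$, I would first preprocess away any coordinate $y \in Y$ that is nonzero in no input vector; such a coordinate contributes a uniform constant $\phi_y(0,0) \in \{0,1\}$ to every pair's value, which can be tallied once upfront. After this reduction, $d \leq m$ and of course $n \leq m$, so $n^{2-\delta} \leq m^{2-\delta}$.

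The main subtlety will be that the flip $\sigma$ can destroy sparsity: if $|Y_{=}|$ is large and $x_2$ is sparse, then $\sigma(x_2)$ may have nearly $|Y_{=}|$ nonzero entries, so the naive sparse representation of $\sigma(X_2)$ could be as large as $\Theta(nd)$ rather than $\Order(m)$. This is the one place where the running-time analysis needs care. The fix is that the algorithm of \autoref{lem:furthest-neighbor-polynomial-method} begins with a Hamming-space LSH / random-coordinate-sampling step that reduces the effective dimension to $\tilde{\Order}(1/\varepsilon^2)$ before the polynomial method is applied; the value of $\sigma(x_2)$ at any sampled coordinate $y$ can be obtained from the original sparse representation by a single XOR with $\mathbf{1}[y \in Y_{=}]$, so we never need to materialize $\sigma(X_2)$ explicitly. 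This brings the total running time to $\tilde{\Order}(m + n^{2-\delta}) \leq \tilde{\Order}(m^{2-\delta})$, completing the approximation scheme.
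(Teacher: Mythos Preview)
Your reduction via the coordinate-flip map $\sigma$ is correct and is exactly the transformation the paper uses at the very end of its proof. The identity $d_H(x_1,\sigma(x_2)) = \Val(x_1,x_2)$ is fine, and the approximation is preserved pair-by-pair.

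The gap is entirely in the running-time argument. After your preprocessing you have $n \le m$ and $d \le m$, but that only yields $nd \le m^2$, so the $\tilde\Order(nd)$ term from the black box of \autoref{lem:furthest-neighbor-polynomial-method} is not under control. Your fix---peeking inside the black box and asserting that it begins with a random-coordinate sampling / LSH step compatible with oracle access to $\sigma(x_2)[y]$---is not a proof: the theorem is stated for explicitly given vectors, and you would have to re-establish it in a streaming/oracle model. Even granting that the cited algorithm does begin with dimension reduction, uniform coordinate sampling to $\tilde\Order(1/\varepsilon^2)$ coordinates only guarantees an \emph{additive} $\varepsilon d$ error on Hamming distances, not the multiplicative $(1+\varepsilon)$ guarantee you need when $\OPT \ll d$; so the ``morally obvious'' preprocessing you allude to is itself nontrivial here.

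The paper closes this gap differently, without opening the black box at all. It first ensures $n|Y_{\neq}| \le \Order(m^{1.9})$ by stripping out ``heavy'' vectors (brute-forced against the rest in time $\Order(m^{1.9})$) and then hashing the remaining $Y_{\neq}$-coordinates down to $n^{0.9}$ buckets, which preserves exact values for light vectors with high probability. It then argues that if $n|Y_{=}| > m^{1.9}$, then $|Y_{=}|$ is so large relative to the total sparsity that some pair of very light vectors already achieves value $\ge (1-\varepsilon)d$, and one can simply output that pair. After both steps $nd \le \Order(m^{1.9})$, so the flip $\sigma$ can be materialized explicitly and \autoref{lem:furthest-neighbor-polynomial-method} invoked as a genuine black box. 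You should replace your last paragraph with an argument of this kind.
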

\begin{proof}
We introduce some notation: Let $Y_0 \subseteq Y$ denote the set of coordinates $y$ associated to $\phi_y(z_1, z_2) = (z_1 \leftrightarrow z_2)$ and let $Y_1 = Y \setminus Y_0$. Moreover, let $d_0 = |Y_0|$ and $d_1 = |Y_1|$. For a vector $x$ and a subset of dimensions $Y'$, let $\|x\|_{Y'} = |\{ y \in Y' : x[y] = 1 \}|$ denote the Hamming weight of $x$ restricted to $Y'$.

Our goal is to get rid of all coordinates in $Y_0$, as then the given hybrid instance effectively becomes an instance of \FurthestNeighbor{}. Following a naive approach, we would simply complement all entries $x_1[y]$, $y \in Y_0$. However, as $d_0$ might be as large as $n$, we cannot afford to complement directly. Instead, we will first modify the instance to guarantee that $n d_0 \leq \Order(m^{1.9})$. We have to further guarantee that $n d_1 \leq \Order(m^{1.9})$ to effectively apply the assumed \FurthestNeighbor{} algorithm.
\begin{itemize}
\item To prove that we can assume $n d_1 \leq \Order(m^{1.9})$, suppose that otherwise $n d_1 \geq m^{1.9}$. It follows that $m \leq (nd_1)^{1/1.9} \leq n^{2/1.9} \leq n^{1.1}$. Let us call a vector $x_i$ \emph{heavy} if $\|x_i\|_{Y_1} \geq n^{0.2}$, and \emph{light} otherwise. Clearly, there can be at most $m / n^{0.2} \leq n^{1.1} / n^{0.2} = n^{0.9}$ heavy vectors and thus it is feasible to explicitly enumerate all heavy vectors $x_i$ and solve the remaining $\MaxSP_{k-1}$ problem exactly using the baseline algorithm in time $\Order(n^{0.9} m) = \Order(m^{1.9})$. In the remaining instance, there are only light vectors.

We now apply the following simple hashing scheme: Let $Y_1'$ be a set of size $|Y_1'| = d_1' := n^{0.9}$ and pick a random function $h : Y_1 \to Y_1'$. For every vector $x_i$, we construct a vector $x_i' \in \{0, 1\}^{Y_0 \union Y_1'}$ with $x_i'[y'] = \bigvee_{y \in h^{-1}(y')} x_i[y]$ for all $y' \in Y_1'$ and $x_i'[y] = x_i[y]$ for all $y \in Y_0$. Notice that the value of any pair $(x_1, x_2)$ can only decrease in the hashing step. Moreover, for any fixed pair $(x_1, x_2)$, with probability $1 - 2n^{2 \mult 0.2} / n^{0.9} = 1 - \Order(n^{-0.5})$, there occurs no collision in the hashing (i.e., no two $1$-entries are mapped to the same new coordinate) and thus the value of $(x_1', x_2')$ is the same as the value for $(x_1, x_2)$. In particular, with probability $1 - \Order(n^{-0.5})$, the optimal value does not change. In that way, we have effectively reduced $d_1 = |Y_1|$ to $n^{0.9}$, and thus $n d_1 \leq n^{1.9} \leq \Order(m^{1.9})$.

\item Next, we prove that we can assume $n d_0 \leq \Order(m^{1.9})$. So suppose that $n d_0 \geq m^{1.9}$. Then $d_0 \geq m^{1.9} / n \geq \Omega(n^{0.9})$. Moreover, there exist vectors $x_1, x_2$ with $\|x_1\|_{Y_0} \leq d_0^{0.9}$ and $\|x_2\|_{Y_0} \leq d_0^{0.9}$ (as otherwise $m > n d_0^{0.9}$ and thus $m^{1.9} > n d_0$). But then the optimal value of this instance is very large: For most coordinates $y \in Y_0$, $x_1[y] = x_2[y] = 0$, which is satisfying for $(z_1 \leftrightarrow z_2)$. Thus, the optimal value is at least $\OPT \geq d_0 - d_0^{0.9} \geq (1 - \Omega(n^{-0.09})) d_0 \geq (1 - \varepsilon) d_0$ for any constant $\varepsilon > 0$ and sufficiently large $n$. Moreover, the contribution of all coordinates $y \in Y_1$ is at most $2n^{0.2}$ after eliminating all heavy vectors in the previous step. Therefore also $\OPT \leq d_0 + 2n^{0.2} \leq (1 + \varepsilon) d_0$ and we can safely return $x_1, x_2$ as an approximate optimal solution.
\end{itemize}
In combination, we can assume that $n d = n (d_0 + d_1) \leq \Order(m^{1.9})$. It is now feasible to complement all entries $x_1[y]$, $y \in Y_0$, in time $\Order(n d_0) = \Order(m^{1.9})$. What remains is a pure \FurthestNeighbor{} instance. Observe that we increased the sparsity by a lot, however, the assumed \FurthestNeighbor{} is insensitive to the sparsity~$m$. Running that algorithm takes time $\Order(n d + n^{2-\delta}) = \Order(m^{1.9} + m^{2-\delta}) = \Order(m^{2-\delta'})$ for $\delta' := \min\{\delta, 0.1\} > 0$.
\end{proof}

\begin{proof}[Proof of~\autoref{lem:max-approximation-scheme}]
As $\Hand(\phi) \leq 1$, there exist two variables, say $z_1$ and $z_2$ without loss of generality, such that for any assignment to $z_3, \ldots, z_k$, the remaining two-variable formula does not have exactly one satisfying assignment. We start by brute-forcing over all $n^{k-2}$ tuples $(x_3, \ldots, x_k)$. For each tuple, we are left to solve a hybrid optimization problem $\VMax(\Phi')$, where $\Phi'$ potentially contains all $2$-variable functions $\phi'$ of hardness $\Hand(\phi') \leq 1$. We demonstrate how to eliminate any function $\phi' \not\in \Phi$. In the process, we pay attention to not increase the sparsity $m$ beyond a constant multiple. There are several cases for $\phi'$:

\begin{description}
\item[Case 0: \boldmath$\phi'$ has 0 or 4 satisfying assignments.] Then $\phi'$ trivially accepts or rejects every input and we can delete all coordinates $y$ associated with this type.
\item[Case 1: \boldmath$\phi'$ has 1 satisfying assignment.] This case is contradictory since $\Hand(\phi') \leq 1$.
\item[Case 2: \boldmath$\phi'$ has 2 satisfying assignments.]
\item[Case 2a: \boldmath$\phi'(z_1, z_2) = (z_1 \leftrightarrow z_2)$ or $\phi'(z_1, z_2) = (z_1 \centernot\leftrightarrow z_2)$.] \vspace*{-\itemsep} Again, this case is trivial as $\phi'$ is already of the desired shape: $\phi' \in \Phi$.
\item[Case 2b: \boldmath$\phi'$ depends only on one input $z_i$.] Without loss of generality, assume that $i = 1$. If $\phi'(z_1, z_2) = z_1$, we set every vector $x_2$ to zero (at all relevant coordinates $y$) and reinterpret $\phi'$ as $z_1 \centernot\leftrightarrow z_2$. Otherwise, if $\phi'(z_1, z_2) = \neg z_1$, we instead interpret $\phi'$ as $z_1 \leftrightarrow z_2$. This transformation is clearly correct and does not increase the sparsity, reducing Case 2b to Case~2a.
\item[Case 3: \boldmath$\phi'$ has 3 satisfying assignments.] We can ``cover'' the three satisfying assignments by three pairs of assignments. For demonstration, consider the function $\phi'(z_1, z_2) = z_1 \lor z_2$. We run the previous reductions for the three functions $z_1 \centernot\leftrightarrow z_2$, $z_1$ and $z_2$ and concatenate the resulting vectors. For each of the three satisfying assignments $(1, 0)$, $(0, 1)$ and $(1, 1)$ of $\phi'$, exactly two functions among $z_1 \centernot\leftrightarrow z_2$, $z_1$ and $z_2$ are satisfied. As a result, we weighted the contribution of Case 3 by a factor of $2$. To compensate, we simply execute each of the previous cases twice and concatenate the output vectors. The optimal value of the resulting hybrid problem equals two times the former optimal value.
\end{description}
Running the approximation scheme for $\VMax(\Phi)$ takes time $\Order(m^{2-\delta})$, and we can recover the original optimal value by halving the output. Repeating these steps for all $n^{k-2}$ tuples $(x_3, \ldots, x_k)$ in the original instance takes time $\Order(n^{k-2} m^{2-\delta}) = \Order(m^{k-\delta})$.
\end{proof}
\section{Approximation Algorithms for Minimization} \label{sec:approx-min}

\begin{figure}[t]
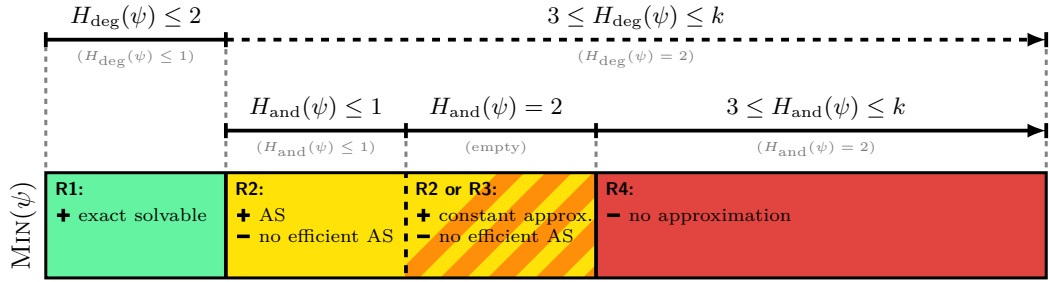

\begin{spectrumpicture}

\begin{scope}[upper]
    \drawmin
\end{scope}

\begin{scope}[yshift=.5 * \boxskip + \boxheight + \axisupperoffset]
    \draw[axis] (0, 0)
        -- node[above] {$\Hdeg(\psi) \leq 2$} node[below] {\note{$\Hdeg(\psi) \leq 1$}} ++(.18, 0);
    \draw[axis, dashed, ->] (.18, 0)
        -- node[above] {$3 \leq \Hdeg(\psi) \leq k$} node[below] {\note{$\Hdeg(\psi) = 2$}} ++(.82, 0);
    \drawaxismark{0}{0}
    \drawaxismark{.18}{0}
    \drawaxismark{1}{0}
\end{scope}
\begin{scope}[yshift=.5 * \boxskip + \boxheight + \axisloweroffset]
    \draw[axis, ->] (.18, 0)
    -- node[above] {$\Hand(\psi) \leq 1$} node[below] {\note{$\Hand(\psi) \leq 1$}} ++(.18, 0)
    -- node[above] {$\Hand(\psi) = 2$} node[below] {\note{empty}} ++(.18, 0)
    -- node[above] {$3 \leq \Hand(\psi) \leq k$} node[below] {\note{$\Hand(\psi) = 2$}} ++(.46, 0);
    \drawaxismark{.18}{0}
    \drawaxismark{.36}{0}
    \drawaxismark{.55}{0}
    \drawaxismark{1}{0}
\end{scope}

\draw[ind] (0, .5 * \boxskip + \boxheight)
    -- ++(0, \axisupperoffset - .5 * \axismarkheight);
\draw[ind] (.18, .5 * \boxskip + \boxheight)
    -- ++(0, \axisloweroffset - .5 * \axismarkheight)
    ++ (0, \axismarkheight)
    -- ++(0, \axisupperoffset - \axisloweroffset - \axismarkheight);
\draw[ind] (.36, .5 * \boxskip + \boxheight)
    -- ++(0, \axisloweroffset - .5 * \axismarkheight);
\draw[ind] (.55, .5 * \boxskip + \boxheight)
    -- ++(0, \axisloweroffset - .5 * \axismarkheight);
\draw[ind] (1, .5 * \boxskip + \boxheight)
    -- ++(0, \axisloweroffset - .5 * \axismarkheight)
    ++ (0, \axismarkheight)
    -- ++(0, \axisupperoffset - \axisloweroffset - \axismarkheight);

\end{spectrumpicture}
\caption{A reminder of our classification of $\Min(\psi)$ for $k \geq 3$. The pale labels indicate how to change the conditions to obtain the picture for $k = 2$.} \label{fig:spectrum-minimization}
\end{figure}
In this section we give the algorithms for approximating $\VMin(\phi)$. Our goal is to prove the following~\autoref{lem:approx-min}. The upper bounds for both~\autoref{thm:as-minimization-classification} and~\autoref{thm:constant-approx-min-classification} follow immediately, by the equivalence of $\Min(\psi)$ and $\VMin(\phi)$.
\begin{lemma} \label{lem:approx-min}
Let $\phi$ be a Boolean function on $k$ inputs.
\begin{itemize}
\item If $\Hand(\phi) \leq 1$, then there exists a randomized approximation scheme for $\VMin(\phi)$.
\item If $\Hand(\phi) \leq 2 < k$, then there exists a randomized constant-factor approximation algorithm for $\VMin(\phi)$ in time $\Order(m^{k-\delta})$, for some $\delta > 0$.
\medbreak\noindent
More generally: There exists a constant-factor approximation for $\VMin(\phi)$ in time $\Order(m^{k-\delta})$ for some $\delta > 0$ if and only if $\VZero(\phi)$ can be solved in time $\Order(m^{k-\delta'})$ for some $\delta' > 0$.
\end{itemize}
\end{lemma}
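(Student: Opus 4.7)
The plan is to follow the strategy outlined in \autoref{sec:technical}: reduce approximation of $\VMin(\phi)$ to a \emph{listing} version of $\VZero(\phi)$ via locality-sensitive hashing (LSH), and then reduce the listing problem back to the decision problem $\VZero(\phi)$. A binary search over a target value $\tau$ replaces approximation by a gap problem, distinguishing $\OPT \leq \tau$ from $\OPT > c\tau$ for a factor $c$ depending on the desired approximation guarantee, and LSH will turn this gap problem into a listing problem on a subsampled instance.

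For the LSH step, given a guess $\tau$, I subsample the coordinate set $Y$ by keeping each $y$ independently with probability $p = \Theta(1/\tau)$, obtaining a sub-instance on $Y' \subseteq Y$. A tuple $(x_1,\dots,x_k)$ of value $v$ becomes a zero-witness in the subsample with probability $(1-p)^v$: for $v \leq \tau$ this is $\Omega(1)$, while for $v \geq c\tau$ it is at most $e^{-\Omega(c)}$. Repeating the subsampling $\Order(\log m)$ times thus produces, with high probability, a witness for the optimum while keeping the expected number $L$ of surviving zero-tuples small. Any listing procedure enumerating these zero-tuples in time $\tilde\Order(m^{k-\delta} + L)$ then yields an approximation scheme (take $c = 1+\varepsilon$ and tune $p$), while a listing procedure in time $\Order(m^{k-\delta}\cdot L^{\delta/k})$ yields a constant-factor approximation (balance $p$ so that $L \leq m^\delta$).

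Next, I construct the two listing procedures. For the $L^{\delta/k}$-listing from a decider of running time $T = \Order(m^{k-\delta})$, I use a standard self-reduction: partition each $X_i$ into roughly $L^{1/k}$ groups, invoke the decider on each resulting sub-instance to test which contain a zero-tuple, and recurse on the successful ones; charging decider calls carefully yields the claimed bound. Combined with the LSH step, this proves the ``if'' direction of the third bullet, while the ``only if'' direction is immediate because any multiplicative approximation distinguishes $\OPT=0$ from $\OPT\geq 1$. For the first bullet I instead need an output-linear enumeration when $\Hand(\phi) \leq 1$. Brute-forcing over all but two of the $x_i$, what remains is a hybrid two-variable problem in which every per-coordinate predicate has $\Hand \leq 1$, and hence is either trivial, depends on a single variable, or is a $\leftrightarrow$/$\centernot\leftrightarrow$ predicate, mirroring the case analysis in \autoref{lem:max-approximation-scheme}; only the last subtype is nontrivial.

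The main obstacle will be precisely this output-linear listing for the Closest-Pair-like subproblem: producing \emph{all} near-orthogonal pairs rather than just the minimum one requires another layer of subsampling nested inside the listing subroutine, along the lines of the polynomial-method based algorithms underlying \autoref{lem:furthest-neighbor-polynomial-method}. Once this subroutine is in place, the lemma follows by assembling pieces: LSH plus output-linear listing yields the first bullet; LSH plus decider-based listing yields the third; and for the second bullet, \autoref{thm:model-checking-classification} furnishes an $\Order(m^{k-\delta})$ decider for $\VZero(\phi)$ whenever $\Hand(\phi) \leq 2 < k$, so the third bullet immediately implies the second.
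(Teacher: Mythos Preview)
Your overall architecture matches the paper's proof closely: LSH to reduce approximation to listing zeros (\autoref{lem:lsh}), then two listing routines---one from a decider via recursive splitting (\autoref{lem:zero-to-listing}, giving \autoref{cor:lsh-constant-approx}) and one output-linear routine for $\Hand(\phi)\le 1$ (\autoref{lem:h1-to-listing}, giving \autoref{cor:lsh-as}). The derivation of the second bullet from the third via \autoref{thm:model-checking-classification} is also exactly what the paper does.

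However, you have misdiagnosed the ``main obstacle.'' After the LSH step, the listing task is to enumerate tuples whose value on the subsampled coordinates is \emph{exactly zero}, not ``near-orthogonal pairs.'' For the two-variable predicate $z_1\leftrightarrow z_2$ (respectively $z_1\centernot\leftrightarrow z_2$), a zero means $x_1$ and $x_2$ are bitwise complements (respectively identical) on the sampled coordinates. Hence for each $x_1$ there is at most one admissible value of the projected $x_2$, and all zero-pairs can be listed in time $\tilde\Order(m+L)$ by hashing on the projected bit-string---no polynomial method and no inner LSH layer is needed. The same triviality holds for the other two-variable cases with $\Hand\le 1$ (constant predicates, single-variable predicates, and predicates with a single falsifying assignment), exactly as in the paper's case analysis in \autoref{lem:h1-to-listing}. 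So the piece you flagged as the hard part is in fact the easy part; once you see this, your outline is complete and coincides with the paper.
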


In contrast to the maximization case where we applied different techniques for each approximation regime, here all our results are a consequence of a general reduction from approximating $\VMin(\phi)$ to exactly solving $\VZero(\phi)$, where we use locality-sensitive hashing. Recall that $\VZero(\phi)$ is the problem of testing whether there exists a solution of value zero, or equivalently, testing whether $\OPT = 0$ for the minimization problem $\VMin(\phi)$. Our strategy is to first show that a sufficiently fast \emph{listing} algorithm for $\VZero(\phi)$ implies the claimed approximation algorithms. Afterwards, we demonstrate how derive efficient listing algorithms for $\VZero(\phi)$, given the assumptions $\Hand(\phi) \leq 1 < k$ respectively $\Hand(\phi) \leq 2 < k$.

Formally, we call an algorithm a \emph{listing algorithm} for $\VZero(\phi)$ if, given an integer $L$, it returns a list of $L$ distinct solutions $(x_1, \ldots, x_k)$ to $\VZero(\phi)$ (if there are less than~$L$ distinct solutions, then it exhaustively lists all solutions). We usually parameterize the running time of listing algorithms with $T(m, L)$, where $m$ is the size of the input (as before).

We start with the following technical lemma, which generalizes the baseline algorithm to the case where only a subset of all tuples $(x_1, \dots, x_k)$ are considered candidate solutions.

\begin{lemma} \label{lem:candidates-exactly}
Let $\phi$ be a Boolean function on $k$ inputs, and let $\delta > 0$. Given a $\VMin(\phi)$ instance, and a set of $\Order(n^{k-\delta})$ candidates $C \subseteq X_1 \times \dots \times X_k$, we can compute $\Val(x_1, \ldots, x_k)$ for all candidates $(x_1, \dots, x_k) \in C$ in time $\Order(m^{k-\delta/3})$.
\end{lemma}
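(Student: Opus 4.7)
The plan is to use the Fourier (polynomial) expansion of $\phi$. Writing $\phi(z_1, \ldots, z_k) = \sum_{S \subseteq [k]} \fourier\phi(S) \prod_{i \in S} z_i$ and plugging this into the definition of $\Val$ gives
\begin{equation*}
\Val(x_1, \ldots, x_k) \;=\; \sum_{S \subseteq [k]} \fourier\phi(S) \cdot \langle x_{i_1}, \ldots, x_{i_{|S|}} \rangle, \quad \text{where } \langle x_{i_1}, \ldots, x_{i_\ell}\rangle := \sum_y x_{i_1}[y] \cdots x_{i_\ell}[y].
\end{equation*}
Since $\fourier\phi$ has only $2^k = O(1)$ nonzero entries, once the relevant inner products are known we can assemble all candidate values in additional $O(|C|)$ time. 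The task therefore reduces to efficiently computing the inner products appearing in this sum.

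For each proper subset $S \subsetneq [k]$ with $|S| = \ell$, the plan is to precompute $\langle x_{i_1}, \ldots, x_{i_\ell}\rangle$ for \emph{all} $\ell$-tuples by iterating over coordinates $y \in Y$: for each $y$ we enumerate $N(y, i_1) \times \cdots \times N(y, i_\ell)$, where $N(y, i) := \{ x \in X_i : x[y] = 1 \}$, and increment the appropriate counter. Hölder's inequality combined with $\sum_y |N(y, i)| \leq m$ and $|N(y,i)| \leq n$ yields
\begin{equation*}
\sum_y \prod_{j=1}^\ell |N(y, i_j)| \;\leq\; \prod_{j=1}^\ell \Bigl(\sum_y |N(y, i_j)|^\ell\Bigr)^{1/\ell} \;\leq\; \prod_{j=1}^\ell (n^{\ell-1} m)^{1/\ell} \;=\; n^{\ell-1} m,
\end{equation*}
so summing over all $S$ of size at most $k-1$ gives a total precomputation cost of $O(n^{k-2} m) \leq O(m^{k-1})$.

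For the top term $S = [k]$ we cannot afford to handle all $n^k$ tuples, but only need the $|C| = O(n^{k-\delta})$ candidates. For each candidate $(x_1, \ldots, x_k) \in C$, we compute $\langle x_1, \ldots, x_k\rangle$ by iterating over the support of the sparsest of the $x_i$, i.e.\ in time $\min_i w(x_i)$, where $w(x) := |\{y : x[y] = 1\}|$. By AM-GM, $\min_i w(x_i) \leq \bigl(\prod_i w(x_i)\bigr)^{1/k}$, and Hölder's inequality applied across candidates gives
\begin{equation*}
\sum_{c \in C} \min_i w\bigl(x_i^{(c)}\bigr) \;\leq\; |C|^{(k-1)/k} \cdot \Bigl(\sum_{c \in C} \prod_i w\bigl(x_i^{(c)}\bigr)\Bigr)^{1/k} \;\leq\; |C|^{(k-1)/k} \cdot m,
\end{equation*}
using $\sum_{(x_1, \ldots, x_k) \in X_1 \times \cdots \times X_k} \prod_i w(x_i) = \prod_i \sum_{x \in X_i} w(x) \leq m^k$. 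This step therefore costs $O\bigl(m \cdot n^{(k-\delta)(k-1)/k}\bigr) \leq O\bigl(m^{k - \delta(k-1)/k}\bigr)$.

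The overall running time is $O(m^{k-1} + m^{k-\delta(k-1)/k})$. Since $(k-1)/k \geq 1/3$ for every $k \geq 2$, we have $\delta(k-1)/k \geq \delta/3$, and assuming $\delta \leq 3$ (so that the precomputation term is absorbed) we obtain the desired $O(m^{k-\delta/3})$ bound. In the boundary regime $\delta > 3$, where the $O(m^{k-1})$ precomputation would exceed the target, one instead falls back to the per-candidate bound $O(|C| \cdot d) \leq O(n^{k-\delta} m) \leq O(m^{k-\delta+1}) \leq O(m^{k-\delta/3})$, which is valid since $\delta - 1 \geq \delta/3$ whenever $\delta \geq 3/2$. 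The only technically delicate step is the Hölder argument bounding the cost of the full $k$-wise inner products; everything else is routine bookkeeping around the polynomial representation of $\phi$.
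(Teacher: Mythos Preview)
Your proof is correct, and it takes a genuinely different route from the paper's.

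The paper argues via a three-stage threshold decomposition: it first brute-forces over all \emph{heavy} $x_k$ (those of Hamming weight exceeding $m^{\delta/3}$, of which there are at most $O(m^{1-\delta/3})$), then over \emph{heavy prefixes} $(x_1,\dots,x_{k-1})$ (those with at least $n^{1-2\delta/3}$ candidate completions, of which there are at most $n^{k-1-\delta/3}$), and finally handles the remaining light prefixes by running the $\MinSP_1$ baseline on subinstances of size $O(n^{1-2\delta/3} m^{\delta/3})$. Each of the three cases is engineered so that its cost lands exactly at $O(m^{k-\delta/3})$.

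You instead use the multilinear (Fourier) expansion of $\phi$: all partial inner products $\langle x_{i_1},\dots,x_{i_\ell}\rangle$ for $\ell < k$ are precomputed globally in time $O(n^{k-2}m)$, and only the single top term $S=[k]$ is handled per candidate, with the total cost controlled by the AM--GM/H\"older estimate $\sum_{c} \min_i w(x_i^{(c)}) \leq |C|^{(k-1)/k} m$. This is cleaner---two pieces rather than three threshold cases---and actually gives the sharper exponent $k - \delta(k-1)/k \leq k - \delta/2$ on the dominant term. The paper's argument is more elementary in that it avoids the polynomial representation and H\"older, relying only on counting and the baseline algorithm; in exchange it needs the ad hoc light/heavy parameter tuning. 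Both approaches extend to the hybrid setting (where $\phi$ may vary per coordinate), yours by weighting the increments by $\fourier\phi_y(S)$.

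One small remark: in your $\delta>3$ fallback you implicitly use $d \leq O(m)$. This is justified by the paper's standing convention of ignoring objects not occurring in any relation (isolated $y\in Y$ contribute only the constant offset $\phi(0,\dots,0)$ to every value), but it would not hurt to say so explicitly.
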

\begin{proof}
The algorithm executes the following three steps:
\begin{itemize}
\item We call a vector $x_i$ \emph{heavy} if its Hamming weight exceeds $m^{\delta/3}$, and \emph{light} otherwise. As the total Hamming weight of all vectors is at most $\Order(m)$, there can be at most $\Order(m^{1-\delta/3})$ heavy vectors. Therefore, we can brute-force over all heavy vectors $x_k$ and solve the remaining $\MinSP_{k-1}$ problem in time $\Order(m^{k-1})$ using the baseline algorithm (which in fact computes all values $\Val(x_1, \ldots, x_k)$ in the same running time; see~\cite[Theorem 22 of the full version]{BringmannCFK21}). This step takes time $\Order(m^{k-\delta/3})$ and afterwards, we can assume that all vectors~$x_k$ are light.

\item Next, call a tuple $(x_1, \ldots, x_{k-1})$ a \emph{heavy prefix} if the number of vectors $x_k$ such that $(x_1, \dots, x_k)$ is a candidate is at least $n^{1-2\delta/3}$. There can be at most $n^{k-1-\delta/3}$ heavy prefixes, as otherwise $|C| > n^{k-\delta}$. Enumerating all such prefixes and solving the remaining $\MinSP_1$ problem by the baseline algorithm takes time $\Order(n^{k-1-\delta/3} m) = \Order(m^{k-\delta/3})$.

\item For any prefix $(x_1, \dots, x_{k-1})$ which is left, we can assume that there are at most $n^{1-2\delta/3}$ completions $x_k$ such that $(x_1, \dots, x_k)$ is a candidate. Moreover, recall that every such vector $x_k$ is light, and therefore of Hamming weight at most $\Order(m^{\delta/3})$. Hence, we brute-force over all remaining prefixes and solve a $\MinSP_1$ problem of size $\Order(n^{1-2\delta/3} m^{\delta/3})$. Again, we use the baseline algorithm, so this step takes time $\Order(n^{k-1} n^{1-2\delta/3} m^{\delta/3}) = \Order(m^{k-\delta/3})$ as well. \qedhere
\end{itemize}
\end{proof}

\begin{lemma}[Approximation via Listing Zeros] \label{lem:lsh}
Let $\phi$ be a variable Boolean function on $k \geq 2$ inputs so that $L$ solutions to $\VZero(\phi)$ can be listed in time $T(m, L)$. Then, for any $c > 1$ and $0 < \gamma < k$, there is a randomized algorithm that with high probability computes a $c$-approximation of $\VMin(\phi)$ in time
\begin{equation*}
    \tilde\Order\!\left(m^{k-\frac{\gamma(1-1/c)}{3}} + \sum_{i=0}^{\Order(\log n)} \min\{2^i, n^{\gamma/c}\} \mult T\!\left(\!m, \frac{n^{k-\gamma(1-1/c)}}{2^i}\!\right)\!\right)\!.
\end{equation*}
\end{lemma}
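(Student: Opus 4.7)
The plan is to use locality-sensitive hashing via random coordinate subsampling, combined with binary search over $\OPT$. The intuition is that a random restriction of the coordinate set $Y$ makes low-value tuples (those with value near $\OPT$) significantly more likely to evaluate to zero on the restricted instance than high-value tuples. Hence, running the listing algorithm for $\VZero(\phi)$ on a random restriction produces a set of candidate tuples that is enriched for near-optimal solutions, while false positives (tuples with value $> c \mult \OPT$) are suppressed. After collecting candidates over several repetitions, I exactly compute the value of each via \autoref{lem:candidates-exactly} and return the minimum.

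Concretely, I guess $B$ as an estimate of $\OPT$ by trying $B = 2^i$ for $i = 0, 1, \ldots, O(\log |Y|)$; since $0 \leq \OPT \leq |Y|$, one guess satisfies $B \leq \OPT \leq 2B$. For each $B$, I set a sampling rate $p \approx (\gamma \ln n)/(cB)$ on $Y$, chosen so that (i)~a tuple with value at most $B$ survives restriction (i.e.\ has value zero on the sampled coordinates) with probability at least roughly $n^{-\gamma/c}$, and (ii)~a tuple with value at least $cB$ survives with probability at most roughly $n^{-\gamma}$. By performing $R \approx \tilde{\Theta}(n^{\gamma/c})$ independent repetitions, the optimal tuple is caught by the listing algorithm in at least one repetition with high probability, while by Markov the total expected number of false positives across all repetitions is at most $R \mult n^k \mult n^{-\gamma} = \tilde{O}(n^{k-\gamma(1-1/c)})$, which can be ensured with high probability using standard amplification. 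Verifying this many candidates via \autoref{lem:candidates-exactly} takes time $\tilde{O}(m^{k - \gamma(1-1/c)/3})$, giving the first term in the claimed running time.

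To match the exact sum structure, I partition the $R$ repetitions into $O(\log n)$ geometric batches indexed by $i$: batch $i$ consists of $\min\{2^i, n^{\gamma/c}\}$ repetitions, and within each repetition of batch $i$ I cap the listing algorithm's output at $L_i = n^{k-\gamma(1-1/c)}/2^i$ candidates. This bucketed allocation keeps the expected candidate list per batch balanced, so the total verification cost remains within the first term while the listing invocations contribute exactly the claimed sum $\sum_i \min\{2^i, n^{\gamma/c}\} \mult T(m, n^{k-\gamma(1-1/c)}/2^i)$. The main technical obstacle will be the careful probabilistic analysis: one must verify that (a)~an approximately optimal tuple survives in at least one batch with high probability despite the per-call caps, (b)~the total candidate count concentrates tightly enough that verification never blows up across all $O(\log n)$ guesses of $B$, and (c)~the per-batch listing caps $L_i$ are actually large enough to contain the near-optimal tuple with the desired probability. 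Getting these three balanced simultaneously -- particularly reconciling (a) and (c) as $i$ grows -- is where the delicate work lies.
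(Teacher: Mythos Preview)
Your core approach---reduce to a gap problem via binary search on the target $t$, subsample $S$ coordinates so that tuples of value $\leq t$ survive with probability $\geq p_1^S$ and tuples of value $> ct$ survive with probability $\leq p_2^S$, set $S$ so that $p_2^S = n^{-\gamma}$ and $R = p_1^{-S} \leq n^{\gamma/c}$, bound the total expected false positives over all $R$ rounds by $n^{k-\gamma(1-1/c)}$, and verify all candidates via \autoref{lem:candidates-exactly}---is exactly what the paper does, with the same parameter choices.

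The gap is in how you obtain the summation in the running time. Your ``partition the $R$ repetitions into batches of size $\min\{2^i,R\}$'' is not a partition (these sizes sum to far more than $R$), and more importantly your concern~(c) is real and unresolved: if in some repetition the projected instance has more than $L_i = n^{k-\gamma(1-1/c)}/2^i$ zero-value tuples, a capped listing call need not return the optimal one, so correctness is not guaranteed by your scheme as stated.

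The paper avoids this entirely by a different bookkeeping device. It keeps a \emph{single global budget} $L = \Theta(n^{k-\gamma(1-1/c)})$ shared across all $R$ rounds; in each round it lists solutions by geometric doubling (try $2^0$, then $2^1$, etc., stopping once fewer than $2^i$ are found), and decrements the global $L$ by the number actually found. Since the total number of zero-value tuples over all rounds is at most $L$ with constant probability (Markov), every solution in every round is eventually listed---in particular the optimal one. The sum then falls out by counting how many rounds reach the $i$-th doubling step: at most $\min\{R, L/2^i\}$, since each such round contributes $\geq 2^{i-1}$ to the global budget. Reindexing $2^i \leftrightarrow L/2^i$ gives exactly $\sum_i \min\{2^i, R\} \cdot T(m, L/2^i)$. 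This sidesteps your obstacle~(c) completely: there are no per-round caps to worry about, only a global one that is large enough by the Markov argument.
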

\begin{proof}
We will consider the following gap version of the $\VMin(\phi)$ problem: Along with the input, we are given an additional target $t$. The task is to compute an output satisfying the following requirements:
\begin{itemize}
\item If $\OPT \leq t$, then return ``\emph{small}'' and attach a solution of value at most $c \mult t$.
\item If $\OPT > c \mult t$, then return ``\emph{large}''.
\item Otherwise, we do not care about the output.
\end{itemize}
An algorithm for this gap problem translates into an approximation algorithm for $\VMin(\phi)$ by using binary search on $t$ to determine $\OPT$ up to a factor of $c$. For that reason, we focus on the gap problem for the remainder of this proof.

Let $X_1, \ldots, X_k \subseteq \{0, 1\}^{|Y|}$ be the given gap instance with target $t$, and let $R$ and $S$ be parameters to be specified later. Throughout the algorithm, we maintain a global counter $L$ initialized to $L = 8n^{k-\gamma(1-1/c)} + 1$. The following steps are repeated~$R$ times:
\begin{enumerate}
\item Sample $S$ coordinates $y_1, \ldots, y_S \in Y$ uniformly at random. For a vector $x_i \in X_i$, we write $x_i'$ for the projected vector $(x_i[y_1], \ldots, x_i[y_S]) \in \{0, 1\}^S$ and we set $X_i' = \{ x_i' : x_i \in X_i \}$.
\item List up to $L$ solutions to the $\VZero(\phi)$ instance $(X_1', \ldots, X_k')$. Afterwards, decrease $L$ by the number of detected solutions.
\item For every tuple $(x_1', \ldots, x_k')$ listed in the previous step, check whether $\Val(x_1, \ldots, x_k) \leq c \mult t$. In that case, report ``\emph{small}'' and attach $(x_1, \ldots, x_k)$.
\end{enumerate}
If after all iterations no solution was detected, output ``\emph{large}''.

We start with the correctness argument. For a fixed round and a tuple $(x_1, \ldots, x_k)$, we distinguish two cases:
\begin{itemize}
\item If $\Val(x_1, \ldots, x_k) \leq t$, then: $\Pr\big[\Val(x_1', \ldots, x_k') = 0\big] \geq \big(1 - \frac t{|Y|}\big)^S =: p_1^S$.
\item If $\Val(x_1, \ldots, x_k) > ct$, then: $\Pr\big[\Val(x_1', \ldots, x_k') = 0\big] \leq \big(1 - \frac{ct}{|Y|}\big)^S =: p_2^S$.
\end{itemize}
The strategy is to set $R$ large enough such that any (fixed) solution $(x_1, \ldots, x_k)$ of value at most~$t$ is mapped to a tuple $(x_1', \ldots, x_k')$ of value zero in at least one round. It suffices to choose $R = p_1^{-S}$ as then the probability of failing in all $R$ rounds is bounded by $(1 - p_1^S)^R \leq 1/e$.

Next, we specify $S$ in such a way that the number of false positives -- that is, tuples $(x_1', \ldots, x_k')$ of value zero whereas $(x_1, \ldots, x_k)$ has large value -- is bounded by $\Order(m^{k-1})$ with good probability. Specifically, set
\begin{equation*}
    S := \frac{\gamma\log n}{\log(1/p_2)} = \frac{\gamma\log n}{\log (1 - \frac{ct}{|Y|})^{-1}} \leq \frac{\gamma|Y| \log n}{ct}.
\end{equation*}
It follows that
\begin{equation*}
    R = p_1^{-S} = \left(1 - \frac t{|Y|}\right)^{-S} \leq \left(1 - \frac t{|Y|}\right)^{\textstyle-\frac{\gamma|Y| \log n}{ct}} \leq n^{\gamma/c}.
\end{equation*}
The expected number of false positives among all rounds is therefore bounded by $R \mult n^k \mult p_2^S \le n^{\gamma/c} \mult n^k \mult n^{-\gamma} = n^{k-\gamma(1-1/c)}$. Using Markov's inequality, the number of false positives is at most $8n^{k-\gamma(1-1/c)}$ with probability at least $7/8$. Hence, it suffices to list $L = 8n^{k-\gamma(1-1/c)} + 1$ many solutions to all $\VZero(\phi)$ instances together, in order find at least one true positive with probability $1 - 1/e - 1/8 > 1/2$. By repeating the whole procedure $\Order(\log n)$ times, we amplify the success probability to $1 - n^{-\Omega(1)}$.

Finally, focus on the running time analysis. The first step runs in time $\Order(m)$ per round. The second and third steps are more involved. To achieve the claimed time bound, we have to implement the second step slightly more efficiently: Instead of running the listing algorithm with the global capacity $L$ directly, we use binary search to test for~$2^i$ solutions, $i = 0, 1, \ldots, \log L$. If in the $i$-th iteration there are strictly less than $2^i$ solutions, we stop and continue as above. Among all $R$ rounds, we execute that $i$-th step at most $L / 2^i$ times. Hence, the running time of step 2 is bounded by
\begin{equation*}
    \Order\!\left(\sum_{i=0}^{\log L} \min\left\{\frac{L}{2^i}, R\right\} \mult T(m, 2^i)\right) = \Order\!\left(\sum_{i=0}^{\log L} \min\{2^i, R\} \mult T\left(\!m, \frac{L}{2^i}\right)\right),
\end{equation*}
which is as stated, after plugging in the parameters from before. In the third step, across all rounds, we test at most $\Order(n^{k-\gamma(1-1/c)})$ tuples $(x_1, \dots, x_k)$ for their exact value $\Val(x_1, \dots, x_k)$. That task can be solved in time $\Order(m^{k-\gamma(1-1/c)/3})$ using \autoref{lem:candidates-exactly}.
\end{proof}

In the following two corollaries, we show how to apply~\autoref{lem:lsh} to obtain approximation schemes and constant-factor approximations.

\begin{corollary}[Constant-Factor Approximation via Listing Zeros] \label{cor:lsh-constant-approx}
Let $\phi$ be a Boolean function on~$k$ inputs. If $\VZero(\phi)$ admits a listing algorithm in time $\Order(m^{k-\delta} \mult L^{\delta/k})$ for some $\delta > 0$, then there is a randomized algorithm computing a constant-factor approximation of $\VMin(\phi)$ in time $\Order(m^{k-\delta'})$ for some $\delta' > 0$.
\end{corollary}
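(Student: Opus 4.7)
The plan is to apply \autoref{lem:lsh} directly, with the listing time $T(m,L) = \Order(m^{k-\delta} L^{\delta/k})$ plugged in, and then choose the free parameters $c$ and $\gamma$ so that every term in the resulting bound has exponent strictly below $k$. So I would fix some convenient constant $\gamma \in (0,k)$, say $\gamma = 1$, and pick a sufficiently large constant approximation factor $c > k/\delta$ (for example $c := 2k/\delta$), and show that both summands in the bound of \autoref{lem:lsh} are of the form $m^{k-\delta'}$ for some $\delta' = \delta'(k,\delta) > 0$.

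The routine work is to evaluate the sum
\[
    \sum_{i=0}^{\Order(\log n)} \min\{2^i, n^{\gamma/c}\} \mult m^{k-\delta} \mult \bigl(n^{k-\gamma(1-1/c)}/2^i\bigr)^{\delta/k}.
\]
Pulling out factors that do not depend on $i$ leaves $\min\{2^i,n^{\gamma/c}\}\cdot 2^{-i\delta/k}$, whose first branch is increasing in $i$ and whose second branch is decreasing; hence, up to a logarithmic factor the sum is dominated by its value at $2^i \approx n^{\gamma/c}$, which contributes $n^{\gamma(1-\delta/k)/c}$. Combining exponents (and using $n \leq \Order(m)$) yields a total of $\tilde\Order(m^{k-\delta\gamma/k+\gamma/c})$ for the sum, and the first summand of \autoref{lem:lsh} is $m^{k-\gamma(1-1/c)/3}$.

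With $c > k/\delta$, the exponent $-\delta\gamma/k+\gamma/c$ is strictly negative, and since $c > 1$ so is $-\gamma(1-1/c)/3$. Setting $\delta' := \tfrac{1}{2}\min\{\delta\gamma/k - \gamma/c,\;\gamma(1-1/c)/3\}$ (the factor $\tfrac12$ absorbs the polylog factor) gives the desired running time $\Order(m^{k-\delta'})$, and the success probability is high by \autoref{lem:lsh}.

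I do not expect any real obstacle here: once one commits to the parameter choice $c = \Theta(k/\delta)$, the computation is a straightforward exponent bookkeeping. The only minor subtlety is confirming that the geometric-type sum is indeed dominated (up to $\log n$) by its peak term at $2^i = n^{\gamma/c}$, and that the polylog factors suppressed by $\tilde\Order$ are absorbed into the slight slack in $\delta'$.
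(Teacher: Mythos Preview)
Your proposal is correct and is essentially the paper's own proof: the paper also applies \autoref{lem:lsh} with $\gamma = 1$ and a constant $c$ slightly above $k/\delta$, then checks that both summands have exponent strictly below $k$. The only cosmetic difference is that the paper bounds the sum more crudely---it uses $\min\{2^i,n^{1/c}\}\le n^{1/c}$ and $2^{-i\delta/k}\le 1$ on every term (so no peak analysis at all), obtaining $\tilde\Order(m^{k-\delta/k+(1+\delta/k)/c})$ and hence needing $c>k/\delta+1$ rather than your $c>k/\delta$; your sharper peak bound is fine but unnecessary.
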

\begin{proof}
We use~\autoref{lem:lsh} with $\gamma = 1$ and some constant $c > k/\delta + 1$. The running time is bounded by
\begin{align*}
    &\mathrel{\phantom=} \tilde\Order(m^{k-\frac{1-1/c}{3}} + n^{1/c} \mult m^{k-\delta} \mult (n^{k-(1-1/c)})^{\frac\delta k}) \\
    &= \tilde\Order(m^{k-\frac{1-1/c}{3}} + m^{k-\frac\delta k + \frac{1 + \delta/k}c}) \\
    &= \Order(m^{k-\delta'}),
\end{align*}
for $\delta' := \min\{\frac{1-1/c}3, \frac\delta k - \frac{1 + \delta/k}c\} > 0$. 
\end{proof}

\begin{corollary}[Approximation Scheme via Listing Zeros] \label{cor:lsh-as}
Let $\phi$ be a Boolean function on~$k$ inputs. If $\VZero(\phi)$ admits a listing algorithm in time $\tilde\Order(m^{k-\delta} + L)$ for some $\delta > 0$, then there is a randomized approximation scheme for $\VMin(\phi)$.
\end{corollary}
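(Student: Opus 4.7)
The plan is to invoke Lemma~\ref{lem:lsh} with an appropriate choice of parameters and show that, under the sharper output-linear listing bound $T(m,L) = \tilde\Order(m^{k-\delta} + L)$, the resulting running time for a $(1+\varepsilon)$-approximation is of the form $\Order(m^{k-\delta'})$ for some $\delta'= \delta'(\varepsilon) > 0$. Concretely, fix an arbitrary $\varepsilon > 0$, set the approximation ratio to $c := 1 + \varepsilon$, and choose a small parameter $\gamma \in (0, c\delta)$, say $\gamma := c\delta/2$.

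With this choice, I substitute $T(m, L) = \tilde\Order(m^{k-\delta} + L)$ into the expression from Lemma~\ref{lem:lsh} and split the sum into two contributions. The first contribution is
\[
\sum_{i=0}^{\Order(\log n)} \min\{2^i, n^{\gamma/c}\} \mult m^{k-\delta} \;=\; \tilde\Order(n^{\gamma/c} \mult m^{k-\delta}),
\]
since the sum $\sum_i \min\{2^i,n^{\gamma/c}\}$ is $\tilde\Order(n^{\gamma/c})$. The second contribution, coming from the linear-in-$L$ term, is
\[
\sum_{i=0}^{\Order(\log n)} \min\{2^i, n^{\gamma/c}\}\mult \frac{n^{k-\gamma(1-1/c)}}{2^i} \;=\; \tilde\Order(n^{k-\gamma(1-1/c)}),
\]
because each summand is bounded by $n^{k-\gamma(1-1/c)}$ and there are only logarithmically many terms. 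Combined with the additive $m^{k - \gamma(1-1/c)/3}$ term from Lemma~\ref{lem:lsh}, the total running time is
\[
\tilde\Order\bigl(m^{k-\gamma(1-1/c)/3} + n^{\gamma/c} m^{k-\delta} + n^{k-\gamma(1-1/c)}\bigr).
\]

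It remains to verify that each of these three terms is bounded by $\Order(m^{k-\delta'})$ for some $\delta' > 0$. Using $n \le m$, the third term is at most $m^{k-\gamma(1-1/c)}$, which is strongly subcubic in $m^k$ since $\gamma, 1-1/c > 0$. The first term is likewise subcubic because $\gamma(1-1/c)/3 > 0$. For the second term, note that $n^{\gamma/c} \mult m^{k-\delta} \le m^{k-\delta+\gamma/c}$, so my choice $\gamma/c = \delta/2$ yields $m^{k-\delta/2}$. Taking $\delta' := \min\{\gamma(1-1/c)/3,\, \delta/2\} > 0$ bounds the entire expression by $\tilde\Order(m^{k-\delta'})$, and absorbing the logarithmic factors into a slightly smaller constant exponent gives the desired $\Order(m^{k-\delta''})$ bound. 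Since $\varepsilon$ was arbitrary, this constitutes a randomized approximation scheme in the sense of Definition~\ref{def:approximation-scheme}. I expect no real obstacle here: the argument is a routine calculation once the correct parameter balance between $\gamma$, $c$, and $\delta$ is identified, and the only subtlety is to ensure that the exponent $\gamma/c$ from the blow-up in the number of rounds does not cancel the savings $\delta$ from the listing algorithm, which is why $\gamma$ must be chosen strictly smaller than $c\delta$.
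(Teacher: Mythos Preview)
Your proof is correct and follows essentially the same approach as the paper: invoke Lemma~\ref{lem:lsh} with $c = 1+\varepsilon$ and a small $\gamma$, then split the listing cost $T(m,L) = \tilde\Order(m^{k-\delta}+L)$ into its two contributions and bound each term. The only cosmetic difference is that the paper chooses $\gamma = \delta$ while you take $\gamma = c\delta/2$; both choices keep $\gamma/c < \delta$ so that the $n^{\gamma/c} m^{k-\delta}$ term stays sub-$m^k$, and the remaining terms are handled identically.
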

\begin{proof}
We use~\autoref{lem:lsh} with $\gamma = \delta$ and $c = 1 + \varepsilon$, for an arbitrarily small constant $\varepsilon > 0$. The running time is bounded by
\begin{align*}
    &\mathrel{\phantom=} \tilde\Order\!\left(m^{k-\frac{\delta(1-1/(1 + \varepsilon))}{3}} + \sum_{i=0}^{\Order(\log n)} \min\{2^i, n^{\delta/(1+\varepsilon)}\} \mult \!\left(\!m^{k-\delta} + \frac{n^{k-\delta(1-1/(1+\varepsilon))}}{2^i}\!\right)\!\right)\! \\
    &= \tilde\Order(m^{k-\frac{\delta(1-1/(1 + \varepsilon))}{3}} + n^{\delta/(1 + \varepsilon)} m^{k-\delta} + n^{k-\delta(1-1/(1+\varepsilon))}) \\
    &= \tilde\Order(m^{k-\frac{\delta(1-1/(1 + \varepsilon))}{3}}).
\end{align*}
Notice that indeed $\frac{\delta(1-1/(1+\varepsilon))}{3} > 0$.
\end{proof}

It remains to show how to list solutions to $\VZero(\phi)$ fast enough to apply Corollaries~\ref{cor:lsh-constant-approx} and~\ref{cor:lsh-as}. First we give a general algorithm, which can be then directly applied with~\autoref{lem:lsh} to yield~\autoref{cor:lsh-constant-approx}:

\begin{lemma} \label{lem:zero-to-listing}
Let $\phi$ be a Boolean function on~$k$ inputs. If $\VZero(\phi)$ can be solved in time $\Order(m^{k-\delta})$ for some $\delta > 0$, then there is an algorithm listing $L$ solutions of $\VZero(\phi)$ in time $T(m, L) = \tilde\Order(m^{k-\delta} \mult L^{\delta/k})$.
\end{lemma}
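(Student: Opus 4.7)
The plan is to reduce listing $L$ witnesses to $\VZero(\phi)$ to the decision version via a recursive self-reduction that balances the cost of surveying the instance with the $\Order(m^{k-\delta})$ decider against the cost of extracting individual witnesses. As a preliminary, given a decider running in time $\Order(m^{k-\delta})$, one can find a \emph{single} witness $(x_1,\ldots,x_k)$ to $\VZero(\phi)$ in time $\tilde\Order(m^{k-\delta})$ by the standard self-reduction: for each $i = 1, \ldots, k$ in turn, binary-search through $X_i$ to find a singleton that still admits a zero-valued completion, costing $\Order(k\log n)$ decision calls on instances of size at most $m$.

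For the main algorithm, I set $b := \lceil L^{1/k}\rceil$ and partition each set $X_i$ into $b$ parts balanced by \emph{sparsity} (total number of nonzero entries), so that each part contributes $\Order(m_i/b)$ entries. This yields $b^k$ disjoint sub-instances, each of total sparsity $\Order(m/b)$. The algorithm first runs the decider on every sub-instance in total time $\Order(b^k\cdot(m/b)^{k-\delta}) = \Order(L^{\delta/k} m^{k-\delta})$, identifying the $p \leq b^k$ sub-instances containing a witness. It then recursively lists witnesses from each such sub-instance, allocating a budget of $\lceil L/p\rceil$ per call; since the partitions of each $X_i$ are disjoint, witnesses found in different sub-instances are automatically distinct, so the collected list has no duplicates. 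The recurrence is
\[
  T(m,L) \leq \Order(L^{\delta/k} m^{k-\delta}) + p\cdot T\!\left(m/b,\,\lceil L/p\rceil\right).
\]
I would prove $T(m,L) = \tilde\Order(m^{k-\delta} L^{\delta/k})$ by induction on $m$, with base cases $L=1$ handled by the witness-finding subroutine, constant-size $m$ by brute force, and $L \geq m^k$ by the trivial $\Order(m^k)$ enumeration. Applying the inductive hypothesis to the recursive term and using $p^{1-\delta/k} \leq (b^k)^{1-\delta/k} = b^{k-\delta}$ yields $p\cdot(m/b)^{k-\delta}(L/p)^{\delta/k} = p^{1-\delta/k} \cdot m^{k-\delta} L^{\delta/k}/b^{k-\delta} \leq m^{k-\delta} L^{\delta/k}$, matching the decision cost up to a polylog factor absorbed into $\tilde\Order(\cdot)$.

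The main obstacle is the adversarial regime in which almost all witnesses are concentrated in a single sub-instance, so that $p$ is tiny and the recursive budget $\lceil L/p\rceil$ barely shrinks (possibly staying at $L$). Here, the progress is instead measured in the sparsity: $m$ decreases by a factor of $b = L^{1/k}$ per recursive level, so the recursion depth is bounded by $\Order(\log_b m) = \Order(k\log m / \log L)$, and the decision costs across levels form a geometric series with ratio $1/b^{k-\delta} < 1$, still summing to $\Order(L^{\delta/k} m^{k-\delta})$. The remaining technicalities---ensuring that a sparsity-balanced partition of each $X_i$ can be built in linear time, handling integer rounding when $L^{1/k}$ is not an integer, and carefully merging the small-$L$ and large-$L$ regimes with the single-witness self-reduction and the brute-force baseline, respectively---are routine.
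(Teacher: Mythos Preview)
Your strategy---a recursive self-reduction that splits each $X_i$ by sparsity, prunes sub-instances via the decider, and recurses---is the same as the paper's, and your recurrence manipulation is correct for the recurrence you wrote down. However, two concrete issues prevent the argument from going through as stated.

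\textbf{Budget allocation.} Allocating $\lceil L/p \rceil$ to each of the $p$ YES sub-instances does not guarantee that $L$ witnesses are listed when at least $L$ exist. Take $p = 2$ where one sub-instance contains $L-1$ witnesses and the other contains exactly one: your algorithm returns at most $\lceil L/2\rceil + 1 < L$. The adversarial regime you flag (tiny $p$, budget barely shrinking) is actually benign for correctness; the real failure mode is $p \geq 2$ with unbalanced witness counts. The paper instead uses a fixed \emph{binary} split and processes sub-instances sequentially with a single global counter, stopping as soon as $L$ witnesses have been reported. Its analysis is then a direct level count rather than a recurrence: at depth $i$ the sparsity is $m/2^i$ and the number of visited nodes is at most $\min\{L,2^{ik}\}$ (each completed node contributes at least one listed witness, and the nodes at any fixed depth are witness-disjoint), yielding $\sum_i \min\{L,2^{ik}\}\cdot\Order((m/2^i)^{k-\delta}) = \tilde\Order(m^{k-\delta}L^{\delta/k})$.

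\textbf{Heavy vectors.} A sparsity-balanced $b$-way partition of $X_i$ need not exist: a single vector of Hamming weight exceeding $m_i/b$ forces its part past the $\Order(m_i/b)$ bound, so that recursive sub-instance does not shrink and your induction on $m$ stalls. The paper's binary split handles this explicitly by isolating one ``middle'' vector $\hat x_i$ per set (the one straddling the halfway mark) and listing all solutions involving any $\hat x_i$ via the $\Order(m^{k-1})$ baseline before recursing on the two clean halves. This is not quite the routine technicality you suggest; your $b$-way split would need an analogous explicit treatment.
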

\begin{proof}
Let $X_1, \ldots, X_k \subseteq \{0, 1\}^Y$ be the given input. We write $m(X_i)$ for the total number of ones in all vectors in $X_i$. Each set $X_i$ can be partitioned into sets $X_i = X_i^1 \union \{\hat x_i\} \union X_i^2$ where $m(X_i^1) < m(X_i) / 2$ and $m(X_i^2) < m(X_i) / 2$ in the following way: We greedily add elements to $X_i^1$ until $m(X_i^1) < m(X_i) / 2$ but $m(X_i^1 \union \{\hat x_i\}) \geq m(X_i) / 2$ for the next element $\hat x_i$. We take $\{\hat x_i\}$ to be a set on its own and set $X_i^2 = X_i \setminus (X_i^1 \union \{\hat x_i\})$.

We can list all solutions $(x_1, \ldots, x_k)$ involving one of the middle elements $\hat  x_i$ in time $\Order(m^{k-1})$ using the baseline algorithm. There are $2^k$ many subproblems remaining, where each is of the form $(X_1^{j_1}, \ldots, X_k^{j_k})$ for $j_1, \ldots, j_k \in \{1, 2\}$. Using the assumed efficient algorithm for $\VZero(\phi)$, we test for each of these instances whether it contains a solution. For each subproblem which is a YES instance, continue recursively. If at some point we reported more than $L$ solutions, stop.

Since we cannot miss solutions in the branching, the correctness is clear. For the running time, note that in each recursive call the sparsity was halved, and therefore the recursion depth is bounded by $\Order(\log m)$. Each individual call on the $i$-th recursion level takes time $\Order((m/2^i)^{k-\delta})$ and there are at most $\min\{L, 2^{ik}\}$ calls on that level. Hence, we can bound the total running time by
\begin{equation*}
    T(m, L) \leq \sum_{i=0}^{\Order(\log m)} \min\{L, 2^{ik}\} \mult \Order((m/2^i)^{k-\delta}) \leq \tilde\Order(m^{k-\delta} L^{\delta/k}).
\end{equation*}
The last inequality holds because there are $\Order(\log m)$ summands and the largest one has $L = 2^{ik}$.
\end{proof}

When $\phi$ has and-hardness $\Hand(\phi) \leq 1 < k$ we can improve upon~\autoref{lem:zero-to-listing}:

\begin{lemma} \label{lem:h1-to-listing}
Let $\phi$ be a Boolean function on~$k$ inputs. If $\Hand(\phi) \leq 1 < k$, then there is an algorithm listing $L$ solutions of $\VZero(\phi)$ in time $T(m, L) = \tilde\Order(m^{k-1} + L)$.
\end{lemma}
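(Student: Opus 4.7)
The plan is to first reduce to a 2-variable hybrid listing problem by brute-force, and then list all solutions of the hybrid problem in near-linear time via random linear hashing. Since $\Hand(\phi) \leq 1$, there exists a pair of indices (WLOG $\{1,2\}$) such that no $\{1,2\}$-restriction of $\phi$ has a unique satisfying assignment. Brute-forcing over the $n^{k-2}$ tuples $(x_3,\dots,x_k) \in X_3 \times \cdots \times X_k$, each such tuple induces a 2-variable hybrid $\VZero(\Phi)$ instance in which every coordinate $y$ is associated with the restriction $\phi_y := \phi|_{z_3 = x_3[y], \dots, z_k = x_k[y]}$. By the choice of indices, each $\phi_y$ has and-hardness at most $1$, i.e., its set of falsifying assignments $F_y \subseteq \{0,1\}^2$ has size $|F_y| \in \{0,1,2,4\}$. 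The goal is a subroutine handling each hybrid instance in time $\tilde\Order(m + L_{\text{local}})$, which yields total time $\tilde\Order(n^{k-2} m + L) = \tilde\Order(m^{k-1} + L)$.

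For the hybrid subroutine, I would classify each coordinate~$y$ by $F_y$: the case $|F_y| = 0$ means the instance has no solution; the case $|F_y| = 4$ is trivial and the coordinate can be discarded; $|F_y| = 1$ fixes both $x_1[y]$ and $x_2[y]$ and splits into two bit-constraints; and $|F_y| = 2$ either fixes one of the bits, or forces agreement ($x_1[y] = x_2[y]$) or disagreement ($x_1[y] \neq x_2[y]$). After this preprocessing, the coordinates partition into four groups: bit constraints $Y_1, Y_2$ (of the form $x_i[y] = b_y$), an equality set $Y^{=}$, and a disagreement set $Y^{\neq}$. A pair $(x_1, x_2)$ satisfies $\VZero$ if and only if all constraints in these four groups are simultaneously met.

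To list such pairs in near-linear time, I would sample uniform random values $r_y \in \mathbb{F}_q$ for $q = n^{\Theta(k)}$ and define a random linear hash
\[
  H_1(x_1) = \sum_{y \in Y_1} r_y(x_1[y] - b_y) + \sum_{y \in Y^{=} \cup Y^{\neq}} r_y\, x_1[y],
\]
\[
  H_2(x_2) = \sum_{y \in Y_2} r_y(x_2[y] - b_y) + \sum_{y \in Y^{=}} r_y\, x_2[y] - \sum_{y \in Y^{\neq}} r_y\, x_2[y] + C,
\]
where $C := \sum_{y \in Y^{\neq}} r_y$ is a global constant. A short algebraic check shows that $H_1(x_1) = H_2(x_2)$ holds deterministically whenever $(x_1,x_2)$ satisfies all constraints, while for any fixed violating pair the difference $H_1(x_1) - H_2(x_2)$ is a nonzero polynomial in $(r_y)_y$ and thus zero with probability $1/q$; a union bound over all $n^k$ candidate pairs across all brute-force iterations makes false positives vanish with high probability. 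I would then bucket $X_1$ and $X_2$ by hash value and output cartesian products of matching buckets until the global budget of $L$ pairs is exhausted.

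The main obstacle is making each hash computation run in time proportional to the sparsity of the involved vector, not to $|Y|$. A naive encoding of the disagreement constraints requires evaluating $1 - x_2[y]$ for every $y \in Y^{\neq}$, which would blow up to $\Theta(nd)$ work; extracting the data-independent constant $C = \sum_{y \in Y^{\neq}} r_y$ fixes this, since the remaining terms in $H_1, H_2$ depend only on the nonzero entries of each vector. Once this is set up, precomputing $C$ and $\sum_{y \in Y_i} r_y b_y$ costs $\Order(|Y|) = \Order(m)$ per hybrid instance, hashing all vectors costs $\tilde\Order(m)$, bucketing costs $\tilde\Order(n)$, and output is charged against the $L$-budget, giving the claimed $\tilde\Order(m^{k-1} + L)$ bound.
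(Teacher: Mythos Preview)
Your overall strategy is sound and takes a genuinely different route from the paper. The paper handles the two-variable hybrid instance by a direct, deterministic case analysis: for each nontrivial type of $\phi_y$ it observes that the zero-set is so structured (at most one matching $x_2$ per $x_1$ in Case~2a, a single participating $x_1$ in Case~2b, a single pair in Case~3) that all solutions can be listed in time $\tilde\Order(m)$, and hybrid solutions are obtained by intersecting these constantly many lists of size $\Order(n)$. Your random-linear-hash approach is more uniform and also achieves $\tilde\Order(m + L_{\text{local}})$ per hybrid instance, at the price of randomization.

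There is, however, a real bug in your hash. For a coordinate $y$ with $|F_y| = 1$ and falsifying assignment $(a,a)$, you place $y$ into both $Y_1$ and $Y_2$ with the \emph{same} coefficient $r_y$. The contribution of $y$ to $H_1(x_1) - H_2(x_2)$ is then
\[
  r_y\big((x_1[y]-a)-(x_2[y]-a)\big) = r_y\big(x_1[y]-x_2[y]\big),
\]
which vanishes whenever $x_1[y] = x_2[y]$, in particular when $x_1[y] = x_2[y] = 1-a$, where both bit constraints are violated. Thus your claim that ``for any fixed violating pair the difference is a nonzero polynomial in $(r_y)_y$'' fails: if every coordinate has this type, any pair with $x_1 = x_2$ hashes equally and you output false positives. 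The fix is immediate: either sample independent coefficients $r_y^{(1)}, r_y^{(2)}$ for the two bit constraints at such a coordinate, or---cleaner still---filter each $X_i$ against its unary bit constraints in a preliminary $\Order(m)$ pass and hash only the equality/disagreement constraints, which your analysis already handles correctly.
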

\begin{proof}
This proof is similar to~\cite[Lemma 11]{BringmannFK19}. Let us start with the case $k = 2$; we later point out how to generalize. We distinguish several formulas $\phi$:

\begin{description}
\item[Case 0: \boldmath$\phi$ has 0 or 4 satisfying assignments.] Then $\phi$ trivially accepts or rejects every input and we can list $L$ solutions in time $\Order(L)$.
\item[Case 1: \boldmath$\phi$ has 1 satisfying assignment.] This case is contradictory since $\Hand(\phi) \leq 1$.
\item[Case 2: \boldmath$\phi$ has 2 satisfying assignments.]
\item[Case 2a: \boldmath$\phi(z_1, z_2) = (z_1 \leftrightarrow z_2)$ or $\phi(z_1, z_2) = (z_1 \centernot\leftrightarrow z_2)$.] \vspace*{-\itemsep} In this case, any solution $(x_1, x_2)$ requires that either $x_1$ and $x_2$ are equal in every coordinate, or exactly negations of each other. In any case, for each vector $x_1$ there exists at most one choice $x_2$ such that $(x_1, x_2)$ is a solution. Thus, it is easy to list all solutions in time $\tilde\Order(m)$.
\item[Case 2b: \boldmath$\phi$ depends only on one input $z_i$.] Without loss of generality, assume that $i = 1$. Then there can be at most one vector $x_1$ which participates in a solution. It takes time $\Order(m)$ to identify this vector and to list all solutions $(x_1, x_2)$, for all $x_2$.
\item[Case 3: \boldmath$\phi$ has 3 satisfying assignments.] Then $\phi$ has only one falsifying assignment, say $(1, 1)$. That means that the only possible solution $(x_1, x_2)$ is that both $x_1$ and $x_2$ are the all-ones vector. It easy to check whether such a pair exists in time $\Order(m)$.
\end{description}
Let $\Phi = \{ \phi(z_1, z_2) : \Hand(\phi) \leq 1 \}$. By combining the previous cases, we can also list $L$ solutions of the hybrid problem $\VZero(\Phi)$ in time $\tilde\Order(m + L)$. Indeed, we can list \emph{all} solutions for each individual \emph{nontrivial} problem $\VZero(\phi)$ in time $\tilde\Order(m)$, and report only those solutions which are part of every individual solution set. There are two corner cases: If there are some coordinates associated to the constant $\false$-function, then we return no solution at all, and if all coordinates are associated to the constant $\true$-function, then we proceed as in Case 0.

Finally, assume that $k > 2$. We show how to list $L$ solutions to $\VZero(\phi)$. Since $\Hand(\phi) \leq 1$, there exists a set of $k - 2$ indices, say $\{3, \ldots, k\}$, which we can brute-force over, such that the remaining problem is exactly $\VZero(\Phi)$. For any tuple $(x_3, \ldots, x_k)$, we first compute the number of solutions $(x_1, x_2, x_3, \ldots, x_k)$ in time $\tilde\Order(m)$, using the same case distinction as before. Afterwards, we split our global budget $L$ in such a way that we never list too many solutions of $\VZero(\Phi)$, and run the previous listing algorithm for every tuple $(x_3, \ldots, x_k)$. The total running time is $\tilde\Order(n^{k-2} m + L) = \tilde\Order(m^{k-1} + L)$. 
\end{proof}

Finally, we put things together and prove the main result of this section:

\begin{proof}[Proof of~\autoref{lem:approx-min}]
The first bullet follows by combining~\autoref{cor:lsh-as} and \autoref{lem:h1-to-listing}. The second bullet, and in particular the equivalence of $\VZero(\phi)$ and approximating $\VMin(\phi)$ within a constant factor, follows by combining~\autoref{cor:lsh-constant-approx} and \autoref{lem:zero-to-listing}.
\end{proof}

One might hope to give an approximation scheme for $\Hand(\phi) = 2$ in a similar way. Unfortunately, it is unlikely that a listing algorithm for $\VZero(\phi)$ in time $\tilde\Order(n^{k-\delta} + L)$ exists: There exists a fine-grained reduction from triangle detection to $\VZero(\phi)$~\cite[Lemma 20]{BringmannFK19}, and thus we are implicitly also hoping for an output-linear triangle listing algorithm. However, previous work \cite{Patrascu10,BjorklundPWZ14} suggests that we cannot avoid a dependence on $L \mult \poly(n)$.
\section{Additive Approximation Algorithms} \label{sec:additive}
We say that an algorithm computes an \emph{additive $c$-approximation} if it returns a solution whose value is contained in the interval $[\OPT - c, \OPT + c]$. Similarly as for multiplicative approximations, we can define approximation schemes in the additive setting.

\begin{definition}[Additive Approximation Scheme] \label{def:additive-approximation-scheme}
Let $\psi$ be an $\OptSP_k$ formula. We say that $\Opt(\psi)$ admits an \emph{additive approximation scheme} if, for any $\varepsilon > 0$, there exists some $\delta > 0$ and an algorithm computing an additive $\varepsilon |Y|$-approximation of $\Opt(\psi)$ in time $\Order(m^{k-\delta})$.

We say that $\Opt(\psi)$ admits an \emph{efficient} additive approximation scheme if there exists some fixed constant $\delta > 0$, such that for any $\varepsilon > 0$ there exists an algorithm computing an additive $\varepsilon |Y|$-approximation of $\Opt(\psi)$ in time $\Order(m^{k-\delta})$.
\end{definition}

The complexity landscape of multiplicative approximations turned out to be rich, featuring natural problems in all major regimes of approximation guarantees. In contrast, we draw a significantly simpler picture for additive approximations.

\begin{theorem}[Additive Approximation] \label{thm:additive-approximation-classification}
Let $\psi$ be an $\OptSP_k$ graph formula.
\begin{itemize}
\item There exists a randomized additive approximation scheme for $\Opt(\psi)$.
\item If $3 \leq \Hdeg(\psi)$ or $\Hdeg(\psi) = k$, then there exists no efficient additive approximation scheme for $\Opt(\psi)$ unless the Sparse \MAXThreeSAT{} hypothesis fails.
\end{itemize}
\end{theorem}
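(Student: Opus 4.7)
For the first bullet, the plan is to use standard random subsampling of the coordinate set. Given $\varepsilon>0$, draw $s=\Theta(k\log n/\varepsilon^{2})$ coordinates $Y'\subseteq Y$ uniformly at random. For any fixed tuple $(x_{1},\ldots,x_{k})$, a Chernoff bound together with a union bound over the $n^{k}$ tuples will guarantee that, with high probability, the rescaled count $(|Y|/s)\cdot\#\{y'\in Y':\phi(x_{1},\ldots,x_{k},y')\}$ approximates $\Val(x_{1},\ldots,x_{k})$ within additive $\varepsilon|Y|$ \emph{simultaneously} for every tuple. Solving the subsampled instance by the baseline algorithm on the reduced dimension and rescaling the optimum therefore yields an additive $\varepsilon|Y|$-approximation in time $\tilde\Order(n^{k}/\varepsilon^{\Order(1)})$. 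To cast this into the form $\Order(m^{k-\delta})$, I would split on the sparsity: if $m\ge n^{1+\alpha}$ for $\alpha=\alpha(\varepsilon)>0$ chosen sufficiently large, one directly has $n^{k}\le m^{k-\delta}$ for some $\delta=\delta(\varepsilon)>0$; in the complementary very-sparse regime $m<n^{1+\alpha}$, I would exploit the fact that at most $\Order(m)$ vectors in $X_{1}\cup\cdots\cup X_{k}$ are nontrivially nonzero, deduplicate the all-zero vectors, and finish by direct enumeration (or by invoking~\autoref{lem:candidates-exactly} on the resulting candidate set).

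For the second bullet, the plan is to bootstrap the multiplicative hardness of~\autoref{thm:no-efficient-as}. The key observation is that on any instance with $\OPT=\Theta(|Y|)$, an additive $\varepsilon|Y|$-approximation is also a multiplicative $(1+\Order(\varepsilon))$-approximation. Assuming an efficient additive approximation scheme in time $\Order(m^{k-\delta})$ for a fixed $\delta>0$ and every constant $\varepsilon>0$, I would then inspect the proof of~\autoref{thm:no-efficient-as} to verify that the reductions from Sparse \MAXThreeSAT{} there can be arranged to produce hard instances satisfying $\OPT=\Theta(|Y|)$; in any case where the natural reduction does not yet satisfy this property, I would preprocess the instance so that $|Y|=\Order(m/n)$ (by collapsing empty and duplicate coordinates) and then pad $Y$ with $\Theta(|Y|)$ fresh coordinates on which every vector of $X_{i}$ is assigned a fixed value $\alpha_{i}$ from some satisfying assignment $\alpha$ of $\phi$. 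On these padded coordinates every tuple contributes exactly $\Theta(|Y|)$, so $\OPT$ becomes $\Theta(|Y_{\text{new}}|)$ (both for $\Max(\psi)$ and for $\Min(\psi)$), the multiplicative gap is preserved up to constants, and the sparsity grows by only a constant factor. The assumed efficient additive scheme would then yield an efficient multiplicative one, contradicting~\autoref{thm:no-efficient-as}.

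I expect the main obstacle to lie in the very-sparse case of the first bullet: when $m=\Theta(n)$ the $\tilde\Order(n^{k})$-time rescaled baseline already matches $\Order(m^{k})$ rather than beating it, so a nontrivial argument is needed. This will likely have to leverage the extreme structural simplicity of such instances (almost all vectors are zero) in combination with~\autoref{lem:candidates-exactly} or with the existing approximation schemes for low-hardness classes (\autoref{lem:max-approximation-scheme} and~\autoref{lem:approx-min}). A secondary delicate point in Part~2 is to confirm that the $\OPT=\Theta(|Y|)$ property propagates through the full chain of reductions underlying~\autoref{thm:no-efficient-as}; the proposed padding trick handles the generic case but must be checked against the sparsity budget $m^{k-\delta}$.
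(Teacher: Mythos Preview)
Your first bullet has a real gap, and it is exactly the one you flagged. After subsampling to $s=\Theta(\log n)$ coordinates, you propose to solve the reduced instance by the \emph{baseline} algorithm in time $\tilde\Order(n^{k})$ and then hope that sparsity saves you. But in the regime $m=\Theta(n)$ this is $\tilde\Order(m^{k})$, and your suggested fixes do not work: deduplicating zero vectors does not help when you have $\Theta(n)$ vectors each of Hamming weight $\Theta(1)$, and \autoref{lem:candidates-exactly} requires a candidate set of size $\Order(n^{k-\delta})$, which you do not have. The paper avoids this issue entirely by a different route: after the dimension reduction to $d'=\Order(\log n)$ (your subsampling step is exactly \autoref{lem:additive-dimension-reduction}), it reduces $\VMax(\phi)$ in $\Order(\log n)$ dimensions to low-dimensional \MaxIP{} by encoding each satisfying assignment of $\phi$ as a block of coordinates, brute-forces $k-2$ variables, and then solves the remaining two-variable \MaxIP{} instance in $\Order(n^{2-\delta})$ time via the polynomial method (\autoref{lem:max-ip-polymethod}, due to Alman--Williams). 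This gives $\Order(n^{k-\delta})\le\Order(m^{k-\delta})$ directly, with no case analysis on sparsity. The missing idea in your plan is precisely this appeal to a nontrivial subquadratic algorithm for the low-dimensional subproblem; the baseline is not enough.

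For the second bullet your direction is inverted relative to the paper. The paper proves the additive lower bound \emph{first} (Lemmas~\ref{lem:no-efficient-as-seth} and~\ref{lem:no-efficient-as-sparse-max3sat}) and then observes that an efficient multiplicative scheme trivially yields an efficient additive one, so \autoref{thm:no-efficient-as} follows as a corollary. Concretely, the reduction already produces instances where the YES/NO gap is an explicit constant fraction of $d'$, so a sufficiently small constant~$\varepsilon$ in the additive scheme distinguishes the two cases; no padding is needed. Your plan to go the other way---pad to force $\OPT=\Theta(|Y|)$ and invoke \autoref{thm:no-efficient-as}---is circular if you rely on the paper's proof of \autoref{thm:no-efficient-as}, since that proof goes through the additive statement. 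The padding trick could in principle be made to stand on its own, but it is strictly more work than the paper's direct argument.
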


In summary, all problems $\Opt(\psi)$ admit approximation schemes, but no efficient ones (except for the problems that can be optimized exactly). 

In this section we will show the first item of~\autoref{thm:additive-approximation-classification}, i.e. that all problems $\Opt(\psi)$ admit additive approximation schemes. We defer the proof of the second item to~\autoref{sec:hardness-approx:sec:no-eas}. In particular, we will show the following:

\begin{lemma}[Additive Approximation Scheme] \label{lem:additive-approximation-algorithms}
Let $\phi$ be an arbitrary Boolean function. There exists a randomized additive approximation scheme for $\VOpt(\phi)$.
\end{lemma}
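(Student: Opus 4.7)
The plan is to apply random sub-sampling of the coordinate set $Y$ and then solve the resulting low-dimensional $\VOpt(\phi)$ problem via a polynomial-method-based subroutine. By the equivalence of vector problems and graph formulas (\autoref{thm:equiv-vector-problems}), proving the claim for $\VOpt(\phi)$ suffices.

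First I would sample $Y' \subseteq Y$ of size $s = C k \varepsilon^{-2} \log n$ uniformly at random with replacement, for a sufficiently large constant $C$. For each fixed tuple $(x_1, \dots, x_k)$, Hoeffding's inequality applied to the $\{0, 1\}$-valued summands $\phi(x_1[y], \dots, x_k[y])$ ensures that the rescaled empirical value $\widetilde{\Val}(x_1,\dots,x_k) := \frac{d}{s} \sum_{y \in Y'} \phi(x_1[y], \dots, x_k[y])$ deviates from $\Val(x_1, \dots, x_k)$ by more than $\varepsilon d$ with probability at most $n^{-(k+1)}$. A union bound over the $n^k$ tuples then guarantees that $\max_{x_1,\dots,x_k} \widetilde{\Val}$ (respectively, $\min_{x_1,\dots,x_k} \widetilde{\Val}$) approximates the true optimum up to additive error $\varepsilon d$ with high probability.

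It remains to compute this optimum on the sub-sampled instance, which is itself a $\VOpt(\phi)$ instance on the same $n$ vectors but now of dimension $s = O(\log n)$ (with hidden constants depending on $\varepsilon$ and $k$). I would invoke a polynomial-method-based algorithm for low-dimensional $\VOpt(\phi)$ running in time $O(n^{k-\gamma})$ for some $\gamma = \gamma(\varepsilon, k, \phi) > 0$, in the spirit of the Alman-Chan-Williams framework underlying \autoref{lem:furthest-neighbor-polynomial-method}. Since we may assume $n \leq O(m)$, the total running time is $O(m^{k-\gamma})$, yielding the required $\delta = \gamma > 0$. Note that sub-sampling alone plus the sparse baseline $O((m')^k)$ is insufficient whenever $n = \Theta(m)$ (i.e., when the average Hamming weight is $O(1)$), which is precisely why a faster low-dimensional subroutine is required.

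The main obstacle I expect is the design of such a low-dimensional algorithm for an arbitrary $k$-variable Boolean function $\phi$. For $k=2$ and specific $\phi$ (e.g., Inner Product, Hamming distance) such algorithms are by now standard. For general $k$ and $\phi$, one would decompose $\phi$ using its multilinear polynomial extension $\phi(z) = \sum_S \widehat\phi(S) z^S$ and handle each monomial either by exact fast matrix/tensor multiplication (for low-arity $S$) or by probabilistic polynomial approximations (for high-arity $S$), combining the partial results carefully to beat the baseline running time $O(m^k)$.
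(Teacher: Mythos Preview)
Your dimension-reduction step is exactly the paper's \autoref{lem:additive-dimension-reduction}, so the first half of the argument is correct and matches the paper.

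The gap is in the second half. You correctly identify that a low-dimensional $\VOpt(\phi)$ solver is needed and flag it as ``the main obstacle,'' but your proposed resolution---decomposing $\phi$ into its multilinear monomials and handling each by matrix/tensor tricks or probabilistic polynomials---is not a proof. The difficulty is that we are \emph{optimizing} a sum: even if you could compute $\sum_y \prod_{i\in S} x_i[y]$ fast for each fixed $S$, you cannot optimize the monomials independently and then combine, because the optimizers for different $S$ will generally disagree. So as written, the proposal does not close.

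The paper resolves this step much more simply, by a black-box reduction to a \emph{single known} low-dimensional problem rather than by designing a new polynomial-method algorithm for arbitrary $\phi$. After reducing to dimension $d'=O(\log n)$, let $\alpha_1,\dots,\alpha_\ell$ be the satisfying assignments of $\phi$. Replace each coordinate $y$ by $\ell$ new coordinates $(y,j)$, setting $x_i'[(y,j)] = x_i[y]$ if $\alpha_j[i]=1$ and $x_i'[(y,j)] = 1 - x_i[y]$ otherwise. Then $\Val(x_1,\dots,x_k)=\innerprod{x_1'}{\dots,x_k'}$ exactly, so the instance becomes \kMaxIP{k} in dimension $O(\log n)$. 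Now brute-force $k-2$ variables and solve the remaining \kMaxIP{2} instance via the Alman--Williams low-dimensional \MaxIP{} algorithm (\autoref{lem:max-ip-polymethod}) in time $O(n^{2-\delta})$. The minimization case follows from the max--min equivalence (\autoref{lem:additive-max-min}), which again uses dimension reduction to make complementation cheap. This avoids any need to tailor the polynomial method to $\phi$.
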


Combining~\autoref{lem:additive-approximation-algorithms} with the equivalence of $\Opt(\psi)$ and vector problems $\VOpt(\psi)$ (which also holds for additive approximations, see~\autoref{sec:vector} for details), we obtain a proof for the algorithmic part of~\autoref{thm:additive-approximation-classification}.

The techniques used to obtain multiplicative approximations often differed substantially for maximization and minimization. As it turns out, to obtain additive approximation schemes we can treat both in a unified way. Indeed, to prove~\autoref{lem:additive-approximation-algorithms}, we demonstrate how to reduce every problem $\VOpt(\phi)$ to an instance of low-dimensional Maximum Inner Product (\MaxIP{}), or equivalently, low-dimensional Minimum Inner Product (\MinIP{}). It is known that both \MaxIP{} and \MinIP{} admit polynomial-time improvements in the low-dimensional regime using the polynomial method~\cite{AlmanW15,AlmanCW16}:

\begin{theorem}[\cite{AlmanW15}] \label{lem:max-ip-polymethod}
For every $c > 0$, there exists some $\delta > 0$ such that \MaxIP{} on $d = c \log n$ dimensions can be solved in time $\Order(n^{2-\delta})$.
\end{theorem}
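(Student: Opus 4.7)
The plan is to apply the probabilistic polynomial method combined with fast rectangular matrix multiplication, following Alman--Williams. First I would reduce \MaxIP{} to $d + 1 = \Order(\log n)$ decision problems by iterating over thresholds $t \in \{0, 1, \dots, d\}$: for each~$t$, decide whether some pair $(x_1, x_2) \in X_1 \times X_2$ satisfies $\innerprod{x_1}{x_2} \geq t$; the largest answer-yes $t$ equals the optimum. Then, for each fixed~$t$, I partition $X_1$ into blocks $G_1, \dots, G_{n/s}$ of size $s = n^\beta$, for a small constant $\beta = \beta(c) > 0$ to be chosen. For each block $G_i$, I build a randomized polynomial $Q_i(x_2)$ over $\field_2$ in the $d$ bits of $x_2$ that, with failure probability at most $n^{-3}$, evaluates to $1$ exactly when some $x_1 \in G_i$ witnesses $\innerprod{x_1}{x_2} \geq t$. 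I would assemble $Q_i$ in two compositional layers: an inner probabilistic polynomial of degree $\Order(\sqrt{d \log n})$ computing the threshold $[\innerprod{x_1}{x_2} \geq t]$ for each fixed $x_1 \in G_i$ (via the standard Razborov--Smolensky-style construction applied to the $d$ degree-$2$ products $x_1[y] \mult x_2[y]$), composed with an outer probabilistic OR of degree $\Order(\log n)$ aggregating the $s$ resulting bits. Multilinearization rewrites $Q_i(x_2) = \sum_\mu c_{i,\mu} \mult m_\mu(x_2)$ over a set of at most $M$ multilinear monomials in $x_2$'s bits.

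Next I would evaluate $Q_i(x_2)$ for all $(G_i, x_2)$ by a single rectangular matrix product: form the coefficient matrix $A$ of dimensions $(n/s) \times M$ from the $c_{i,\mu}$ and the monomial matrix $B$ of dimensions $M \times n$ from the evaluations $m_\mu(x_2)$; then $AB$ reveals all $(n/s) \mult n$ block--vector answers simultaneously. Any block $G_i$ flagged positive is resolved exactly by brute-forcing its $s \mult n$ inner pairs in $\Order(s n d)$ time. Iterating over the $\Order(\log n)$ thresholds adds only a logarithmic overhead, and union-bounding the per-evaluation failure probability $\leq n^{-3}$ over all $\Order(\log n) \mult (n/s) \mult n$ polynomial evaluations yields correctness with high probability.

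The main obstacle is balancing the block size $s$, the monomial count $M$, and the rectangular matrix multiplication complexity. For $d = c \log n$, a careful degree analysis shows that $M \leq n^{g(c, \beta)}$ for some function $g$ satisfying $g(c, \beta) \to 0$ as $\beta \to 0$. Choosing $\beta = \beta(c) > 0$ sufficiently small ensures $g(c, \beta) < \alpha^\star$, the largest exponent $\alpha$ for which rectangular multiplication of shapes $n^{1-\beta} \times n^\alpha$ and $n^\alpha \times n$ still runs in time $\Order(n^{2-\beta+o(1)})$; this is where the polynomial savings come from, because blocking shaves off a factor $n^\beta$ while $M$ remains subpolynomial in $n$. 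The brute-force cleanup contributes $\Order(n^{1+\beta+o(1)})$ per positive block and is absorbed by the matrix-product term. Combining everything yields total time $\Order(n^{2-\delta})$ for some $\delta = \delta(c) > 0$, as claimed. The delicate point is to verify that $g(c, \beta)$ can be made smaller than the matrix-multiplication threshold $\alpha^\star$ simultaneously with $\beta > 0$; this requires exploiting current bounds on $\omega(1, \alpha, 1)$ near $\alpha = 0$ together with the near-optimal Razborov--Smolensky degree $\Order(\sqrt{d \log n})$.
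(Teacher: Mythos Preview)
The paper does not give its own proof of this statement; it is quoted as a known result from~\cite{AlmanW15}. Your proposal reconstructs the polynomial-method template, but the monomial-count step has a genuine gap that makes the argument fail whenever $c \geq 1$.

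You build each $Q_i$ as a polynomial in only the $d = c\log n$ bits of $x_2$, with the block $G_i$ hard-wired as constants. The composed degree is $\Order(\sqrt{d\log n}) \mult \Order(\log n) = \Order(\sqrt{c}\,\log^2 n)$, which exceeds $d$ for large $n$; after multilinearization every one of the $2^d = n^c$ monomials in those $d$ variables can occur. Thus $M = n^c$ independently of $\beta$, and your claim that $g(c,\beta)\to 0$ as $\beta\to 0$ is false in this setup---the exponent of $M$ is pinned at $c$ no matter how small you take $\beta$. For $c \geq 1$, merely writing down the $M\times n$ monomial matrix $B$ already costs $n^{c+1} \geq n^2$, so no savings are possible.

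What Alman--Williams actually do is keep the $sd$ block-side bits as \emph{formal variables} of a single global polynomial rather than hard-wiring each block separately. Every monomial then factors into a block-side part and an $x_2$-side part, and the inner dimension of the matrix product is governed by $\binom{sd}{\leq D}$ rather than by $2^d$. With $s$ chosen appropriately as a function of $c$, this count stays well below $n^{0.1}$ for every fixed $c$, which is what lets the rectangular multiplication produce a saving $\delta = \delta(c) > 0$ for all $c$.
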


As a first algorithmic step, we establish a simple dimension reduction:

\begin{lemma} \label{lem:additive-dimension-reduction}
Let $\phi$ be an arbitrary Boolean function. For every $\varepsilon > 0$, there is a randomized linear-time reduction, turning a given an instance of $\VOpt(\phi)$ on $n$ vectors of dimension $d$ into another $\VOpt(\phi)$ instance on $n$ vectors of dimension $d' = \Order(\log n)$, such that with high probability we have $|\OPT / d - \OPT' / d'| \leq \varepsilon$, where $\OPT'$ is the optimal value of the constructed instance.
\end{lemma}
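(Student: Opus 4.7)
The plan is to apply the standard coordinate subsampling trick and analyze it via a Chernoff/Hoeffding bound combined with a union bound over all $n^k$ tuples.

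First, I would set $d' := \Theta(\varepsilon^{-2} \log n)$ (with the hidden constant chosen generously enough to control the failure probability below), sample $d'$ coordinates $y_1, \dots, y_{d'}$ independently and uniformly at random from $Y$ (with replacement), and define the new instance by projecting every vector $x_i \in X_i$ to $x_i' := (x_i[y_1], \dots, x_i[y_{d'}])$. This construction runs in time $O(n d')$ which is linear in the input size; note that the set of objects $X_1, \dots, X_k$ is unchanged, only the coordinate set $Y$ is replaced by $[d']$.

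Next, I would analyze a single fixed tuple $(x_1, \dots, x_k) \in X_1 \times \dots \times X_k$. For each sampled index $j \in [d']$ let $Z_j := \phi(x_1[y_j], \dots, x_k[y_j])$. The $Z_j$ are i.i.d.\ Bernoulli random variables with mean $p := \Val(x_1,\dots,x_k)/d$, and the value in the new instance is $\Val'(x_1,\dots,x_k) = \sum_{j=1}^{d'} Z_j$ with expectation $p \cdot d'$. By Hoeffding's inequality,
\begin{equation*}
\Pr\!\left[\left|\frac{\Val'(x_1,\dots,x_k)}{d'} - \frac{\Val(x_1,\dots,x_k)}{d}\right| > \varepsilon\right] \le 2\exp(-2\varepsilon^2 d').
\end{equation*}
Choosing the constant in $d' = \Theta(\varepsilon^{-2} \log n)$ large enough, this probability is at most $n^{-(k+2)}$.

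Then I would take a union bound over all $\prod_i |X_i| \le n^k$ tuples, so that with probability at least $1 - n^{-2}$ the approximation $|\Val'/d' - \Val/d| \le \varepsilon$ holds simultaneously for every tuple. On this event, for an optimal tuple $(x_1^*, \dots, x_k^*)$ of the original instance and an optimal tuple $(x_1^\circ, \dots, x_k^\circ)$ of the new instance, both $\Val(x_1^*,\dots,x_k^*)/d \approx \Val'(x_1^*,\dots,x_k^*)/d'$ and $\Val(x_1^\circ,\dots,x_k^\circ)/d \approx \Val'(x_1^\circ,\dots,x_k^\circ)/d'$ hold up to additive $\varepsilon$, which (for both $\max$ and $\min$) sandwiches $\OPT/d$ and $\OPT'/d'$ within additive $\varepsilon$ of each other. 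There is no serious obstacle here beyond fixing the constants in $d'$ and handling both $\max$ and $\min$ symmetrically by the same two-sided Hoeffding bound.
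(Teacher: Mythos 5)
Your proposal is correct and matches the paper's proof essentially verbatim: the paper also subsamples $d' = \Theta(\varepsilon^{-2}\log n)$ coordinates uniformly at random, applies an additive Chernoff/Hoeffding bound per tuple, and union-bounds over all $n^k$ tuples. Your explicit final step relating the per-tuple guarantee to $|\OPT/d - \OPT'/d'| \leq \varepsilon$ is a minor elaboration the paper leaves implicit.
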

\begin{proof}
Set $d' := c \varepsilon^{-2} \log n$ for some sufficiently large constant $c$. Sample $d'$ coordinates from~$Y$ uniformly and independently at random, and let $X_1', \ldots, X_k'$ be the projections of the input vectors to these $d'$ coordinates. For any $x_1' \in X_1', \ldots, x_k' \in X_k'$, we have that $\Val(x_1', \ldots, x_k')$ is a random variable with expectation $\Ex[\Val(x_1', \ldots, x_k')] = \frac{d'}{d} \Val(x_1,\dots,x_k)$. Moreover, since each coordinate was sampled independently and uniformly at random, by an additive Chernoff bound we have
\begin{equation*}
  \Pr\left(\left|\frac{\Val(x_1,\dots,x_k)}{d} - \frac{\Val(x'_1,\dots,x'_k)}{d'}\right| 
  > \varepsilon\right) \leq 2 \exp(-2 \varepsilon^2 d') \leq \Order(n^{-2c}).
\end{equation*}
Thus, by a union bound over all tuples $(x_1, \ldots, x_k)$, it holds that $|\OPT / d - \OPT' / d'| \leq \varepsilon$ with probability $1 - n^{-\Omega(1)}$, where $\OPT'$ is the optimal value of the new instance. As claimed, we can construct the instance in time $\Order(m)$ by once iterating over all one-entries, after subsampling the~$d'$ relevant coordinates.
\end{proof}

Having applied the dimension reduction, we can complete the reduction from additively approximating $\VMax(\phi)$ to low-dimensional \MaxIP{}.

\begin{lemma} \label{lem:additive-approximation-to-max-ip}
Let $\phi$ be an arbitrary Boolean function. There exists a randomized additive approximation scheme for $\VMax(\phi)$.
\end{lemma}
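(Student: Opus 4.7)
\medskip
\noindent\emph{Plan.} The strategy is a two-stage reduction: first use \autoref{lem:additive-dimension-reduction} to reduce the dimension to $d' = \Order(\log n)$, then reduce the low-dimensional $\VMax(\phi)$ instance to a low-dimensional \MaxIP{} instance, which we solve exactly via \autoref{lem:max-ip-polymethod}. Scaling the resulting value back by $d/d'$ yields the desired additive $\varepsilon|Y|$\=/approximation.

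\medskip
\noindent\emph{Reduction of the low-dimensional problem to \MaxIP{}.} After applying \autoref{lem:additive-dimension-reduction} with parameter $\varepsilon/2$, we may assume the new dimension satisfies $d' = c \log n$ for some constant $c = c(\varepsilon)$. I first brute-force over the $n^{k-2}$ choices of $(x_3, \ldots, x_k)$; for each such choice, each coordinate $y$ induces a $2$-variable Boolean function $\phi_y(z_1, z_2) = \phi(z_1, z_2, x_3[y], \ldots, x_k[y])$, and we must solve
\begin{equation*}
  \max_{x_1 \in X_1,\, x_2 \in X_2} \sum_{y \in Y} \phi_y(x_1[y], x_2[y]).
\end{equation*}
Next, I linearize each $\phi_y$ over its satisfying assignments. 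For every coordinate $y$ and every satisfying assignment $(\alpha_1, \alpha_2)$ of $\phi_y$, introduce a new coordinate $(y, \alpha_1, \alpha_2)$ and set
\begin{equation*}
  x_1'[(y, \alpha_1, \alpha_2)] := [x_1[y] = \alpha_1], \qquad x_2'[(y, \alpha_1, \alpha_2)] := [x_2[y] = \alpha_2].
\end{equation*}
Then $x_1'[(y, \alpha_1, \alpha_2)] \cdot x_2'[(y, \alpha_1, \alpha_2)]$ is the indicator that $(x_1[y], x_2[y]) = (\alpha_1, \alpha_2)$, so summing over all satisfying assignments of $\phi_y$ yields $\innerprod{x_1'}{x_2'} = \sum_y \phi_y(x_1[y], x_2[y])$. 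Since $\phi_y$ has at most $4$ satisfying assignments, the new dimension is still $\Order(d') = \Order(\log n)$.

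\medskip
\noindent\emph{Running time.} By \autoref{lem:max-ip-polymethod}, this single \MaxIP{} instance can be solved exactly in time $\Order(n^{2-\delta})$ for some $\delta > 0$ depending on $c$. Summing over the $n^{k-2}$ brute-force choices yields total time $\Order(n^{k-\delta} \poly \log n) \leq \Order(m^{k-\delta'})$ for some $\delta' > 0$, since $m = \Omega(n)$. The output value $\OPT'$ of the transformed instance satisfies $|\OPT/d - \OPT'/d'| \leq \varepsilon/2$ with high probability by \autoref{lem:additive-dimension-reduction}, so returning $\OPT' \cdot d / d'$ is an additive $\varepsilon |Y|$-approximation of $\OPT$.

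\medskip
\noindent\emph{Minimization.} For $\VMin(\phi)$, observe that $\VMin(\phi) = |Y| - \VMax(\neg \phi)$, so an additive $\varepsilon|Y|$\=/approximation for the maximization problem immediately translates to one for the minimization problem. The main obstacle is the second step (reducing the hybrid two-variable problem to a single \MaxIP{} instance); but because each $\phi_y$ has only constantly many satisfying assignments, the expansion of the $\{0, 1\}$-indicator encoding blows up the dimension by only a constant factor, which is the key feature that keeps the transformed instance in the low-dimensional regime where the polynomial method applies.
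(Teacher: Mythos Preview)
Your proof is correct and follows essentially the same approach as the paper: dimension reduction via \autoref{lem:additive-dimension-reduction}, an encoding that introduces one new coordinate per satisfying assignment so that the objective becomes an inner product, and finally the polynomial-method algorithm of \autoref{lem:max-ip-polymethod} on the resulting $\Order(\log n)$-dimensional \MaxIP{} instance. The only cosmetic difference is that the paper first encodes the full $k$-variable $\phi$ into \kMaxIP{k} and then brute-forces $k-2$ variables, whereas you brute-force first and encode the resulting two-variable hybrid functions; both orderings are valid and yield the same running time.
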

\begin{proof}
Let $(X_1, \ldots, X_k, Y)$ be the given $\VMax(\phi)$ instance, and fix $\varepsilon > 0$. In order to compute an additive $\varepsilon d$-approximation of $\VMax(\phi)$, we can assume that $d = |Y| = \Order(\log n)$ by applying~\autoref{lem:additive-dimension-reduction}. Our goal is to solve the resulting instance exactly by ``covering'' every satisfying assignment of $\phi$ by the all-ones assignment (which is satisfying for \kMaxIP{k}). More precisely, let $\alpha_1, \ldots, \alpha_\ell \in \{0, 1\}^k$ denote the satisfying assignments of $\phi$. Let $Y' = Y \times [\ell]$ be a new set of dimensions and for each vector $x_i \in X_i$, we create a new vector $x_i' \in X_i'$ with
\begin{equation*}
  x_i'[(y, j)] =
  \begin{cases}
    x_i[y] &\text{if $\alpha_j[i] = 1$,} \\
    1 - x_i[y] &\text{otherwise.}
  \end{cases}
\end{equation*}
It is easy to check that $\Val(x_1, \ldots, x_k) = \innerprod{x_1'}{\ldots, x_k'}$, and therefore it suffices to solve the \kMaxIP{k} instance $(X_1', \ldots, X_k', Y')$. We brute-force over $k-2$ variables, and solve the remaining \MaxIP{} instance using the assumed efficient algorithm (notice that indeed $d' = |Y'| = d \ell = \Order(\log n)$). The total running time is $\Order(n^{k-2} n^{2-\delta}) = \Order(n^{k-\delta}) = \Order(m^{k-\delta})$ for some $\delta > 0$.
\end{proof}

Instead of proving an analogous reduction to from $\VMin(\phi)$ to \MinIP{}, we instead show the following more general statement, which we will reuse in~\autoref{sec:hardness-approx:sec:no-eas} to rule out efficient additive approximation schemes. 

\begin{lemma} \label{lem:additive-max-min}
Let $\phi$ be an arbitrary Boolean function. Via a randomized reduction, there exists an additive approximation scheme for $\VMax(\phi)$ if and only if there exists an additive approximation scheme for $\VMin(\phi)$. The same holds for efficient additive approximating schemes.
\end{lemma}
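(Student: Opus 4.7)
Since \autoref{lem:additive-approximation-to-max-ip} already establishes unconditionally that $\VMax(\phi)$ admits an additive approximation scheme for every Boolean $\phi$, the right-hand side of the iff is known, and the content of the lemma reduces to establishing the analogous unconditional statement for $\VMin(\phi)$. The plan is to mirror the three-step pipeline of \autoref{lem:additive-approximation-to-max-ip} — the "randomized reduction" in the statement refers to this shared pipeline, whose first (dimension-reduction) step is randomized, and which turns out to convert both $\VMax(\phi)$ and $\VMin(\phi)$ into problems of equivalent complexity.

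Concretely, given a $\VMin(\phi)$ instance $(X_1,\dots,X_k,Y)$ and target error $\varepsilon$, I would first invoke \autoref{lem:additive-dimension-reduction} with parameter $\varepsilon/2$ to reduce to dimension $d' = \Order(\log n)$, preserving $\OPT/d$ additively. I would then apply the identical coordinate encoding used in the proof of \autoref{lem:additive-approximation-to-max-ip} — replacing each coordinate $y$ by $\ell$ coordinates $(y,j)$ indexed by the satisfying assignments $\alpha_j$ of $\phi$, with $x_i'[(y,j)] = x_i[y]$ if $\alpha_j[i]=1$ and $x_i'[(y,j)] = 1 - x_i[y]$ otherwise. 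This yields the pointwise identity $\Val(x_1,\dots,x_k) = \innerprod{x_1'}{\dots,x_k'}$ and turns the low-dimensional $\VMin(\phi)$ into a low-dimensional \kMinIP instance. After brute-forcing $k-2$ quantifiers, it remains to solve a low-dimensional \MinIP instance in time $\Order(n^{2-\delta})$, for which I would invoke the polynomial-method algorithm for \MinIP, the direct minimization analogue of \autoref{lem:max-ip-polymethod}. Because the exponent $\delta$ depends only on the polynomial-method speedup and not on $\varepsilon$ (which only affects $d'$ polylogarithmically), the same argument simultaneously yields the efficient version; the reverse direction of the iff is entirely symmetric, using \autoref{lem:max-ip-polymethod} itself in place of its minimization analogue.

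The principal obstacle I expect is justifying the low-dimensional \MinIP algorithm with the same $\Order(n^{2-\delta})$ guarantee as \autoref{lem:max-ip-polymethod}: the polynomial-method framework of~\cite{AlmanW15,AlmanCW16} handles both \MaxIP and \MinIP uniformly, but the paper only explicitly cites the \MaxIP version, so the proof should either add a direct reference to the \MinIP analogue or supply a short in-place reduction in the low-dimensional regime. A clean realisation of the latter is to bucket the $n$ input vectors by their individual Hamming weights — at most $(d'+1)^k = \Order((\log n)^k)$ buckets in dimension $\Order(\log n)$ — and, within each bucket, combine the identity $\innerprod{\bar x_1}{x_2,\dots,x_k} = \innerprod{x_2}{x_3,\dots,x_k} - \innerprod{x_1}{x_2,\dots,x_k}$ with a recursive reduction on $k$ to swap minimization for maximization at bucket-dependent additive offsets. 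The bucketing costs only a polylogarithmic factor, so the overall exponent $\delta'$ is preserved and, plugged back through the pipeline, gives the required $\Order(m^{k-\delta'})$ bound.
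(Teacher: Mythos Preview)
There is a genuine gap in your treatment of the efficient case. You claim that ``the exponent $\delta$ depends only on the polynomial-method speedup and not on $\varepsilon$ (which only affects $d'$ polylogarithmically)'', but the dimension reduction of \autoref{lem:additive-dimension-reduction} produces $d' = \Theta(\varepsilon^{-2}\log n)$, so the constant $c$ in $d' = c\log n$ is $\Theta(\varepsilon^{-2})$. The guarantee of \autoref{lem:max-ip-polymethod} (and of any \MinIP{} analogue you invoke, including one obtained via your bucketing sketch) is only of the form ``for every $c$ there exists $\delta(c)>0$'', with $\delta(c)\to 0$ as $c\to\infty$. Hence the exponent you end up with does depend on $\varepsilon$, and your pipeline yields an ordinary additive approximation scheme, not an efficient one.

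More structurally, your strategy proves each side of the iff \emph{directly} rather than by reduction. That is harmless for the non-efficient statement (both sides are unconditionally true), but it cannot establish the efficient statement: conditionally, \emph{neither} side holds --- the lemma is invoked in \autoref{sec:hardness-approx:sec:no-eas} precisely to transfer the conditional impossibility of an efficient additive scheme from $\VMin(\phi)$ to $\VMax(\phi)$. What is needed is a black-box reduction that, assuming an efficient $\VMax(\phi)$ scheme with some fixed exponent $\delta$, produces an efficient $\VMin(\phi)$ scheme whose exponent is also independent of~$\varepsilon$. The paper takes exactly this route: it dimension-reduces the $\VMin(\phi)$ instance (the only place $\varepsilon$ enters), then in the now-logarithmic dimension complements all vectors in time $\tilde\Order(n)$ to pass to a $\VMax(\phi)$ instance, and finally calls the assumed black box, whose $\delta$ is fixed by hypothesis and therefore inherits no $\varepsilon$-dependence.
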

\begin{proof}
We focus only on one direction, the other one is analogous. Suppose there exists an additive approximation scheme for $\VMax(\phi)$, and fix some $\varepsilon > 0$. We can compute an additive $\varepsilon d$-approximation for a given $\VMin(\phi)$ instance $(X_1, \ldots, X_k, Y)$ on dimension-$d$ vectors using the following algorithm:
\begin{enumerate}
\item Apply the dimension reduction from~\autoref{lem:additive-dimension-reduction} with parameter $\varepsilon / 2$ to obtain an equivalent $\VMin(\phi)$ instance $(X_1', \ldots, X_k', Y')$ on $d' = |Y'| = \Order(\log n)$ dimensions.
\item Complement all vectors in $X_1', \ldots, X_k'$ to obtain a $\VMax(\phi)$ instance $(X_1'', \ldots, X_k'', Y')$.
\item Use the assumed additive approximation scheme to compute an additive $\frac{\varepsilon d'}{2}$-approx-imation~$V$ of the $\VMax(\phi)$ instance $(X_1'', \ldots, X_k'', Y')$, and return $d - \frac{d}{d'} V$.
\end{enumerate}
Let $\OPT$ denote the optimal value of the original $\VMin(\phi)$ instance, and let $\OPT'$ and $\OPT''$ be the optimal values of the $\VMin(\phi)$ and $\VMax(\phi)$ instances computed in the second and third steps, respectively. The algorithm guarantees that $|\OPT'' - V| \leq \frac{\varepsilon d'}{2}$ and by construction it holds that $\OPT' = d' - \OPT''$. Moreover, with high probability the dimension reduction succeeds in which case $|\OPT - \frac{d}{d'}\OPT'| \leq \frac{\varepsilon d}{2}$. Plugging these three bounds together yields that $|\OPT - (d - \frac{d}{d'} V)| \leq \varepsilon d$.

Running the first step of the algorithm takes time $\Order(m)$ by~\autoref{lem:additive-dimension-reduction}. The complementation takes time $\tilde\Order(m)$, as the dimension was reduced to $d' = \Order(\log n)$. Finally, applying the additive approximation scheme for $\VMax(\phi)$ takes time $\tilde\Order(m^{k-\delta})$ for some $\delta > 0$, which dominates the total running time. If the $\VMax(\phi)$ additive approximation scheme is efficient, then so is the constructed one for $\VMin(\phi)$.
\end{proof}

Finally, note that \autoref{lem:additive-approximation-algorithms} follows by combining \autoref{lem:additive-approximation-to-max-ip} and 
\autoref{lem:additive-max-min}.
\section{Hardness of Approximation}\label{sec:hardness-approx}
In this section, we give the proofs of our results for hardness of approximation. Recall that in the minimization case, \autoref{cor:constant-approx-min-characterization} immediately rules out the existence of multiplicative approximations for any $\psi$ with $\Hand(\psi) = k$ or $3 \leq \Hand(\psi) \leq k$. This was done by arguing that \emph{any} approximation for $\VMin(\psi)$ is at least as hard as $\Zero(\psi)$, for which~\cite{BringmannFK19} gave a complete classification.

The maximization case is more complicated. To show hardness of approximation for maximization problems, we make use of the distributed PCP framework which was introduced in~\cite{AbboudRW17} and further developed in~\cite{Rubinstein18,Chen18,KarthikLM19}. We use this as the main tool (\autoref{lem:reduction_clean}) to give our results for hardness of approximation. In~\autoref{sec:hardness-approx:sec:mip-hardness} we address the \emph{hardest regime}, giving the lower bound of~\autoref{thm:tight-polynomial-maximization}. In~\autoref{sec:hardness-approx:sec:k-3-hardness} we develop our main technical contribution and use it to rule out approximation schemes, completing the lower bound of~\autoref{thm:as-maximization-classification}. Finally, in~\autoref{sec:hardness-approx:sec:no-eas} we rule out \emph{efficient} approximation schemes for both maximization and minimization problems, which gives the lower bounds for Theorems~\ref{thm:additive-approximation-classification} and~\ref{thm:no-efficient-as}.

\subsection{Distributed PCP}
All the hardness results in this section rely on the distributed PCP reduction. In a nutshell, this gives a reduction from $\kOV{k}$ in the low dimensional regime, i.e.\ $d = c \log n$ where $c$ is a constant, to a subpolynomial number of instances of $\kMaxIP{k}$ with dimension $\exp(c) \cdot \log n$ and introduces a gap in the optimal value of the instances depending on whether we started from a YES or a NO instance. In this way, an algorithm giving an approximation for $\kMaxIP{k}$ can distinguish between both cases and decide the original instance.

For our exposition, we extract the main properties of this reduction in the following lemma, which is the main technical tool that we will use. In~\autoref{sec:pcp} we elaborate on how this can be derived as a simple combination of the results in~\cite{Chen18} and~\cite{KarthikLM19}, both of which make use of the key improvement of Rubinstein~\cite{Rubinstein18} over the basic reduction introduced in~\cite{AbboudRW17}.

\begin{restatable}[\cite{Chen18,KarthikLM19,Rubinstein18}]{lemma}{lemreductionclean} \label{lem:reduction_clean}
  Let $\phi: \{0,1\}^k \to \{0,1\}$ have exactly one satisfying assignment.
  There is a universal constant $c_1$ such that for every $c$ and parameters $0 < \gamma \leq 1, \tau \geq 2$, there is a reduction from \kOV{k} on sets of size $n$ and dimension $d = c \log n$, to $n^{\gamma}$ instances $I_1,\dots,I_{n^{\gamma}}$ of $\VMax(\phi)$, each on $d'$ dimensions and with sparsity $m'$. Let $T := c \log n \cdot \tau^{c_1}$. Then, the reduction has the following properties:
  \begin{itemize}
    \item $d' = \tau^{\poly(k, c/\gamma)} \cdot \log n$ and $m' \leq n d'$.
    \itemdesc{Completeness:} If we start from a YES instance of $\kOV{k}$, then there is some $j \in [n^{\gamma}]$ such that in~$I_j$, it holds that $\OPT = T$.
    \itemdesc{Soundness:} If we start from a NO instance of $\kOV{k}$, then $\OPT \leq T / \tau$ for all instances $I_1,\ldots,I_{n^\gamma}$.
  \end{itemize}
  The running time of the reduction is linear in the output size: $\Order(n^{1+\gamma}d')$.
\end{restatable}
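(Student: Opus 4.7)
The plan is to derive \autoref{lem:reduction_clean} by composing two essentially off-the-shelf ingredients: first, a trivial encoding reducing $\VMax(\phi)$ (for $\phi$ with a unique satisfying assignment) to $\kMaxIP{k}$; second, the distributed PCP reduction of Chen~\cite{Chen18} (and its $k$-ary extension by Karthik, Laekhanukit, and Manurangsi~\cite{KarthikLM19}), which builds on Abboud–Rubinstein–Williams~\cite{AbboudRW17} and uses Rubinstein's~\cite{Rubinstein18} key idea of algebraic-geometry codes to control the parameters of the PCP verifier's answer length.

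For the first ingredient, let $\alpha\in\{0,1\}^k$ be the unique satisfying assignment of $\phi$, so that $\phi(z)=\prod_{i:\alpha_i=1} z_i \cdot \prod_{i:\alpha_i=0}(1-z_i)$. Given a $\VMax(\phi)$ instance with vector sets $X_1,\dots,X_k$, define $x_i'[y]=x_i[y]$ if $\alpha_i=1$ and $x_i'[y]=1-x_i[y]$ otherwise. Then $\Val_\phi(x_1,\dots,x_k)=\innerprod{x_1'}{\dots, x_k'}$, and this transformation is sparsity-preserving up to a factor of~$2$. Consequently, it suffices to prove the lemma for the specific formula $\phi(z) = z_1 \land \dots \land z_k$, i.e., for $\kMaxIP{k}$.

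For the second ingredient, I will invoke the $k$-party distributed PCP reduction to produce, from a $\kOV{k}$ instance $X_1,\dots,X_k \subseteq \{0,1\}^{c\log n}$, a collection of $n^\gamma$ instances of $\kMaxIP{k}$ with the claimed gap. At a high level, the PCP verifier for $\kOV{k}$ uses $(1-\gamma)\log n$ random coins and has answer length $\ell = \poly(k, c/\gamma, \log\tau)$ over an alphabet of size $\poly(\tau)$, giving soundness error $1/\tau$ after repetitions (from AG codes). A standard Hadamard-style encoding then converts the verifier's acceptance predicate on an answer of length $\ell$ into an inner-product check on vectors of dimension $2^{O(\ell)} = \tau^{\poly(k, c/\gamma)}$. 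Partitioning by the verifier's random tape into $n^\gamma$ "shards" yields $n^\gamma$ separate $\kMaxIP{k}$ instances. In the YES case, at least one shard hosts an accepting transcript realizing the target inner product $T = c\log n\cdot \tau^{c_1}$ (for a universal constant $c_1$ determined by the soundness amplification); in the NO case, every transcript in every shard is off by the $\tau$-factor gap, so $\OPT\le T/\tau$. The dimension and sparsity bounds $d' = \tau^{\poly(k,c/\gamma)}\log n$ and $m'\le n d'$ are immediate from this construction, and the output is produced in time linear in its size, giving $\Order(n^{1+\gamma}d')$.

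The main obstacle will be bookkeeping rather than conceptual: the $k$-party reduction in~\cite{KarthikLM19} is stated in the language of MaxCover/communication protocols, and the reduction of Chen~\cite{Chen18} is stated for $k=2$. The formal proof therefore needs to carefully combine them, instantiate the communication protocol for the $k$-player inner-product/OV relation, and check that all parameter dependencies collapse into the advertised form $T = c\log n\cdot \tau^{c_1}$ with $d' = \tau^{\poly(k,c/\gamma)}\log n$. I defer that bookkeeping to \autoref{sec:pcp}, where the precise parameter computations are carried out.
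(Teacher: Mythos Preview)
Your high-level approach matches the paper's: encode $\phi$ via its unique satisfying assignment (the paper does this directly inside the PCP construction while you factor it out as a preliminary reduction to $\kMaxIP{k}$, a harmless variation), then invoke the distributed PCP machinery combining the $k$-party Disjointness protocol of~\cite{KarthikLM19} with Chen's expander-boosted soundness amplification~\cite{Chen18}.

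However, your sketch of the PCP mechanics contains a real misconception. The $n^\gamma$ instances do \emph{not} arise from ``partitioning by the verifier's random tape into $n^\gamma$ shards,'' and the verifier does not use $(1-\gamma)\log n$ random coins. In the MA-communication framework used here, the $n^\gamma$ instances come from enumerating Merlin's \emph{advice} strings: with $d=c\log n$ and protocol parameter $\alpha=c/\gamma$, the advice length is $m=d/\alpha=\gamma\log n$, giving $2^m=n^\gamma$ instances. The number of random coins is $r=\log_2 d+O(\log\tau)$, and it is $2^r=c\log n\cdot\tau^{O(1)}$ that determines the target $T$; the dimension is $d'=2^{r+(k-1)\ell}$ where $\ell=\poly(k,c/\gamma)\log\tau$ is the per-player message length. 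With your ``random-tape sharding'' picture you would not recover $T=c\log n\cdot\tau^{c_1}$, so when you write out \autoref{sec:pcp} be sure to work in the MA-protocol framework with the correct roles of advice, randomness, and messages.

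A minor aside: the claim that the $\alpha$-complementation is ``sparsity-preserving up to a factor of~$2$'' is false (complementing a sparse vector can make it dense), but this is harmless since the target bound $m'\le nd'$ is trivial.
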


\subsection{\texorpdfstring{Hardness for $\Hand(\phi) = k$}{Hardness for Hand(phi) = k}}\label{sec:hardness-approx:sec:mip-hardness}

In this section we give a proof of the lower bound part of~\autoref{thm:tight-polynomial-maximization}.

\begin{lemma}[No Subpolynomial-Factor Approximation for $\Hand(\phi) = k$] \label{lem:max-hardness-poly}
Let $\phi: \{0,1\}^k \to \{0,1\}$ have and-hardness \makebox{$\Hand(\phi) = k$}. Then for all $\delta > 0$ there is an $\varepsilon > 0$ such that there is no $m^{\varepsilon}$-approximation for $\VMax(\phi)$ in time $\Order(m^{k - \delta})$ unless the low-dimensional \kOV{k} hypothesis fails.
\end{lemma}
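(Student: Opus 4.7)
The proof follows the approach of Karthik, Laekhanukit, and Manurangsi~\cite{KarthikLM19} for polynomial-factor inapproximability of Maximum $k$-Inner Product, now using \autoref{lem:reduction_clean} as a black box. Since $\Hand(\phi) = k$ is equivalent to $\phi$ having a unique satisfying assignment, the lemma is directly applicable to $\phi$.

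The plan: Given $\delta > 0$, I would set $\varepsilon := \delta / C$ for a sufficiently large constant $C = C(k)$, and show that any $m^{\varepsilon}$-approximation algorithm $A$ for $\VMax(\phi)$ running in $O(m^{k-\delta})$ time contradicts low-dimensional $\kOV{k}$. Starting from a $\kOV{k}$ instance with $n$ vectors and dimension $d = c \log n$, I would invoke \autoref{lem:reduction_clean} with $\gamma := \delta/4$ and $\tau := n^{\alpha}$ for a small constant $\alpha > 0$ to be chosen. This produces $n^{\gamma}$ instances of $\VMax(\phi)$, each of sparsity $m' \leq n \tau^{P} \log n$ where $P := \poly(k, c/\gamma)$, separating YES from NO cases by a multiplicative gap of $\tau$ in the optimal value. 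Running $A$ on every instance and thresholding the outputs against $T/\tau$ correctly decides the original $\kOV{k}$ instance whenever the gap $\tau$ exceeds the approximation factor $(m')^{\varepsilon}$.

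The parameters then have to be balanced to meet two conditions: (i) the total running time $n^{\gamma} (m')^{k-\delta}$ should be $O(n^{k-\delta^{*}})$ for some fixed $\delta^{*} > 0$, achieved by taking $\alpha = \Theta(\delta/((k-\delta)P))$ with $\delta^{*} = \delta/2$; and (ii) the gap condition $\tau > (m')^{\varepsilon}$ should hold, which reduces algebraically to $\varepsilon < \Omega(\delta/(k P))$. The hard part will be reconciling the apparent dependency of the required $\varepsilon$ on $P = \poly(k, c/\gamma)$ (and hence on $c$) with the requirement that $\varepsilon$ depend only on $\delta$ and $k$. Following~\cite{KarthikLM19}, I expect to resolve this by a more careful, dimension-adaptive choice of $\gamma$ and $\alpha$ that controls $P$ uniformly in $k$ across the range of $c$ relevant to the hypothesis; with the right bookkeeping, the uniform choice $\varepsilon := \delta / C(k)$ suffices, and plugging this back into the analysis yields the claimed $O(n^{k-\delta^{*}})$-time algorithm for low-dimensional $\kOV{k}$, completing the refutation.
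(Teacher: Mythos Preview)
Your approach uses the correct black box (\autoref{lem:reduction_clean}) and the right overall shape, but you have set up the quantifiers in a way that creates a genuine obstacle, and your proposed resolution does not actually overcome it.

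You fix $\delta$, commit to a \emph{single} $\varepsilon = \varepsilon(\delta,k)$ up front, assume an $m^\varepsilon$-approximation in time $O(m^{k-\delta})$, and then try to refute the low-dimensional $\kOV{k}$ hypothesis---which means solving $\kOV{k}$ in time $O(n^{k-\delta^*})$ for a \emph{fixed} $\delta^*$ and \emph{every} constant $c$. As you correctly diagnose, the blow-up exponent $P = \poly(k, c/\gamma)$ in the dimension $d'$ grows with $c$; this forces your gap parameter $\alpha$ (and hence $\tau = n^\alpha$) to shrink with $c$ to keep the sparsity $m'$ under control, while your approximation factor $(m')^\varepsilon$ stays bounded below by $n^\varepsilon$. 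The gap condition $\tau > (m')^\varepsilon$ therefore fails for large enough $c$ once $\varepsilon$ is fixed. A ``dimension-adaptive choice of $\gamma$ and $\alpha$'' does not help: letting $\gamma$ depend on $c$ destroys the uniformity of $\delta^*$, and adjusting $\alpha$ alone cannot prevent $\alpha \to 0$ while $\varepsilon > 0$ is fixed.

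The paper sidesteps this entirely by arguing the contrapositive: assume there is some $\delta > 0$ such that for \emph{every} $\varepsilon > 0$ an $m^\varepsilon$-approximation exists in time $O(m^{k-\delta})$. The flipped quantifier order lets one choose $\varepsilon$ \emph{depending on $c$} inside the proof. Concretely, set $\beta := \delta/(3k)$, $\gamma := \delta/3$, $\tau := (n^{1+\beta})^\varepsilon$, and for each $c$ pick $\varepsilon$ small enough that $d' = \tau^{\poly(k,c/\gamma)} \log n \le n^\beta$; then $m' \le n^{1+\beta}$ and automatically $(m')^\varepsilon \le \tau$, while the running time $O(n^\gamma (m')^{k-\delta}) = O(n^{k-\gamma})$ has exponent independent of $c$. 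The missing idea in your proposal is precisely this contrapositive setup; with it, no ``uniform control of $P$'' is needed.
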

\begin{proof}
    Assume for contradiction that there is a $\delta > 0$ such that for all $\varepsilon > 0$ there is an algorithm $\calA$ which gives an $m^{\varepsilon}$-approximation to $\VMax(\phi)$ in time $\Order(m^{k-\delta})$. We will show that by appropriately setting the parameters of the reduction in~\autoref{lem:reduction_clean}, we can reduce any $\kOV{k}$ instance on $d = \Theta(\log n)$ to $n^\gamma$ instances of $\VMax(\phi)$ introducing a gap in the optimal value of $\tau = (m')^{\varepsilon}$, where~$m'$ is the sparsity of the constructed instances of $\VMax(\phi)$. By making $\gamma$ small enough and ensuring that $m' \leq n^{1 + \beta}$ for some small constant $\beta > 0$, we can thus decide the $\kOV{k}$ instance by running $\calA$ on the constructed instances, in time $\Order(n^{\gamma}\cdot (n^{1+\beta})^{k - \delta}) = \Order(n^{k - \delta'})$ for some $\delta' > 0$, which contradicts the low-dimensional \kOV{k} hypothesis.
    
    Now we give the full details. By \autoref{def:model-checking-hardness}, note that $\Hand(\phi) = k$ if and only if $\phi$ has exactly one satisfying assignment, so we can use~\autoref{lem:reduction_clean}. Set $\beta := \delta/(3k)$. Then, for every constant $c > 0$, we apply Lemma~\ref{lem:reduction_clean} with parameters $\gamma := \delta/3$ and $\tau := (n^{1+\beta})^{\varepsilon}$ to reduce any instance of \kOV{k} on $d = c \log n$ dimensions, to $n^{\gamma}$ instances of $\VMax(\phi)$ on $d' = \tau^{\poly(k, c/\gamma)} \cdot \log n$ dimensions each. 
    
    Note that by setting $\varepsilon < \frac{\beta}{(1+\beta)\cdot \poly(k, c/\gamma)}$, we can bound $d' \leq n^{\beta}$ and further, the sparsity of the constructed instances by $m' \leq n^{1+\beta}$. Since by assumption $\calA$ computes an $(m')^{\varepsilon} \leq (n^{1+\beta})^\varepsilon = \tau$\=/approximation, we can decide the original $\kOV{k}$ instance by running $\calA$ on the constructed instances. This takes time
    \begin{equation*}
      \Order(n^{\gamma} \cdot (m')^{k - \delta})
      =  \Order(n^{\gamma} \cdot (n^{1+\beta})^{k-\delta})
      = \Order(n^{k + \beta k - \delta + \gamma})
      = \Order(n^{k + \delta/3 - \delta + \gamma})
      = \Order(n^{k - \gamma}).
    \end{equation*}
    Since the time needed to run the reduction is $\tilde\Order(n^{1 + \gamma})$, we obtain an algorithm to decide the $\kOV{k}$ instance in time $\Order(n^{k - \gamma})$. This contradicts the low-dimensional \kOV{k} hypothesis.
\end{proof}

\subsection{\texorpdfstring{Hardness for $3 \leq \Hand(\phi) < k$}{Hardness for 3 <= Hand(phi) < k}}\label{sec:hardness-approx:sec:k-3-hardness}

If $\Hand(\phi) = \Hdeg(\phi) = k$, then~\autoref{lem:max-hardness-poly} rules out even the existence of subpolynomial multiplicative approximations for $\VMax(\phi)$. But consider the case when $3 \leq \Hand(\phi) < k$: By~\autoref{thm:constant-approx-max-classification} we know that $\VMax(\phi)$ admits \emph{some} constant approximation, but how small can the approximation factor be? In particular, can we hope to get an approximation scheme? In this section we answer this negatively by showing that any such function $\phi$ cannot be approximated within $(1 - \binom k3{}^{-1} + \varepsilon)^{-1}$ in time $\Order(m^{k-\delta})$ for any $\varepsilon > 0$ and $\delta > 0$, assuming the Sparse \MAXThreeSAT{} hypothesis. In other words, we can rule out approximation schemes for \emph{all such $\phi$'s}.

Our reduction is outlined in~\autoref{fig:reductions}. At a high level there are two main steps: we first introduce an intermediate problem that we call $\kOV{(k,3)}$ and show that it is hard in the low-dimensional regime assuming the Sparse \MAXThreeSAT{} Hypothesis (\autoref{lem:hardness_k3ov}). We then give a gap introducing reduction from $\kOV{(k,3)}$ to $\VMax(\phi)$ by using the distributed PCP reduction as a blackbox (\autoref{lem:reduction_k3ov_to_maxphi} and~\autoref{thm:no-as-maximization}).\footnote{We note that an alternative to our reduction is to reduce \MAXThreeSAT{} to $\kExactIP{(k,3)}$, and then reduce $\kExactIP{(k,3)}$ to approximate $\VMax(\phi)$ using the distributed PCP. Although this is slightly more direct, it requires to show that the Disjointness protocol used in the distributed PCP can be extended for the more general $\kExactIP{k}$. Moreover, we think the reduction via $\kOV{(k, 3)}$ provides a conceptually clean and natural abstraction that might inspire future applications.}

We start by formally defining our intermediate problems:

\begin{definition}[\kOV{(k,3)}] \label{def:k-3-phi}
Given sets of $n$ vectors $X_1, \ldots, X_k \subseteq \{0, 1\}^d$, where to each coordinate \makebox{$y \in [d]$} we associate three \emph{active} indices $a, b, c \in [k]$, detect if there are vectors $x_1 \in X_1, \ldots, x_k \in X_k$ such that for all $y \in [d]$, it holds that $x_a[y] \mult x_b[y] \mult x_c[y] = 0$ where $a, b, c$ are the active indices at $y$.
\end{definition}

In the same spirit, we also define the ``$(k,3)$-'' problems for $\kMaxIP{k}$ and \kExactIP{k}: Here, let the \emph{value} $\Val(x_1, \dots, x_k)$ denote the number of coordinates $y \in [d]$ with $x_a[y] \mult x_b[y] \mult x_c[y] = 1$, where again $a, b, c$ are the active indices at $y$. Then the \kMaxIP{(k,3)} problem is to maximize the value among all vectors $x_1, \dots, x_k$, and the \kExactIP{(k,3)} problem is to check whether there exist vectors $x_1, \dots, x_k$ of value equal to a given target.

\subsubsection{Hardness of Low-Dimensional \texorpdfstring{$(k,3)$-OV}{(k,3)-OV}}

As mentioned earlier, the first step in the reduction it to show that we can reduce a sparse instance of \MAXThreeSAT{} to an instance of $\kOV{(k,3)}$ with dimension $d = \Theta(\log n)$. We proceed by first reducing $\MAXThreeSAT$ to $\kExactIP{(k,3)}$, and then make use of the following known reduction from $\kExactIP{3}$ to $\kOV{3}$:

\begin{lemma}[Reduction from $\kExactIP{3}$ to $\kOV{3}$ {\cite[Lemma 4.2]{ChenW19}}]\label{lem:red_exactip_to_ov}
  For all constants $c > 0$ and $\gamma > 0$, there is a reduction from $\kExactIP{3}$ with target $\beta$ on $n$ vectors and dimension $d = c \log n$ to $n^{\gamma \log(c/\gamma)}$ instances $I_1,\dots,I_{n^{\gamma \log(c/\gamma)}}$ of $\kOV{3}$ on $n$ vectors and dimension $d' = \Order(2^{c/\gamma} \cdot \log n)$ with the following properties:
  \begin{itemize}
    \item Each vector in the $\kExactIP{3}$ is in one-to-one correspondence with a vector in each of the constructed instances.
    \itemdesc{Completeness:} If we start from a YES instance of \kExactIP{3}, then there exists a $j \in [n^{\gamma \log(c/\gamma)}]$ such that $I_j$ is a YES instance of \kOV{3}.
    \itemdesc{Soundness:} If we start from a NO instance of $\kExactIP{3}$, then for all $j \in [n^{\gamma \log(c/\gamma)}]$, $I_j$ is a NO instance.
  \end{itemize}
  The running time of the reduction is linear in the output size: $\tilde\Order(n^{1+\gamma \log(c/\gamma)})$.
\end{lemma}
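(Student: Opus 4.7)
The plan is to convert the exact-inner-product constraint into a disjunction of zero-inner-product constraints via a chunk-and-enumerate strategy. I partition the $d=c\log n$ coordinates into $t=c/\gamma$ blocks $B_1,\ldots,B_t$ of size $s=\gamma\log n$ each. For any triple $(x_1,x_2,x_3)$ the total inner product decomposes block-wise as $\langle x_1,x_2,x_3\rangle = \sum_{i=1}^{t}\pi_i(x_1,x_2,x_3)$, with each partial contribution $\pi_i\in\{0,\ldots,s\}$. The reduction enumerates all ``profiles'' $\mathbf v=(v_1,\ldots,v_t)$ with $v_i\leq s$ and $\sum_i v_i=\beta$, and produces one $\kOV{3}$ instance $I_{\mathbf v}$ per profile, all sharing the same $n$ vectors per side (so the one-to-one vector correspondence is automatic).

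Inside each $I_{\mathbf v}$, for each block $B_i$ I construct a ``block gadget'' whose orthogonal triples are exactly those with $\pi_i=v_i$. A natural implementation uses one new coordinate per ``bad'' bit-pattern triple $(u_1,u_2,u_3)\in(\{0,1\}^s)^3$ with $\langle u_1,u_2,u_3\rangle\neq v_i$: on that coordinate, the $j$-th vector's bit is $1$ iff its restriction to $B_i$ equals $u_j$. The triple $(x_1,x_2,x_3)$ is orthogonal on this coordinate iff its block pattern is not $(u_1,u_2,u_3)$, so orthogonality on every coordinate of the gadget is equivalent to the block pattern of $(x_1,x_2,x_3)$ never being bad, i.e.\ to $\pi_i=v_i$. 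Concatenating the $t$ block gadgets yields $I_{\mathbf v}$.

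Correctness is then immediate from the decomposition. If $(x_1,x_2,x_3)$ witnesses $\kExactIP{3}$ with value $\beta$, its actual profile $\mathbf v^\star=(\pi_1,\ldots,\pi_t)$ sums to $\beta$ and is among the enumerated ones; by construction $(x_1,x_2,x_3)$ is orthogonal in $I_{\mathbf v^\star}$, giving completeness. Conversely, any orthogonal triple in $I_{\mathbf v}$ has block-wise partial sums exactly $\mathbf v$, hence total inner product $\beta$, so starting from a NO instance precludes any orthogonal triple in any $I_{\mathbf v}$, giving soundness.

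The hard part is matching both quantitative bounds simultaneously -- $n^{\gamma\log(c/\gamma)}$ instances and dimension $O(2^{c/\gamma}\log n)$ -- since the naive one-shot chunking above overshoots on both counts: the per-block gadget uses $O(2^{3s})=n^{O(\gamma)}$ coordinates per block, and a stars-and-bars count of profiles gives $\binom{\beta+t-1}{t-1}$, both too large. The expected remedy is to replace one-shot chunking by recursive divide-and-conquer on the dimension: split the current block in half, enumerate a partial target $\beta_1$ for the first half (setting $\beta_2=\beta-\beta_1$), recurse on each half, and stop at recursion depth $\log(c/\gamma)$ when pieces have dimension $\gamma\log n$ and are encoded by a direct zero-product subinstance. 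The branching per level is roughly the current dimension, so telescoping over $\log(c/\gamma)$ levels yields the claimed $n^{\gamma\log(c/\gamma)}$ factor, while the base-case gadget dimensions compound to $O(2^{c/\gamma}\log n)$. The careful parameter accounting in this recursion is the technical obstacle; the structural reduction and the gadget correctness are routine.
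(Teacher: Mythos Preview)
The paper does not give its own proof of this lemma; it is quoted verbatim from Chen--Williams~\cite[Lemma~4.2]{ChenW19} and used as a black box. So there is no ``paper's proof'' to compare against, but your chunk-and-enumerate idea is indeed the mechanism behind the cited result. The structural part of your plan --- partition the coordinates into blocks, enumerate the per-block partial inner products, and for each profile build a $\kOV{3}$ gadget whose orthogonal triples are exactly those realizing that profile --- is correct, and the one-to-one vector correspondence and completeness/soundness follow exactly as you say.

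The genuine error is in the choice of block size. You take $s=\gamma\log n$ and $t=c/\gamma$, i.e.\ a \emph{constant} number of \emph{logarithmically long} blocks; this is backwards. With your parameters, the profile count $(s+1)^t\le (\gamma\log n+1)^{c/\gamma}=(\log n)^{O(1)}$ is only polylogarithmic --- far \emph{below} the allowed $n^{\gamma\log(c/\gamma)}$, so your stars-and-bars worry is unfounded --- while the bad-triple gadget has $t\cdot 2^{3s}=(c/\gamma)\cdot n^{3\gamma}$ coordinates, which is genuinely too large. Swapping the roles fixes everything: take block size $s_0=\Theta(c/\gamma)$ (a \emph{constant}) and $t=d/s_0=\Theta(\gamma\log n)$ blocks. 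Then the number of profiles is $(s_0+1)^t\approx (c/\gamma)^{\gamma\log n}=n^{\gamma\log(c/\gamma)}$, matching the stated bound, and the total gadget dimension is $t\cdot 2^{O(s_0)}=O(2^{O(c/\gamma)}\log n)$, matching the stated bound up to the constant in the exponent (which is all the downstream application in \autoref{lem:hardness_k3ov} needs).

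In particular, the recursive divide-and-conquer you propose as a ``remedy'' is unnecessary and would not help: unrolling it to leaves of size $s_0$ still produces exactly $(s_0+1)^t$ leaf-profile combinations and concatenates $t$ leaf gadgets of size $2^{O(s_0)}$, so it yields the same two bounds as the one-shot chunking with the correct block size. The technical obstacle you anticipated dissolves once $s_0$ is chosen to be the constant $c/\gamma$ rather than $\gamma\log n$.
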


\begin{lemma}[Low-Dimensional \kOV{(k,3)} is Hard under Sparse \MAXThreeSAT{}]\label{lem:hardness_k3ov}
For every $\delta > 0$, there exists some $c > 0$ such that \kOV{(k,3)} in dimension $d = c \log n$ cannot be solved in time $\Order(n^{k-\delta})$, unless the Sparse \MAXThreeSAT{} hypothesis fails.
\end{lemma}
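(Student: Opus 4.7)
Following Figure~\ref{fig:reductions}, I chain four steps: (i) reduce Sparse \MAXThreeSAT{} to \kExactIP{(k,3)} by split-and-list; (ii) decompose by triples into $\binom{k}{3}$ sub-instances of \kExactIP{3}; (iii) apply \autoref{lem:red_exactip_to_ov} to each sub-instance to obtain \kOV{3} instances that preserve the one-to-one vector correspondence; and (iv) recombine the $\binom{k}{3}$ families into a single low-dimensional \kOV{(k,3)} instance. Concretely, starting from a \MAXThreeSAT{} instance on $n$ variables and $c_0 n$ clauses, I partition the variables into $k$ equal blocks and enumerate all $N = 2^{n/k}$ partial assignments in each block to form $X_1, \dots, X_k$. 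For each clause $C$ whose literals lie in blocks $\{a, b, c\}$ (padding arbitrarily if fewer than three blocks are touched), I introduce a coordinate $y$ with active indices $\{a, b, c\}$, and set $x_a[y] = 1$ iff the partial assignment $x_a$ falsifies every literal of $C$ in block $a$ (with $x_i[y] \equiv 1$ on any padding index). Then $\prod_{i \in \{a, b, c\}} x_i[y] = 1$ iff $C$ is unsatisfied, so the value of a tuple under \kExactIP{(k,3)} is exactly the number of unsatisfied clauses; enumerating the target $\beta \in \{0, \dots, c_0 n\}$ at a factor-$n$ overhead reduces \MAXThreeSAT{} to \kExactIP{(k,3)} on dimension $d = c_0 n = (c_0 k) \log N$.

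For each $\beta$, partitioning coordinates by their active triple $T \in \binom{[k]}{3}$ yields a decomposition $\sum_T V_T(x_T) = \beta$, so I enumerate all distributions $(\beta_T)_T$ at a cost of at most $(d+1)^{\binom{k}{3}} = (\log N)^{O(1)}$. For each distribution I apply \autoref{lem:red_exactip_to_ov} with parameter $\gamma$ separately to each triple $T$, producing $N^{\gamma \log((c_0 k)/\gamma)}$ instances of \kOV{3} per triple on dimension $O(2^{(c_0 k)/\gamma} \log N)$, with vector identities preserved. Enumerating a tuple $(j_T)_T$ picking one output per triple ($N^{\binom{k}{3} \gamma \log((c_0 k)/\gamma)}$ choices in total) and concatenating the selected coordinates -- each labelled with its $T$ as active indices -- produces a single \kOV{(k,3)} instance on $X_1, \dots, X_k$ of dimension $c \log N$ with $c := \binom{k}{3} \cdot O(2^{(c_0 k)/\gamma})$. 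A short quantifier swap (``$\exists x\, \forall T\, \exists j_T$'' $\Leftrightarrow$ ``$\exists (j_T)\, \exists x\, \forall T$''), combined with the completeness/soundness of \autoref{lem:red_exactip_to_ov}, shows that some combined instance in this enumeration is YES iff \MAXThreeSAT{} admits $\geq c_0 n - \beta$ satisfied clauses.

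If \kOV{(k,3)} in dimension $c \log N$ were solvable in $O(N^{k - \delta})$, the total running time becomes
\begin{equation*}
(\log N)^{O(1)} \cdot N^{\binom{k}{3} \gamma \log((c_0 k)/\gamma)} \cdot N^{k - \delta} \leq 2^{(n/k)\bigl(k - \delta + \binom{k}{3} \gamma \log((c_0 k)/\gamma)\bigr) + o(n)}.
\end{equation*}
Choosing $\gamma = \gamma(k, c_0, \delta)$ small enough that $\binom{k}{3} \gamma \log((c_0 k)/\gamma) < \delta/2$, this is $O(2^{n(1 - \delta/(2k))})$. Since the construction goes through for every constant $c_0$ (with the corresponding $c$), setting $\delta' := \delta/(2k)$ contradicts the Sparse \MAXThreeSAT{} hypothesis.

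The main obstacle I anticipate is the exponential dimension blow-up $2^{(c_0 k)/\gamma}$ coming from \autoref{lem:red_exactip_to_ov}: the constant $c$ in the target dimension must be allowed to grow with $c_0$, which is precisely why the argument requires the \emph{Sparse} \MAXThreeSAT{} hypothesis (so that the \kExactIP{(k,3)} dimension $d = c_0 n$ stays proportional to $\log N$) rather than plain \MAXThreeSAT{} or \SETH{}. A secondary subtlety is that the per-triple targets $(\beta_T)_T$ cannot be recovered from $\beta$ alone and must be enumerated, but since $d = O(\log N)$ the number of distributions is only polylogarithmic in $N$ and is easily absorbed.
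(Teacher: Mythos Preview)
Your proof is correct and follows essentially the same route as the paper: split-and-list to reduce Sparse \MAXThreeSAT{} to \kExactIP{(k,3)} on logarithmic dimension, decompose by active triple, apply \autoref{lem:red_exactip_to_ov} per triple, and recombine via the preserved vector correspondence. The only cosmetic difference is that you enumerate the per-triple target distributions $(\beta_T)_T$ up front, whereas the paper enumerates all $\alpha \in [\beta]$ independently for each triple and discards combinations whose targets do not sum to $\beta$; both yield the same $\tilde\Order(n^{\binom{k}{3}\gamma\log(c/\gamma)})$ overhead and the same final choice of $\gamma$.
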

\begin{proof}
  Take an instance $\Phi$ of $\MAXThreeSAT$ on $N$ variables and $M = c N$ clauses. As a first step, we reduce to $O(N)$ instances of $\kExactIP{(k, 3)}$ using the split-and-list technique due to Williams~\cite{Williams05}: We split the variables into $k$ groups, and let $V_1,\dots,V_k$ be the sets of partial assignments to each group, i.e.\ $|V_i| = 2^{N/k}$. Given a partial assignment $v_i \in V_i$, we construct a vector $x_i \in \{0,1\}^{M}$ and assign its entries as $x_i[j] = 0$ if and only if $v_i$ contains a literal which satisfies the clause $C_j$. In this way, we obtain $k$ sets of vectors $X_1,\dots,X_k$ with $n := |X_i| = 2^{N/k}$ and dimension $d = M = c N = c \cdot k \log n$. For each coordinate $j \in [d]$ we say that the set $X_i$ \emph{is active in dimension $j$} iff its corresponding group contains a variable inside the clause $C_j$. Since we start from a $\MAXThreeSAT$ instance, each dimension has at most 3 active sets. The key property is that for an assignment $x_1,\dots,x_k$, we have that $d - \langle x_1,\dots,x_k \rangle$ counts the number of satisfied clauses. Moreover, in each dimension, the inner product \emph{only depends on the active vectors} (a set~$X_i$ that is not active for a given clause $C_j$ has $x_i[j] = 1$ for all $x_i \in X_i$). Thus, we can decide the $\MAXThreeSAT$ instance, by deciding $d$ instances of $\kExactIP{(k,3)}$: we try all $d$ possible target values and output~$d$ minus the minimum. Note that the way we defined $\kExactIP{(k,3)}$
  requires \emph{exactly} 3 active sets per dimension, while in this reduction we have \emph{at most} 3. To fix this, if some dimension has $a < 3$ active sets, we select $3 - a$ other arbitrary sets as active. The instance remains equivalent since for that dimension all vectors in the newly active sets will have a one.
  
  Next, we use \autoref{lem:red_exactip_to_ov} to reduce each of the $d$ instances of $\kExactIP{(k,3)}$ to $\kOV{(k,3)}$: Fix an instance $X_1,\dots,X_k$ of $\kExactIP{(k,3)}$ with target $\beta$. For each $\{a,b,c\} \in \binom{[k]}3$, we project $X_a, X_b, X_c$ to the dimensions where $\{a,b,c\}$ are active. Let $X'_a, X'_b, X'_c$ be the resulting sets. Then, for each $\alpha \in [\beta]$, we can see $(X'_a,X'_b,X'_c)$ with target $\alpha$ as a $\kExactIP{3}$ instance. For each of these, we apply the reduction in~\autoref{lem:red_exactip_to_ov}, and obtain a collection of $\kOV{3}$ instances. We extend them to instances of $\kOV{(k,3)}$ by making the sets corresponding to $X_a,X_b,X_c$ \emph{active} in \emph{all} dimensions, and adding the sets $X_j$ for $j \in  [k] \setminus \{a,b,c\}$ with $n$ zero-vectors each (i.e.\ we add dummy vectors for each non-active set, which are irrelevant for the instance). We denote the resulting collection of $\kOV{(k,3)}$ instances by $\calI_{\{a,b,c\}}$. By the properties of \autoref{lem:red_exactip_to_ov}, we have that $|\calI_{\{a,b,c\}}| \leq \beta \cdot n^{\gamma \log (c/\gamma)} = \Order(n^{\gamma \log(c/\gamma)}\log n)$ for all $\{a,b,c\} \in \binom{[k]}3$, and the dimension of each instance is $\Order(2^{c/\gamma} \log n)$. As a final step, we combine the $\calI$'s into $\Order(n^{\binom k3 \gamma \log(c/\gamma)} \log n)$ instances of $\kOV{(k,3)}$ in the natural way: for each combination of $(I_1,\dots,I_{\binom k3}) \in \calI_{1} \times \dots \times \calI_{\binom k3}$ (where we consider all $\{a,b,c\} \in \binom{[k]}3$ in some fixed order), we construct an instance of $\kOV{(k,3)}$ by concatenating the $\binom k3$ vectors corresponding to the original $\kExactIP{(k,3)}$ instance (note that in each $I_j$, there is a one-to-one correspondence between vectors in the original and the reduced instances) and summing up the targets (we discard any constructed instance not summing up to~$\beta$).

  In summary, we reduced the given $\MAXThreeSAT$ instance~$\Phi$ of $N$ variables and $cN$ clauses to $\tilde\Order(n^{\binom k3\gamma \log(c/\gamma)})$ instances of $\kOV{(k,3)}$ with dimension $d' = \Order(\binom k3 2^{c / \gamma} \log n)$. By the completeness and soundness of \autoref{lem:red_exactip_to_ov}, we can decide $\Phi$ by checking whether any of the constructed $\kOV{(k,3)}$ instances is a YES-instance. Thus, assume for contradiction that there is an algorithm $\calA$ and some $\delta > 0$ such that for all $c'$, $\calA$ decides $\kOV{(k,3)}$ on $d = c' \log n$ dimensions in time $\Order(n^{k - \delta})$. Then, by picking $\gamma$ sufficiently small we can make $\gamma \binom k3 \log(c/\gamma) < \delta/2$. Since the dimension of the $\kOV{(k,3)}$ instances is $d' = \Order(\binom k3 2^{c / \gamma} \log n) = \Order(\log n)$, we can solve the $\MAXThreeSAT$ by running $\calA$. This takes time
  \begin{equation*}
    \tilde\Order(n^{\binom k3\gamma\log(c / \gamma)} \cdot n^{k - \delta}) = \tilde\Order(n^{k - \delta/2}).
  \end{equation*}
  Since $n = 2^{N/k}$, this gives a $\MAXThreeSAT$ algorithm in time $\Order(2^{N(1 - \delta/(2k))})$, contradicting the Sparse $\MAXThreeSAT$ hypothesis.
\end{proof}

\subsubsection{Gap Reduction from Low-Dimensional \texorpdfstring{\kOV{(k,3)}}{(k,3)-OV} to \texorpdfstring{$\VMax(\phi)$}{VMax(phi)}}
Having established the hardness of $\kOV{(k,3)}$ in the low-dimensional regime, we will use the distributed PCP reduction from~\autoref{lem:reduction_clean} to show hardness of approximation for $\VMax(\phi)$ when $\Hand(\phi) \geq 3$. As an intermediate step we will need the following reduction:

\begin{lemma}[Reduction from \kMaxIP{(k,3)} to $\VMax(\phi)$]\label{lem:red_k3mip_to_h3}
Let $\phi : \{0,1\}^k \to \{0,1\}$ have and-hardness $3 \leq \Hand(\phi) \leq k$. Then, given an instance of $\kMaxIP{(k,3)}$ on $n$ vectors of dimension $d$, there is a reduction to $\VMax(\phi)$ on $n$ vectors of dimension $d$ such that the value of any solution $x_1, \dots, x_k$ in the original instance is the same as the value of the corresponding solution $x_1', \dots, x_k'$ in the constructed instance. The reduction runs in time $\Order(nd)$.
\end{lemma}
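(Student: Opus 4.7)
The plan is to build the $\VMax(\phi)$ instance coordinate-by-coordinate, using a unary coordinate gadget that depends only on the active indices at that coordinate. The key leverage is the assumption $\Hand(\phi) \geq 3$: by \autoref{def:model-checking-hardness}, for every $3$-subset $S = \{a,b,c\} \in \binom{[k]}{3}$, there is an $S$-restriction of $\phi$ with exactly one satisfying assignment. Concretely, this means we can fix a constant assignment $\beta^S \in \{0,1\}^{[k] \setminus S}$ to the $k-3$ non-active coordinates such that the remaining $3$-variable function $\phi_S(z_a, z_b, z_c) := \phi(z_1, \dots, z_k)|_{z_i = \beta^S_i \text{ for } i \notin S}$ has a unique satisfying assignment $\alpha^S = (\alpha^S_a, \alpha^S_b, \alpha^S_c) \in \{0,1\}^3$. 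Compute and store one such pair $(\beta^S, \alpha^S)$ for each of the $\binom{k}{3}$ three-subsets (this is an $\Order(1)$-time preprocessing step independent of the input).

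Next, I would construct the vectors $x_1', \dots, x_k'$ as follows. For each coordinate $y \in [d]$, let $S_y = \{a,b,c\}$ be the active triple at $y$. Define, for every $i \in [k]$,
\begin{equation*}
    x_i'[y] := \begin{cases}
        \beta^{S_y}_i & \text{if } i \notin S_y, \\
        x_i[y] & \text{if } i \in S_y \text{ and } \alpha^{S_y}_i = 1, \\
        1 - x_i[y] & \text{if } i \in S_y \text{ and } \alpha^{S_y}_i = 0.
    \end{cases}
\end{equation*}
Note that this is a genuine unary transformation per coordinate (constants on inactive indices, identity or negation on active indices), and it depends only on $i$, $y$, and the pre-chosen gadget data. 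The resulting instance has exactly $n$ vectors per set and dimension $d$, matching the claim, and can be written down in time $\Order(nd)$ by a single pass over all input entries.

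For correctness, fix any $(x_1, \dots, x_k)$ and any coordinate $y$ with active triple $S_y = \{a,b,c\}$. By construction the inactive coordinates $i \notin S_y$ are pinned to $\beta^{S_y}_i$, so $\phi(x_1'[y], \dots, x_k'[y]) = \phi_{S_y}(x_a'[y], x_b'[y], x_c'[y])$. Since $\phi_{S_y}$ has the unique satisfying assignment $\alpha^{S_y}$ and the flipping recipe ensures $x_i'[y] = \alpha^{S_y}_i$ if and only if $x_i[y] = 1$ for every $i \in S_y$, we obtain
\begin{equation*}
    \phi_{S_y}(x_a'[y], x_b'[y], x_c'[y]) = 1 \iff x_a[y] \cdot x_b[y] \cdot x_c[y] = 1.
\end{equation*}
Summing over $y \in [d]$ yields $\#\{y : \phi(x_1'[y], \dots, x_k'[y])\} = \Val_{(k,3)}(x_1, \dots, x_k)$, which is the desired value-preserving correspondence.

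No substantial obstacle is anticipated: the entire argument is a direct unpacking of the definition of $\Hand(\phi) \geq 3$ combined with a coordinate-local gadget, and the lemma does not require any gap or dimension blow-up. The only mild care is that the gadget varies with the active triple $S_y$, so the unary transformation applied to vector $x_i$ differs across coordinates; but since the active indices are part of the input, this variation introduces no overhead beyond reading each entry once.
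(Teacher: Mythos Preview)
Your proposal is correct and essentially identical to the paper's own proof: both use the definition of $\Hand(\phi) \geq 3$ to obtain, for each active triple $S_y$, a restriction of $\phi$ with a unique satisfying assignment, then set inactive coordinates to the restriction constants and apply identity/negation on active coordinates so that $\phi$ fires exactly when all three active bits are $1$. The only cosmetic differences are that you precompute one gadget per $\binom{[k]}{3}$ triple up front and your $\alpha$/$\beta$ naming is swapped relative to the paper's.
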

\begin{proof}
  For each coordinate $y \in [d]$, the $\kMaxIP{(k,3)}$ instance has three active indices $a_y, b_y, c_y$. By definition of $\Hand(\phi)$, there is an $S = \{a_y, b_y, c_y\}$-restriction $\phi'$ of $\phi$ which has a unique satisfying assignment. Let $\alpha_y : [k] \setminus S \to \{0,1\}$ be the partial assignment which restricts $\phi$ to $\phi'$, and denote the satisfying assignment by $\beta_y: S \to \{0,1\}$.

  For every $j \in [k]$ and all $x_j \in X_j$, we create a vector $x'_j \in X'_j$ in the $\VMax(\phi)$ instance, and set the entries for each coordinate $y \in [d]$ as follows:
  \[
    x'[y] := 
    \begin{cases}  
      \alpha_y(j) & \text{if $j \notin \{a_y, b_y, c_y\}$,} \\
      \beta_i(j) & \text{if $j \in \{a_y, b_y, c_y\}$ and $x_y[j] = 1$,} \\
      1 - \beta_i(j) & \text{otherwise.}
    \end{cases}
  \]
  It is immediate that the $\VMax(\phi)$ instance has the same number of vectors and the same dimension as the original instance. Moreover, the value of each possible solution is the same in both instances. The running time is $\Order(nd)$.
\end{proof}

Next, we give a gap introducing reduction from low-dimensional \kOV{(k,3)} to $\VMax(\phi)$. The steps are similar to those in the proof of~\autoref{lem:hardness_k3ov}:

\begin{lemma}[Reduction from Low-Dimensional $\kOV{(k,3)}$ to $\VMax(\phi)$]\label{lem:reduction_k3ov_to_maxphi}
  Let $\phi: \{0,1\}^k \to \{0,1\}$ have hardness $3 \leq \Hand(\phi) \leq k$. Then, there is a universal constant $c_1$ such that for every $c > 0$, and parameters $0 < \gamma \leq 1$ and $\tau \geq 2$, there is a reduction from \kOV{(k,3)} in dimension $c \log n$ to~$n^{\gamma}$ instances $I_1,\dots,I_{n^{\gamma}}$ of $\VMax(\phi)$. For $T := c \log n \cdot \tau^{c_1}$, the following properties hold:
  \begin{itemize}
    \item The constructed $\VMax(\phi)$ instances have dimension $d' = \binom k3\cdot \tau^{\poly(k,c/\gamma)} \log n$, and sparsity $m' = \Order(n \cdot \tau^{\poly(k, c/\gamma)}\log n)$.
    \itemdesc{Completeness:} If we start from a YES instance of $\kOV{(k,3)}$, then there is some $j \in [n^{\gamma}]$ such that in $I_j$ it holds that $\OPT = \binom k3 \cdot T$.
    \itemdesc{Soundness:} If we start from a NO instance of $\kOV{(k,3)}$, then for every $I_j$, $j \in [n^{\gamma}]$ we have that $\OPT \leq (\binom k3 - 1) \cdot T + T/\tau$.
  \end{itemize}
  The running time of the reduction is linear in the output size: $\tilde\Order(n^{1 + \gamma})$.
\end{lemma}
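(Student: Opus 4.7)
The plan follows the pipeline depicted in Figure~\ref{fig:reductions}: split the $\kOV{(k,3)}$ input into $\binom{k}{3}$ sub-instances of $\kOV{3}$ (one per triple of active indices), apply the distributed PCP of Lemma~\ref{lem:reduction_clean} to each sub-instance, recombine the resulting $\kMaxIP{3}$ instances into $\kMaxIP{(k,3)}$ instances indexed by a joint choice of PCP branches, and finally invoke Lemma~\ref{lem:red_k3mip_to_h3} to turn each $\kMaxIP{(k,3)}$ instance into a $\VMax(\phi)$ instance. Concretely, for every triple $\{a,b,c\} \in \binom{[k]}{3}$ let $D_{abc} \subseteq [d]$ be the set of dimensions with active indices $\{a,b,c\}$; projecting $X_a, X_b, X_c$ onto $D_{abc}$ gives a $\kOV{3}$ sub-instance $I_{abc}$ of dimension at most $c\log n$. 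I apply Lemma~\ref{lem:reduction_clean} to $I_{abc}$ with target $\phi_{abc}(z_a,z_b,z_c) = z_a \wedge z_b \wedge z_c$ (a $3$-variable function with a unique satisfying assignment, so the lemma applies), parameter $\gamma' := \gamma/\binom{k}{3}$, and the given $\tau$. This produces, per triple, $n^{\gamma'}$ instances $J_{abc, j}$ of $\kMaxIP{3}$ whose vectors remain in bijection with $X_a, X_b, X_c$, each in dimension $\tau^{\poly(3, c/\gamma')} \log n = \tau^{\poly(k, c/\gamma)} \log n$.

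For every joint choice $(j_{abc})_{abc} \in [n^{\gamma'}]^{\binom{k}{3}}$ (contributing $n^{\gamma' \binom{k}{3}} = n^{\gamma}$ combined instances in total, matching the stated bound) I build one $\kMaxIP{(k,3)}$ instance as follows: for each $i \in [k]$, the combined vector associated to $x_i \in X_i$ is the concatenation of the images of $x_i$ under every $J_{abc, j_{abc}}$ with $i \in \{a,b,c\}$; on the remaining dimensions (inherited from triples not containing $i$) I set the entry to $0$, since $i$ is inactive there and thus contributes nothing. Dimensions originating from $J_{abc, j_{abc}}$ inherit $\{a,b,c\}$ as their active triple. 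The combined instance has dimension $\binom{k}{3} \cdot \tau^{\poly(k, c/\gamma)} \log n$ and sparsity $\Order(n \cdot \tau^{\poly(k, c/\gamma)} \log n)$, as required. Finally, I invoke Lemma~\ref{lem:red_k3mip_to_h3} on each combined $\kMaxIP{(k,3)}$ instance to obtain a $\VMax(\phi)$ instance preserving the value of every candidate tuple.

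Correctness rests on the decomposition $\Val(x_1,\ldots,x_k) = \sum_{\{a,b,c\}} V_{abc}(x_a, x_b, x_c; j_{abc})$, where $V_{abc}$ is the contribution of the dimensions coming from $J_{abc, j_{abc}}$ and depends only on $x_a, x_b, x_c$ and the branch $j_{abc}$. Completeness: a globally orthogonal $(x_1,\ldots,x_k)$ restricts to an orthogonal triple on $D_{abc}$ for every $\{a,b,c\}$, so by completeness of Lemma~\ref{lem:reduction_clean} some $j_{abc}$ achieves $V_{abc} = T$; picking this joint choice gives total value $\binom{k}{3} \cdot T$. Soundness: for any candidate $(x_1,\ldots,x_k)$, the NO condition on the $\kOV{(k,3)}$ input guarantees some dimension $y^\star$ with active triple $\{a^\star, b^\star, c^\star\}$ where the active entries multiply to one, so $(x_{a^\star}, x_{b^\star}, x_{c^\star})$ is \emph{not} orthogonal on $D_{a^\star b^\star c^\star}$; by soundness of Lemma~\ref{lem:reduction_clean}, $V_{a^\star b^\star c^\star} \leq T/\tau$ for every choice of $j_{a^\star b^\star c^\star}$, while the remaining $\binom{k}{3} - 1$ triples contribute at most $T$ each, bounding the total by $(\binom{k}{3}-1) T + T/\tau$.

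The main subtlety I anticipate is the soundness bookkeeping: I cannot claim that any individual sub-$\kOV{3}$ instance $I_{abc}$ is NO — a locally orthogonal triple may simply fail to extend to a global solution — so the naive strategy of applying Lemma~\ref{lem:reduction_clean}'s soundness directly to some sub-instance would fail. The per-candidate argument above circumvents this: for each candidate $(x_1,\ldots,x_k)$ separately, the $\kOV{(k,3)}$ NO condition pinpoints one specific triple whose PCP contribution must be small, which is exactly what is needed to bound $\OPT$ in the combined instance. The running-time analysis is then routine, since extracting the sub-instances, applying the PCP per triple, and assembling the $n^\gamma$ combined instances are all linear in their respective output sizes.
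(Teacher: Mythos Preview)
Your proposal is correct and follows essentially the same route as the paper: split by active triples, apply the distributed PCP (Lemma~\ref{lem:reduction_clean}) per triple with $\gamma' = \gamma/\binom{k}{3}$, take the product of branches, and convert via Lemma~\ref{lem:red_k3mip_to_h3}; you also correctly isolate the key subtlety that soundness must be argued per candidate tuple rather than per sub-instance. The one technical step you omit is that each projected sub-instance $I_{abc}$ must be zero-padded up to the full dimension $c\log n$ before invoking Lemma~\ref{lem:reduction_clean}, so that the threshold $T$ is identical across all $\binom{k}{3}$ triples---without this, the $T_{abc}$ would vary with $|D_{abc}|$, and in the soundness bound the losing triple could contribute an arbitrarily small fraction of the total, collapsing the gap; the paper handles this by padding (and by ensuring every triple is active in at least one coordinate).
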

\begin{proof}
  Let $X_1,\dots,X_k$ be the input instance of $\kOV{(k, 3)}$ in $d = c \log n$ dimensions. As a first step, we make sure that for all $\{a,b,c\} \in \binom{[k]}3$, there is at least one dimension in which $\{a,b,c\}$ are active, by adding a fresh dimension with all zeros for all vectors if necessary.
  
  Now we describe the reduction. For each $\{a,b,c\} \in \binom{[k]}3$, we project $X_a, X_b, X_c$ to the dimensions where $\{a,b,c\}$ `are active. Let $X'_a, X'_b, X'_c$ be the resulting sets and note that we can see it as a $\kOV{3}$ instance. We pad the vectors with zeros, so that the dimension of this $\kOV{3}$ instance is also $d = c \log n$. Next, we invoke the reduction from \autoref{lem:reduction_clean} on $X'_a, X'_b, X'_c$ with parameters $\gamma' := \gamma / \binom k3$ and $\tau' := \tau$. This produces $n^{\gamma'}$ instances of $\kMaxIP{3}$, which we extend to instances of $\kMaxIP{(k, 3)}$ by adding zeros to the sets $X_j$'s for $j \in [k] \setminus \{a,b,c\}$ (i.e.\ we put zero-vectors corresponding to each vector from the non-participating sets). We denote by $\calI_{\{a,b,c\}}$ the collection of $n^{\gamma'}$ constructed instances of $\kMaxIP{3}$.

  Having constructed $\calI_{\{a,b,c\}}$ corresponding to each $\{a,b,c\} \in \binom{[k]}3$, we merge them into $n^{\gamma' \cdot \binom k3} = n^{\gamma}$ instances of $\kMaxIP{(k,3)}$ as follows.  For each $(I_1,\dots,I_{\binom k3}) \in \calI_{1} \times \dots \times \calI_{\binom k3}$ we construct one instance of $\kMaxIP{(k,3)}$ in the natural way: Each $I_j$ has a vector associated to every $x_i \in X_i$ from the original $\kOV{(k,3)}$ instance, so we concatenate the $\binom k3$ copies across the $I$'s. This gives an instance of $\kMaxIP{(k,3)}$ with the same number of vectors as the original instance, dimension $d' = \binom k3 \cdot \tau^{\poly(k, c/\gamma)} \log n$. and sparsity $m' = \Order(n \cdot d')$.

  As a final step, we apply \autoref{lem:red_k3mip_to_h3} to each of the $n^{\gamma}$ instances of $\kMaxIP{(k,3)}$ and obtain $n^{\gamma}$  instances of $\VMax(\phi)$, each with sparsity $m' = \Order(n \cdot d') = \Order(n \cdot \tau^{\poly(k, c/\gamma)}\log n)$. Now we analyze the completeness and soundness of the reduction:

  \begin{description}
    \item[Completeness:] If $X_1,\dots,X_k$ is a YES instance of $\kOV{(k,3)}$, then there are vectors $x_1 \in X_1, \dots, \allowbreak x_k \in X_k$ such that $\langle x'_a, x'_b, x'_c \rangle = 0$ in the corresponding projected $\kOV{3}$ instances $X'_a,X'_b,X'_c$ for all $\{a, b, c\} \in \binom{[k]}3$. By the completeness of the reduction in \autoref{lem:reduction_clean}, for each $\{a,b,c\} \in \binom k3$, in at least one of the $n^{\gamma'}$ instances of $\kMaxIP{3}$ the vectors corresponding to $x'_a, x'_b, x'_c$ attain a value of $T$. In particular, in at least one of the $n^{\gamma}$ final instances of $\VMax(\phi)$, the value of the vectors corresponding to $x_1,\dots,x_k$ is at least $\binom k3 \cdot T$.
    \item[Soundness:] If $X_1,\dots,X_k$ is a NO instance, then for all $x_1 \in X_1, \dots, x_k \in X_k$, there is at least one $\{a,b,c\} \in \binom{[k]}3$ such that $\langle x'_a, x'_b, x'_c \rangle > 0$ in the corresponding projected $\kOV{3}$ instance $X'_a,X'_b,X'_c$. Hence, by the soundness of the reduction in \autoref{lem:reduction_clean}, in all the final instances of $\VMax(\phi)$, the value attained by the vectors corresponding to $x_1,\dots,x_k$ is at most $(\binom k3 - 1)\cdot T + T / \tau$.
  \end{description}
  The running time is dominated by constructing the $n^{\gamma}$ instances of $\kMaxIP{(k,3)}$, which takes time $\tilde\Order(n^{1 + \gamma})$. 
\end{proof}

Finally, we are ready to give the main result, which proves the lower bound of~\autoref{thm:as-maximization-classification}:

\begin{theorem}[No Approximation Scheme for $3 \leq \Hand(\phi)$] \label{thm:no-as-maximization}
  Let $\phi: \{0,1\}^k \to \{0,1\}$ have~and-hardness $3 \leq \Hand(\phi) \leq k$. Then, for all $0 < \varepsilon < \binom k3{}^{-1}$ and for all $\delta > 0$, we cannot compute a \raisebox{0pt}[0pt][0pt]{$(1 - \binom k3{}^{-1} + \varepsilon)^{-1}$}-approximation for $\VMax(\phi)$ in time $\Order(m^{k - \delta})$ unless the Sparse $\MAXThreeSAT$ hypothesis fails. 
\end{theorem}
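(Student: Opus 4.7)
The plan is to chain together Lemma~\ref{lem:hardness_k3ov} and Lemma~\ref{lem:reduction_k3ov_to_maxphi}, tuning the soundness parameter $\tau$ of the PCP-based reduction to achieve precisely the claimed gap. Concretely, assume for contradiction that for some $\varepsilon \in (0, \binom{k}{3}^{-1})$ and some $\delta > 0$ there is an algorithm $\mathcal{A}$ computing a $(1 - \binom{k}{3}^{-1} + \varepsilon)^{-1}$-approximation of $\VMax(\phi)$ in time $O(m^{k-\delta})$. The goal will be to derive a contradiction with the Sparse \MAXThreeSAT{} hypothesis via the hardness of low-dimensional $\kOV{(k,3)}$ established in Lemma~\ref{lem:hardness_k3ov}.

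First I would choose $\tau$ so that the gap produced by Lemma~\ref{lem:reduction_k3ov_to_maxphi} strictly exceeds the approximation ratio of $\mathcal{A}$. Since the completeness value is $\binom{k}{3}\cdot T$ and the soundness value is at most $(\binom{k}{3}-1)\cdot T + T/\tau$, the gap is
\begin{equation*}
    \frac{\binom{k}{3}\cdot T}{(\binom{k}{3}-1)\cdot T + T/\tau} \;=\; \frac{\binom{k}{3}}{\binom{k}{3}-1 + 1/\tau} \;\xrightarrow{\tau\to\infty}\; \frac{1}{1 - \binom{k}{3}^{-1}}.
\end{equation*}
Hence there exists a (large, but $k$- and $\varepsilon$-dependent) constant $\tau_0$ such that this ratio is at least $(1 - \binom{k}{3}^{-1} + \varepsilon)^{-1}$; I fix $\tau := \tau_0$. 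Next, fix any constant $c > 0$ witnessing the hardness of low-dimensional $\kOV{(k,3)}$ in Lemma~\ref{lem:hardness_k3ov} for the value $\delta/2$, and pick $\gamma := \delta/2$ for the reduction of Lemma~\ref{lem:reduction_k3ov_to_maxphi}.

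With these parameters the reduction produces $n^{\gamma}$ instances of $\VMax(\phi)$, each of sparsity
$m' = O(n \cdot \tau^{\poly(k, c/\gamma)} \log n) = \tilde{O}(n)$, since $\tau$, $\gamma$, $c$, $k$ are all constants. Running $\mathcal{A}$ on every constructed instance and taking the maximum lets us distinguish YES from NO in the original $\kOV{(k,3)}$ instance by thresholding at, say, the midpoint of the gap. The total running time is
\begin{equation*}
    \tilde{O}(n^{1+\gamma}) \;+\; n^{\gamma} \cdot O((m')^{k-\delta}) \;=\; \tilde{O}(n^{\gamma} \cdot n^{k-\delta}) \;=\; \tilde{O}(n^{k - \delta/2}),
\end{equation*}
contradicting Lemma~\ref{lem:hardness_k3ov} under the Sparse \MAXThreeSAT{} hypothesis.

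The conceptually non-trivial step is already done: Lemma~\ref{lem:reduction_k3ov_to_maxphi} packages the distributed PCP and the coordinate-gadget machinery into a single statement with a controllable soundness parameter, so the only delicate issue here is verifying that the gap ratio can be pushed arbitrarily close to $(1 - \binom{k}{3}^{-1})^{-1}$ by choosing $\tau$ constant (rather than growing with $n$). This is exactly what the explicit expression above guarantees, and the fact that $\tau$ is a constant is crucial to ensure that the PCP-induced dimension blowup $\tau^{\poly(k,c/\gamma)}$ remains polylogarithmic, keeping $m' = \tilde{O}(n)$ and letting the $O((m')^{k-\delta})$ running time of $\mathcal{A}$ translate cleanly into a subpolynomial-factor speedup for low-dimensional $\kOV{(k,3)}$.
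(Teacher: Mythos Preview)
Your proposal is correct and follows essentially the same approach as the paper: both assume a $(1-\binom{k}{3}^{-1}+\varepsilon)^{-1}$-approximation algorithm for contradiction, invoke Lemma~\ref{lem:reduction_k3ov_to_maxphi} with $\gamma=\delta/2$ and a constant $\tau$ large enough to separate the completeness and soundness values (the paper makes the explicit choice $\tau>(\varepsilon\binom{k}{3})^{-1}$, which is exactly what your limit computation yields), and then apply Lemma~\ref{lem:hardness_k3ov} to derive the contradiction.
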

\begin{proof}
  Suppose for contradiction there is some \raisebox{0pt}[0pt][0pt]{$0 < \varepsilon < \binom k3{}^{-1}$} and $\delta > 0$ such that there is an algorithm $\calA$ computing a multiplicative $(1 - \binom k3{}^{-1} + \varepsilon)^{-1}$-approximation for $\VMax(\phi)$ in time $\Order(m^{k - \delta})$. We will show that by applying~\autoref{lem:reduction_k3ov_to_maxphi}, we can use $\calA$ to decide any low-dimensional \kOV{(k,3)} instance in time $\Order(n^{k-\delta'})$ for some $\delta' > 0$, which would contradict the Sparse \MAXThreeSAT{} hypothesis.
  
  For every constant $c > 0$, we invoke the reduction from \autoref{lem:reduction_k3ov_to_maxphi} with parameters $\gamma := \delta / 2$ and $\tau > (\varepsilon \binom k3)^{-1}$ to reduce a \kOV{(k,3)} instance on $n$ vectors of dimension $c \log n$ to $n^{\gamma}$ instances of $\VMax(\phi)$ each with sparsity $m' = \Order(n \tau^{\poly(k, c/\gamma)} \log n) = \Order(n \log n)$.
  
  By the soundness and completeness of the reduction, a $(1 - \binom k3{}^{-1} + \varepsilon)^{-1}$-approximation is sufficient to distinguish between the two cases:
  \[
    \Bigg(1 - \binom k3^{\!-1}\! + \varepsilon\Bigg) \cdot \binom k3 \cdot T > 
    \left(\binom k3 - 1\right)\cdot T + T / \tau.
  \]

  Thus, we can decide the original $\kOV{(k,3)}$ instance by running the reduction and using $\calA$ on the constructed instances. The time to run the reduction is $\tilde\Order(n^{1 + \gamma})$, and running $\calA$ on $n^{\gamma}$ instances of $\kMaxIP{(k,3)}$ with sparsity $m' = \Order(n \log n)$ takes time $\Order(n^\gamma \mult (m')^{k-\delta}) = \tilde\Order(n^{k+\gamma-\delta})$. Thus, by choosing $\gamma = \delta / 2$, the total running time is bounded by $\tilde\Order(n^{k-\gamma})$. By~\autoref{lem:hardness_k3ov}, this contradicts the Sparse \MAXThreeSAT{} hypothesis.
\end{proof}

\subsection{Ruling out Efficient Approximation Schemes}\label{sec:hardness-approx:sec:no-eas}

Recall that we say that $\VOpt(\phi)$ admits an \emph{efficient} approximation scheme, if there is a fixed $\delta > 0$ such that for any $\varepsilon > 0$, an $\varepsilon$-approximation can be computed in time $O(m^{k-\delta})$. In this section we rule out the existence of efficient approximation schemes for all problems which cannot be solved exactly, namely, for all problems $\VOpt(\phi)$ with $3 \leq \Hdeg(\phi)$ or $\Hdeg(\phi) = k$, thereby proving the lower bounds in \autoref{thm:additive-approximation-classification} and \autoref{thm:no-efficient-as}. Since a multiplicative approximation scheme implies an additive one, it suffices to show the result for the additive case. Moreover, by~\autoref{lem:additive-max-min}, we know that an additive approximation scheme for $\VMax(\phi)$ is equivalent to an additive approximation scheme for $\VMin(\phi)$, so we focus on the minimization case.

For $h \geq 3$, consider the natural generalization \kMaxIP{(k,h)} of the \kMaxIP{(k,3)} problem defined in~\autoref{def:k-3-phi}. The main ingredient for our results is the following lemma, which shows that an additive approximation for $\kMaxIP{(k,h)}$ can be obtained from an additive approximation for $\VMin(\phi)$, when $\Hdeg(\phi) = h$.

\begin{lemma}\label{lem:kh-mip-to-hdeg-h-reduction-additive}
Let $\phi: \{0,1\}^k \to \{0,1\}$ have degree hardness $\Hdeg(\phi) = h$. If $\VMin(\phi)$ has an additive approximation scheme, then for any $\varepsilon > 0$, there exists some $\delta > 0$ such that \kMaxIP{(k,h)} has an additive $\varepsilon d$-approximation in time $\Order((nd)^{k-\delta})$.
\end{lemma}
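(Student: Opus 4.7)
The plan is to construct a reduction from \kMaxIP{(k,h)} to $\VMin(\phi)$ that preserves additive approximations up to a constant factor, and then invoke the assumed additive approximation scheme. The main tool is the coordinate gadget of \autoref{lem:coordinate-gadget}, combined with the fact that $\Hdeg(\phi) = h$ guarantees, for every $S \in \binom{[k]}{h}$, an $S$-restriction $\phi_S$ of $\phi$ of degree exactly $h$. Applying \autoref{lem:coordinate-gadget} to each $\phi_S$ yields unary transformations $\phi_{S,1}, \ldots, \phi_{S,\ell(S)}$ of $\phi_S$ (and hence of $\phi$) together with positive integers $\beta_1(S), \beta_2(S)$ satisfying $\sum_j \phi_{S,j}(z_{i_1}, \ldots, z_{i_h}) = \beta_1(S) - \beta_2(S) \prod_{i \in S} z_i$.

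Given a \kMaxIP{(k,h)} instance with vector sets $X_1, \ldots, X_k \subseteq \{0,1\}^d$ and active sets $S_y \in \binom{[k]}{h}$ for each $y \in [d]$, I would build the $\VMin(\phi)$ instance as follows. Let $M := \operatorname{lcm}\{\beta_2(S) : S \in \binom{[k]}{h}\}$, a constant depending only on $\phi$. For each $y \in [d]$ and each $j \in [\ell(S_y)]$, introduce $M/\beta_2(S_y)$ duplicated new coordinates; on each such coordinate, set $x_i'[\text{new}] := \tau_{S_y,j,i}(x_i[y])$ if $i \in S_y$ (where the $\tau_{S_y,j,i}$ are the unary maps provided by the gadget), and set $x_i'[\text{new}]$ to the constant value that restricts $\phi$ to $\phi_{S_y}$ if $i \notin S_y$. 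Summing the gadget identity across all new coordinates spawned from a given $y$ gives a contribution of $M\beta_1(S_y)/\beta_2(S_y) - M\prod_{i \in S_y} x_i[y]$, so the total $\VMin(\phi)$ value becomes
\[
  \Val(x_1', \ldots, x_k') = B - M \cdot V_{\mathrm{MaxIP}}(x_1, \ldots, x_k),
\]
where $V_{\mathrm{MaxIP}}$ denotes the \kMaxIP{(k,h)} value and $B := \sum_y M\beta_1(S_y)/\beta_2(S_y)$ is a constant independent of $(x_1, \ldots, x_k)$, computable in linear time. An additive error of $\Delta$ on the $\VMin(\phi)$ side thus translates to an additive error of $\Delta/M$ on the \kMaxIP{(k,h)} side.

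The constructed $\VMin(\phi)$ instance has dimension $d' \leq C d$ and sparsity $m' \leq C n d$ for some constant $C = C(k, h, \phi)$. Given $\varepsilon > 0$, I would invoke the assumed additive approximation scheme with parameter $\varepsilon' := \varepsilon M / C$, obtaining in time $O((m')^{k-\delta}) = O((nd)^{k-\delta})$ a value $V$ with $|V - \OPT_{\VMin(\phi)}| \leq \varepsilon' d'$. Returning $(B - V)/M$ then yields an additive $\varepsilon' d'/M \leq \varepsilon d$ approximation of $\OPT_{\mathrm{MaxIP}}$, as required.

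The main technical obstacle is the non-uniformity of $\beta_1(S)$ and $\beta_2(S)$ across different sets $S \in \binom{[k]}{h}$: without a uniform scaling factor, the $\VMin(\phi)$ value is not a clean affine transform of the \kMaxIP{(k,h)} value. This is resolved by the coordinate-duplication step, which equalizes the multiplicative constant to $M$; since $|\binom{[k]}{h}| = O(1)$, this $M$ is itself a constant and the blow-up in dimension is absorbed into $C$. The $\beta_1$ contributions then aggregate into the single offset $B$, which depends only on the pattern of active sets and not on the choice of vectors. All remaining details (bounding $C$, verifying sparsity, the final error calculation) are routine bookkeeping.
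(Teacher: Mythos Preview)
Your proposal is correct and follows essentially the same approach as the paper: both use the coordinate gadget of \autoref{lem:coordinate-gadget} applied to an $S_y$-restriction of $\phi$ for each coordinate $y$, normalize the $\beta_2$ constants via an LCM-based duplication of gadget coordinates, and then read off the affine relation $\Val(x_1',\ldots,x_k') = B - M\cdot V_{\mathrm{MaxIP}}(x_1,\ldots,x_k)$ to convert an additive $\varepsilon' d'$-approximation for $\VMin(\phi)$ into an additive $\varepsilon d$-approximation for \kMaxIP{(k,h)}. Your framing (precomputing one gadget per $S \in \binom{[k]}{h}$ rather than per coordinate, and tracking the blow-up constant $C$ explicitly) is a minor cosmetic difference only.
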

\begin{proof}
  Let $X_1,\dots,X_k \subseteq \{0,1\}^d$ be the $\kMaxIP{(k,h)}$ instance. We describe how to construct an instance $X'_1,\dots,X'_k \subseteq \{0,1\}^{O(d)}$ of $\VMin(\phi)$. For each coordinate $y \in [d]$, let $S_y \in \binom{[k]}h$ denote the set of active indices at $y$. Since $\phi$ has degree hardness $\Hdeg(\psi) = h$, there is an $S_y$-restriction~$\phi_y$ of~$\phi$ such that $\deg(\phi_y) = h$. Let $\alpha_y : [k] \setminus S_y \to \{0, 1\}$ denote the partial assignment which restricts~$\phi$ to~$\phi_y$. By \autoref{lem:coordinate-gadget} there is a coordinate gadget for~$\phi_y$, consisting of functions $\phi_{y,1}, \dots, \phi_{y,\ell}$ with constants $\beta^y_1, \beta^y_2 > 0$. For $j \in [\ell]$, let $\tau_{j,1},\dots,\tau_{j,k}$ denote the unary functions needed to transform~$\phi_y$ into $\phi_{y,j}$. For every $i \in [k]$ and $x_i \in X_i$ we introduce $\ell$ new coordinates $x'_i[(y,1)],\dots,x'_i[(y,\ell)]$ and to be set as follows:
  \begin{equation*}
    x'_i[(y, j)] := 
    \begin{cases}
      \alpha_y(i) & \text{if $i \notin S_y$,} \\
      \tau_{j,i}(x_i[y]) & \text{otherwise.}
    \end{cases} 
  \end{equation*}

  Note that we can make the constants $\beta^y_2$ the same for all $y \in [d]$: Let $\beta^*_2$ be the lowest common multiple of all the $\beta^y_2$'s. Then, by concatenating the $y$-th coordinate gadget $\beta^*_2 / \beta^y_2$ many times, we obtain $\beta^*_2$ for all $y \in [d]$. Thus, letting $\beta^*_1 := \sum_{y \in [d]} \beta^y_1$ and $\beta^*_2$ as indicated, we have that
  \[
      \Val(x'_1,\dots,x'_k) = \sum_{y \in [d]} \Bigg(\beta^y_1 - \beta^y_2 \prod_{i \in S_y} x_i[y]\Bigg) = \beta^*_1 - \beta^*_2 \cdot \Val(x_1,\dots,x_k).
  \]

  Let $d' = O(d)$ be the dimension of the constructed instance $X'_1,\dots,X'_k$ of $\VMin(\phi)$ and let $\OPT'$ be its optimal value. Let $\OPT$ be the optimal for the original instance $X_1,\dots,X_k$ of \kMaxIP{(k,h)}. Since $\beta^*_1, \beta^*_2 > 0$, we have that $\OPT' = \beta^*_1 - \beta^*_2 \cdot \OPT$. Thus, given an additive $\varepsilon d'$-approximation~$V'$ for $\VMin(\phi)$, we compute $V := -(V' - \beta^*_1) / \beta^*_2 \in [\OPT - \varepsilon/\beta^*_2 \cdot d', \OPT]$, which is an additive $\Order(\varepsilon d)$-approximation for \kMaxIP{(k,h)}.

  Since the reduction takes time $O(nd)$ and the produced instance is of size $m = O(nd)$, the result follows.
\end{proof}

Note that if $h = k$ in~\autoref{lem:kh-mip-to-hdeg-h-reduction-additive}, then the $\kMaxIP{(k,k)}$ is the same as the standard \kMaxIP{k} and the result still applies. 

Now we are ready to give the lower bounds. We start with the hardest regime, i.e., when $\Hdeg(\phi) = k$. 

\begin{lemma}[No Efficient Approximation Scheme for $\Hdeg(\phi) = k$]\label{lem:no-efficient-as-seth}
  Let $\phi: \{0,1\}^k \to \{0,1\}$ have degree hardness $\Hdeg(\phi) = k$. Then, for all $\delta > 0$ there is some $\varepsilon > 0$ such that there is no additive $\varepsilon d$-approximation for $\VMin(\phi)$ in time $\Order(m^{k - \delta})$ unless the low-dimensional \kOV{k} hypothesis fails.
\end{lemma}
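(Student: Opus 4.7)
The plan is to combine the reduction of \autoref{lem:kh-mip-to-hdeg-h-reduction-additive} (applied with $h = k$) with the distributed PCP framework of \autoref{lem:reduction_clean} (applied to the AND function, which has the unique satisfying assignment $(1,\dots,1)$ and satisfies $\VMax(\text{AND}) = \kMaxIP{k}$). In a nutshell: an additive approximation scheme for $\VMin(\phi)$ yields one for $\kMaxIP{k}$, which in turn lets us distinguish completeness from soundness in the PCP reduction from low-dimensional \kOV{k}, refuting the hypothesis.

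Concretely, I would proceed by contradiction: suppose there is a fixed $\delta > 0$ such that for every $\varepsilon > 0$, we can compute an additive $\varepsilon d$-approximation for $\VMin(\phi)$ in time $O(m^{k-\delta})$. Since $\Hdeg(\phi) = k$, the hypothesis of \autoref{lem:kh-mip-to-hdeg-h-reduction-additive} holds with $h = k$; furthermore the reduction in that lemma has a coordinate gadget of constant size (depending on $\phi$ but not on $\varepsilon$), so it turns the assumed scheme into an efficient additive $O(\varepsilon) d$-approximation for $\kMaxIP{k}$ in time $O((nd)^{k-\delta})$, for every $\varepsilon > 0$.

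Next, invoke the low-dimensional \kOV{k} hypothesis for exponent $\delta/4$ to obtain a constant $c$ for which \kOV{k} on dimension $c \log n$ is hard. Apply \autoref{lem:reduction_clean} to the AND function with parameters $\gamma := \delta/4$ and any fixed $\tau \geq 2$, producing $n^\gamma$ instances of $\kMaxIP{k}$ of dimension $d' = \tau^{\poly(k, c/\gamma)} \log n$, with $\OPT = T := c \log n \cdot \tau^{c_1}$ in the YES case and $\OPT \leq T/\tau$ in the NO case. The gap is $T(1 - 1/\tau)$. Since $\tau$, $c$, $\gamma$, $k$ are absolute constants, the ratio
\[
    \varepsilon^* := \frac{T(1-1/\tau)}{4 d'} = \frac{c (1-1/\tau)}{4} \cdot \tau^{c_1 - \poly(k, c/\gamma)}
\]
is a positive absolute constant. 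Picking any $\varepsilon < \varepsilon^*$ in our approximation scheme makes the additive error $\varepsilon d'$ smaller than half the gap, so the scheme distinguishes the two cases on every produced instance. Running the scheme on all $n^\gamma$ instances takes time $\tilde O(n^\gamma \cdot (nd')^{k-\delta}) = \tilde O(n^{k - \delta + \gamma}) = \tilde O(n^{k - \delta/2})$ (the reduction time $\tilde O(n^{1+\gamma})$ is dominated), contradicting the hypothesis (since $\delta/2 > \delta/4$).

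The main delicate step is verifying that the additive error $\varepsilon d'$ can be driven below the PCP gap $T(1-1/\tau)$ for a \emph{fixed} choice of $\varepsilon$: this works precisely because $\tau, c, \gamma, k$ are all absolute constants in our argument, so the required threshold $\varepsilon^*$ is independent of $n$, and the efficient approximation scheme provides a valid $\varepsilon$ below this constant threshold. Once this parameter balancing is nailed down, the rest of the proof is a routine combination of the two lemmas plugged into a standard contradiction argument.
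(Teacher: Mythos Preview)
Your proposal is correct and follows essentially the same approach as the paper: both combine \autoref{lem:kh-mip-to-hdeg-h-reduction-additive} (with $h=k$) and \autoref{lem:reduction_clean} (with constant $\tau$) to turn an efficient additive scheme for $\VMin(\phi)$ into a fast low-dimensional \kOV{k} algorithm. You are slightly more explicit than the paper in noting that the coordinate-gadget reduction preserves the \emph{same} $\delta$ (which is what makes the parameter balancing non-circular); one minor arithmetic slip is that $n^{\gamma}\cdot(nd')^{k-\delta}$ with $\gamma=\delta/4$ gives $\tilde O(n^{k-3\delta/4})$ rather than $\tilde O(n^{k-\delta/2})$, but this only strengthens the contradiction.
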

\begin{proof}
  Suppose there is a $\delta > 0$ and an algorithm $\calA$ computing an additive $\varepsilon d$-approximation for $\VMin(\phi)$ in time $\Order(m^{k - \delta})$ for any $\varepsilon > 0$. We will show how to use~\autoref{lem:reduction_clean} in combination with~\autoref{lem:kh-mip-to-hdeg-h-reduction-additive} to reduce any \kOV{k} instance in dimension $c \log n$ to a small number of $\VMin(\phi)$ instances for which we can use $\calA$ to decide the original instance in time $\Order(n^{k-\delta'})$ for some $\delta' > 0$. 
  
  For every constant $c > 0$, we apply Lemma~\ref{lem:reduction_clean} with parameters $\gamma$ (to be set) and $\tau := 2$ to reduce an instance of \kOV{k} in dimension $d = c \log n$ to $n^{\gamma}$ instances of $\kMaxIP{k}$ on $d' = 2^{\poly(k, c/\gamma)} \log n = \Order(\log n)$ dimensions each. By the properties of Lemma~\ref{lem:reduction_clean}, if we started from a YES instance, then at least one of the constructed $\kMaxIP{k}$ instances achieves a value of $T = c \cdot 2^{c_1} \log n$. Otherwise, all instances have value at most $T/2$. Thus, if we set $\varepsilon$ such that $\varepsilon \cdot d' \leq T/6$, an additive $\varepsilon d'$-approximation can distinguish the two cases. Since $d' = 2^{\poly(k, c/\delta)}\log n$, choosing $\varepsilon < \frac{c \cdot 2^{c_1}}{6 \cdot 2^{\poly(k, c/\gamma)}}$ suffices.

  By~\autoref{lem:kh-mip-to-hdeg-h-reduction-additive}, there exists some $\delta' > 0$ such that we can compute such an additive $\varepsilon d'$-approximation using $\calA$ in time $\Order((nd')^{k-\delta'})$. In this way, we can decide the $\kOV{k}$ instance by constructing the $n^{\gamma}$ instances of $\kMaxIP{k}$ instances in time $\tilde\Order(n^{1+\gamma})$ and computing the additive $\varepsilon d'$-approximation using $\calA$. The overall running time is 
  \begin{equation*}
    \tilde\Order(n^{1 + \gamma} + n^{\gamma} \cdot ((nd') + (nd')^{k - \delta'})) = \tilde\Order(n^{k + \gamma - \delta'}).
  \end{equation*}
  By picking $\gamma < \delta'$, this contradicts the low-dimensional \kOV{k} hypothesis.
\end{proof}

Finally we show the analogous result for $\VMin(\phi)$ with $3 \leq \Hdeg(\phi) \leq k$. Note that the proof is virtually the same as for~\autoref{lem:no-efficient-as-seth}, but we use the stronger Sparse \MAXThreeSAT{} hypothesis, and the machinery developed in~\autoref{sec:hardness-approx:sec:k-3-hardness}.

\begin{lemma}[No Efficient Approximation Scheme for $3 \leq \Hdeg(\phi)$] \label{lem:no-efficient-as-sparse-max3sat}
  Let $\phi: \{0,1\}^k \to \{0,1\}$ have degree hardness $3 \leq \Hdeg(\phi) \leq k$. Then, for all $\delta > 0$, there is some $\varepsilon > 0$ such that there is no additive $\varepsilon d$-approximation for $\VMin(\phi)$ in time $\Order(m^{k - \delta})$ unless the Sparse \MAXThreeSAT{} hypothesis fails.
\end{lemma}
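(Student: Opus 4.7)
The plan is to mimic the proof of~\autoref{lem:no-efficient-as-seth}, swapping the hard source problem (low-dimensional \kOV{k}) for low-dimensional \kOV{(k,3)} from~\autoref{lem:hardness_k3ov}, and piping the output of the distributed PCP reduction through~\autoref{lem:kh-mip-to-hdeg-h-reduction-additive} instead of through the specialized \kMaxIP{k}-to-$\VMin(\phi)$ reduction. Assume for contradiction there is some fixed $\delta > 0$ such that for every $\varepsilon > 0$, an additive $\varepsilon d$-approximation of $\VMin(\phi)$ can be computed in time $O(m^{k-\delta})$. By~\autoref{lem:additive-max-min} the same holds for $\VMax(\phi)$, and by~\autoref{lem:kh-mip-to-hdeg-h-reduction-additive} (applied with $h = \Hdeg(\phi)$) we obtain an algorithm $\calA$ that computes an additive $\varepsilon d$-approximation of $\kMaxIP{(k,h)}$ in time $O((nd)^{k-\delta'})$ for some $\delta' > 0$ depending only on $\delta$ and $\phi$.

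Next, I would give a gap reduction from low-dimensional \kOV{(k,3)} to low-dimensional \kMaxIP{(k,h)} with a constant gap. For $h = 3$ this is exactly the reduction built inside the proof of~\autoref{lem:reduction_k3ov_to_maxphi}, stopped just before the final invocation of~\autoref{lem:red_k3mip_to_h3}: given $X_1,\dots,X_k$ of dimension $c\log n$, for every $\{a,b,c\} \in \binom{[k]}{3}$ invoke the distributed PCP of~\autoref{lem:reduction_clean} with parameters $\gamma' = \gamma/\binom{k}{3}$ and $\tau = 2$ on the projected $\kOV{3}$ subinstance, then concatenate the $\binom{k}{3}$ outputs across all choices of $(a,b,c)$ to obtain $n^{\gamma}$ instances of \kMaxIP{(k,3)} such that YES instances have optimum $\binom{k}{3} T$ while NO instances have optimum at most $(\binom{k}{3}-1)T + T/2$, where $T = c\log n \cdot 2^{c_1}$. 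For $h > 3$, I would extend each coordinate's active set of size $3$ to an arbitrary superset of size $h$, padding the vectors of the newly active sets to be $1$ at that coordinate; this preserves values exactly and produces a \kMaxIP{(k,h)} instance with the same gap. Crucially, since $\tau, \gamma, c$ are all constants, the dimension of the final instances remains $d' = \tau^{\poly(k, c/\gamma)} \log n = O(\log n)$, so $m' = O(n \log n)$.

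Finally, I would set $\varepsilon$ small enough that an additive $\varepsilon d'$-approximation distinguishes the completeness and soundness cases, i.e., $\varepsilon d' < T/3$; since $d' = O(\log n)$ and $T = \Theta(\log n)$, a small absolute constant $\varepsilon > 0$ (depending only on $k$, $c$, $\gamma$) suffices. Running $\calA$ on each of the $n^\gamma$ instances therefore decides the original \kOV{(k,3)} instance in total time $O(n^{1+\gamma}) + O(n^\gamma \cdot (m')^{k-\delta'}) = \tilde{O}(n^{k + \gamma - \delta'})$. Choosing $\gamma < \delta'/2$ contradicts the low-dimensional \kOV{(k,3)} hardness established in~\autoref{lem:hardness_k3ov}, and hence the Sparse \MAXThreeSAT{} hypothesis.

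The only non-obvious step is the case $h > 3$: one must verify that padding the active set of each coordinate from $3$ to $h$ and setting the new vectors to $1$ at that coordinate truly yields a valid \kMaxIP{(k,h)} instance whose value equals the original \kMaxIP{(k,3)} value on every tuple $(x_1,\dots,x_k)$. This is a direct calculation since the added factors in the product at each coordinate are always $1$. Everything else is essentially the proof template of~\autoref{lem:no-efficient-as-seth} with the source hypothesis swapped.
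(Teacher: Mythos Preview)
Your proposal is correct and follows essentially the same route as the paper's proof: reduce low-dimensional \kOV{(k,3)} to \kMaxIP{(k,3)} via the construction inside \autoref{lem:reduction_k3ov_to_maxphi}, pad the active sets from arity $3$ to $h$, and then invoke \autoref{lem:kh-mip-to-hdeg-h-reduction-additive} to derive the contradiction with \autoref{lem:hardness_k3ov}. Two cosmetic remarks: the detour through \autoref{lem:additive-max-min} is unnecessary since \autoref{lem:kh-mip-to-hdeg-h-reduction-additive} already starts from $\VMin(\phi)$, and under the paper's two-sided additive-error convention the threshold should be $\varepsilon d' < T/4$ rather than $T/3$ (the paper takes $T/6$).
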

\begin{proof}
  Suppose there exists some $\delta > 0$ and an algorithm $\calA$ such that for any $\varepsilon > 0$, it produces an additive $\varepsilon d$-approximation for $\VMin(\phi)$ in time $\Order(m^{k - \delta})$. We will show how to use the gap introducing reduction from~\autoref{lem:reduction_k3ov_to_maxphi} to reduce any \kOV{(k,3)} instance in dimension $c \log n$ to a few instances of \kMaxIP{(k,h)}, and then use $\calA$ is to distinguish the gap via~\autoref{lem:kh-mip-to-hdeg-h-reduction-additive}.
  
  For every constant $c$, we use \autoref{lem:reduction_k3ov_to_maxphi} with parameters $\gamma$ (to be set) and $\tau := 2$ to reduce an instance of \kOV{(k,3)} in dimension $d = c \log n$ to $n^{\gamma}$ instances of \kMaxIP{(k,3)} each in dimension $d' = \tau^{\poly(k, c/\gamma)}\log n$ and with sparsity $m' = \Order(n \tau^{\poly(k, c/\gamma)} \log n) = \Order(n \log n)$. Note that we can further reduce each of these instances to \kMaxIP{(k,h)}, for any $3 < h \leq k$: For each coordinate we add $h-3$ \emph{arbitrary} active indices and set all entries for these indices to one. The produced \kMaxIP{(k,h)} instances are clearly equivalent since in each coordinate the contribution to the objective only depends on the 3 active sets in the original \kMaxIP{(k,3)} instance.
   
  To decide the \kOV{(k,3)} instance, we need to distinguish whether the optimal value of each constructed $\kMaxIP{(k,h)}$ instance is $\binom k3T$ or $(\binom k3 - 1)T + T/2$. Thus, an additive $\varepsilon d' < T/6$-approximation suffices. Since $T = c \cdot 2^{c_1} \cdot \log n$, setting $\varepsilon < \frac{c 2^{c_1}}{\tau^{\poly(k, c/\gamma)}}$ is enough. By~\autoref{lem:kh-mip-to-hdeg-h-reduction-additive}, we can compute such an additive approximation using $\calA$ in time $\Order((nd')^{k-\delta'})$ for some $\delta' > 0$.

  Hence, by constructing $\tilde\Order(n^{1+\gamma})$ instances of $\kMaxIP{(k,h)}$ and computing the additive $\varepsilon d'$-approximation using $\calA$, we can decide the original \kOV{(k,3)} instance in time
  \begin{equation*}
    \tilde\Order(n^{1+\gamma} + n^{\gamma}(nd' + (nd')^{k-\delta'}))  = \tilde\Order(n^{k + \gamma - \delta'})
  \end{equation*}
  By choosing $\gamma < \delta'$, this contradicts the Sparse \MAXThreeSAT{} hypothesis.
\end{proof}

\section{Distributed PCP and Extensions} \label{sec:pcp}
In this section we sketch how~\autoref{lem:reduction_clean} can be obtained by combining previously known results in~\cite{Chen18}, \cite{KarthikLM19} and~\cite{Rubinstein18}.  

\lemreductionclean*

\autoref{lem:reduction_clean} was shown in~\cite[{Appendix D}]{Chen18}, but for the case of $k = 2$ and using $\kMaxIP{2}$ as the \emph{target} problem. Our version is a simple generalization for the case of arbitrary $k$, and to reduce to any $\VMax(\phi)$ with $\Hand(\phi) = k$. The main ingredient to obtain the parameters of the reduction as in~\autoref{lem:reduction_clean}, is an \emph{efficient MA protocol}\footnote{More precisely, the communication model is the \emph{Simultaneous Messaging Passing} model~\cite{Yao79,BabaiGKL03}, as referred to in~\cite{KarthikLM19}. But we stick to Chen's notation~\cite{Chen18}.} for Set-Disjointness:

\begin{problem}[$\Disj_{d,k}$]
  In the $\Disj_{d,k}$ problem there are $k$ players, each of which receives a vector $x_i \in \{0,1\}^d$ as input. Their goal is to jointly decide whether $\langle x_1,\dots,x_k \rangle = 0$.
\end{problem}

\begin{definition}[$(m, r, \ell, s)$-efficient MA protocols~\cite{KarthikLM19}]
  We say that a communication protocol is $(m, r, \ell, s)$-efficient for a communication problem if the following holds:
  \begin{itemize}
    \item There are $k$ players in the protocol $P_1,\dots,P_k$ and Merlin. Each player $P_i$ holds an input $x_i$.
    \item The protocol is one-round with public randomness. The following happens sequentially:
    \begin{enumerate}
      \item Merlin sends an advice string of length $m$ to $P_1$, which is a function of $x_1,\dots,x_k$.
      \item The players jointly draw $r$ random bits.
      \item Each player $P_i$ for $2 \leq i \leq k$ sends an $\ell$-bit message to $P_1$, which is a deterministic function of the input and randomness.
      \item Upon receiving the messages, $P_1$ decides to accept or reject.
    \end{enumerate}
    \item The protocol has full completeness and soundness $s \in (0,1]$.
    \item The computation of players is polynomial in their input lengths.
  \end{itemize}
\end{definition}

In particular, to obtain~\autoref{lem:reduction_clean} one needs the generalization of the protocol for $\Disj_{m,2}$ by Rubinstein~\cite{Rubinstein18} to the case of $k$ players. This generalization was written explicitly in~\cite{KarthikLM19}, where the authors explicitly acknowledge that Rubinstein suggested them the generalization.

\begin{lemma}[$k$-Disjointness Protocol~{\cite[{Theorem 6.1}]{KarthikLM19}}]\label{lem:ag_disj_protocol}
For every $d, k$ and $\alpha > 0$, there exists a $(d/\alpha, \log_2 d, \poly(k,\alpha),1/2)$-efficient protocol for $\Disj_{d, k}$.
\end{lemma}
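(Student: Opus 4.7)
My plan is to adapt Rubinstein's $2$-party MA protocol for Set-Disjointness from~\cite{Rubinstein18}, which achieves analogous parameters via algebraic-geometry (AG) codes, to the $k$-party setting. The overarching structure of Rubinstein's protocol --- Merlin sends an AG-code-based succinct certificate, the players jointly sample a random coordinate, each non-$P_1$ player transmits a single field element at that coordinate, and $P_1$ performs a local consistency check --- should carry over almost verbatim. The key new ingredient is a multiplication-friendly code whose $k$-fold Hadamard product retains large minimum distance.

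The concrete protocol I would design runs as follows. First, fix a prime power $q = 2^{\poly(k,\alpha)}$ and, using a Garcia--Stichtenoth tower, an AG code $C \subseteq \mathbb{F}_q^n$ of dimension $d$ with encoding map $E : \mathbb{F}_q^d \to C$ such that the $k$-fold Hadamard power $C^{\cdot k}$ has relative minimum distance at least $1/2$ and such that membership in $C^{\cdot k}$ together with vanishing on the $d$ canonical input positions can be succinctly certified by $d/\alpha$ bits. Merlin sends this certificate $\pi$ to $P_1$; the players publicly sample $j \in [n]$ using $\lceil \log_2 n \rceil \le \log_2 d$ random bits; each $P_i$ for $i \ge 2$ sends $E(x_i)[j] \in \mathbb{F}_q$ to $P_1$ (costing $\log_2 q = \poly(k,\alpha)$ bits); and $P_1$ verifies both that $\pi$ certifies a codeword $\hat c \in C^{\cdot k}$ vanishing on the $d$ canonical input positions and that $\hat c[j] = \prod_{i \in [k]} E(x_i)[j]$.

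Completeness is immediate: on a disjoint instance the true product $c^\star := E(x_1) \odot \cdots \odot E(x_k)$ already vanishes on the $d$ canonical positions, so Merlin can let $\pi$ certify $\hat c = c^\star$ and both checks succeed for every $j$. For soundness, if the inputs are \emph{not} disjoint then $c^\star$ fails to vanish at the witnessing coordinate, so any $\hat c$ passing the first check must differ from $c^\star$ as codewords of $C^{\cdot k}$; by the relative distance of $C^{\cdot k}$, they disagree on at least $n/2$ positions, so a uniformly random $j$ catches the inconsistency in the second check with probability $\ge 1/2$.

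The hard part will be the AG-code construction: exhibiting, for all $d, k, \alpha$, a multiplication-friendly code on a Garcia--Stichtenoth-type curve over $\mathbb{F}_q$ with $\log_2 q = \poly(k,\alpha)$, dimension $d$, $k$-fold Hadamard distance $\ge n/2$, and a $d/\alpha$-bit succinct description of $\{\hat c \in C^{\cdot k} : \hat c \text{ vanishes on the } d \text{ input positions}\}$. This is where the algebraic geometry of the Garcia--Stichtenoth tower and the dimensional estimates from Riemann--Roch are essential; Rubinstein's analysis handles these estimates for $k = 2$, and the extension to general $k$ (as indicated in~\cite[Theorem 6.1]{KarthikLM19}) amounts to tracking the dimension growth of iterated Hadamard products against the distance loss, while choosing $q$ and the curve's genus consistently so that the certificate length stays within $d/\alpha$ bits.
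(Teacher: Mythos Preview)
The paper does not supply its own proof of this lemma: it is stated as a direct citation of \cite[Theorem 6.1]{KarthikLM19}, accompanied only by the remark that it generalizes Rubinstein's two-party AG-code protocol~\cite{Rubinstein18} to $k$ players. There is therefore no proof in the paper to compare your attempt against.

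Your sketch follows the right high-level template (multiplication-friendly AG codes over a Garcia--Stichtenoth tower, Merlin's short advice, a random-coordinate consistency check), which matches what the paper attributes to \cite{Rubinstein18,KarthikLM19}. Two points are worth flagging. First, a minor slip: you write $\lceil \log_2 n \rceil \le \log_2 d$ for the randomness cost, but since the codeword length $n$ is at least the message dimension $d$, the inequality goes the wrong way; what you actually need is that the code has constant rate so that $\log_2 n = \log_2 d + O(1)$. Second, your description of Merlin's certificate (``a $d/\alpha$-bit succinct description of $\{\hat c \in C^{\cdot k} : \hat c \text{ vanishes on the } d \text{ input positions}\}$'') is left as a placeholder; making this precise---in particular, arranging that $P_1$ can verify it locally against a single random coordinate---is exactly the technical content of \cite{Rubinstein18} and its $k$-party extension in \cite{KarthikLM19}. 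The present paper simply defers to those references rather than reproducing that argument.
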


Lemma~\ref{lem:ag_disj_protocol} improves upon the Aaronson and Wigderson protocol~\cite{AaronsonW09} used in the \emph{original} distributed PCP reduction~\cite{AbboudRW17}, which if extended to $k$ players, has parameters $(dk/\log d,\allowbreak \Order_k(\log d),\allowbreak \Order_k(\log^3 d), 1/2)$. As mentioned earlier, this key technical improvement was introduced by Rubinstein~\cite{Rubinstein18} via algebraic geometry codes.

Note that the protocol of~\autoref{lem:ag_disj_protocol} uses $\Order(\log d)$ random bits to obtain soundness $1/2$. To boost the soundness to $\varepsilon$, the standard way is to repeat the protocol (with the same advice) $\Order(\log \varepsilon^{-1})$-many times. The overall number of random bits used would thus be $\Order(\log d \cdot \log \varepsilon^{-1})$. Chen reduced this to $\Order(\log d + \log \varepsilon^{-1})$ in the case of $k = 2$ via expander walk sampling and using the aforementioned protocol for $\Disj_{d,2}$ as a blackbox. His proof extends in a straightforward way to general $k$ using the protocol from~\autoref{lem:ag_disj_protocol}:

\begin{lemma}[$k$-Disjointness Protocol with Expander Graphs~{\cite[{Lemma D.2}]{Chen18}}]\label{lem:disj_protocol_expanders}
For every $d$, $\alpha > 0$ and $0 < \varepsilon < 1/2$, there exists a $(d/\alpha, \log_2 d + \Order(\log \varepsilon^{-1}), \poly(k, \alpha)\cdot \log(\varepsilon^{-1}), \varepsilon)$-efficient protocol for $\Disj_{d, k}$.
\end{lemma}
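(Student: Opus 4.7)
The plan is to use the base protocol from \autoref{lem:ag_disj_protocol} as a black box and boost its soundness from $1/2$ down to $\varepsilon$ by \emph{derandomized} repetition via a random walk on a constant-degree expander, following the template of Chen's argument for $k=2$~\cite[Lemma~D.2]{Chen18}. Naive independent repetition would require $\Theta(\log d \cdot \log \varepsilon^{-1})$ random bits, whereas the expander-based approach uses only $\log_2 d + \Order(\log \varepsilon^{-1})$, which is precisely the randomness budget promised by the lemma.

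First, I would fix an explicit constant-degree expander $G$ on $D := 2^{\lceil \log_2 d \rceil}$ vertices (so $|V(G)| \geq d$), say a Ramanujan graph or a Margulis-style construction, with second largest eigenvalue (in absolute value) $\lambda \leq \lambda_0$ for some constant $\lambda_0 < 1$. Identify the vertex set of $G$ with the set of random strings used by the base protocol (padding by at most a factor of two if $d$ is not a power of two). Next, set $t := C \cdot \log(\varepsilon^{-1})$ for a sufficiently large constant $C$ depending only on $\lambda_0$. The derandomized protocol proceeds as follows: Merlin sends the same length-$d/\alpha$ advice to $P_1$ as in the base protocol; the players jointly sample a starting vertex uniformly at random from $V(G)$ (costing $\log_2 D = \log_2 d + \Order(1)$ bits) and then take $t-1$ further steps of a random walk in $G$ (costing $\Order(t) = \Order(\log \varepsilon^{-1})$ additional bits, since the degree of $G$ is constant). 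This yields a sequence of $t$ correlated random strings $r_1, \dots, r_t$. Every player $P_i$ ($i \geq 2$) then sends the $t$ concatenated base-protocol messages for $r_1, \dots, r_t$ to $P_1$, for a total of $t \cdot \poly(k, \alpha) = \poly(k, \alpha) \cdot \log(\varepsilon^{-1})$ bits. Finally, $P_1$ accepts iff the base protocol accepts on \emph{every} sample $r_j$.

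For the analysis, completeness is immediate: on a YES instance Merlin's good advice makes the base protocol accept on every random string, hence on every $r_j$. For soundness on a NO instance, by the base protocol the set $B \subseteq V(G)$ of random strings on which the base protocol (with Merlin's given advice) accepts has density at most $1/2$. By the expander Chernoff bound (Gillman; see Healy or Wigderson--Xiao for sharp forms), the probability that all $t$ vertices of the random walk lie in $B$ is at most $(1/2 + o_\lambda(1))^t \leq 2^{-\Omega(t)}$, and choosing $C$ large enough makes this at most $\varepsilon$. Taking a maximum over Merlin's (finitely many) possible advice strings preserves the bound, which establishes soundness $\varepsilon$. Collecting the parameters $(m, r, \ell, s) = (d/\alpha,\, \log_2 d + \Order(\log \varepsilon^{-1}),\, \poly(k,\alpha) \cdot \log(\varepsilon^{-1}),\, \varepsilon)$ yields the claim.

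The main obstacle is just the bookkeeping for the expander Chernoff bound with a \emph{fixed} (adversarial) set $B$ of density $\leq 1/2$ rather than an independent sample, together with verifying that using the \emph{same} Merlin advice across all $t$ runs does not break soundness (which it does not, since the base protocol has full completeness and the advice is fixed before the randomness is drawn). Everything else -- the explicit expander construction, the identification of $V(G)$ with random strings, and the aggregation of messages -- is routine and is exactly parallel to the $k=2$ argument in~\cite[Lemma~D.2]{Chen18}, substituting \autoref{lem:ag_disj_protocol} for Rubinstein's two-party protocol.
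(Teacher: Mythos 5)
Your proposal is correct and follows essentially the same route the paper takes: it is exactly Chen's expander-walk soundness amplification (his Lemma D.2), applied as a black box on top of the $k$-player protocol of \autoref{lem:ag_disj_protocol}, with the same advice reused across repetitions and acceptance only if all walk samples accept. The only detail worth making explicit is that the expander's spectral gap must be a sufficiently small constant (e.g.\ $\lambda < 1/2$, obtainable by a constant power of any explicit constant-degree expander) so that the hitting-property bound $(1/2+O(\lambda))^t$ actually decays; your parameter accounting is otherwise exactly as claimed.
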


In~\cite{AbboudRW17} it was shown that given \emph{any} communication protocol for $\Disj_{d,2}$, one can obtain a gap introducing reduction from $\kOV{2}$ on dimension $d$, to many instances of $\kMaxIP{2}$ with a gap proportional to the soundness of the protocol. The number of instances produced and the dimension of the constructed instances depend on the parameters of the protocol. This can be easily extended to our case, where we instead start from $\kOV{k}$ and reduce to instances of $\VMax(\phi)$ where $\Hand(\phi) = k$:

\begin{lemma}[Gap Introducing Reduction]\label{lem:pcp-reduction}
  Let $\phi: \{0,1\}^k \mapsto \{0,1\}$ have $\Hand(\phi) = k$. Given a $(m, r, \ell, s)$-efficient protocol for $\Disj_{d,k}$, there is a reduction from \kOV{k} on sets of size $n$ and dimension $d$ which constructs $2^m$ instances of $\VMax(\phi)$ with dimension $d' = 2^{r + (k-1) \ell}$ and sparsity $m' = O(n d')$, such that the following holds:
  \begin{itemize}
    \itemdesc{Completeness:} If we start from a YES instance of \kOV{k}, then one of the 
    constructed instances of $\VMax(\phi)$ has $\OPT = 2^r$.
    \itemdesc{Soundness:} If we start from a NO instance of \kOV{k}, then for \emph{every}
    constructed instance of $\VMax(\phi)$ it holds that $\OPT \leq s \cdot 2^r$.
  \end{itemize}
  The running time of the reduction is $O(2^{r + (k-1) \ell + m} \cdot n \cdot \poly(d))$.
\end{lemma}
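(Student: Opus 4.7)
The plan is to mimic the standard distributed PCP reduction of Abboud--Rubinstein--Williams, generalized from $k=2$ and the \MaxIP{} target to arbitrary $k$ and target $\VMax(\phi)$. Let $\alpha \in \{0,1\}^k$ denote the unique satisfying assignment of $\phi$ (which exists because $\Hand(\phi) = k$). For each possible Merlin advice $a \in \{0,1\}^m$ I will build one $\VMax(\phi)$ instance $I_a$ on the same sets of vectors $X_1, \dots, X_k$ that were given in the \kOV{k} input. The coordinate set of $I_a$ will be $Y = \{0,1\}^r \times (\{0,1\}^\ell)^{k-1}$, so that $|Y| = d' = 2^{r+(k-1)\ell}$, with every coordinate $y = (\rho, \mu_2, \dots, \mu_k)$ thought of as a guessed protocol transcript for randomness $\rho$.

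Next, I will embed the protocol into the vector entries using $\alpha$ to encode acceptance. For $i \ge 2$ and $x_i \in X_i$, define $x_i'[(\rho, \mu_2, \dots, \mu_k)] := \alpha_i$ if $M_i(x_i, a, \rho) = \mu_i$, else $1 - \alpha_i$; for $i = 1$, define $x_1'[(\rho, \mu_2, \dots, \mu_k)] := \alpha_1$ if $P_1$ accepts on input $x_1$ given advice $a$, randomness $\rho$, and messages $\mu_2, \dots, \mu_k$, else $1-\alpha_1$. Because $\alpha$ is the \emph{only} satisfying assignment of $\phi$, a coordinate $y$ contributes to $\Val(x_1, \dots, x_k)$ iff every entry equals $\alpha_i$, i.e., iff each $P_i$ ($i \ge 2$) really sends the guessed $\mu_i$ on $(x_i, a, \rho)$ and $P_1$ accepts. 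For each fixed $\rho$, the actual transcript $(M_2(x_2, a, \rho), \dots, M_k(x_k, a, \rho))$ is unique, so exactly one of the $2^{(k-1)\ell}$ choices of $(\mu_2, \dots, \mu_k)$ can contribute, yielding the collapse
\begin{equation*}
    \Val(x_1, \dots, x_k) = \sum_{\rho \in \{0,1\}^r} \bigl[\,P_1 \text{ accepts on } (x_1, a, \rho, M_2(x_2, a, \rho), \dots, M_k(x_k, a, \rho))\,\bigr] = 2^r \cdot \Pr_\rho[\text{protocol accepts}].
\end{equation*}

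From this identity, completeness and soundness fall out directly from the MA protocol guarantees. In the YES case, some $(x_1, \dots, x_k)$ has $\innerprod{x_1}{\dots, x_k} = 0$; by full completeness Merlin has an advice $a^\star$ making $\Pr_\rho[\text{accept}] = 1$ for this tuple, so $\Val = 2^r$ in $I_{a^\star}$, and since $\Val \le 2^r$ always holds by the per-$\rho$ uniqueness argument, $\OPT = 2^r$. In the NO case, every tuple is non-disjoint, so for every advice $a$ and every tuple the acceptance probability is $\le s$, giving $\OPT \le s \cdot 2^r$ in every $I_a$. The running time and sparsity bounds are a routine count: $2^m$ instances, each with $O(kn)$ vectors over $d'$ coordinates, each entry computable in $\poly(d)$ time from the protocol description, giving $O(2^{m+r+(k-1)\ell} \cdot n \cdot \poly(d))$ total time and sparsity $m' = O(nd')$. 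The only conceptually delicate point, and the one I would highlight in the write-up, is the use of $\Hand(\phi)=k$ to reduce the multilinear constraint on $\phi$-satisfaction to the single-assignment check that makes the per-$\rho$ collapse work; for $k=2$ and $\phi(z_1, z_2) = z_1 \land z_2$ this recovers exactly the classical construction.
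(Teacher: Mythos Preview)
Your proposal is correct and essentially identical to the paper's proof: both enumerate Merlin's advice, index coordinates by (randomness, guessed messages), set entries to $\alpha_i$ or $1-\alpha_i$ according to the protocol, and use uniqueness of the satisfying assignment to collapse the per-randomness contribution to a single accept/reject bit. The only cosmetic difference is that in the protocol model Merlin's advice goes only to $P_1$, so the messages $M_i(x_i,\rho)$ for $i\ge 2$ do not actually depend on $a$; this does not affect your argument.
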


\begin{proof}
  By \autoref{def:model-checking-hardness}, note that $\Hand(\phi) = k$ if and only if $\phi$ has exactly one satisfying assignment, which we denote by $(\alpha_1,\dots,\alpha_k) \in \{0,1\}^k$.

  Consider an instance $X_1,\dots,X_k \subseteq \{0,1\}^d$ of $\kOV{k}$. For each advice string $M \subseteq [2^m]$ we construct an instance 
  $X^M_1,\dots,X^M_k$ of $\VMax(\phi)$ as follows: 

  For each random string $R \in [2^r]$ and for each $x_1 \in X_1$ we create a vector $x^{M, R}_1 \in \{0,1\}^{2^{(k-1) \ell}}$. We treat a tuple of messages $(B_2,\dots,B_k)$ where $B_i \in \{0,1\}^\ell$ as an index in $[2^{(k-1)\ell}]$, and each entry of $x^{M, R}_1$ is defined as:
  \begin{equation*}
    x^{M, R}_1[(B_2,\dots,B_k)] := 
    \begin{cases}
      \alpha_1 & \text{\parbox{8.5cm}{if $P_1$ accepts on input $x_1$, advice $M$ from Merlin, randomness $R$ and messages $B_2, \ldots, B_k$ from players $P_2, \dots, P_k$,}} \\[3.5ex]
      1 - \alpha_1 & \text{otherwise.}
    \end{cases}
  \end{equation*}
  Similarly, for each random string $R \in [2^r]$ and for each $x_i \in X_i$ for $2 \leq i \leq k$ we create a vector $x^{M, R}_i \in \{0,1\}^{2^{(k-1) \ell}}$:
  \begin{equation*}
    x^{M, R}_i[(B_2,\dots, B_k)] := 
    \begin{cases}
      \alpha_i & \text{\parbox{8.5cm}{if $P_i$ sends message $B_i$ to $P_1$ on input $x_i$, advice~$M$ from Merlin and randomness $R$,}} \\[2ex]
      1 - \alpha_i & \text{otherwise.}
    \end{cases}
  \end{equation*}

  The key observation is that there is \emph{exactly one position} $j = (B_2,\dots,B_k)$ where $x^{M, R}_i[j] = \alpha_i$ for $2 \leq i \leq k$, corresponding to the \emph{actual} messages $B_2,\dots,B_k$ sent by players $P_2,\dots,P_k$ in the protocol under randomness $R$ and advice $M$. In particular, it follows that if $P_1$ accepts with these messages, this randomness and this advice then $x^{M, R}_1[j] = \alpha_1$. Thus, since $(\alpha_1,\dots,\alpha_k)$ is the unique satisfying assignment of $\phi$, we have that
  \begin{equation*}
    \Val(x_1^{M, R},\dots,x_k^{M, R}) = 
    \begin{cases}
      1 & \text{\parbox{8.5cm}{if $P_1$ accepts on inputs $x_1, \dots, x_k$, advice~$M$ from Merlin and randomness $R$,}} \\[2ex]
      0 & \text{otherwise.}
    \end{cases}
  \end{equation*}

  Finally, each $x_i^M \in X_i^M$ is constructed by concatenating $x_i^{M, R}$ over all random strings $R \in [2^r]$, which means that $X_i^M \subseteq \{0,1\}^{2^{r + (k-1) b}}$. And by the previous argument we have that
  \begin{equation*}
    \Val(x_1^M, \dots, x_k^M) = 2^r \cdot \Pr_{R}\Bigg[\text{\parbox{8.1cm}{\centering$P_1$ accepts on inputs $x_1, \dots, x_k$, advice $M$ from Merlin and randomness $R$}}\Bigg]
  \end{equation*}
  We analyze the completeness, soundness and running time of the reduction:

  \begin{itemize}
    \itemdesc{Completeness:} If $X_1,\dots,X_k$ is a YES instance of $\kOV{k}$, then there is an advice $M'$ which makes $P_1$ accept with probability one, and thus  $\OPT = 2^r$ for the instance $X^{M'}_1,\dots,X^{M'}_k$.
    \itemdesc{Soundness:} If $X_1,\dots,X_k$ is a NO instance of $\kOV{k}$, by the soundness of the protocol we have that $\OPT \leq s \cdot 2^r$ for all instances $X^M_1,\dots,X^M_k$, $M \in [2^m]$. 
    \itemdesc{Running Time:} We iterate over the $2^m$ possible advice strings, the $2^r$ possible random strings and the $2^{(k-1) \ell}$ possible messages that the players send. Then, we simulate the decision of $P_1$ based on each possible combination of advice, randomness and messages which takes $O(\poly(d))$-time per vector (due to the efficiency of the protocol). Since we do this for $O(n)$ vectors, the overall running time is $O(2^{r + (k-1) \ell + m} \cdot n \cdot \poly(d))$. \qedhere
  \end{itemize}
\end{proof}

Finally, we obtain~\autoref{lem:reduction_clean} by using the protocol from~\autoref{lem:disj_protocol_expanders} in the gap reduction from~\autoref{lem:pcp-reduction}:

\begin{proof}[Proof of~\autoref{lem:reduction_clean}]
  Given an instance of $\kOV{k}$ on sets of size $n$ and dimension $c \log n$, we take the 
  $(d/\alpha, \log_2 d + \Order(\log \varepsilon^{-1}), 
    \poly(k, \alpha)\cdot \log(\varepsilon^{-1}), \varepsilon)$-efficient protocol for $\Disj_{d,k}$ from \autoref{lem:disj_protocol_expanders} 
  with parameters $d := c \log n$, $\alpha := c/\gamma$, and $\varepsilon := 1/\tau$. Plugging it into the reduction from~\autoref{lem:pcp-reduction} yields the result.
\end{proof}

\end{document}